\documentclass[11pt,a4paper]{amsart}

\oddsidemargin = 0mm
\evensidemargin = 0mm
\textwidth = 165mm

\usepackage{lipsum}
\makeatletter
\g@addto@macro{\endabstract}{\@setabstract}
\newcommand{\authorfootnotes}{\renewcommand\thefootnote{\@fnsymbol\c@footnote}}%
\makeatother

\usepackage{amssymb}
\usepackage{mathtools}
\usepackage{mathrsfs}
\usepackage{mathptmx}
\usepackage[utf8]{inputenc}
\usepackage{bm}
\usepackage{bbm}
\usepackage{amsmath}
\usepackage{amsfonts}
\makeatletter
\def\amsbb{\use@mathgroup \M@U \symAMSb}
\makeatother

\usepackage{comment}
\usepackage{graphicx}
\usepackage{grffile}
\usepackage{caption}
\usepackage{subcaption}

\captionsetup[figure]{font=scriptsize}
\captionsetup[subfigure]{labelfont=rm}

\usepackage{bbold}

\usepackage{mathptmx}

\newcommand{\supp}{\textrm{supp}\,}

\newcommand{\bga}{\begin{aligned}}
\newcommand{\ena}{\end{aligned}}

\newcommand{\bge}{\begin{enumerate}}
\newcommand{\ene}{\end{enumerate}}

\newcommand{\red}[1]{{\color{red} #1}}

\newcommand{\magenta}[1]{{\color{magenta} #1}}
\newcommand{\hide}[1]{}

 \usepackage{dutchcal}


\usepackage{fancyhdr}
\pagestyle{fancy} 
\fancyhf{}        
\fancyhead[LO,RE]{\footnotesize PHASE DIAGRAM AND TOPOLOGICAL EXPANSION IN THE COMPLEX QUARTIC RANDOM MATRIX MODEL}
\fancyhead[RO,LE]{\thepage}

\usepackage{subcaption}

\usepackage{newtxtext,newtxmath}
\usepackage{pgfplots}
\pgfplotsset{compat=1.15}

\usepackage{xcolor}
\definecolor{webgreen}{rgb}{0,.5,0}
\definecolor{webbrown}{rgb}{.6,0,0}
\definecolor{RoyalBlue}{cmyk}{1, 0.50, 0, 0}
\usepackage[colorlinks=true, breaklinks=true, urlcolor=webbrown, linkcolor=RoyalBlue, citecolor=webgreen,backref=page]{hyperref}

\usepackage{tikz}
\usepackage{pict2e}
\usepackage{todonotes}
\usetikzlibrary{decorations.markings}

\DeclareSymbolFont{bbold}{U}{bbold}{m}{n}
\DeclareSymbolFontAlphabet{\mathbbold}{bbold}

\newcommand{\R}{{\mathbb R}}
\newcommand{\C}{{\mathbb C}}

\newcommand{\N}{{\mathbb N}}

\newcommand{\al}{\alpha}
\newcommand{\be}{\beta}

\newcommand{\ga}{\gamma}
\newcommand{\Ga}{\Gamma}
\newcommand{\La}{\Lambda}

\newcommand{\ep}{\varepsilon}
\newcommand{\de}{\delta}

\newcommand{\De}{\Delta}
\newcommand{\om}{\omega}
\newcommand{\Om}{\Omega}
\newcommand{\ze}{\zeta}

\newcommand{\sg}{\sigma}

\newcommand{\di}{\displaystyle}

\newcommand{\ii}{\textrm{i}}
\newcommand{\dd}{\textrm{d}}

\newcommand{\qasq}{\quad \text{as} \quad}
\newcommand{\qandq}{\quad \text{and} \quad}

	\newtheorem{theorem}{Theorem}
	
	\newtheorem{definition}[theorem]{Definition}
	
	\newtheorem{remark}[theorem]{Remark}
	\newtheorem{lemma}[theorem]{Lemma}
	\newtheorem{proposition}[theorem]{Proposition}
	\newtheorem{corollary}{Corollary}[theorem]
	\numberwithin{equation}{section}
	\numberwithin{theorem}{section}
    \numberwithin{notation}{section}

\usepackage[T1]{fontenc}

\usepackage{mathtools}

\makeatletter
\DeclareRobustCommand\widecheck[1]{{\mathpalette\@widecheck{#1}}}
\def\@widecheck#1#2{%
	\setbox\z@\hbox{\m@th$#1#2$}%
	\setbox\tw@\hbox{\m@th$#1%
		\widehat{%
			\vrule\@width\z@\@height\ht\z@
			\vrule\@height\z@\@width\wd\z@}$}%
	\dp\tw@-\ht\z@
	\@tempdima\ht\z@ \advance\@tempdima2\ht\tw@ \divide\@tempdima\thr@@
	\setbox\tw@\hbox{%
		\raise\@tempdima\hbox{\scalebox{1}[-1]{\lower\@tempdima\box
				\tw@}}}%
	{\ooalign{\box\tw@ \cr \box\z@}}}
\makeatother

\begin{document}

\tikzset{middlearrow/.style={
			decoration={markings,
				mark= at position 0.6 with {\arrow{#1}} ,
			},
			postaction={decorate}
		}
	}\tikzset{middlearrow/.style={
			decoration={markings,
				mark= at position 0.6 with {\arrow{#1}} ,
			},
			postaction={decorate}
		}
	}

	\tikzset{->-/.style={decoration={
				markings,
				mark=at position #1 with {\arrow{latex}}},postaction={decorate}}}
	
	\tikzset{-<-/.style={decoration={
				markings,
				mark=at position #1 with {\arrowreversed{latex}}},postaction={decorate}}}

\title{Phase diagram and Topological Expansion in the complex Quartic Random Matrix Model}

\maketitle
	
\begin{center}
    \authorfootnotes
  Pavel Bleher\footnote{Department of Mathematical Sciences, Indiana University-Purdue University Indianapolis, 402 N. Blackford St., Indianapolis, IN 46202, Blackford St., Indianapolis, IN 46202, USA. e-mail: pbleher@iupui.edu},
  Roozbeh Gharakhloo\footnote{Department of Mathematics, Colorado State University, Fort Collins, CO 80521, USA, E-mail: roozbeh.gharakhloo@colostate.edu}, Kenneth T-R  McLaughlin\footnote{Department of Mathematics, Colorado State University, Fort Collins, CO 80521, USA, E-mail: kenmcl@rams.colostate.edu}
  \par \bigskip
\end{center}
\begin{abstract} We use the Riemann-Hilbert approach, together with string and Toda equations, to study the topological expansion in the quartic random matrix model. The coefficients of the
topological expansion are generating functions for 
 the numbers $\mathscr{N}_j(g)$ of $4$-valent 
 connected graphs with $j$ vertices on a compact Riemann surface of genus $g$. We explicitly evaluate these numbers for Riemann surfaces of genus $0,1,2,$ and $3$. Also, for a Riemann surface of an arbitrary genus $g$, we 
 calculate the leading term in the  asymptotics of $\mathscr{N}_j(g)$ as the number of vertices tends to infinity. Using the theory of quadratic differentials, we  characterize the critical contours in the complex parameter plane where phase transitions in the quartic model take place, thereby proving a result of David \cite{DAVID}. These phase transitions are of the following four types: a) one-cut to two-cut through the  splitting of the cut at the origin, b) two-cut to three-cut through the birth of a new cut at the origin, c) one-cut to three-cut through the splitting of the cut at two symmetric points, and d) one-cut to three-cut through the birth of two symmetric cuts.    
\end{abstract}

\tableofcontents
		
	\section{Introduction and Main Results}
	
	Our starting point is 
	the unitary ensemble of $n \times n$ Hermitian random matrices,
	\begin{equation} \label{int1}
	\dd \mu_{nN}(M;u) = \frac{1}{\Tilde{\mathcal{Z}}_{nN}(u)} e^{-N \mathrm{Tr}\,\mathscr{V}(M;u)} \dd M,
	\end{equation}
	with the quartic potential
	\begin{equation}\label{int2}
	\mathscr{V}(z;u) = \frac{z^2}{2}+\frac{uz^4}{4},
	\end{equation}
	where $u>0$ and $N>0$ are {\it parameters} of the model, and
	\begin{equation} \label{int3}
	\Tilde{\mathcal{Z}}_{nN}(u) = \int_{\mathcal{H}_n} e^{-N \mathrm{Tr}\,\mathscr{V}(M;u)} \dd M.
	\end{equation}
	is the {\it partition function}. 
	
	As well known (see, e.g., \cite{BL}), the ensemble of eigenvalues of $M$,
	\[
	Me_k=z_ke_k,\;k=1,\ldots,n,
	\]
	is given by the probability distribution
	\begin{equation}\label{int4}
	\bga
	\dd \mu_{nN}(z;u)&=\frac{1}{\mathcal{Z}_{nN}(u)}
	\prod_{1\leq j<k\leq n} (z_j-z_k)^2\prod^{n}_{j=1} \exp\left[-N\left(\frac{z^2_j}{2}+\frac{u z^4_j}{4}\right)\right] \dd z_1 \cdots \dd z_n,\\
	&z=\left\{ z_1,\ldots,z_n \right\},
	\ena
	\end{equation}
	where 
	\begin{equation}\label{int5}
	\mathcal{Z}_{nN}(u) = \int^{\infty}_{-\infty} \cdots \int^{\infty}_{-\infty} \prod_{1\leq j<k\leq n} (z_j-z_k)^2\prod^{n}_{j=1} \exp\left[-N\left(\frac{z^2_j}{2}+\frac{u z^4_j}{4}\right)\right] \dd z_1 \cdots \dd z_n,
	\end{equation}
	is the {\it eigenvalue partition function}. The partition functions
	$\tilde {\mathcal{Z}}_{nN}(u)$ and 
	$ \mathcal{Z}_{nN}(u)$ are related by the formula,
	\begin{equation}\label{int6}
	\frac{ \mathcal{Z}_{nN}(u)}{\tilde {\mathcal{Z}}_{nN}(u)} = 
	\frac{1}{\pi^{n(n-1)/2}}\prod_{k=1}^nk!
	\end{equation}
	(see, e.g., \cite{BL}).
	
	We define the {\it free energy} of the  unitary ensemble of $n \times n$ Hermitian random matrices as 
	\begin{equation}\label{int7}
	\mathscr{F}_{nN}(u)=\frac{1}{n^2} \ln\,   \frac{\tilde {\mathcal{Z}}_{nN}(u)}{ \tilde{\mathcal{Z}}_{nN}(0)} \,.
	\end{equation}
	Observe that by \eqref{int6},
	\begin{equation}\label{int8}
	\mathscr{F}_{nN}(u)=\frac{1}{n^2} \ln\,   \frac{ {\mathcal{Z}}_{nN}(u)}{ {\mathcal{Z}}_{nN}(0)} \,.
	\end{equation}
	The quantity
	\begin{equation}\label{int8a}
	{\mathcal{Z}}_{nN}(0)=
	\int^{\infty}_{-\infty} \cdots \int^{\infty}_{-\infty} \prod_{1\leq j<k\leq n} (z_j-z_k)^2\prod^{n}_{j=1} \exp\left(\frac{-Nz^2_j}{2}\right) \dd z_1 \cdots \dd z_n
	\end{equation}
	is the partition function of the Gaussian unitary ensemble (GUE), and 
	it is equal to
	\begin{equation}\label{intb}
	{\mathcal{Z}}_{nN}(0)={\mathcal{Z}}^{\rm GUE}_{nN}=\left(\frac{n}{N}\right)^{n^2}
	\frac{(2\pi)^{n/2}}{(2n)^{n^2/2}}\prod_{k=1}^n k!.
	\end{equation}
	We will be especially interested in the free energy in the case when $n=N$.
	The free energy
	$\mathscr{F}_{NN}(u) $ admits the asymptotic expansion,
	\begin{equation}\label{int9}
	\mathscr{F}_{NN}(u)\sim \sum_{g=0}^\infty 
	\frac{\mathcal{f}_{2g}(u)}{N^{2g}}\,,
	\end{equation}
	in the sense that for any integer $M>0$,  as $N\to\infty$,
	\begin{equation}\label{int10}
	\mathscr{F}_{NN}(u)= \sum_{g=0}^M \frac{\mathcal{f}_{2g}(u)}{N^{2g}}+ O\left(N^{-2(M+1)}\right)\,.
	\end{equation}
	In addition, the coefficients $\mathcal{f}_{2g}(u)$ are analytic functions of $u$ in a neighborhood of the origin independent of $g$.  This was proven by Bleher and Its in \cite{BleherIts2005}, for any $u>0$, and for general real 1-cut potentials $V$ in \cite{ErcolaniMcLaughlin}.  More recently, probabilistic arguments have been used to study partition functions for generalized $\beta$ ensembles (again with real 1-cut potentials) in \cite{Borot-Guionnet}. Moreover, the asymptotics of the partition function for the real Gaussian-type, Laguerre-type, and Jacobi-type 1-cut potentials $V$ were found using Riemann-Hilbert analysis in \cite{Charlier}, and \cite{CharlierGharakhloo}.
	
	Asymptotic expansion \eqref{int9} is called the {\it topological expansion}, and its study was initiated in the classical work of Bessis, Itzykson, and Zuber \cite{BIZ}. As shown in \cite{BIZ}, the functions 
	$\mathcal{f}_{2g}(u)$ are generating functions for the number of topologically different 4-valent graphs with $j$ vertices, $\mathscr {N}_g(j)$, on a closed Riemannian surface of genus $g$. We will discuss this remarkable fact later.
	
	To evaluate the asymptotics of the Taylor coefficients of the functions
	$\mathcal{f}_{2g}(u)$, we will study an {\it analytic continuation} 
	of the partition function  $\mathcal{Z}_{NN}(u)$ to the complex plane in $u$ and singularities of the  analytic continuation. Observe that integral \eqref{int5} defining the eigenvalue partition $\mathcal{Z}_{NN}(u)$
	converges for $\Re u>0$, and we will prove that topological expansion \eqref{int9} is valid for any $u$ with $\Re u>0$. Also,
	we will prove that all the  functions
	$\mathcal{f}_{2g}(u)$ are {\it analytic} in the half-plane $\Re u>0$.
	
	To extend the partition function  $\mathcal{Z}_{nN}(u)$ to 
	$\Re u\le 0$, we will use a {\it regularization} of
	$\mathcal{Z}_{nN}(u)$. Assume first that $u>0$, and
	let us make the change of  variables
	\begin{equation}\label{int11}
	z_j= \sg^{1/2}\ze_j \qandq u=\sigma^{-2}\,,\quad \sg>0,
	\end{equation}
	in the integral in \eqref{int5}:
	\begin{equation}\label{int12}
	\bga
	\mathcal{Z}_{nN}(u)& = \int^{\infty}_{-\infty} \cdots \int^{\infty}_{-\infty} \prod_{1\leq j<k\leq n} (z_j-z_k)^2\prod^{n}_{j=1} \exp\left[-N\left(\frac{z^2_j}{2}+\frac{u z^4_j}{4}\right)\right] \dd z_1 \cdots \dd z_n\\
	&=\sigma^{\frac{n^2}{2}}\int^{\infty}_{-\infty} \cdots \int^{\infty}_{-\infty} \prod_{1\leq j<k\leq n} (\ze_j-\ze_k)^2\prod^{n}_{j=1} \exp\left[-N\left(\frac{\sg\ze^2_j}{2}
	+\frac{ \ze^4_j}{4}\right)\right] \dd \ze_1 \cdots \dd \ze_n.
	\ena
	\end{equation}
	Define now the quartic polynomial
	\begin{equation}\label{int13}
	V(\ze;\sigma):= \mathscr{V}(\sg^{1/2}\ze;\sigma^{-2})
	= \frac{\sg\ze^2}{2} + \frac{\ze^4}{4},
	\end{equation}
	The corresponding partition function of eigenvalues is given by
	\begin{equation}\label{int14}
	Z_{nN}(\sigma) = \int^{\infty}_{-\infty} \cdots \int^{\infty}_{-\infty} \prod_{1\leq j<k\leq n} (\ze_j-\ze_k)^2\prod^{n}_{j=1}  \exp\left[-N\left(\frac{\sg\ze^2_j}{2}
	+\frac{ \ze^4_j}{4}\right)\right] \dd \ze_1 \cdots \dd \ze_n,
	\end{equation}
	which {\it converges for all} $\sigma \in \C$ and defines $Z_{nN}(\sigma)$
	as an entire function on the complex plane. Note that \begin{equation}\label{int15}
	\mathcal{Z}_{nN}(u) = \sigma^{\frac{n^2}{2}}Z_{nN}(\sigma),
	\quad\sg=u^{-1/2}.
	\end{equation}
	This formula gives  an analytic continuation of the partition function $\mathcal{Z}_{nN}(u)$ to the two-sheet covering of the complex plane.
	
	Similar to \eqref{int8} we define the free energy for the quartic polynomial \eqref{int13} as
	\begin{equation}\label{int15a}
	F_{nN}(\sg)=\frac{1}{n^2} \ln\,   \frac{ Z_{nN}(\sg)}{ {\mathcal{Z}}_{nN}^{\rm GUE}} \,,
	\end{equation}
	where the value of ${\mathcal{Z}}_{nN}^{\rm GUE} $ is given in formula \eqref{intb}. From formulae \eqref{int15} and \eqref{int8} we obtain the relation between the free energies ${\mathscr F}_{nN}(u)$ and
	$F_{nN}(\sg)$:
	\begin{equation}\label{int15b}
	\mathscr{F}_{nN}(u)=\frac{\ln\sg}{2}+
	F_{nN}(\sg),\quad \sg=u^{-1/2}.
	\end{equation}

	In this work our goal will be 
	\bge
	\item to find and calculate critical curves of the matrix model with the quartic polynomial
	$V(z;\sg)$ on the complex plane $\sg\in\C$, and
	\item to prove the topological expansion of the free energy 
	$F_{NN}(\sg)$ in the one-cut region on the complex plane $\sg$ and calculate the coefficients of the topological expansion.
	\ene
	Below we formulate our main results.
	\subsection{Phase Diagram}
	
	\begin{figure}[b]
		\centering
		\includegraphics[scale=0.27]{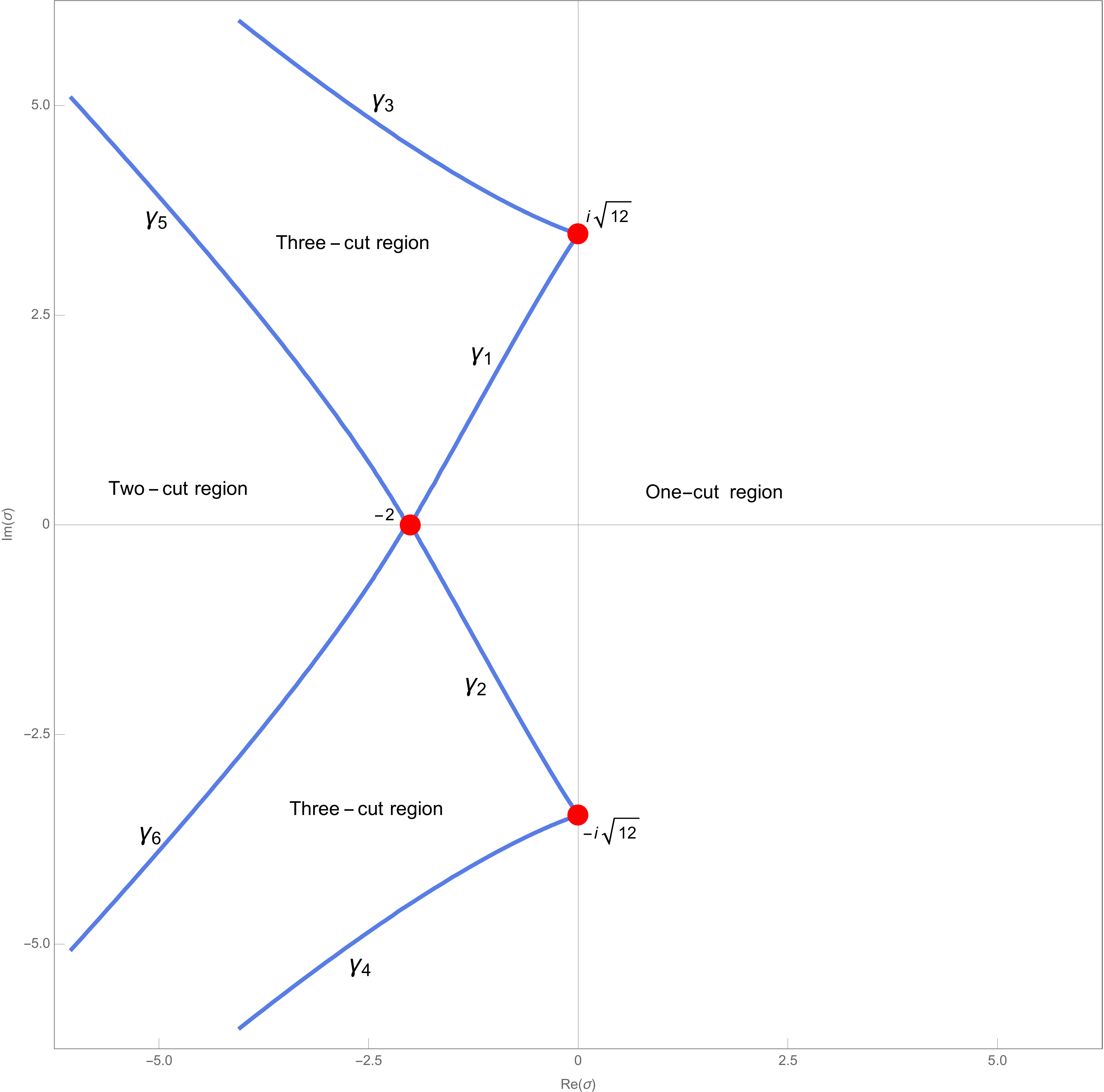}
		\caption{The phase diagram of the complex quartic random matrix model in the $\sigma$-plane. This phase diagram first appeared in the work \cite{DAVID} of David. The Painlev{\'e} II double scaling limit corresponding to the multi-critical point $\sigma=-2$ was studied in \cite{BleherIts2003}, while the Painlev{\'e} I double scaling limit associated to the multi-critical points $\sigma=\pm \ii \sqrt{12}$ was investigated in \cite{DK}. The borders labeled by $\ga_1$ through $\ga_4$ separating the one-cut region from the three-cut region are the same lines shown in Figure \ref{fig:Crit Traj 1to3 sigma plane} by labels I, XII, VII, and IX; and the borders labeled by $\ga_5$ and $\ga_6$ separating the two-cut region from the three-cut region are the same lines shown in Figure \ref{fig:Crit Traj 2to3 sigma plane} by labels 1 and 2.}
		\label{fig:Phase Diagram}
	\end{figure}
	
	The phase diagram of the complex quartic matrix model first appeared in the work \cite{DAVID} of David (See Figure \ref{fig:Phase Diagram} below and Figure 5 of \cite{DAVID}). Later in the work \cite{BertolaTovbis2015}, Bertola and Tovbis found the phase diagram in the two-sheeted $u$-plane based on numerical computations, which under the change of parameters \eqref{int11} is equivalent to Figure \ref{fig:Phase Diagram} (See  Figure 6 in \cite{BertolaTovbis2015}). They also considered several other cases for the contour of integration in \eqref{int5} other than the real axis, and among other things, in each case found the phase diagram (see Figures 4 through 9 in \cite{BertolaTovbis2015})  by computer-assisted methods. In \cite{BertolaTovbis2015} the authors
	did not provide a rigorous description of the phase diagram characterizing the boundaries of regions with different numbers of cuts.  However, in \cite{BertolaTovbis2016}, they developed such an analysis for a different configuration of contours of orthogonality, obtained explicit equations which provide 
	an implicit characterization of the boundaries, and a proof that the regular case is open with respect to the parameters.

	In the present work, one objective
	is to provide an \textit{explicit characterization} of all the boundary curves shown in Figure \ref{fig:Phase Diagram}, in terms of critical trajectories of new auxiliary quadratic differentials in the parameter space, originally discovered in \cite{BDY} for the case of a cubic potential.  Along the way we
	do provide an independent proof that the regular one-cut,  two-cut, and three-cut regimes are open, which is more straightforward than the approach of \cite{BertolaTovbis2016} because it is tailored to the quartic situation.

	The phase diagram of the matrix model with the quartic polynomial
	$V(z;\sg)$ on the complex plane $\sg\in\C$ is described in terms
	of the underlying {\it equilibrium measure}
	\begin{equation}\label{int15c}
	\dd \nu_{\rm eq}(z;\sg)=\frac{1}{\pi \ii}\,[Q(z;\sg)^{1/2}]_+\dd z.
	\end{equation}
	Here $Q(z;\sg)$ is a polynomial in $z$ of degree 6 and  the intervals of the support
	of the equilibrium measure (the cuts) are critical trajectories of the quadratic differential $Q(z;\sg)\,\dd z^2$. See the work of Kuijlaars and Silva \cite{KS} and  \S \ref{Sec crit graphs z plane} below. 
	For the quartic polynomial \eqref{int13} the equilibrium measure  can have 1, 2, or 3 cuts. See  the works of Bertola and Tovbis \cite{BertolaTovbis2015,BertolaTovbis2016}
	and \S \ref{Sec crit graphs z plane} below. Figure \ref{fig:Phase Diagram} depicts the phase regions on the complex plane $\sg$ corresponding to different numbers of cuts. There are three critical points on the phase diagram,
	\begin{equation}\label{int16}
	\sg_1=-2\quad \textrm{ and}\quad  \sg_{2,3}=\pm \sqrt {12}\,\ii,
	\end{equation}
	and six critical curves, separating different phase regions. We will denote by $\mathcal C[a,b]$ a critical curve connecting  the points $a$ and $b$. We will also
	denote by $\mathcal C[a, e^{\ii \theta}\infty]$ a critical curve which goes from the point $a$ to $\infty$ on the complex plane approaching the direction with angle $\theta$ at infinity.

	Observe that the phase diagram is symmetric with respect to the real axis $\sg$, and the critical curves on  the phase diagram are of the two types:
	\bge
	\item split of a cut, and
	\item birth of a cut.
	\ene
	Notice that on the phase diagram \ref{fig:Phase Diagram} the curves 
	\begin{equation}\label{int17}
	\ga_{1,2} =\mathcal C[-2,\pm \sqrt {12}\,\ii]
	\end{equation}
	correspond to the
	split of a cut, and the ones
	\begin{equation}\label{int18}
	\ga_{3,4} =\mathcal C[\pm \sqrt {12}\,\ii, e^{\pi \mp \pi/4}\infty],\quad 
	\ga_{5,6} =\mathcal C[-2, e^{\pi \mp \pi/4}\infty], 
	\end{equation}
	to the birth of a cut. 
	
	Our first main result in this paper is a description of the critical curves $\ga_1,\ldots,\ga_6$ in terms of critical trajectories of some {\it auxiliary quadratic differentials}. 
	We will denote by $\Ga[a,b]$ a trajectory of a quadratic differential connecting the points $a$ and $b$.

	\begin{theorem}\label{main thm 1}
		Part I. Critical curves separating one-cut and three-cut regions. {\it Let us make the substitution
			\begin{equation}\label{int19}
			\sg=-\frac{3\be}{4}+\frac{4}{\be}\,.
			\end{equation}
			Then the critical curves
			$\ga_1,\ga_2,\ga_3,\ga_4$ mapped to  the $\be$-plane are critical trajectories of the quadratic differential $\Xi(\be)\normalfont{\dd}\be^2$, where
			\begin{equation}\label{int20}
			\Xi(\be) =-\frac{(3\be^2+16)^3(\be^2-16)}{1024\, \be^6}\,.
			\end{equation}}
		Part II. Critical curves separating two-cut and three-cut regions. {\it The curves $\ga_5$ and $\ga_6$ are critical trajectories of the quadratic differential $\Upsilon(\sigma)\normalfont{\dd}\sg^2$, where
			\begin{equation}\label{int21}
			\Upsilon(\sigma)=\frac{\sg^2}{4}-1.
			\end{equation}}
	\end{theorem}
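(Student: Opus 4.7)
The plan is to associate to each type of phase transition a scalar ``effective potential'' $\phi$ whose vanishing real part cuts out the critical curve, and then to compute the square of its total differential in the parameter plane and match it with the quadratic differential in the theorem. In both parts, $\phi$ is defined as a definite integral of $Q(z;\sigma)^{1/2}$ from an endpoint of the equilibrium-measure support to the candidate splitting or birth point; this is the standard $g$-function object from Riemann--Hilbert analysis, whose sign determines whether a proposed cut structure is admissible.

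For Part II I first carry out the two-cut resolvent analysis, which yields $a^2+b^2 = -2\sigma$ and $(a^2-b^2)^2 = 16$, so that $Q(z;\sigma) = z^2(z^2-a^2)(z^2-b^2)$ with $a^2 b^2 = \sigma^2-4$. The birth of a third cut at the origin occurs precisely when $\operatorname{Re}\phi(0;\sigma) = 0$ for $\phi(z;\sigma) = \int_a^z Q(s;\sigma)^{1/2}\,\dd s$. With the substitutions $u = s^2$ and then $v = u+\sigma$, and using $a^2+\sigma = -2$, this reduces to
\[
\phi(0;\sigma) = \tfrac{1}{2}\int_{-2}^{\sigma}\sqrt{v^2-4}\,\dd v,
\]
so that $(\dd\phi(0;\sigma)/\dd\sigma)^2 = (\sigma^2-4)/4 = \Upsilon(\sigma)$. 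The level set $\operatorname{Re}\phi(0;\sigma)=0$ is therefore a union of horizontal trajectories of $\Upsilon(\sigma)\,\dd\sigma^2$, which I would then match to $\gamma_5$ and $\gamma_6$ by checking their emanation from the zero $\sigma=-2$ of $\Upsilon$ and their asymptotic directions at infinity.

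For Part I the one-cut resolvent analysis gives $Q(z;\sigma) = (z^2-z_0^2)^2(z^2-\beta)$ with $\beta = b^2$, $z_0^2 = (\beta^2-16)/(4\beta)$, and $3\beta^2+4\sigma\beta = 16$, rationalized by the substitution in the theorem. All four curves $\gamma_1,\ldots,\gamma_4$ should correspond to $\operatorname{Re}\phi^*(\beta) = 0$ with
\[
\phi^*(\beta) = \int_{b}^{z_0}(s^2-z_0^2)\sqrt{s^2-\beta}\,\dd s,
\]
the integration being taken along the cut (for $\gamma_{1,2}$, splitting) or in the exterior component (for $\gamma_{3,4}$, birth). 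Differentiating in $\beta$, the endpoint terms vanish because $Q(b) = Q(z_0) = 0$, leaving two elementary integrals whose evaluation produces an algebraic part proportional to $\sqrt{z_0^2-\beta}$ and a transcendental part proportional to $\log\bigl((z_0+\sqrt{z_0^2-\beta})/\sqrt{\beta}\bigr)$. Using $2 z_0 z_0'(\beta) = (\beta^2+16)/(4\beta^2)$ and $z_0^2-\beta = -(3\beta^2+16)/(4\beta)$, one checks that the coefficient of the log term vanishes identically, and the surviving algebraic piece squares to $\Xi(\beta)$ as in the theorem.

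The main obstacle is not this algebraic identification of $\Xi$ and $\Upsilon$ but rather the \emph{variational} justification of the criterion $\operatorname{Re}\phi=0$: one has to prove that it really separates the regular one-, two-, and three-cut regimes rather than merely locating a candidate locus. This requires, for each $\sigma$ (or $\beta$) on either side of a proposed critical curve, a global analysis of the horizontal trajectory graph of $Q(z;\sigma)\,\dd z^2$ in the $z$-plane, verifying that an admissible S-curve with the expected number of cuts exists and that the corresponding equilibrium measure is a probability measure. Once this variational framework is in place, matching the ends of the critical trajectories to the zeros $\beta = 4$ and $\beta = \pm 4\ii/\sqrt{3}$ of $\Xi$, and $\sigma = -2$ of $\Upsilon$, uniquely identifies the curves $\gamma_1,\ldots,\gamma_6$ with the boundary segments in Figure~\ref{fig:Phase Diagram}.
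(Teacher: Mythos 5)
Your core computation is exactly the paper's: for Part II you take $\Phi(\sigma)=\eta_2(0;\sigma)$ and obtain $[\dd\Phi/\dd\sigma]^2=\tfrac14(\sigma^2-4)=\Upsilon(\sigma)$, and for Part I you take $\Psi=\eta_1(z_0(\sigma);\sigma)$, recognize that its squared $\sigma$-derivative is not meromorphic, and pass to $\be=b_1^2$ (equivalently the substitution \eqref{int19}) to land on $[\dd\Psi/\dd\be]^2=\Xi(\be)$. Your endpoint relations $z_0^2=\be/4-4/\be$ and $z_0^2-\be=-(3\be^2+16)/(4\be)$ are the correct ones (note the paper's \eqref{z0 beta} carries a sign typo that your version fixes), and your algebra reproduces \eqref{int20} and \eqref{int21}. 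This is the same route as \S\ref{Sec Aux QD}.

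The gap is in the final identification step. You assert that matching the ends of the critical trajectories to the zeros of $\Xi$ and $\Upsilon$ ``uniquely identifies'' $\ga_1,\dots,\ga_6$; it does not. The pullback to the $\sigma$-plane of the critical graph of $\Xi(\be)\,\dd\be^2$ has twelve components (Figure \ref{fig:Crit Traj 1to3 sigma plane}), of which only four are phase boundaries, and two separate prunings are needed. First, the inverse Joukowski map is two-valued, and only the branch $\be^{(+)}$ of \eqref{be + be -} is consistent with the one-cut formulas \eqref{em37}; this discards the components II--V and X. Second, and more subtly, among the survivors there are \emph{fake transitions} (VI and VIII in Figure \ref{fig:Crit Traj 1to3 sigma plane: obvious stuff removed}): there $z_0(\sigma)$ genuinely lies on the critical graph $\mathscr{J}^{(1)}_{\sigma}$, so $\Re\Psi(\sigma)=0$, but it hits the unbounded legs $\ell_2^{(\pm b_1)}$ rather than the arc $J_\sigma$, all three conditions of Definition \ref{Def one cut sigma} survive the crossing (Theorem \ref{THM Fake Transition}), and no phase transition occurs. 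Separating the genuine boundaries I, XII (where $z_0\in J_\sigma$ and the cut splits) and VII, IX (where the reconnection of the critical graph destroys the stable complementary arc to $+\infty$) from VI, VIII requires the local expansion \eqref{asymp eta -z0} at the tricritical point together with the global Teichm\"uller-lemma analysis of the $z$-plane trajectories (Theorems \ref{sigma in I} through \ref{THM I XII}). Your closing paragraph rightly names the $z$-plane trajectory analysis as the outstanding work, but locates the difficulty in verifying existence of an admissible S-curve on each side of a candidate curve, whereas the harder point is that the locus $\Re\Psi=0$ is strictly larger than the set of phase boundaries and must be pruned.
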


	Let us comment on Theorem \ref{main thm 1}. According to formula \eqref{int20}, the quadratic differential 
	$\Xi(\be)\normalfont{\dd}\be^2$ has five finite critical points: one pole, $\be_0=0$, and four zeros, $\be_1,\be_2,\be_3,\be_4$,
	where
	\begin{equation}\label{int22}
	\be_{1,2}=\pm 4,\quad \be_{3,4}=\pm\frac{4\ii}{\sqrt{3}}\,,
	\end{equation}
	as shown in Figure \ref{fig:Crit Traj be plane}. The pole $\be_0$ is of degree 6, the zeros $\be_{3,4}=\di\pm \frac{4\ii}{\sqrt{3}}$ are of degree 3, and the zeros $\be_{1,2}=\pm 4$ of degree 1.
	Correspondingly, there are five critical trajectories of the quadratic differential emanating from  $\be_3$ and $\be_4$, at the angle of $72^\circ$ to each other, and there are three critical trajectories emanating from each of the simple critical points, $\be_1$ and $\be_2$, at the angle of $120^\circ$ to each other.
	Finally, there are four critical trajectories emanating from the origin, at the angle of $90^\circ$ to each other. See Figure \ref{fig:Crit Traj be plane}.
	
	Observe that substitution \eqref{int19} is a scaled Joukowski transformation. It maps the points as follows:
	\begin{equation}\label{int23}
	\be=\pm 4 \mapsto \sg =\mp 2,\quad 
	\be=\pm\frac{4\ii}{\sqrt{3}} \mapsto \sg= \mp \sqrt{12}\,\ii\,.
	\end{equation}
	Respectively, it maps the critical trajectories of the quadratic differential  $\Xi(\be)\dd \be^2$ to the critical curves as follows:
	\begin{equation}\label{int24}
	\begin{aligned}
	&\Ga\left[4,\mp \frac{4\ii}{\sqrt{3}}\right]
	\to
	\mathcal C \left[-2, \pm\sqrt{12}\,\ii\right]=\ga_{1,2},\\
	&\Ga\left[\mp\frac{4\ii}{\sqrt{3}},0\right]\; (\Re \be\ge 0)
	\to \mathcal C[\pm \sqrt {12}\,\ii, e^{\pi \mp \pi/4}\infty]=\ga_{3,4}.
	\end{aligned}
	\end{equation}
	This gives the critical curves $\ga_1,\ga_2,\ga_3,\ga_4$
	as the Joukowski type map of the critical trajectories
	of the quadratic differential  $\Xi(\be)\dd \be^2$. Observe that
	the one-cut region on the $\be$-plane is bounded by the 
	trajectories 
	\[
	\Ga\left[4,\pm \frac{4\ii}{\sqrt{3}}\right] \quad\textrm{and}
	\quad 
	\Ga\left[\pm \frac{4\ii}{\sqrt{3}},0\right]\; (\Re \be\ge 0).
	\]
	See Figure \ref{fig:Crit Traj be plane}.
	
	Furthermore, the critical curves $\ga_5,\;\ga_6$, separating two-cut and three-cut regions, are critical trajectories of the quadratic differential $\Upsilon(\sg)\dd \sg^2$, where $\Upsilon(\sg)$ is given in \eqref{int21}. Observe that $\Upsilon(\sg)$ has two simple critical points $\sg_{1,2}=\pm 2$, and the critical curves $\ga_5,\;\ga_6$ are the critical trajectories of the quadratic differential $\Upsilon(\sg)\dd \sg^2$ labeled by $1$ and $2$ in Figure \ref{fig:Crit Traj 2to3 sigma plane}. We prove Theorem \ref{main thm 1} in \S \ref{Sec Aux QD}.
	
	Our second main result in this paper, which we prove in \S \ref{Sec crit graphs z plane}, is a description
	of the equilibrium measure in different phase regions on the phase diagram.
	\begin{theorem}\label{thm 1.2}
		Part I. One-cut region.  Let $\mathcal{O}_{1}$ be the open set on the complex plane $\sg$ lying to the right of the curves $\ga_1,\;\ga_2,\ga_3$, and $\ga_4$, see Figure \ref{fig:Phase Diagram}. Then for all $\sg\in\mathcal{O}_1$ 
		\bge
		\item
		The equilibrium measure $\nu_{\rm eq}=\nu_{\rm eq}(\sg)$ is regular. 
		\item 
		$\nu_{\rm eq}$ has a one-cut support $\Gamma[-b_1,b_1]$ which is a critical trajectory of the quadratic differential $(z^2-z_0^2)^2(z^2-b_1^2) \normalfont{\dd} z^2$. 
		\item
		The critical points $b_1$ and $z_0$ of this quadratic differential depend analytically on $\sg\in\mathcal{O}_1$.
		\ene
		
		Part II. Two-cut region. Let $\mathcal{O}_{2}$ be the open set on the complex plane $\sg$ lying to the left of the curves $\ga_5,\;\ga_6$, see Figure \ref{fig:Phase Diagram}. Then for all $\sg\in\mathcal{O}_2$ 
		\bge
		\item
		The equilibrium measure $\nu_{\rm eq}=\nu_{\rm eq}(\sg)$ is regular.
		\item
		$\nu_{\rm eq}$ has a two-cut support $\Ga[-b_2,-a_2]\cup \Ga[a_2,b_2]$ where the support cuts are critical trajectories of the quadratic differential $z^2(z^2-a_2^2)(z^2-b_2^2) \normalfont{\dd} z^2$.
		\item
		The critical points $a_2$ and $b_2$ of this quadratic differential depend analytically on $\sg\in\mathcal{O}_2$.
		\ene
		
		Part III. Three-cut region.  Let $\mathcal{O}_3$ be the open set on the complex plane $\sg$ consisting of two connected components, $\mathcal{O}_3=\mathcal{O}_{31}\cup\mathcal{O}_{32}$, lying to the left of the curves $\ga_1,\;\ga_3,\;\ga_5$ and $\ga_2,\;\ga_4,\;\ga_6$, see Figure \ref{fig:Phase Diagram}. Then for all $\sg\in\mathcal{O}_3$ 
		\bge
		\item
		The equilibrium measure $\nu_{\rm eq}=\nu_{\rm eq}(\sg)$ is regular.
		\item
		$\nu_{\rm eq}$ has a three-cut support $\Ga[-c_3,-b_3]\cup\Ga[-a_3,a_3]\cup\Ga[a_3,b_3]$, where the support cuts are critical  trajectories of the quadratic differential $(z^2-a_3^2)(z^2-b_3^2)(z^2-c_3^2)\normalfont{\dd}z^2$. 
		\ene
	\end{theorem}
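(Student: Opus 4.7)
The plan is to treat the three parts in parallel: for each region I first write down the explicit polynomial ansatz for the spectral polynomial $Q(z;\sg)$ dictated by the symmetry $z\mapsto -z$ and the expected cut structure, solve the resulting algebraic system for the endpoints, extend analytically throughout the open region by the implicit function theorem, and finally use the theory of critical trajectories of quadratic differentials together with Theorem~\ref{main thm 1} to verify that the equilibrium measure remains regular up to (but not beyond) the boundary curves identified there.

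From $V'(z;\sg)=z(z^2+\sg)$ and the standard variational characterization of the equilibrium measure (see \cite{KS}), the density is $\frac{1}{\pi\ii}[Q(z;\sg)^{1/2}]_+$ with $Q$ a polynomial of degree six whose leading terms match $V'(z)^2 = z^6+2\sg z^4+\sg^2 z^2$ and whose off-cut analytic continuation $g'(z)=\tfrac{1}{2}\bigl(V'(z)-Q(z;\sg)^{1/2}\bigr)$ satisfies $g'(z)\sim 1/z$ at infinity. Matching the coefficients of $z^4$ and $z^2$ yields two polynomial conditions on the endpoints; in the three-cut case a third condition is supplied by the real $B$-cycle integral of $Q^{1/2}$, which encodes equality of the Euler--Lagrange constant across the three components of the support. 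Substituting the ans\"atze $(z^2-z_0^2)^2(z^2-b_1^2)$, $z^2(z^2-a_2^2)(z^2-b_2^2)$, and $(z^2-a_3^2)(z^2-b_3^2)(z^2-c_3^2)$ produces an algebraic system with an equal number of equations and unknowns in each case.

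For each region I would fix a convenient reference point where the system is solvable in closed form: positive real $\sg$ in $\mathcal{O}_1$, large negative real $\sg$ in $\mathcal{O}_2$ (where explicitly $a_2^2=-\sg-2$ and $b_2^2=-\sg+2$), and large purely imaginary $\sg$ in each of $\mathcal{O}_{31},\mathcal{O}_{32}$. At each such point the trajectory structure of $Q(z;\sg)\dd z^2$ and the positivity of the density on the cuts are verified directly, and the endpoints then extend analytically in $\sg$ by the implicit function theorem: the Jacobian of the algebraic system is non-degenerate precisely when the zeros of $Q$ other than the endpoints are simple and stay away from the support, which is exactly the condition that $\nu_{\rm eq}$ be regular.

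The central difficulty is the global verification that regularity persists throughout each $\mathcal{O}_j$, or equivalently that it can fail only on the curves $\ga_1,\ldots,\ga_6$ of Theorem~\ref{main thm 1}. Loss of regularity can occur through one of three standard transitions: collision of two endpoints (splitting of a cut), an interior zero of $Q$ reaching the support (non-generic endpoint singularity), or equality in the variational inequality at a new point off the support (birth of a cut). Each event corresponds to the vanishing of an analytic discriminant in $\sg$, and by Theorem~\ref{main thm 1} these loci coincide with $\ga_{1,2}$ (where $z_0\to 0$ inside the one-cut support), $\ga_{3,4}$ (where the variational inequality tightens at $z=0$), and $\ga_{5,6}$ (where it tightens at a symmetric pair $\pm z_*\ne 0$). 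To rule out the a priori possibility of unexpected Stokes transitions or recurrent critical trajectories strictly inside $\mathcal{O}_j$, I would exploit the reflection symmetry $z\mapsto -z$ together with the prescribed asymptotic directions of trajectories at infinity (dictated by the leading term $\tfrac{1}{4}z^6\dd z^2$), in the spirit of the rigidity arguments of \cite{BertolaTovbis2016}.
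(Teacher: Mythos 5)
Your overall architecture for Parts I and II --- an even polynomial ansatz for $Q$, explicit endpoint equations, verification at a real reference point, and identification of the phase boundary with critical trajectories of auxiliary quadratic differentials --- does match the paper's strategy, and your formulas $a_2^2=-\sigma-2$, $b_2^2=-\sigma+2$ agree with \eqref{em45}--\eqref{em46}. But several concrete steps are wrong or do not close. First, your reference point for Part III is misplaced: large purely imaginary $\sigma$ lies in the \emph{one-cut} region (it sits in $\boldsymbol{K}_2\subset\mathcal{O}^*_1$ by Theorem \ref{THMK2K3}; the boundary $\gamma_3$ leaves $\ii\sqrt{12}$ toward $e^{3\pi\ii/4}\infty$, so the upper imaginary axis stays to its right). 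Second, the implicit-function-theorem step fails in the three-cut case: the supplementary conditions \eqref{3cut gap condition} and \eqref{b3 on the same level set as c3 21} constrain only \emph{real parts} of period integrals, so the system is not holomorphic and cannot yield complex-analytic dependence of $a_3,b_3,c_3$ on $\sigma$ --- which is exactly why Part III makes no analyticity claim and the paper defers even real-analyticity to \cite{BBGMT2}, proving openness of $\mathcal{O}_3$ instead by a contradiction argument based on uniqueness of the equilibrium measure and the openness of $\mathcal{O}_1,\mathcal{O}_2$. Third, you have mislabelled the degenerations: on $\gamma_{1,2}$ the double zeros $\pm z_0$ hit the support at two symmetric points ($z_0\to 0$ occurs only at $\sigma=-2$); on $\gamma_{3,4}$ the new cuts are born at the symmetric pair $\pm z_0\neq 0$; on $\gamma_{5,6}$ the new cut is born at the origin.

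The essential gap, however, is in your ``global verification.'' You argue that each mode of failure of regularity is the zero locus of an analytic discriminant and that ``by Theorem \ref{main thm 1} these loci coincide with $\gamma_1,\dots,\gamma_6$.'' Theorem \ref{main thm 1} gives only the inclusion of $\gamma_1,\dots,\gamma_6$ in those zero loci, not the reverse, and the reverse inclusion is in fact false: the zero set of $\Re[\eta_1(z_0(\sigma);\sigma)]$ is the full preimage of the critical graph of $\Xi(\beta)\,\normalfont{\dd}\beta^2$, which contains, besides I, XII, VII, IX, the additional components VI, VIII, XI of Figure \ref{fig:Crit Traj 1to3 sigma plane: obvious stuff removed} (and similarly the zero set of $\Re[\eta_2(0;\sigma)]$ contains the curves $\boldsymbol{3},\boldsymbol{4},\boldsymbol{5}$ of Figure \ref{fig:Crit Traj 2to3 sigma plane}). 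Along VI and VIII the points $\pm z_0$ genuinely cross the critical graph $\mathscr{J}^{(1)}_{\sigma}$, yet the one-cut regime persists --- these are the ``fake transitions'' of Theorem \ref{THM Fake Transition}, in which $z_0$ passes from an unstable to a stable land without destroying the connection $\Gamma[-b_1,b_1]$ or the complementary contour to $+\infty$. Sorting the genuine boundaries from the spurious components is the content of Theorems \ref{sigma in I}, \ref{Lemma Who Hits Who}, \ref{THM Fake Transition}, \ref{THMK2K3}, \ref{THM VII IX}, and \ref{THM I XII}, and it requires the Teichm\"uller-lemma bookkeeping of the trajectories of $Q_1(z;\sigma)\,\normalfont{\dd}z^2$ in the $z$-plane --- the reflection symmetry and the asymptotic directions at infinity alone do not determine which trajectory $\pm z_0$ collides with, nor what the critical graph becomes after the collision. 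Without this analysis the claim that regularity persists throughout each $\mathcal{O}_j$ is not established.
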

	
	\begin{remark} \normalfont
		In fact $\Re a_{3}, \Im a_{3}, \Re b_{3}, \Im b_{3}, \Re c_3$ and $\Im c_3$ are real-analytic functions of $\Re \sigma$ and $\Im \sigma$ for all $\sigma \in \mathcal{O}_{3}$. In \cite{BBGMT2}, in the more general context where the external field is of even degree $2p$, $p \in \N$, among other things we establish the real-analyticity of the real and imaginary parts of the end-points for all $q$-cut regimes, $1 \leq q \leq 2p-1$, with respect to the real and imaginary parts of the complex parameters in the external field.
	\end{remark}

	\subsection{Topological Expansion of the Free Energy and Combinatorics of Four-valent Graphs}
	
	Our third main result in this paper concerns the topological expansion of the free energy, 
	\begin{equation}\label{int25}
	F_{NN}(\sg)=\frac{1}{N^2} \ln\frac{Z_{NN}(\sg)}{{\mathcal Z}_{NN}^{\rm GUE}}\,. 
	\end{equation}
	
	The existence of the $\di \frac{1}{N^2}$ expansion of the free energy for general real potentials
	\[ \mathscr{V}(z) = \frac{z^2}{2} + \sum^{\nu}_{k=1} u_k z^k, \]
	is proven in \cite{ErcolaniMcLaughlin}, where $u_k \in \R$ are such that the corresponding partition function exists. The analogous result for the complex cubic potential  $\mathscr{V}(z) = \frac{z^2}{2} + u z^3$ was proven in \cite{BD}, and in this work we extend this result for the complex quartic potential \eqref{int2}, or equivalently for \eqref{int13}.
	
	\begin{theorem}\label{thm topexp}
		{\it For all $\sg$ in the one-cut region $\mathcal{O}_1$, the free energy $F_{NN}(\sg)$ admits the topological expansion,
			\begin{equation}\label{int26}
			F_{NN}(\sg)\sim \sum_{g=0}^\infty \frac{f_{2g}(\sg)}{N^{2g}}\,,
			\end{equation}}
		and the functions $f_{2g}(\sigma)$ are analytic in $\sigma$ for all $\sigma \in \mathcal{O}_{1}$.     
	\end{theorem}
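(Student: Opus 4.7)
The plan is to adapt the Riemann--Hilbert / string equation / Toda lattice machinery of Bleher--Its \cite{BleherIts2005}, as extended to complex parameters in the cubic case by Bleher--Dea\~no \cite{BD}, to the complex quartic setting in the one-cut region. The starting point is the Heine-type identity
\begin{equation*}
Z_{nN}(\sigma) = n!\,\prod_{k=0}^{n-1} h_{k,N}(\sigma),
\end{equation*}
where $h_{k,N}(\sigma)$ is the squared $L^2\!\left(\R, e^{-NV(\cdot;\sigma)}\dd\zeta\right)$ norm of the $k$-th monic orthogonal polynomial for the complex weight. Dividing by $\mathcal{Z}_{NN}^{\rm GUE}$ turns the free energy into
\begin{equation*}
N^2 F_{NN}(\sigma) = \sum_{k=0}^{N-1}\ln\!\left(\frac{h_{k,N}(\sigma)}{h_{k,N}^{\rm GUE}}\right),
\end{equation*}
and through $R_{k,N}=h_{k,N}/h_{k-1,N}$, the recurrence coefficient in the symmetric three-term relation $\zeta p_n = p_{n+1}+R_{n,N}p_{n-1}$, the problem reduces to a uniform $1/N$ expansion of $R_{n,N}(\sigma)$.

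Those asymptotics come from the Deift--Kriecherbauer--McLaughlin--Venakides--Zhou nonlinear steepest descent applied to the orthogonal polynomial Riemann--Hilbert problem, with the regular one-cut equilibrium measure $\nu_{\rm eq}(\sigma)$ produced by Theorem~\ref{thm 1.2} as the input. Because the support $\Gamma[-b_1,b_1]$ and the associated $g$- and $\phi$-functions are built from critical trajectories of the quadratic differential $(z^2-z_0^2)^2(z^2-b_1^2)\dd z^2$ whose critical points $b_1(\sigma), z_0(\sigma)$ depend analytically on $\sigma\in\mathcal{O}_1$, both the global and the Airy parametrices can be assembled with joint analytic dependence on $\sigma$. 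The outcome is a complete asymptotic expansion
\begin{equation*}
R_{n,N}(\sigma) \sim \sum_{k=0}^{\infty} \frac{r_k(\lambda,\sigma)}{N^k}, \qquad \lambda := n/N,
\end{equation*}
uniform for $\lambda$ in a neighborhood of $1$ and $\sigma$ in compact subsets of $\mathcal{O}_1$, with each $r_k(\lambda,\sigma)$ analytic in $\sigma$.

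To pass from the recurrence coefficient expansion to the topological expansion of the free energy, I would invoke the quartic string equation
\begin{equation*}
R_{n,N}(\sigma)\bigl(\sigma + R_{n-1,N}(\sigma) + R_{n,N}(\sigma) + R_{n+1,N}(\sigma)\bigr) = \frac{n}{N}
\end{equation*}
together with the Toda-type identities relating $\sigma$-derivatives of $\ln h_{n,N}$ to polynomial expressions in the $R_{j,N}$. Substituting the expansion above into the string equation and matching powers of $1/N$ yields a triangular recursion for the $r_k$; a parity argument, in which the odd-level equations are homogeneous in the odd coefficients, forces $r_{2k+1}\equiv 0$. An Euler--Maclaurin summation of $\sum_{j} (N-j)\ln R_{j,N}$, with the constants of integration fixed via the Toda relation in $\sigma$, then produces \eqref{int26}, and the analyticity of each $f_{2g}(\sigma)$ on the connected open set $\mathcal{O}_1$ follows from analyticity of the $r_k$ in $\sigma$ because the remaining integrations are over $\lambda$ alone.

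The hard part, and what genuinely distinguishes the complex-$\sigma$ case from \cite{BleherIts2005}, is the nonlinear steepest descent step: the sign inequalities on $\Re\phi$ required to justify the lens opening can no longer be read off from real-line convexity of the potential, but must be extracted from the quadratic-differential structure $Q(z;\sigma)\dd z^2$ of Theorem~\ref{thm 1.2}. Establishing these inequalities uniformly on compact subsets of $\mathcal{O}_1$, and arranging for the lens contours to depend continuously on $\sigma$ up to the critical curves $\gamma_1,\dots,\gamma_4$, is the technical heart of the argument.
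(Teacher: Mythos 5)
Your first half---Heine's identity, the Riemann--Hilbert expansion of the recurrence coefficients $R_{n,N}$, and the string-equation parity argument killing the odd coefficients---matches the paper's \S\ref{sec rhp} essentially step for step. The divergence, and the gap, is in how you pass from $R_{n,N}$ to the free energy. You propose the classical route: write $N^2F_{NN}=\sum_{k=0}^{N-1}\ln\left(h_k/h_k^{\rm GUE}\right)$ and perform an Euler--Maclaurin summation of $\sum_j(N-j)\ln R_{j,N}$. That summation requires the asymptotic expansion of $R_{j,N}$ uniformly for $\lambda=j/N$ ranging over all of $(0,1]$, not merely ``in a neighborhood of $1$'' as you claim to have established. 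By the rescaling $z\mapsto\varkappa^{1/4}z$, the weight $e^{-nV_\varkappa(z;\sigma)}$ is equivalent to the quartic weight with parameter $\sigma\varkappa^{-1/2}$, so uniformity down to small $\lambda$ amounts to the assertion that the entire outward ray $\{\sigma\varkappa^{-1/2}:\varkappa\in(0,1]\}$ stays inside $\mathcal{O}_1$, uniformly away from the critical curves $\gamma_1,\dots,\gamma_4$. For real $\sigma>-2$ this is automatic, which is why the Euler--Maclaurin route works in Bleher--Its and Ercolani--McLaughlin; for complex $\sigma$ near the boundary of $\mathcal{O}_1$ it is a nontrivial geometric claim about the phase diagram that you neither state nor prove.

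The paper avoids this issue entirely. Its key device---flagged in the introduction as possibly new---is the exact identity $\frac{\partial F}{\partial\sigma}=-\frac{N^2}{2n^2}R_n\left(\frac{n}{N}+R_{n-1}R_{n+1}\right)$, equation \eqref{F'}, derived from the one-point density via the Christoffel--Darboux formula, the Lax pair for the $\psi$-functions, and a final simplification using the string equation. This expresses $\partial F/\partial\sigma$ through the recurrence coefficients at the three indices $n-1,n,n+1$ with $n=N$, so only the expansion of $r_j(\varkappa;\sigma)$ for $\varkappa$ near $1$ is needed; a parity computation shows $R_{n-1}R_{n+1}$ expands in even powers of $1/N$, and integrating in $\sigma$ inside $\mathcal{O}_1$ yields \eqref{int26} together with the analyticity of the $f_{2g}$. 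If you want to keep the Euler--Maclaurin route you must supply the missing uniformity statement for $\lambda$ throughout $(0,1]$; otherwise you should switch to the $\partial_\sigma F$ identity, whose derivation is the actual technical heart of the paper's proof of this theorem---not, as you suggest, the lens-opening inequalities, which the paper dispatches by citing the standard literature once the one-cut structure of Theorem \ref{thm 1.2} and the openness results of \S\ref{Sec crit graphs z plane} are in place.
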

	
	\begin{remark}
		\normalfont Due to \eqref{int15b}, we also have  \begin{equation}\label{int10111}
		\mathscr{F}_{NN}(u) \sim \sum_{g=0}^M \frac{\mathcal{f}_{2g}(u)}{N^{2g}}\,.
		\end{equation}
		In \S \ref{Sec Free Energy} we show that the coefficients $\mathcal{f}_{2g}(u)$ are analytic functions of $u$ in a neighborhood of the origin independent of $g$, more precisely in a disk of radius $\frac{1}{12}$.
	\end{remark}

	As mentioned before, \eqref{int10111} is referred to as the topological expansion of the partition function. Roughly, the quest for models of quantum gravity led to the 2-dimensional reduction in which large but finite collections of different geometries on Riemann surfaces are considered, and one seeks a natural probability measure on these geometries.   F. David \cite{David85} and V. Kazakov \cite{Kaz85} first introduced such random surfaces discretized using polygons to define models of two-dimensional quantum gravity, making use of the connection between graphical enumeration and integrals over large matrices discovered by t'Hooft \cite{tHooft}. 
	To understand the probability measure, one needs to know how many of these geometries there are, and the problem in enumerative geometry that emerges is to count the number of graphs that can be embedded into a Riemann surface, according to the genus of the surface and the number of vertices of different valences.  As discovered in the subsequent works \cite{BIPZ}, 
	\cite{Bes79} and \cite{BIZ},  the topological expansion above should be an expansion of generating functions, in which $\mathcal{f}_{2g}(u)$ is a combinatorial generating function whose $j$-th coefficient yields the number of labelled graphs with $j$ vertices of valence 4, that can be embedded in a Riemann surface of genus $g$.  Yet another connection was discovered by Witten \cite{Witten}, to the intersection theory of the moduli space of Riemann surfaces, where intersection numbers can also be computed using matrix integrals.

	Since the emergence of rigorous mathematical analysis of the partition function by Riemann-Hilbert methods in \cite{ErcolaniMcLaughlin} and in \cite{BleherIts2005}, there have followed works aimed at extracting explicit information about the generating functions and about the important combinatorial coefficients.  For example, in \cite{Ercolani-McLaughlin-Pierce} the authors initiated an investigation of the generating functions in the topological expansion in the case that all vertices were of a fixed, even, valence.  They made use of both the Toda equations and the string equations, and provided a description of structural properties of the generating functions in terms of inversion of certain differential operators. They extracted some explicit information for enumeration of maps on surfaces of genus $0$, $1$, and $2$, along with recursive definitions for higher genus.  (Explicit representation of the generating function means a complete solution of the combinatorial problem for each genus and maps of a fixed valence type.)  Later, Ercolani \cite{ErcolaniCaustics, ErcolaniConservationLaws} continued this research, analyzing a hierarchy of partial differential equations coming from the Toda lattice equations (and the asymptotic expansion of the partition function) and derived semi-explicit characterizations of the $\mathcal{f}_{2g}(u)$ as rational functions of other auxiliary functions.
	
	As already mentioned above, for the three-valent case (the cubic matrix model), the Riemann-Hilbert analysis and topological expansion were established in \cite{BD} and \cite{BDY}, where in particular the authors explicitly evaluated the combinatorial coefficients explicitly for genus $0$ and $1$.  Characterization results for the generating functions for other odd valences have recently been obtained by Ercolani and Waters \cite{ErcolaniWaters1}.
	
	An interesting difference in approach between the present work and the works \cite{ErcolaniCaustics,ErcolaniConservationLaws,Ercolani-McLaughlin-Pierce,ErcolaniWaters1} is the following:  we exploit the string equation and a (possibly new) explicit equation for the first derivative of the free energy \eqref{F'}  to obtain recursive relations, whereas in the works of Ercolani and collaborators, they use the string equation and a hierarchy of partial differential equations derived from the Toda lattice system of ordinary differential equations.  At present our equation for the first derivative of the free energy is only known for the quartic model, while the analysis of Ercolani and collaborators works for more general single valence settings.
	
	In \S \ref{Sec Free Energy} we establish a number of results concerning these generating functions.  We provide recursive relations in $g$, as well as explicit representations for $g=0,1,2,$ and $3$.  The representations \eqref{number sphere}, \eqref{number torus}, and \eqref{N j+1 2} respectively for $g=0, 1$, and $2$ agree with the classical paper of Bessis, Itzykson, and Zuber \cite{BIZ}, while we believe the result for $g=3$ is new.  As with other representations, the recursive algorithm does yield explicit representations for any genus, but requires more effort as the genus increases. To that end, let us highlight the following result regarding enumeration of graphs. 
	
	\begin{theorem}
		Let $\mathscr{N}_j(g)$ be the number of connected labeled 4-valent graphs with $j$ vertices which are realizable on a closed Riemann surface of genus $g$ , but not realizable on Riemann surfaces of lower genus. For the Riemann surface of genus three we have
		\begin{equation}\label{N j+4 3 12345}
		\begin{split}
		\mathscr{N}_{j+4}(3) & = \frac{16 \cdot 48^j\left(j+3\right)!}{3(j)!} \times \\ & \left( \frac{2741}{10}(j+5)!   - \frac{291}{10}j (j+4)! - \frac{2741}{1260}  \frac{ (2j+9)!}{ 4^j (j+4)!} - \frac{292j(2j+7)!}{315 \cdot 4^j (j+3)!} \right), \quad j \in \N,
		\end{split}
		\end{equation}
		where $\mathscr{N}_{1}(3)=\mathscr{N}_{2}(3)=\mathscr{N}_{3}(3)=\mathscr{N}_{4}(3)=0$, that is to say all connected labeled 4-valent graphs with $1,2,3,$ or $4$ vertices can be realized on the sphere, torus, or the two-holed torus.
	\end{theorem}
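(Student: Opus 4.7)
The plan is to read off $\mathscr{N}_{j+4}(3)$ as the rescaled Taylor coefficients of the generating function $\mathcal{f}_6(u)$, by way of the standard dictionary
\begin{equation*}
\mathcal{f}_{2g}(u) \;=\; \sum_{j=1}^{\infty} \mathscr{N}_j(g)\,\frac{(-u)^j}{4^j\, j!},
\end{equation*}
where the $4^j$ tracks the coupling $u/4$ in front of $z^4$ in \eqref{int2}, and the fact that a connected $4$-valent ribbon graph on a genus-$3$ surface must carry at least five vertices already gives $\mathscr{N}_j(3)=0$ for $1\leq j\leq 4$, consistent with the claim.

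The second step, and the real work, is to produce $\mathcal{f}_6(u)$ in closed form. I would invoke the recursive scheme set up in \S\ref{Sec Free Energy}, which combines the string equation with the explicit identity \eqref{F'} for the first $\sigma$-derivative of the free energy; starting from the previously computed expressions for $f_0,f_2,f_4$, one further level of the recursion delivers $f_6(\sigma)$ as a rational function of the endpoint data of the one-cut equilibrium measure from Theorem \ref{thm 1.2}(I). The change of variable \eqref{int15b}, $\sigma=u^{-1/2}$, then produces $\mathcal{f}_6(u)$; in light of \eqref{N j+4 3 12345}, one expects a four-term decomposition in which two summands are rational in $u$ (giving after expansion the $(j+5)!$ and $j(j+4)!$ contributions) and two involve the branch $\sqrt{1-12u}$ (giving the $(2j+9)!/(4^j(j+4)!)$ and $j(2j+7)!/(4^j(j+3)!)$ contributions).

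Having $\mathcal{f}_6(u)$ in this form, the remainder is routine bookkeeping. I would use the standard Taylor identities
\begin{equation*}
[u^n]\,(1-12u)^{-m} \;=\; 12^n\binom{n+m-1}{n}, \qquad [u^n]\,(1-12u)^{-m-1/2} \;=\; \frac{12^n\,(2n+2m)!\,m!}{4^n\,(n+m)!\,(2m)!\,n!},
\end{equation*}
then multiply by $(-1)^{j+4}\,4^{j+4}\,(j+4)!$ to convert Taylor coefficients back to $\mathscr{N}_{j+4}(3)$, and regroup. The overall prefactor $\tfrac{16\cdot 48^j(j+3)!}{3\,j!}$ in \eqref{N j+4 3 12345} appears naturally from combining $4^{j+4}\cdot 12^j = 4^4\cdot 48^j$ with the falling-factorial ratios, while the four coefficients $\tfrac{2741}{10},\tfrac{291}{10},\tfrac{2741}{1260},\tfrac{292}{315}$ must emerge from the numerical constants attached to the four building blocks after simplification.

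The main obstacle I anticipate is precisely the simplification in step two: propagating the recursion from $f_4$ to $f_6$ yields a substantially bulkier rational expression than at the lower genera treated in \cite{BIZ}, and recognising that after the substitution $\sigma=u^{-1/2}$ it collapses into the compact four-term algebraic form above requires careful handling of the parametrisation of the one-cut endpoints in terms of $\sigma$ via the string equation. A secondary point to verify is that the branch of $\sigma=u^{-1/2}$ chosen is the one for which $u=0$ lies in the one-cut region $\mathcal{O}_1$, so that Theorem \ref{thm topexp} applies and term-by-term Taylor expansion at the origin is legitimate.
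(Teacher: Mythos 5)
Your proposal follows essentially the same route as the paper: the string-equation recursion for the topological-expansion coefficients of the recurrence coefficients, the explicit identity \eqref{F'} for the first derivative of the free energy, explicit computation and Taylor expansion of $\mathscr{E}_{6}=\mathcal{f}_{6}'$, term-by-term integration, and the Wick/Feynman dictionary \eqref{Fu to 4-valent graphs} to convert Taylor coefficients into $\mathscr{N}_{j+4}(3)$. The only cosmetic discrepancies are that the paper works directly in the $u$-variable via the string equation \eqref{string eq u} for $\mathscr{R}_n(u)$ rather than first producing $f_6(\sigma)$ and substituting $\sigma=u^{-1/2}$, and that the branch point sits at $u=-\tfrac{1}{12}$, so the algebraic building blocks are powers of $(1+12u)$ rather than $(1-12u)$ --- a sign you would need to track, but one that is absorbed by the $(-1)^j$ in the dictionary.
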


	We also highlight a result that describes the asymptotic behavior of the number of four-valent graphs on a Riemann surface of arbitrary genus $g$, as the number of vertices grows to infinity.  The following result is basically a corollary of Theorem \ref{thm F 2g j}.
	
	\begin{theorem}\label{thm combinatorics}
		The asymptotics of the number of connected labeled $4$-valent graphs on a Riemann surface of genus $g \in \N \cup \{0\}$, as the number of vertices tends to infinity, is given by \begin{equation}\label{N j g large j asymptotics1}
		\mathscr{N}_j(g) = \mathcal{K}_g 48^j j! j^{\frac{5g-7}{2}}\left(1+O(j^{-1/2}) \right), \qquad j \to \infty, 
		\end{equation}
		where the constants $\mathcal{K}_g$ are the same as the ones in Theorem \ref{thm F 2g j}:
		\begin{equation}\label{K g in terms of Cg1}
		\mathcal{K}_g = \begin{cases}
		\di \frac{12^{\frac{5g-1}{2}}}{\left(\frac{5g-5}{2}\right)!}\frac{1}{5g-3} \mathcal{C}_{2g}, & g =2k+1, \\[10pt]
		\di \frac{ 12^{\frac{5g-1}{2}} 2^{5g-4}}{\sqrt{\pi}}\frac{\left(\frac{5g-4}{2}\right)!}{\left(5g-3\right)!}\mathcal{C}_{2g}, & g =2k,
		\end{cases} \qquad k \in \N,
		\end{equation}
		while $\mathcal{K}_0=2^{-1}\pi^{-1/2}$, and $\mathcal{K}_1=24^{-1}$, where the constants $\mathcal{C}_{2g}$ can be found recursively from the following relations
		\begin{equation}\label{C2g recurrence1}
		\mathcal{C}_{2g} = \frac{1}{2^33^{\frac{1}{2}}} \sum^{g-1}_{\ell=1}  \mathcal{C}_{2g-2\ell} \mathcal{C}_{2\ell} + \frac{(5g-6)(5g-4)}{2^83^{\frac{7}{2}}  }\mathcal{C}_{2g-2}, \qquad  \mathcal{C}_0 = -2^23^{\frac{1}{2}},  \qquad g \in \N .
		\end{equation}
	\end{theorem}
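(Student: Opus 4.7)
The plan is to deduce Theorem \ref{thm combinatorics} as a direct corollary of Theorem \ref{thm F 2g j}, which provides the large-$j$ asymptotics of the Taylor coefficients of $f_{2g}$ (equivalently $\mathcal{f}_{2g}$) at the origin. The bridge between the two statements is the combinatorial identification coming from 't Hooft's topological expansion of the matrix integral: expanding $Z_{NN}(u)$ perturbatively in $u$ and collecting connected Feynman diagrams, each 4-valent vertex contributing the weight $-u/4$ from the $uz^4/4$ coupling, yields $\mathcal{f}_{2g}(u)=\sum_{j\ge 1}\mathscr{N}_j(g)(-u)^j/(4^j j!)$, so that $\mathscr{N}_j(g)=(-1)^j 4^j j!\,[u^j]\mathcal{f}_{2g}(u)$. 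The task therefore reduces entirely to extracting the large-$j$ behavior of the $u^j$-coefficient of $\mathcal{f}_{2g}$.

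To do so, I would first invoke Theorem \ref{thm topexp} together with the representations of $f_{2g}(\sigma)$ developed in \S\ref{Sec Free Energy} in order to locate the dominant singularity of $\mathcal{f}_{2g}(u)$ on the boundary of its disk of analyticity. The critical points in the $\sigma$-plane nearest to $\sigma=\infty$ are $\sigma=\pm 2\sqrt{3}\,\ii$, which both map under $u=\sigma^{-2}$ to $u=-1/12$; this is the ``birth-of-a-cut'' transition governed by the Painlev\'e I double-scaling limit of \cite{DK}. A matched asymptotic analysis across this critical point, driven by the string equation for the recurrence coefficients degenerating to Painlev\'e I, should produce a singular expansion of the form $\mathcal{f}_{2g}(u)=(\text{analytic at }u=-1/12)+\mathcal{C}_{2g}(1+12u)^{(5-5g)/2}+(\text{lower order})$, with the constants $\mathcal{C}_{2g}$ determined by the formal Painlev\'e I hierarchy and the recurrence \eqref{C2g recurrence1} arising as the corresponding solvability condition.

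With the singular expansion in hand, I would apply the standard singularity analysis (Darboux's method / Flajolet--Odlyzko transfer) to read off $[u^j]\mathcal{f}_{2g}(u)=\mathcal{C}_{2g}\,(-12)^{j}\,j^{(5g-7)/2}/\Gamma((5g-5)/2)\cdot\bigl(1+O(j^{-1/2})\bigr)$. The parity dichotomy in \eqref{K g in terms of Cg1} reflects whether $(5g-5)/2$ is an integer (odd $g$) or a half-integer (even $g$): the former case is handled directly, while the latter is obtained from the former via Legendre's duplication formula for $\Gamma$. Multiplying through by $(-1)^j 4^j j!$ then produces the claimed asymptotics \eqref{N j g large j asymptotics1}, the factor $48^j=4^j\cdot 12^j$ reflecting the combined effect of the per-vertex weight $1/4$ and of the location of the dominant singularity in the $u$-plane.

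The principal obstacle lies in justifying the singular part of $\mathcal{f}_{2g}$ at $u=-1/12$, and in particular in identifying the universal constant $\mathcal{C}_{2g}$ together with its recurrence \eqref{C2g recurrence1}; this is where the string equation, the first-order differential relation for $F_{NN}'(\sigma)$ highlighted in the introduction, and the matching with the Painlev\'e I hierarchy all converge. The base cases $g=0$ and $g=1$ have logarithmic rather than algebraic leading singular behavior and need to be handled separately, with $\mathcal{K}_0=2^{-1}\pi^{-1/2}$ recovering the classical BIZ Catalan-type count for planar maps and $\mathcal{K}_1=1/24$ matching the known genus-one enumeration. Once Theorem \ref{thm F 2g j} has been established, the present theorem follows by the direct substitution sketched above.
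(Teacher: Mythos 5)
Your reduction of the theorem to Theorem \ref{thm F 2g j} via the Feynman-diagram identity $\mathcal{f}_{2g}(u)=\sum_{j\ge1}(-1)^j\mathscr{N}_j(g)u^j/(4^jj!)$, followed by multiplication by $(-1)^j4^jj!$, is exactly the paper's argument (the theorem is stated there as a corollary of Theorem \ref{thm F 2g j}), and your bookkeeping of the constants is consistent: integrating $\mathscr{E}_{2g}\sim\frac{2^43^2}{3-5g}\mathcal{C}_{2g}(u+\frac1{12})^{(3-5g)/2}$ and converting to the normalization $(1+12u)^{(5-5g)/2}$ does reproduce \eqref{K g in terms of Cg1} after the $\Gamma$-factor computation, including the parity split via the duplication formula. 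Where you diverge from the paper is in how the singular expansion of $\mathcal{f}_{2g}$ at $u=-1/12$ is obtained. You propose a matched-asymptotics / Painlev\'e I double-scaling analysis plus a Flajolet--Odlyzko transfer; the paper instead never performs any double-scaling analysis. It derives an \emph{exact} recursion for the coefficients $\mathcal{r}_{2g}(\varkappa;u)$ of the recurrence-coefficient expansion from the string equation (equations \eqref{r2g}--\eqref{AAA}), shows by pure algebra that each $\mathcal{r}_{2g}$, and hence each $\mathscr{E}_{2g}=\mathcal{f}_{2g}'$ for $g\ge2$, is a \emph{finite} sum of half-integer powers of $u+\frac1{12}$ with leading coefficient governed by $\mathcal{C}_{2g}$ (Lemmas \ref{main lemma} and \ref{lemma 6.5.}, where the recurrence \eqref{C2g recurrence1} drops out of the leading singular terms of the string recursion), and then reads off the Taylor coefficients exactly by the generalized binomial theorem before applying Stirling --- so no transfer theorem is needed. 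The Painlev\'e I connection appears in the paper only as a remark identifying the $\mathcal{C}_{2g}$ with the Boutroux tronqu\'ee expansion coefficients, not as a proof ingredient; making your double-scaling route rigorous would be considerably harder than the paper's algebraic one. Two small corrections: the dominant singularity of $\mathcal{f}_0$ is algebraic (of type $(1+12u)^{5/2}$), not logarithmic --- only $g=1$ produces a logarithm --- and your displayed coefficient formula $[u^j]\mathcal{f}_{2g}=\mathcal{C}_{2g}(-12)^jj^{(5g-7)/2}/\Gamma((5g-5)/2)$ uses $\mathcal{C}_{2g}$ for the coefficient of $(1+12u)^{(5-5g)/2}$ in $\mathcal{f}_{2g}$, which differs from the paper's $\mathcal{C}_{2g}$ (normalized through $\mathcal{r}_{2g}$) by the factor $\frac{2^53^2\cdot12^{(5g-5)/2}}{(3-5g)(5-5g)}$; you flag this identification as the remaining work, so it is an omission rather than an error.
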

	\begin{remark}\normalfont
		It is worth noticing that the constants $\mathcal C_{2g}$ also arise in the asymptotic expansion of the one-parameter family of the Boutroux tronqu\'ee solutions
		$u(\tau)=u(\tau;\al)$ to the Painlev\'e I equation 
		\[
		u''(\tau)=6u^2(\tau)+\tau
		\]
		as $\tau\to -\infty$. Namely, as $\tau\to-\infty$, 
		\[
		u(\tau;\al)\sim \sqrt{-\frac{\tau}{6}}\,\sum_{k=0}^\infty a_k(-\tau)^{-5k/2},
		\]
		where the coefficients $a_k$ do not depend on the parameter $\al$, and they are given by the nonlinear recursion
		\begin{equation}\label{eq1}
		a_0=1,\quad a_{k+1}=\frac{25 k^2-1}{8\sqrt 6}\,a_k-\frac{1}{2}\sum_{m=1}^k a_m a_{k+1-m}
		\end{equation}
		(see, e.g., the works \cite{ BD1, Boutroux, Costin-et-al, ErcolaniCaustics, ErcolaniConservationLaws, Kapaev}). Define now the rescaled Boutroux functions
		\[
		y(t)=-2^{8/5}3^{2/5}u(-2^{9/5}3^{6/5}t).
		\]
		Then it can be checked directly, by using \eqref{eq1}, that as $t\to\infty$,
		\[
		y(t)\sim \sum_{g=0}^\infty\mathcal C_{2g}t^{(1-5g)/2},
		\]
		where the coefficients $\mathcal C_{2g}$ are given by nonlinear recursion \eqref{C2g recurrence1}. Therefore, the coefficients
		$\mathcal C_{2g}$ coincide with the coefficients of the asymptotic expansion of the rescaled Boutroux functions $y(t)$
		as $t\to\infty$.
		
		The appearance of the Boutroux tronqu\'ee solutions can be explained as follows. 
		Under the substitution $u=\sg^{-2}$ (see formula \eqref{int11}), the critical points $\sg=\pm \ii \sqrt{12}$ and $\sg=-2$ on the phase diagram, depicted 
		on Figure \ref{fig:Phase Diagram} above, are mapped to the points $u=-\frac{1}{12}$ and  $u=\frac{1}{4}$, respectively. The point $u=-\frac{1}{12}$ is closer
		to the origin than the one  $u=\frac{1}{4}\,,$ and it determines the asymptotic behavior of the Taylor coefficients of the
		functions $f_{2g}(u)$ at the origin. But as shown in the paper \cite{DK} of Duits and Kuijlaars, the double scaling limit of the
		model at the critical point $u=-\frac{1}{12}$ is described in terms of a Boutroux tronqu\'ee solution
		to the Painlev\'e I equation. It is noteworthy that the double scaling limit of the matrix
		model at the critical point $u=-\frac{1}{12}$  gives rise to the 2D continuous quantum gravity of Polyakov (see, e.g., the
		papers \cite{DiFGZJ} and \cite{Witten}, 
		and references therein). 
	\end{remark}

	\begin{remark}\normalfont
		The formula \eqref{N j g large j asymptotics1} was formulated as a conjecture in the introduction of \cite{Its-et-al} (see pages 27 through 29 of \cite{Its-et-al} for the relevant references). Here we directly quote from  \cite{Its-et-al}: \begin{quote}
			\textit{``The status of \[\mathscr{N}_j(g) \sim \mathcal{K}_g 48^j j! j^{\frac{5g-7}{2}}\]
				remains that of a conjecture. Nevertheless, the current level of development of the Riemann-Hilbert techniques, and the experience with other combinatorial problems e.g. in random permutations \cite{BDJ}, suggest that all the gaps in the above construction will be soon filled.''}
		\end{quote}
		Indeed in Theorem \ref{thm combinatorics} above we have not only established this conjecture, but furthermore we have also characterized the constants $\mathcal{K}_g$ explicitly in terms of the constants $\mathcal{C}_{2g}$. It should be mentioned that in the recent preprint \cite{ErcolaniWaters1} mentioned above, the authors provide an analogue of \eqref{N j g large j asymptotics1} and \eqref{K g in terms of Cg1} for the general single even-valence potential.  The result in the preprint is stated for even numbers of vertices (see equation A.9 of \cite{ErcolaniWaters1}), but by comparing to our result the form of the asymptotics surely holds for both even and odd numbers of vertices.   They omit the proof, but presumably it follows from a similar analysis done in the same paper for a different combinatorial problem (see Corollary 10.8 of \cite{ErcolaniWaters1}).
	\end{remark}
	\begin{remark} \normalfont
		A very interesting direction of research is to explore the precise connection between the asymptotics of the labeled graphs embedded on a Riemann surface of genus $g$  as the number of vertices go to infinty (as addressed in Theorem \ref{thm combinatorics} for the four-valent case, and in Theorem 1.4 of \cite{BD} for the three-valent case) and the asymptotics of the number of the so-called \textit{rooted maps} as the number of edges goes to infinity.
		
		Let us recall some definitions regarding the latter asymptotics from \cite{BGR}.   Let $\Sigma_g$ be the orientable surface of genus $g$. A map on $\Sigma_g$ is a graph $G$ embedded on $\Sigma_g$ such that all components of $\Sigma_g \setminus G$ are simply connected regions. These components are called faces of the map. A map is \textit{rooted} by distinguishing an edge, an end vertex of the edge and a side of the edge. Let us denote by $M_{n,g}$ the number of rooted maps on $\Sigma_g$ with $n$ edges. In \cite{BenderCanfield} Bender and Canfield showed that
		\begin{equation}\label{Mng asymp}
		M_{n,g} \sim t_g n^{\frac{5g-5}{2}}12^n \qasq n \to \infty,
		\end{equation}
		where the $t_g$ are positive constants which can be calculated recursively. One can already observe the apparent similarity between \eqref{Mng asymp} and \eqref{N j g large j asymptotics1}, which becomes even more interesting when one observes that the first three values for $t_g$ are given in \cite{BenderCanfield} by
		\[ t_0 = \frac{2}{\sqrt{\pi}} \equiv 4 \mathcal{K}_0 , \qquad t_1 = \frac{1}{24} \equiv \mathcal{K}_1 , \qandq t_2 = \frac{7}{4320\sqrt{\pi}} \equiv  \frac{1}{4}\mathcal{K}_2.  \]
		An analogous similarity can also be seen when one compares \eqref{Mng asymp} with equation 1.25 of \cite{BD} which gives the asymptotics of the number of labeled three-valent graphs as the number of vertices goes to infinity. For the asymptotics of the rooted maps see also the works \cite{BenderCanfield,BGR,Gao,Gao1,GLM} and references therein.
	\end{remark}	
	
	\subsection{Outline} The paper is organized as follows:
	
	\begin{itemize}
		\item In \S \ref{section_EM} we derive the end-point equations in the one-cut, two-cut, and three-cut regimes. These endpoint equations are algebraic in the one-cut and two-cut case, thereby allowing for explicit solutions. In the one-cut and two-cut cases we find explicit expressions for the $g$-function and the Euler-Lagrange constants.
		\item In \S \ref{Sec crit graphs z plane} we prove results about the structure of critical graphs in the $z$-plane using the theory of quadratic differentials. We also prove the openness of one-cut, two-cut, and three-cut regimes.
		\item  In \S \ref{Sec Aux QD} we use auxiliary quadratic differentials to prove the phase diagram as depicted in Figure \ref{fig:Phase Diagram}.
		\item   In \S \ref{sec rhp} we prove the topological expansion of the recurrence coefficients of the orthogonal polynomials using the Riemann-Hilbert analysis and the String equations.
		\item  In \S \ref{Sec Free Energy} we derive the Toda equations. We use the equation for $\mathcal{F}^{\prime}$ to prove the topological expansion for the free energy. As a result we extract the combinatorial information about the connected labeled $4$-valent graphs Riemann surfaces of various genera.
		\item  Finally in the Appendix \ref{Appendix Number of graphs} we provide visual illustrations of four valent graphs on the sphere and the torus with one and two vertices. We hope this helps for a deeper understanding of the combinatorial formulae \eqref{number sphere} through \eqref{N j+4 3 1}.
	\end{itemize}

	\section{Equilibrium Measure} \label{section_EM}
	
	In this section we first discuss  the equilibrium measure for a general complex polynomial $V(z)$, and then we will specify it to
	the equilibrium measure of the quartic complex polynomial 
	\eqref{int13}.
	
	\subsection{Equilibrium Measure for a General Complex Polynomial.}
	
	Let  
	\begin{equation}\label{em1}
	V(z)=\frac{z^{2p}}{2p}+\sum_{j=1}^{2p-1} \frac{v_j z^j}{j}
	\end{equation}
	be a polynomial of even degree $2p$ with the leading coefficient $\displaystyle\frac{1}{2p}$ and complex coefficients 
	$\displaystyle\frac{v_j}{j}\,,\;j=1,\ldots,2p-1$.
	We follow the work of Kuijlaars and Silva \cite{KS}, also 
	see the works \cite{Rakhmanov}, \cite{RF}, \cite{Ber}.

	For a given $\ep$, such that 
	\[
	\frac{\pi}{4p}>\ep>0,
	\]
	consider the sectors
	\begin{equation}\label{em2}
	\bga
	S^+_{\ep}&=\left\{z\in\C\;\Big |\; |\arg z|\le \frac{\pi}{4p}-\ep\right\},\\
	S^-_{\ep}&=\left\{z\in\C\;\Big |\; |\arg z-\pi|\le \frac{\pi}{4p}-\ep\right\}.
	\ena
	\end{equation}
	Observe that in these sectors,
	\begin{equation}\label{em3}
	\lim_{z\to\infty} \Re V(z)=\infty.
	\end{equation}
	Let us define a class $\mathcal T$ of {\it admissible contours} on the complex plane. By a contour we mean a {\it continuous curve} $z=z(t)$, $-\infty<t<\infty$, without self-intersections. We say that a contour $\Ga$ is admissible if
	\bge
	\item The contour $\Ga$  is a finite union of  $C^1$ Jordan arcs.
	\item There exists $\ep>0$ and $r_0>0$, such that $\Ga$ goes from $S^-_\ep$ to $S^+_\ep$ in the sense that $\forall\,r>r_0,$  
	$\exists\, t_0<t_1$ such that 
	\[
	z(t)\in S^-_\ep \setminus D_r \quad \forall\, t<t_0;\quad 
	z(t)\in S^+_\ep \setminus D_r \quad \forall\, t>t_1,
	\]
	where $D_r$ is the disk centered at the origin with radius $r$. We will assume that the contour $\Ga$ is oriented from $(-\infty)$ 
	to $(+\infty)$, where $(-\infty)$ lies in the sector $S^-_\ep$ and
	$(+\infty)$ in the sector $S^+_\ep$. The orientation defines an order
	on the contour $\Ga$.
	\ene
	An example of an admissible contour is the real line.

	Let $\Ga\in\mathcal T$ be an admissible contour and $\mathcal P(\Ga)$ the space of probability measures $\nu$ on $\Ga$ such that
	\begin{equation}\label{em4}
	\bga
	\underset{\Ga}{\int} &  |\Re V(s)|\, \dd \nu(s)<\infty.
	\ena
	\end{equation}
	Consider the following real-valued functional on $\mathcal P(\Ga)$:
	\begin{equation}\label{em5}
	I_{V,\Ga}(\nu):= \underset{\Ga\,\times\,\Ga}{\iint} \log \frac{1}{|z-s|}\,\dd \nu(z) \dd \nu(s) + \int_{\Ga} \Re V(s)\, \dd \nu(s).
	\end{equation}
	Then there exists a unique minimizer $\nu_{V,\Ga}$ of the functional 
	$I_{V,\Ga}(\nu)$, so that
	\begin{equation}\label{em6}
	\min_{\nu\in \mathcal P(\Ga)} I_{V,\Ga}(\nu)= I_{V,\Ga}(\nu_{V,\Ga}).
	\end{equation}
	See the work \cite{SaffTotik}.
	
	The probability measure $\nu_{V,\Ga}$ is called the
	{\it equilibrium measure} of the functional $I_{V,\Ga}(\nu)$.
	The support of $\nu_{V,\Ga}$ is a compact set $J_{V,\Ga}\subset \Ga$. 
	An important fact is that the equilibrium measure is uniquely determined by the {\it Euler--Lagrange variational conditions}. Namely,
	$\nu_{V,\Ga}$ is the unique probability measure $\nu$ on $\Ga$ such that
	there exists a constant $l$, a Lagrange multiplier, such that
	\begin{equation}\label{em7}
	\bga
	& U^{\nu}(z)+\frac{1}{2}\,\Re V(z)=\ell,\quad z\in \supp \nu,\\
	& U^{\nu}(z)+\frac{1}{2}\,\Re V(z)\ge \ell,\quad z\in \Ga\setminus \supp \nu,
	\ena
	\end{equation}
	where 
	\begin{equation}\label{em8}
	U^{\nu}(z)=\int_{\Ga} \log \frac {1}{|z-s|}\,\dd \nu(s) 
	\end{equation}
	is the {\it logarithmic potential} of the measure $\nu$ \cite{SaffTotik}.

	Now we maximize $I_{V}(\nu_{V,\Ga})$ over $\Ga\in \mathcal T$. The main result of the work of  Kuijlaars and Silva concerns the existence and properties of the maximizing contour  $\Ga_0\in \mathcal T$.
	They prove that the maximizing contour $\Ga_0\in\mathcal T$ exists, and
	the equilibrium measure 
	\[
	\nu_{\rm eq}=\nu_{V,\Ga_0}
	\]
	on $\Ga_0$ is supported by a set $J\subset\Ga_0$ which is a finite union 
	of {\it analytic arcs} $\Ga_0[a_k,b_k]\subset \Ga_0,\;k=1,\ldots,q$, 
	\[
	J=\bigcup_{k=1}^q \Ga_0[a_k,b_k],\quad a_1<b_1\le a_2<b_2\le \ldots\le a_q<b_q,
	\] 
	that are {\it critical trajectories}
	of a quadratic differential $Q(z)\,\dd z^2$ (see the beginning of \S \ref{Sec crit graphs z plane} for a review of definitions and basic facts about quadratic differentials), where $Q(z)$ is a polynomial of degree 
	\begin{equation}\label{em9}
	\deg Q(z)=2\deg V(z)-2=4p-2.  
	\end{equation}

	Furthermore, Kuijlaars and Silva prove that the polynomial $Q(z)$ is equal to
	\begin{equation}\label{em10}
	Q(z)=\left(-\om(z)+\frac{V'(z)}{2}\right)^2, 
	\end{equation}
	where 
	\begin{equation}\label{em11}
	\om(z)= \int_{J} \frac{ \dd \nu_{\rm eq}(s)}{z-s}
	\end{equation}
	is the resolvent of the measure $\nu_{0}$. Expanding
	\[
	\frac{1}{z-s}=\frac{1}{z}+\frac{s}{z^2}+\frac{s^2}{z^3}+\ldots,
	\]
	we obtain that as $z\to\infty$,
	\begin{equation}\label{em12}
	\om(z)=\frac{1}{z}+\frac{m_1}{z^2}+\ldots,\quad m_k=\int_J s^k
	\dd \nu_{\rm eq}(s).
	\end{equation}

	In addition, the equilibrium measure $\nu_{\rm eq}$ is absolutely continuous 
	with respect to the arc length and
	\begin{equation}\label{em13}
	\dd \nu_{\rm eq}(s)=\frac{1}{\pi \ii}\,Q_+(s)^{1/2} \dd s,
	\end{equation}
	where $Q_+(s)^{1/2}$ is the limiting value of the function
	\begin{equation}\label{em14}
	Q(z)^{1/2}=-\int_{J} \frac{ \dd \nu_{\rm eq}(s)}{z-s}+\frac{V'(z)}{2}, 
	\end{equation}
	as $z\to s\in J$ from the left-hand side of $J$ with respect to the orientation of the contour $\Ga_0$ from $(-\infty)$ to $\infty$.
	Observe that as $z\to\infty$,
	\begin{equation}\label{em15}
	Q(z)^{1/2}=-\left(\frac{1}{z}+\frac{m_1}{z^2}+\ldots\right)
	+\frac{1}{2}\left(z^{2p-1}+\sum_{j=1}^{2p-1}v_jz^{j-1}\right). 
	\end{equation}
	
	A very important result of  Kuijlaars and Silva is that
	the equilibrium measure $\nu_{\rm eq}$ is {\it unique as the max-min measure}. On the other hand, the contour $\Ga_0$ is not unique because it can be deformed outside of the support $J$ of $\nu_{\rm eq}$. 
	
	\subsection{The $g$-function.}
	We define the $g$-function as
	\begin{equation}\label{em16}
	g(z)= \int_{J} \log(z-s)\,\dd \nu_{\rm eq}(s),
	\end{equation}
	where for a fixed $s\in J$, we consider a cut of $\log(z-s)$ on the part of the curve
	$\Ga_0$ where $z<s$ with respect to the ordering on $\Ga_0$. Observe that by \eqref{em11},
	\begin{equation}\label{em17}
	g'(z)= \int_{J}\frac{\dd \nu_{\rm eq}(s)}{z-s}=\om(z),
	\end{equation}
	In addition, by \eqref{em8}, the logarithmic potential $U^{\nu_{\rm eq}}(z)$ is equal to
	\begin{equation}\label{em18}
	U^{\nu_{\rm eq}}(z)=\int_{J} \log \frac {1}{|z-s|}\,\dd \nu_{\rm eq}(s) 
	=-\Re g(z)
	\end{equation}
	hence the Euler--Lagrange variational conditions \eqref{em7} can be written as
	\begin{equation}\label{em19}
	\bga
	& -\Re g(z)+\frac{1}{2}\,\Re V(z)=\ell,\quad z\in J,\\
	& -\Re g(z)+\frac{1}{2}\,\Re V(z)\ge \ell,\quad z\in \Ga_0\setminus J.
	\ena
	\end{equation}
	
	\subsection{Regular and Singular Equilibrium Measures} An equilibrium measure $\nu_{\rm eq}$ is called {\it regular} if the following three conditions hold:
	\bge
	\item The arcs $\Ga_0[a_k,b_k],\;k=1,\ldots,q,$ of the support of $\nu_{\rm eq}$ are disjoint.
	\item The end-points $\{a_k,b_k,\;k=1,\ldots,q\}$ are simple zeros of the polynomial $Q(s)$.
	\item There is a contour $\Ga_0$ containing the support $J$ of $\nu_{\rm eq}$
	such that 
	\begin{equation}\label{em20}
	U^{\nu}(z)+\frac{1}{2}\,\Re V(z)> \ell,\quad z\in \Ga_0\setminus J.
	\end{equation}
	\ene
	An equilibrium measure $\nu_{\rm eq}$ is called {\it singular} (or
	{\it critical}) if it is not regular.
	
	\subsubsection{Regular Equilibrium Measures}  Suppose that
	an equilibrium measure $\nu_{\rm eq}$ is regular.
	Since the resolvent 
	\begin{equation}\label{em21}
	\om(z)= \int_{J} \frac{ \dd \nu_{\rm eq}(s)}{z-s}
	\end{equation}
	is analytic on $\C\setminus J$,
	it follows from equation \eqref{em10} that if the  equilibrium measure $\nu_0$ is regular then all the zeros of the
	polynomial $Q(z)$ different from the end-points $\{a_k,b_k,\;k=1,\ldots,q\}$ are of even degree, hence
	$Q(z)$ can be written as
	\begin{equation}\label{em22}
	Q(z)=\frac{1}{4}\,h(z)^2R(z),
	\end{equation}
	where $h(z)$ is a polynomial,
	\begin{equation}\label{em23}
	h(z)=\prod_{j=0}^r (z-z_j),
	\end{equation}
	with zeroes $z_0,\ldots,z_r$ different from the end-points $\{a_k,b_k,\;k=1,\ldots,q\}$,
	and
	\begin{equation}\label{em24}
	R(z)=\prod_{k=1}^q (z-a_k)(z-b_k).
	\end{equation}
	Thus,
	\begin{equation}\label{em25}
	Q(z)=\frac{1}{4}\,h(z)^2R(z)=\frac{1}{4}\prod_{j=0}^r (z-z_j)^2
	\prod_{k=1}^q (z-a_k)(z-b_k).
	\end{equation}
	By taking the square root with the plus sign, we obtain that
	\begin{equation}\label{em26}
	Q(z)^{1/2}=\frac{1}{2}\,h(z)R(z)^{1/2}
	=\frac{1}{2}\prod_{j=0}^r (z-z_j)\left[\prod_{k=1}^q (z-a_k)(z-b_k)\right]^{1/2},
	\end{equation}
	Correspondingly, equation \eqref{em13} can be rewritten as
	\begin{equation}\label{em27}
	\bga
	\dd \nu_{\rm eq}(z)&=\frac{1}{2\pi \ii}\,h(z)R_{+}(z)^{1/2} \dd z =\frac{1}{2\pi \ii}\,\prod_{j=0}^r (z-z_j)\left[\prod_{k=1}^q (z-a_k)(z-b_k)\right]_+^{1/2} \dd z.
	\ena
	\end{equation}

	Now we will apply the above results for the equilibrium measure of a general complex polynomial $V(z)$ to the quartic polynomial
	\[
	V(z)=V(z;\sg)=\frac{z^4}{4}+\frac{\sg z^2}{2}\,,\quad \sg\in \C.
	\]
	
	\subsection{Equilibrium Measures for the Quartic Polynomials
		$V(z;\sg)$}\label{SubSec EqMeas quartic}
	For the quartic polynomial in hand, equation \eqref{em10} for the 
	polynomial $Q(z)$ reads
	\begin{equation}\label{em28}
	Q(z)=\left(-\om(z)+\frac{z^3+\sg z}{2}\right)^2, \quad
	\om(z)= \int_{J} \frac{ \dd \nu_{\rm eq}(s)}{z-s}\,.
	\end{equation}
	Since the polynomial $V(z)$ is even, the uniqueness of the
	equilibrium measure $\nu_{\rm eq}$ implies that 
	\bge
	\item $\nu_{\rm eq}$ is even, $\nu_{\rm eq}(-s)=\nu_{\rm eq}(s)$.
	\item The resolvent $\om(z)$ is odd, $\om(-z)=-\om(z)$, and
	\item The polynomial $Q(z)$ is even, $Q(-z)=Q(z)$.
	\ene
	Considering $z\to\infty$, we obtain that
	\begin{equation}\label{em29}
	Q(z)=\left(\frac{z^3+\sg z}{2}
	-\frac{1}{z}-\frac{m_2}{z^3}-\ldots\right)^2=\frac{1}{4}
	\big[z^6+2\sg z^4+(\sg^2-4)z^2-4(\sg+m_2)\big].
	\end{equation}
	Since $Q(z)$ is a polynomial of degree 6, the possible number of cuts
	$q$ in formula \eqref{em25} can be $q=1,2,$ and $3$. Let us consider them in more detail.
	\subsubsection{One-Cut Equilibrium Measure}\label{subsec 2.4.1}
	When $q=1$, formula \eqref{em25} gives that
	\begin{equation}\label{em30}
	Q(z)=\frac{1}{4}\, (z-z_0)^2 (z-z_1)^2
	(z-a_1)(z-b_1).
	\end{equation}
	Since the equilibrium measure $\nu_0$ is even and the polynomial $Q(z)$ is even, we have that
	\begin{equation}\label{em31}
	-a_1=b_1,\quad -z_1= z_0,
	\end{equation}
	hence
	\begin{equation}\label{em32}
	Q(z)=\frac{1}{4}\left(z^2-z_0^2\right)^2(z^2-b_1^2).
	\end{equation}
	Equating this expression to the one \eqref{em29}, we obtain that
	\begin{equation}\label{em33}
	(z^2-c^2)^2(z^2-b^2)=z^6+2\sg z^4+(\sg^2-4)z^2-4(\sg+m_2).
	\end{equation}
	Comparing the coefficients at $z^4$ and $z^2$, we obtain the system of equations,
	\begin{equation}\label{em34}
	\left\{\;
	\bga
	&b_1^2+2z_0^2=-2\sg,\\
	&2b_1^2z_0^2+z_0^4=\sg^2-4.
	\ena
	\right.
	\end{equation}
	From the first equation we have that
	\[
	\sg=-\frac{b_1^2}{2}-z_0^2.
	\]
	Substituting this expression into the second equation and simplifying we obtain that
	\[
	b_1^2(b_1^2-4z_0^2)=16.
	\]
	Thus, we have the system of equations,
	\begin{equation}\label{em35}
	\left\{\;
	\bga
	&b_1^2+2z_0^2=-2\sg,\\
	&b_1^2(b_1^2-4z_0^2)=16.
	\ena
	\right.
	\end{equation}
	Solving it we obtain that 
	\begin{equation} \label{em36}
	b_1^2=\frac{2}{3}\left( -\sigma \pm \sqrt{12+\sigma^2} \right) \qandq z_0^2=\frac{1}{3}\left( -2\sigma \mp  \sqrt{12+\sigma^2} \right).
	\end{equation}
	As shown by Bleher and Its (see \cite{BI, BleherIts2003, BleherIts2005}), for real $\sg>-2$, the one-cut equilibrium measure persists with a real $b_1>0$. This determines the sign
	in the latter formulae,
	\begin{equation} \label{em37}
	b_1=\sqrt{\frac{2}{3}\left( -\sigma + \sqrt{12+\sigma^2} \right)} \qandq z_0=\sqrt{\frac{1}{3}\left( -2\sigma -  \sqrt{12+\sigma^2} \right)}.
	\end{equation}
	Observe that for $\sg=-2$, we have $b_1=2$ and $z_0=0$. Theorem \ref{main thm 1} tells us that the point $\sg=-2$ corresponds to a split of the cut $[-2,2]$ at $z_0=0$, when $\sg$ is decreasing from $\sg>-2$ to $\sg<-2$. As shown in \cite{BleherIts2003}, the critical behavior 
	of the quartic model at $\sg=-2$ is governed by
	the Hastings-McLeod solution to the second Painlev\'e equation PII. 
	In what follows we will show that formulae \eqref{em37} are analytically extended from the real half-line $\sg>-2$ to the whole one-cut region $\sg\in \mathcal{O}_1$ on the complex plane.

	\begin{remark}\label{Remark branches}
		\normalfont Notice that the branch cuts for $z_0(\sigma)$ and $b_1(\sigma)$ are different. Indeed for $b_1(\sigma)$, since $-\sigma + \sqrt{12+\sigma^2} \neq 0$ for all $\sigma$, there are only two branch cuts $L_{\pm}$ emanating from $\pm \ii \sqrt{12}$. However when we consider $z_0(\sigma)$, we notice that $ -2\sigma -  \sqrt{12+\sigma^2}$ does vanish for $\sigma=-2$. So, for $z_0(\sigma)$, apart from the two branch cuts emanating from $\pm \ii \sqrt{12}$ (which we chose to be $L_{\pm}$: the same as the ones for $b_1(\sigma)$) there is one more branch cut $L$ which emanates from $-2$. In this work we choose $L_{\pm} = \pm \ii \sqrt{12} - t$ and $L=-2-t$, $t>0$. We choose the branches so that for $\sigma>-2$:
		\[ z_0(\sigma)=\ii y_0 \quad \mbox{with} \quad y_0>0, \qquad \qandq \qquad b_1(\sigma)>0. \] 
		Since $z^2_0(\sigma)\in (0,\infty)$ for $\sigma\in(-\infty, -2)$, the branch cut in the $z_0^2$-plane is the positive real axis and we fix the branch of $z_0$ by fixing $0 \leq \arg(z^2_0) < 2\pi$.
	\end{remark}
	
	In the one-cut regime the $g$ function can be explicitly computed. To this end, using \eqref{em10}, \eqref{em17}, and \eqref{em32} we can write
	\begin{equation}
	g(z;\sigma) = \frac{V(z)+\ell^{(1)}_*}{2}+\frac{\eta_1(z;\sigma)}{2} , \qquad z \in \C \setminus \Ga_{\sigma}(-\infty,b_1],
	\end{equation}
	where 
	\begin{equation}\label{eta def}
	\eta_1(z;\sigma):= -\int^{z}_{b_1} (s^2-z^2_0)\sqrt{s^2-b_1^2}\dd s, \qquad z \in \C \setminus \Ga_{\sigma}(-\infty,b_1],
	\end{equation}
	in which the path of integration does not cross $\Ga_{\sigma}(-\infty,b_1(\sigma)]$. Notice that
	\begin{equation}\label{eta+ eta- on the support}
	\eta_{1,+}(z)=-\eta_{1,-}(z), \qquad z \in J_{\sigma},
	\end{equation}
	and
	\begin{equation}
	\eta_{1,+}(z)-\eta_{1,-}(z)=4\pi \ii, \qquad   z \in \Ga_{\sigma}(-\infty,-b_1).
	\end{equation}
	Using several integration by parts and trigonometric substitutions we find
	\begin{equation}\label{explicit integration one cut}
	\eta_1(z;\sigma)  = \frac{z}{8}(b_1^2+4z^2_0-2z^2)\sqrt{z^2-b_1^2} +2\log\left( \frac{z+\sqrt{z^2-b_1^2}}{b_1} \right),
	\end{equation}where we have used \eqref{em35} in simplifying the expression. Therefore we have the following explicit form of the $g$-function in the one-cut regime 
	\begin{equation}\label{g-1-cut-explicit}
	g(z;\sigma) = \frac{1}{2}\left( \frac{\sigma z^2}{2} + \frac{z^4}{4} + \ell^{(1)}_* \right)  -\frac{z}{16}(b_1^2+4\sigma+2z^2)\sqrt{z^2-b_1^2} + \log\left( \frac{z+\sqrt{z^2-b_1^2}}{b_1} \right),
	\end{equation}
	where we have also used \eqref{em35}. Here the branches must be chosen to ensure that the branch cut for $g$ is $\Ga_{\sigma}(-\infty,b_1]$.  Also the constant $\ell_*^{(1)}$ can be found using the requirement that 
	\[
	g(z;\sigma)= \log z +\mathcal O(z^{-1})
	\]
	as $z \to \infty$. Indeed,
	
	\begin{equation}\label{ell* 1 cut}
	\ell^{(1)}_* = \frac{1}{12}\left(\sigma^2-\sigma\sqrt{12+\sigma^2}\right) + \log\left( -\sigma+\sqrt{12+\sigma^2} \right)-\frac{1}{2}-\log 6.
	\end{equation}
	
	In what follows we use the notations
	\begin{equation}\label{g+mg-}
	\mathscr{G}^{(j)}_1(z;\sigma):= g_{+}(z;\sigma)- g_{-}(z;\sigma),
	\end{equation}
	and
	\begin{equation}\label{g+pg-mVml}
	\mathscr{G}^{(j)}_2(z;\sigma):= g_{+}(z;\sigma) + g_{-}(z;\sigma) - V(z;\sigma) - \ell^{(j)}_*,
	\end{equation}
	for $j=1,2$, and $3$, respectively associated with the one-cut, two-cut, and three-cut regimes. We have
	
	\noindent\begin{minipage}{.45\linewidth}
		\begin{equation}\label{g+ minus g- one cut}
		\mathscr{G}^{(1)}_1(z;\sigma) = \begin{cases}
		0, & z \in  \Ga_{\sigma}(b_1,\infty), \\
		\eta_{1,+}(z;\sigma), & z \in J_{\sigma}, \\
		2\pi \ii, & z \in \Ga_{\sigma}(-\infty,-b_1),
		\end{cases}
		\end{equation}	
	\end{minipage}	
	\begin{minipage}{.45\linewidth}
		\begin{equation}\label{g+ plus g- one cut}
		\mathscr{G}^{(1)}_2(z;\sigma) = \begin{cases}
		\eta_1(z;\sigma), & z \in \Ga_{\sigma}(b_1,\infty) \\
		0, & z \in J_{\sigma}, \\
		\eta_{1,\pm}(z;\sigma)\mp 2\pi \ii, & z \in \Ga_{\sigma}(-\infty,-b_1). 
		\end{cases}
		\end{equation}	
	\end{minipage}
	
	Note that for $z \in J_{\sigma}$, in particular we have
	\begin{equation}
	\Re(g_+(z;\sigma))+\Re(g_-(z;\sigma))-\Re V(z;\sigma)-\Re(\ell^{(1)}_*)=0.
	\end{equation}
	Also, since $\rho_V(s;\sigma) \dd s$ is a probability measure and thus real-valued on $J_{\sigma}$, we have
	\begin{equation}
	\Re(g_+(z;\sigma))+\Re(g_-(z;\sigma)) = 2\int_{J_{\sigma}}\log|z-s|\dd \nu_{\rm eq}(s).
	\end{equation}
	Comparing the last two equations with \eqref{em8} and the first member of \eqref{em7} implies that the Euler-Lagrange constant $\ell$ is given by
	\begin{equation}
	\ell\equiv\ell^{(1)}=-\frac{\Re \ell^{(1)}_*}{2},
	\end{equation}
	where $\ell_*^{(1)}$ is explicitly given by \eqref{ell* 1 cut}.
	\subsubsection{Two-Cut Equilibrium Measure}\label{Sec two cut eq meas}
	Consider now a regular equilibrium measure
	with two cuts,
	\begin{equation}\label{em38}
	J=\Ga[a_1,b_1]\cup \Ga[a_2,b_2].
	\end{equation}
	When $q=2$, formula \eqref{em25} gives that
	\begin{equation}\label{em39}
	Q(z)=\frac{1}{4}\, (z-c)^2 
	(z-a_1)(z-b_1)(z-a_2)(z-b_2).
	\end{equation}
	Since the polynomial $Q(z)$ is even, we have that $c=0$ and, in general,  we have the two cases for the end-points
	$ a_1<b_1<a_2<b_2$:
	\bge\item Either
	\begin{equation}\label{em40}
	-a_1=b_2,\quad -b_1=a_2,
	\end{equation}
	or
	\item
	\begin{equation}\label{em41}
	-a_1=b_1,\quad -a_2=b_2,
	\end{equation}
	\ene
	but we will see that the latter case is impossible, hence $Q(z)$  
	has the form
	\begin{equation}\label{em42}
	Q(z)=\frac{1}{4}\,z^2(z^2-a_2^2)(z^2-b_2^2).
	\end{equation}
	Matching this expression to  \eqref{em29}, we obtain that
	\begin{equation}\label{em43}
	z^2(z^2-a_2^2)(z^2-b_2^2)=z^6+2\sg z^4+(\sg^2-4)z^2-4(\sg+m_2),
	\end{equation}
	and equating the coefficients at $z^4$ and $z^2$ on the left and right, we obtain the system of equations,
	\begin{equation}\label{em44}
	a_2^2+b_2^2+2\sg=0,\quad (a_2^2-b_2^2)^2=16.
	\end{equation}
	Solving it, we obtain that
	\begin{equation}\label{em45}
	a_2^2=\mp 2-\sg,\quad b_2^2=\pm 2-\sg.
	\end{equation}
	For any real $\sg<-2$ we have that
	\begin{equation}\label{em46}
	a_2=\sqrt{- 2-\sg},\quad b_2=\sqrt{ 2-\sg}
	\end{equation}
	(see \cite{BI}). We will see below that the latter equations hold in the whole two-cut region $\sg\in \mathcal{O}_2$ on the complex plane. Similar to the one-cut regime, in the two-cut regime the $g$ function can also be explicitly computed. Using \eqref{em10}, \eqref{em17}, and \eqref{em42} we can write
	\begin{equation}
	g(z;\sigma) = \frac{V(z)+\ell^{(2)}_*}{2}+\frac{\eta_2(z;\sigma)}{2} , \qquad z \in \C \setminus \Ga_{\sigma}(-\infty,b_2],
	\end{equation}
	where 
	\begin{equation}\label{eta_2 def}
	\eta_2(z;\sigma) :=   -\int^z_{b_2} s\sqrt{(s^2-b_2^2)(s^2-a_2^2)} \dd s,
	\end{equation}
	in which the path of integration does not cross $\Ga_{\sigma}(-\infty,b_2]$. The latter integral can be evaluated explicitly 
	\begin{equation}
	\begin{split}
	\eta_2(z;\sigma) & =  -\frac{1}{4}(z^2 - \frac{b_2^2+a_2^2}{2})\sqrt{(z^2-b_2^2)(z^2-a_2^2)} \\ & + \frac{1}{4}\left( \frac{b_2^2-a_2^2}{2} \right)^2 \log \left[  \frac{2z^2- b_2^2-a_2^2 + 2\sqrt{(z^2-b_2^2)(z^2-a_2^2)}}{b_2^2-a_2^2} \right].
	\end{split}
	\end{equation}
	In view of \eqref{em46} this can be simplified as 
	\begin{equation}
	\begin{split}
	\eta_2(z;\sigma)& =  -\frac{1}{4}(z^2 +\sigma)\sqrt{(z^2+\sigma-2)(z^2+\sigma+2)}  \\ & +  \log \left[  \frac{z^2 +\sigma + \sqrt{(z^2+\sigma-2)(z^2+\sigma+2)}}{2} \right].
	\end{split}
	\end{equation}
	So we have the following explicit form of the $g$-function in the two-cut regime 
	\begin{equation}
	\begin{split}
	g(z;\sigma) & = \frac{1}{2}\left( \frac{\sigma z^2}{2} + \frac{z^4}{4} + \ell^{(2)}_* \right) -\frac{1}{8}(z^2 +\sigma)\sqrt{(z^2+\sigma-2)(z^2+\sigma+2)} \\ & + \frac{1}{2}\log \left[  \frac{z^2 +\sigma + \sqrt{(z^2+\sigma-2)(z^2+\sigma+2)}}{2} \right].
	\end{split}
	\end{equation}
	The constant $\ell_*^{(2)}$ can be found using the requirement that 
	\[
	g(z;\sigma)= \log z +\mathcal O(z^{-1})
	\]
	as $z \to \infty$. In this way we obtain that
	
	\begin{equation}\label{l2 star}
	\ell^{(2)}_*=\frac{\sigma^2}{4}-\frac{1}{2}.
	\end{equation}

	Recalling \eqref{g+mg-} and \eqref{g+pg-mVml} and straightforward calculations we have

	\begin{equation}
	\mathscr{G}^{(2)}_1(z;\sigma) =  \begin{cases}
	0, & z \in \Ga_{\sigma}[b_2,\infty), \\
	\eta_{2,+}(z;\sigma), & z \in \Ga_{\sigma}[a_2,b_2], \\
	\pi \ii, & z \in \Ga_{\sigma}[-a_2,a_2],  \\
	\eta_{2,+}(z;\sigma), & z \in \Ga_{\sigma}[-b_2,-a_2], \\
	2\pi\ii, & z \in \Ga_{\sigma}(-\infty,-b_2],
	\end{cases}
	\end{equation}
	and
	\begin{equation}\label{g+ plus g- 2-cut}
	\mathscr{G}^{(2)}_2(z;\sigma) = \begin{cases}
	\eta_{2}(z;\sigma), & z \in \Ga_{\sigma}[b_2,\infty), \\
	0 & z \in \Ga_{\sigma}[a_2,b_2], \\
	\eta_{2,\pm}(z;\sigma)\mp \pi \ii, & z \in \Ga_{\sigma}[-a_2,a_2], \\
	0, & z \in \Ga_{\sigma}[-b_2,-a_2], \\
	\eta_{2,\pm}(z;\sigma)\mp 2\pi \ii, & z \in \Ga_{\sigma}(-\infty,-b_2], \\
	\end{cases}
	\end{equation}	
	
	where we have used the fact that \[\int_{-a_2}^{a_2}s\sqrt{(s^2-b_2^2)(s^2-a_2^2)}=0.\]

	Notice that on the support we have
	\begin{equation}
	g_{+}(z;\sigma)+ g_{-}(z;\sigma)-V(z;\sigma) - \ell^{(2)}_*= 0.
	\end{equation}
	Taking the real part of this equation and comparing with \eqref{em7} and \eqref{em8} we find that the two-cut Euler-Lagrange constant $\ell$ is given by
	\begin{equation}
	\ell\equiv\ell^{(2)}=-\frac{\Re \ell^{(2)}_*}{2},
	\end{equation} where $\ell_*^{(2)}$ is given by \eqref{l2 star}.
	\subsubsection{Three-Cut Equilibrium Measure}
	Consider now  a regular equilibrium measure with three cuts, when
	\begin{equation}\label{em47}
	J=\Ga[a_1,b_1]\cup \Ga[a_2,b_2]\cup \Ga[a_3,b_3],\qquad a_1<b_1<a_2<b_2<a_3<b_3.
	\end{equation}
	In this case formula \eqref{em25} gives that
	\begin{equation}\label{em48}
	Q(z)=\frac{1}{4}R(z)=\frac{1}{4}\, 
	(z-a_1)(z-b_1)(z-a_2)(z-b_2)(z-a_3)(z-b_3).
	\end{equation}
	The evenness of $Q(z)$ implies that
	\begin{equation}\label{em49}
	-a_1=b_3\equiv c_3,\quad -b_1=a_3 \equiv b_3, \quad -a_2=b_2\equiv a_3,
	\end{equation}
	hence
	\begin{equation}\label{em50}
	Q(z)=\frac{1}{4}\,
	(z^2-a_3^2)(z^2-b_3^2)(z^2-c_3^2).
	\end{equation}
	Matching this equation to \eqref{em29}, we obtain that
	\begin{equation}\label{em51}
	(z^2-b_2^2)(z^2-a_3^2)(z^2-b_3^2)=z^6+2\sg z^4+(\sg^2-4)z^2-4(\sg+m_2),
	\end{equation}
	and equating the coefficients at $z^4$ and $z^2$, we obtain 
	the system of two algebraic equations with three unknowns,
	\begin{equation}\label{em52}
	\bga
	& a_3^2+b_3^2+c_3^2+2\sg=0,\\ 
	& a_3^4+b_3^4+c_3^4-2a_3^2b_3^2-2b_3^2c_3^2-2a_3^2c_3^2=16.
	\ena
	\end{equation}
	The above equations provide four real conditions to determine the six real unknowns $\Re a_3, \Im a_3, \Re b_3, \Im b_3, \Re c_3, \Im c_3$. Below we justify that the remaining two real conditions for determining the end points are given by

	\noindent\begin{minipage}{.45\linewidth}
		\begin{equation}\label{3cut gap condition}
		\Re \left( \di \int^{b_3}_{a_3} \sqrt{R(s)}  \dd s \right)=0,
		\end{equation}
	\end{minipage}	
	\begin{minipage}{.45\linewidth}
		\begin{equation}\label{b3 on the same level set as c3 21}
		\Re \left( \di \int^{c_3}_{b_3} \left( \sqrt{R(s)} \right)_+  \dd s \right)=0.
		\end{equation}
	\end{minipage}
	
	To that end, we use \eqref{em10}, \eqref{em17}, and \eqref{em50} to write the $g$-function as
	\begin{equation}
	g(z;\sigma) = \frac{V(z)+\ell^{(3)}_*}{2}+\frac{\eta_3(z;\sigma)}{2} , \qquad z \in \C \setminus \Ga_{\sigma}(-\infty,c_3],
	\end{equation}
	where 
	\begin{equation}\label{eta_3 def}
	\eta_3(z;\sigma) :=  -\int^z_{c_3} \sqrt{R(s)} \dd s =   -\int^z_{c_3} \sqrt{(s^2-a_3^2)(s^2-b_3^2)(s^2-c_3^2)} \dd s,
	\end{equation}
	in which the path of integration does not cross $\Ga_{\sigma}(-\infty,c_3]$. On the support we have
	\begin{equation}
	g_{+}(z;\sigma)+ g_{-}(z;\sigma)-V(z) -  \ell^{(3)}_*= \begin{cases}
	0, & z \in \Ga_{\sigma}[b_3,c_3] \\
	\di \int^{b_3}_{a_3} \sqrt{R(s)}  \dd s, & z \in \Ga_{\sigma}[-a_3,a_3] \\
	\di \int^{b_3}_{a_3} \sqrt{R(s)}  \dd s + \di \int^{-a_3}_{-b_3} \sqrt{R(s)} \dd s, & z \in \Ga_{\sigma}[-c_3,-b_3] \\
	\end{cases}
	\end{equation}
	Taking the real part of this equation and comparing with \eqref{em7} and \eqref{em8} yields
	\begin{equation}
	\ell\equiv\ell^{(3)}=-\frac{\Re \ell^{(3)}_*}{2},
	\end{equation}
	\begin{equation}\label{gap condition three cut 1}
	\Re \left( \di \int^{b_3}_{a_3} \sqrt{R(s)}  \dd s \right)=0,  \qandq 	\Re \left( \di \int^{-a_3}_{-b_3} \sqrt{R(s)}  \dd s \right)=0.
	\end{equation}
	Equations in \eqref{gap condition three cut 1} are the three-cut \textit{gap conditions}. Note that, due to the symmetry of $R$, if one of the above gap conditions hold, the other one holds automatically as well, so the requirement \eqref{3cut gap condition} is justified. Since the equilibrium measure \eqref{em13} is positive along the support, we have an immediate justification of the requirement \eqref{b3 on the same level set as c3 21}.
	
	\section{Critical graphs in the $z$-plane}\label{Sec crit graphs z plane}
	
	This section is devoted to characterization of the boundaries between the one-cut, two-cut and the three-cut regimes in the $\sigma$-plane using the theory of quadratic differentials. 
	
	Here we briefly recall some definitions and basic facts about quadratic differentials from \cite{Strebel}.  The critical points of a quadratic differential $Q(z)\dd z^2$ are the zeroes and poles of $Q(z)$, while all other points are called regular points of $Q(z)\dd z^2$. For any fixed $0 \leq \theta < 2\pi$ the $\theta$-arc of a quadratic differential $Q(z)\dd z^2$ is defined as the smooth curve $L_{\theta}$ along which \begin{equation}
	\arg Q(z)\dd z^2 = \theta, 
	\end{equation}
	and thus a $\theta$-arc can only contain regular points of $Q$, because at the singular points the argument is not defined. Through each regular point of a meromorphic quadratic differential passes exactly one $\theta$-arc. A maximal $\theta$-arc is called a $\theta$-trajectory. We will refer to a $\pi$-trajectory ( resp. $0$-trajectory) which is incident with a critical point as a \textit{critical trajectory} (resp. \textit{critical orthogonal trajectory}). If $b$ is a critical point of $Q(z)\dd z^2$, then the totality of the solutions to 
	\begin{equation} 
	\Re\left(\int^z_b \sqrt{Q(s)} \dd s \right) = 0,
	\end{equation}
	is referred to as the \textit{critical graph} of $\int^z_b \sqrt{Q(s)} \dd s$ (see \S 5 of \cite{Strebel}).  A critical (orthogonal) trajectory is called \textit{short} if it is incident only with finite critical points. A simple closed  \textit{geodesic polygon} with respect to a meromorphic quadratic differential $Q(z)\dd z^2$ (also referred to as a $Q$-polygon) is a Jordan curve $\Sigma$ composed of open $\theta$-arcs and their endpoints. The endpoints may be regular or critical points of $Q(z)\dd z^2$, which form the vertices of the $Q$-polygon. By a \textit{loop} we mean a geodesic polygon whose single vertex is a singular point of the associated quadratic differential. If at least one of the end points of $\Sigma$ is a singular point, we call it a \textit{singular geodesic polygon}.  Let $\Sigma$ by a $Q$-polygon, and let $\texttt{V}_{\Sigma}$ and $\texttt{Int} \Sigma$ denote its set of vertices and interior respectively. The \textit{Teichm\"uller's lemma} states that
	\begin{equation}\label{Teich Lemma}
	\#\texttt{V}_{\Sigma} - 2 = \sum_{z \in \texttt{V}_{\Sigma}} (\texttt{ord}(z)+2)\frac{\theta(z)}{2\pi} + \sum_{z \in \texttt{Int} \Sigma} \texttt{ord}(z),
	\end{equation}
	where $\theta(z)$ is the interior angle of $\Sigma$ at $z$, and $\texttt{ord}(z)$ is the order of the point $z$ with respect to the quadratic differential: it is zero for a regular point, it is $n$ ($-n$) if $z$ is a zero (pole) of order $n \in \N$ of the quadratic differential.

	\subsection{The One-cut Regime}
	
	Let us recall the definition of the function $\eta$ introduced in \S \ref{SubSec EqMeas quartic}:
	
	\begin{equation}\label{eta def1}
	\eta_1(z;\sigma):=-\int^{z}_{b_1(\sigma)} \left(s^2 - z^2_0(\sigma)\right) \sqrt{s^2- b^2_1(\sigma)} \dd s.
	\end{equation}
	We sometimes need to choose the starting point of integration to be $\pm z_0(\sigma)$, so $\eta$ as defined above may be denoted by $\eta_{b_1}$, and $\eta_{\pm z_0}$ denotes the right hand side of \eqref{eta def1} when the starting point of integration is replaced by $\pm z_0(\sigma)$ (for example see the caption of Figure \ref{fig:Conf map 1}).
	
	\begin{definition}\label{Def one cut sigma}
		\normalfont The one-cut regime $\mathcal{O}_{1}$ in the $\sigma$-plane is defined as the collection of all $\sigma \in \C$ such that
		\begin{enumerate}
			\item The critical graph $\mathscr{J}^{(1)}_{\sigma}$ of all points $z$ satisfying \begin{equation}
			\Re \left[ \eta_1(z;\sigma) \right]=0,
			\end{equation}
			contains a single Jordan arc $J_{\sigma}$ connecting $-b_1(\sigma)$ to $b_1(\sigma)$,
			\item The points $\pm z_0(\sigma)$ do not lie on $J_{\sigma}$, and
			\item There exists a complementary arc $\Ga_{\sigma}(b_1(\sigma), \infty)$ which lies entirely in \begin{equation}
			\left\{ z : \Re \left[ \eta_1(z;\sigma) \right]<0 \right\},
			\end{equation} which encompasses $(M(\sigma),\infty)$ for some $M(\sigma)>0$.
		\end{enumerate}
	\end{definition}
	
	For a fixed $\sigma$ we refer to the collection of all $z$ satisfying $  \Re[\eta_1(z;\sigma)]<0$ as the $\sigma$-\textit{stable lands} , and to the collection of all $z$ satisfying $  \Re[\eta_1(z;\sigma)]>0$ as the $\sigma$-\textit{unstable lands} (see Figure \ref{fig:stable and barren}, and the third component of Definition \ref{Def one cut sigma}).
	
	\begin{remark}
		\normalfont	 For almost all choices of branch cuts $L_{\pm}$, and $L$ (recall Remark \ref{Remark branches}) there are certain choices of $\sigma$ for which one of the components of definition \ref{Def one cut sigma} does not hold. For example, for the choice of $L_{\pm}$, and $L$ mentioned in Remark \ref{Remark branches} we give the following three examples:
		\begin{itemize}
			\item For  $\sigma=-1+1.9\ii$ the first component of Definition \ref{Def one cut sigma} does not hold, as shown in Figure \ref{fig:No Connection},
			\item For $\sigma \simeq -1+1.7795\ii$ the second component of Definition \ref{Def one cut sigma} does not hold, as shown in Figure \ref{fig:Connection about to be lost},		
			\item For $\sigma = -1.35+4\ii$ the third component of Definition \ref{Def one cut sigma} does not hold, as shown in Figure \ref{fig:stable barren -1.35 41}.
		\end{itemize}  
	\end{remark}

	\begin{lemma}\label{Lemma symmetry of J}
		The set $\mathscr{J}^{(1)}_{\sigma}$ is symmetric with respect to the origin.
	\end{lemma}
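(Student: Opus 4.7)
My plan is to establish the stronger statement that $\Re \eta_1(-z;\sigma) = \Re \eta_1(z;\sigma)$ for all $z$ in the domain of $\eta_1$; the symmetry of the level set $\mathscr{J}^{(1)}_{\sigma} = \{z : \Re \eta_1(z;\sigma) = 0\}$ then follows at once. I would work directly from the explicit representation \eqref{explicit integration one cut},
\[
\eta_1(z;\sigma) \;=\; \frac{z}{8}\bigl(b_1^2 + 4z_0^2 - 2z^2\bigr)R(z) \;+\; 2\log\!\left(\frac{z + R(z)}{b_1}\right),
\]
where $R(z) := \sqrt{z^2 - b_1(\sigma)^2}$ is normalized so that $R(z) \sim z$ as $z \to \infty$.

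The key step is to track how each term transforms under $z \mapsto -z$. Since $R(z)^2 = z^2 - b_1^2$ is even in $z$, in each connected component of $\C$ minus the branch cut of $R$ one has $R(-z) = \pm R(z)$, and the asymptotic normalization pins down $R(-z) = -R(z)$ in the unbounded component. Under this sign rule the algebraic term $\frac{z}{8}(b_1^2 + 4z_0^2 - 2z^2) R(z)$ is invariant, being a product of an odd function, an even polynomial, and an odd function, while the logarithm transforms as
\[
2\log\!\left(\frac{-z - R(z)}{b_1}\right) \;=\; 2\log\!\left(\frac{z + R(z)}{b_1}\right) + 2\log(-1).
\]
Consequently $\eta_1(-z;\sigma) - \eta_1(z;\sigma) \in 2\pi \ii \Z$, which is purely imaginary, and passing to real parts gives the desired identity.

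The main obstacle is justifying the sign rule $R(-z) = -R(z)$ beyond the unbounded component, because the branch cut implicit in the definition \eqref{eta def1} of $\eta_1$ is an arc from $-b_1$ to $b_1$ that is not a priori $z \mapsto -z$ symmetric. I would resolve this by choosing, without loss of generality, a branch cut for $R$ that is itself $z \mapsto -z$ symmetric --- for instance an arc lying in the critical graph of the manifestly even quadratic differential $(z^2 - z_0^2)^2(z^2 - b_1^2)\,\dd z^2$ --- so that the sign identity holds on the entire complement; the value of $\eta_1$ and hence of $\Re \eta_1$ is independent of the choice of cut up to the usual $2\pi\ii$ ambiguity of the logarithm, which is irrelevant to real parts. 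An alternative that sidesteps branches altogether is to recognize $\mathscr{J}^{(1)}_{\sigma}$ as a union of vertical trajectories of the quadratic differential $(\eta_1'(z))^2\, \dd z^2 = (z^2 - z_0^2)^2(z^2 - b_1^2)\, \dd z^2$, whose coefficient is manifestly even in $z$; its trajectories, both horizontal and vertical, are therefore invariant under $z \mapsto -z$, which immediately yields the symmetry of $\mathscr{J}^{(1)}_{\sigma}$.
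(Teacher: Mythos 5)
Your proposal is correct and takes essentially the same route as the paper: the paper's proof consists of citing the identity $\eta_1(-z;\sigma)=\eta_1(z;\sigma)\pm 2\pi\ii$ (equation \eqref{eta z and eta -z}) and taking real parts, which is exactly the statement $\eta_1(-z;\sigma)-\eta_1(z;\sigma)\in 2\pi\ii\Z$ that you establish. The only difference is that you actually derive the identity (from the explicit antiderivative \eqref{explicit integration one cut}, with due care about the branch of $\sqrt{z^2-b_1^2}$) whereas the paper asserts it without proof, so your write-up is, if anything, more complete.
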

	\begin{proof}
		In view of the first part of Definition \ref{Def one cut sigma}, this simply follows from the identity
		\begin{equation}\label{eta z and eta -z}
		\eta_1(-z;\sigma)=\eta_1(z;\sigma) \pm 2\pi \ii.
		\end{equation}
	\end{proof}
	The following Lemma and Lemma \ref{continuous deformations of two cut critical graph} are particular cases of the more general Theorem which states that for a general polynomial potential of degree $p$, each one of the $q$-cut critical graphs,  $1\leq q \leq 2p-1$, deforms continuously with respect to the parameters in the potential (see Theorem 3 of \cite{BBGMT2}).   
	\begin{lemma}\label{continuous deformations of one cut critical graph}\normalfont{[Theorem 3 of \cite{BBGMT2}]}
		\textit{		The critical graph $\mathscr{J}^{(1)}_{\sigma}$ deforms continuously with respect to $\sigma$.}
	\end{lemma}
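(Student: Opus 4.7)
The plan is to reduce continuous deformation of $J_\sigma$ to three local ingredients---analyticity of the endpoints and double zeros, the implicit function theorem at regular interior points, and the standard local normal form at simple zeros of $Q$---and then glue them via compactness of $J_{\sigma_0}$. By the explicit formulas \eqref{em37}, extended to $\mathcal{O}_1$ via the branch-cut conventions of Remark \ref{Remark branches}, both $b_1(\sigma)$ and $z_0(\sigma)$ are holomorphic on $\mathcal{O}_1$, so $\eta_1(z;\sigma)$ from \eqref{eta def} is jointly continuous in $(z,\sigma)$ and holomorphic in $z$ off the cut. Fixing $\sigma_0\in\mathcal{O}_1$, at any interior point $z^*\in J_{\sigma_0}$ one has $z^*\ne \pm z_0(\sigma_0)$ by condition (2) of Definition \ref{Def one cut sigma}, so
\[
\eta_1'(z^*;\sigma_0)=-\bigl((z^*)^2-z_0^2(\sigma_0)\bigr)\sqrt{(z^*)^2-b_1^2(\sigma_0)}\ne 0,
\]
and the implicit function theorem applied to $\Re\eta_1(z;\sigma)=0$ produces a smooth local parameterization of the trajectory through $z^*$ that varies smoothly with $\sigma$ near $\sigma_0$.

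Near each endpoint $\pm b_1(\sigma_0)$, which is a simple zero of $Q(z;\sigma_0)$, I would use the uniformizing coordinate $\zeta(z;\sigma):=\bigl(\tfrac{3}{2}\int_{b_1(\sigma)}^z Q^{1/2}(s;\sigma)\,\dd s\bigr)^{2/3}$, holomorphic in $(z,\sigma)$ near $(b_1(\sigma_0),\sigma_0)$; in this coordinate the critical graph is locally the preimage of the three rays $\{\arg\zeta\in\{0,2\pi/3,4\pi/3\}\}$. Transporting back identifies the unique branch belonging to $J_\sigma$ and shows it moves continuously with $\sigma$. Covering the compact arc $J_{\sigma_0}$ by finitely many such endpoint and interior charts and gluing the local parameterizations yields a continuous family $J_\sigma$ for $\sigma$ in a small neighborhood of $\sigma_0$.

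The main obstacle will be ruling out a topological jump: a priori, $J_\sigma$ could break apart, collide with $\pm z_0(\sigma)$, or spawn an extra short trajectory between $-b_1$ and $b_1$. These scenarios correspond exactly to the failure of the three conditions in Definition \ref{Def one cut sigma}, so the remaining task is to show that $\mathcal{O}_1$ is open. I would establish openness through Teichm\"uller's lemma \eqref{Teich Lemma}: any such degeneracy forces a real identity among period integrals of $\sqrt{Q(z;\sigma)}\,\dd z$ (for instance $\Re\eta_1(z_0(\sigma);\sigma)=0$, or the closing up of a short $Q$-polygon with vertices at zeros of $Q$), which cuts out a real-analytic subset of positive codimension in the $\sigma$-plane and is therefore incompatible with covering a full neighborhood in $\mathcal{O}_1$. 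Combined with the local deformation above, this delivers continuous dependence of $J_\sigma$ on $\sigma$ throughout $\mathcal{O}_1$.
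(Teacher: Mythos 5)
First, note that the paper does not actually prove this lemma: it is imported verbatim as Theorem 3 of \cite{BBGMT2}, so there is no internal argument to compare against. Your local ingredients are the right ones as far as they go --- holomorphy of $b_1(\sigma)$ and $z_0(\sigma)$, the implicit function theorem at points where $\eta_1'\neq 0$, and a uniformizing coordinate at the simple zeros $\pm b_1$ --- but the object in the lemma is the full critical graph $\mathscr{J}^{(1)}_{\sigma}=\{z:\Re[\eta_1(z;\sigma)]=0\}$, not just the arc $J_\sigma$. Your compactness-and-gluing step covers only $J_{\sigma_0}$; it says nothing about the unbounded trajectories $\ell_{2,3}^{(\pm b_1)}$ and the humps $\mathcal{L}^{(j)}_\sigma$, which require a chart at the order-$10$ pole at infinity (where the eight asymptotic directions are $\sigma$-independent), nor about the order-two zeros $\pm z_0(\sigma)$, which do lie on $\mathscr{J}^{(1)}_\sigma$ for $\sigma$ on the curves VI, VII, VIII, IX and must be handled with a local normal form exhibiting four arcs meeting at right angles. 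Excluding $\pm z_0$ by appealing to condition (2) of Definition \ref{Def one cut sigma} restricts you to a setting strictly smaller than the one in which the lemma is used.

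The more serious problem is your last paragraph. Continuity of the level set $\{\Re[\eta_1(\cdot;\sigma)]=0\}$ (in the local Hausdorff sense) does \emph{not} require ruling out topological changes --- a family of level sets of a jointly continuous function can change its combinatorial structure while still varying continuously, exactly as $\{x^2-y^2=t\}$ does at $t=0$ --- and the paper needs the lemma precisely in that regime: Lemma \ref{lemma connectivity} and Figures \ref{fig:1-2birth}--\ref{fig:2-3birth} invoke continuous deformation of $\mathscr{J}^{(1)}_\sigma$ \emph{at and across} the critical curves, where the graph reconnects. A proof that is only valid on $\mathcal{O}_1$ because it forbids degenerations therefore proves the wrong statement. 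Worse, it is circular relative to the paper's architecture: openness of $\mathcal{O}^*_1$ (Theorem \ref{O1* is open}, via Lemma \ref{lemma connectivity}) is \emph{deduced from} this continuity lemma, so you cannot feed openness back in as an input; and your standalone justification of openness (``a real-analytic subset of positive codimension \dots incompatible with covering a full neighborhood'') does not establish what you need --- the relevant fact is simply that $\{\sigma:\Re[\eta_1(z_0(\sigma);\sigma)]=0\}$ is closed, which is Lemma \ref{lemma delta}, and that alone does not control connectivity without already knowing the graph deforms continuously. The fix is to drop the third paragraph entirely, state the conclusion as local Hausdorff continuity of the closed set $\mathscr{J}^{(1)}_\sigma$, and obtain it purely from the local charts (regular points, simple zeros, double zeros, and the pole at infinity), with no reference to $\mathcal{O}_1$ at all.
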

	
	Consider the one-cut quadratic differential
	\begin{equation}\label{OneCut QD}
	Q_1(z;\sigma) \dd z^2 := \left(z^2-z^2_0(\sigma)\right)^2\left(z^2-b^2_1(\sigma)\right) \dd z^2.
	\end{equation}
	By Theorem $7.1$ of \cite{Strebel}, if all four singular points $\pm b_1$ and $\pm z_0$ are distinct, there are three  $\theta$-trajectories, $0\leq \theta <2\pi$, emanating from $z=b_1(\sigma)$ and $z=-b_1(\sigma)$ each, while there are four  $\theta$-trajectories emanating from $z=z_0(\sigma)$ and $z=-z_0(\sigma)$. Two adjacent $\theta$-trajectories make an angle of $2\pi/3$ when they arrive at $z=\pm b_1(\sigma)$, while two adjacent $\theta$-trajectories make an angle of $\pi/2$ when they arrive at $z=\pm z_0(\sigma)$ (See Figure \ref{fig:Conf map 1} and its caption). The representation of the quadratic differential $Q_1(z;\sigma) \dd z^2,$ 
	near $z=\infty$ is 
	\begin{equation}\label{QD near infinity}
	\frac{1}{\Tilde{z}^4}Q_1\left(\frac{1}{\Tilde{z}};\sigma\right) \dd \Tilde{z}^2
	\end{equation}
	for $\Tilde{z}$ near zero. Therefore $z=\infty$ is a pole of order $10$ for the quadratic differential $Q_1(z;\sigma) \dd z^2$. According to Theorem $7.4$ of \cite{Strebel}, for each $0\leq \theta<2\pi$, there are $8$ directions along which $\theta$-trajectories approach $\infty$. More precisely, notice that near infinity $Q_1(z;\sigma) \dd z^2 \sim z^6 \dd z^2$, thus 
	\begin{equation}
	\eta_1(z;\sigma) =  - \int^z_{b_1} \sqrt{Q_1(s;\sigma)} \dd s \sim  -\frac{z^4}{4}, \qquad z \to \infty.
	\end{equation}
	Therefore the critical trajectories (solutions to $\Re[\eta_1(z;\sigma)]=0$) approach to infinity along the directions $\di \frac{\pi}{8}+\di \frac{k\pi}{4}$, $k=0,\cdots,7$, and orthogonal trajectories (solutions to $\Im[\eta_1(z;\sigma)]=0$) approach to infinity along the directions $\di \frac{k\pi}{4}$, $k=0,\cdots,7$.

	\begin{lemma}\label{Lemma one or two vetex polygons are ruled out}
		There are no singular finite geodesic polygons with one or two vertices associated to the quadratic differential \eqref{OneCut QD}.
	\end{lemma}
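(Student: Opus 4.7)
The plan is to apply Teichmüller's lemma \eqref{Teich Lemma} to the bounded complementary component $R$ of the Jordan curve $\Sigma$, i.e., the side that does not contain the point at infinity. The key observation is that the only pole of $Q_1(z;\sigma)\dd z^2$ is at $z=\infty$ (see \eqref{QD near infinity}), while the finite critical points $\pm b_1$ and $\pm z_0$ are zeros of orders $1$ and $2$ respectively. Hence, with $R$ chosen bounded, every critical point in the interior of $R$ has non-negative order, so
\[
\sum_{z \in \texttt{Int}\, R} \texttt{ord}(z) \;\geq\; 0.
\]

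Next I would record an angle bound: at a zero of order $n \in \{1,2\}$, the interior angle $\theta(v)$ of a $Q$-polygon is a positive integer multiple of $2\pi/(n+2)$, so the vertex contribution $(\texttt{ord}(v)+2)\theta(v)/(2\pi)$ is at least $1$ at each singular vertex, and strictly positive at a regular vertex. For a one-vertex loop (necessarily with singular vertex $v$, as the polygon is assumed singular) Teichmüller's formula then reads
\[
-1 \;=\; \#\texttt{V}_\Sigma - 2 \;=\; (\texttt{ord}(v)+2)\frac{\theta(v)}{2\pi} + \sum_{z \in \texttt{Int}\, R} \texttt{ord}(z) \;\geq\; 1,
\]
a contradiction. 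Likewise, for a two-vertex singular polygon the right-hand side is at least $1$ (from the singular vertex) plus non-negative contributions from the remaining vertex and from the interior zeros, whereas the left-hand side is $\#\texttt{V}_\Sigma - 2 = 0$; again a contradiction, which rules out bigons as well.

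The only obstacle worth flagging is the choice of complementary component: Teichmüller's identity applies to either region bounded by $\Sigma$, but the bounded choice (which excludes the pole at $\infty$) collapses the argument into the one-line estimates above. Taking the unbounded component instead would force us to account for the order $-10$ pole at infinity, converting the proof into a case-by-case enumeration of which finite zeros can lie in the interior; the conclusion is the same, but the bookkeeping is heavier and the clean sign argument is lost.
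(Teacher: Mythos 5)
Your proof is correct and follows essentially the same route as the paper: apply Teichm\"uller's lemma to the bounded side of the polygon, observe that the interior can contain no poles since the only pole of $Q_1(z;\sigma)\,\dd z^2$ is at infinity, and note that the angle quantization at a zero of order $n$ forces each singular vertex to contribute at least $1$ to the right-hand side, contradicting the left-hand side value of $-1$ or $0$. The paper's version simply makes the admissible angles at $\pm b_1$ and $\pm z_0$ explicit rather than invoking the general multiple-of-$2\pi/(n+2)$ bound.
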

	\begin{proof}
		Suppose that such a singular finite $Q_1$-polygon exists. For this geodesic polygon, the left hand side of \eqref{Teich Lemma} is either $-1$ or zero, while the right hand side of \eqref{Teich Lemma} is certainly a positive integer. This is because such a polygon can not enclose a pole as the quadratic differential \eqref{OneCut QD} has no finite poles, and because $\texttt{ord}(\pm b_1)=1$, $\texttt{ord}(\pm z_0)=2$, $\theta(\pm b_1)\in \{ \frac{2 \pi}{3}, \frac{4 \pi}{3} \}$ and $\theta(\pm z_0)\in \{ \frac{\pi}{2}, \pi,  \frac{3 \pi}{2} \}$ and the more singular points $\mathscr{L}$ encloses, the larger the right hand side gets. Therefore \eqref{Teich Lemma} can not hold for such a polygon and this finishes the proof.
	\end{proof}

	\begin{definition}
		\normalfont	If all four singular points $\pm b_1$ and $\pm z_0$ are distinct, We denote the local critical arcs incident to $\pm b_1(\sigma)$ by $\mathcal{l}^{(\pm b_1(\sigma))}_{1}$, $\mathcal{l}^{(\pm b_1(\sigma))}_{2}$,  and $\mathcal{l}^{(\pm b_1(\sigma))}_{3}$ (labeled in counterclockwise direction), where  $\mathcal{l}^{(b_1(\sigma))}_{1}$ and $\mathcal{l}^{(-b_1(\sigma))}_{1}$ are the ones which are part of $J_{\sigma}$ (see Definition \ref{Def one cut sigma}). 
	\end{definition}
	
	In what follows in the paper, sometimes we also use the same notations for the critical trajectories incident with $\pm b_1(\sigma)$. We also usually suppress the dependence on $\sigma$ for these objects when it causes no confusion. Notice that Lemma \ref{Lemma one or two vetex polygons are ruled out} implies that the critical arcs $\mathcal{l}^{(b_1)}_{2}$ or $\mathcal{l}^{(b_1)}_{3}$ can not be connected to either $\mathcal{l}^{(-b_1)}_{2}$ or $\mathcal{l}^{(-b_1)}_{3}$.

	
	\color{black}
	
	\begin{lemma}\label{Lemma extension to infinity of critical trajectories}
		Let $\ga \in \left\{ \ell_2^{(b_1)}, \ell_3^{(b_1)}, \ell_2^{(-b_1)}, \ell_3^{(-b_1)}\right\}$. If neither $z_0(\sigma)$ or $-z_0(\sigma)$ lie on $\ga$, then $\ga$ must extend off to infinity.
	\end{lemma}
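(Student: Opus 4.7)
The strategy is to eliminate all alternatives to $\ga$ extending to infinity by using the classification of trajectories of a meromorphic quadratic differential on $\CP^1$. A critical trajectory $\ga$ of $Q_1(z;\sg)\dd z^2$ emanating from $\pm b_1$ must have one of the following four behaviors: (a) it terminates at another finite critical point of $Q_1$; (b) it returns to its starting point, producing a closed loop; (c) it extends to infinity; or (d) it is recurrent, meaning its closure contains no critical point and its trajectory is dense in a so-called density (spiral) domain. I plan to rule out (a), (b) and (d), leaving (c).

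First I would handle (a) and (b). The only finite critical points of $Q_1(z;\sg)\dd z^2$ are $\pm b_1$ and $\pm z_0$. By hypothesis, $\ga$ avoids $\pm z_0$, so the only way (a) can occur is that $\ga$ terminates at $b_1$ or $-b_1$. In that event, the closure of $\ga$ forms a singular finite $Q_1$-polygon with either one vertex (when $\ga$ loops back to its starting point, which also covers case (b)) or two vertices (when $\ga$ joins $b_1$ to $-b_1$). Both possibilities are excluded by Lemma \ref{Lemma one or two vetex polygons are ruled out}, which was proved via Teichm\"uller's lemma \eqref{Teich Lemma}.

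The main obstacle is ruling out (d). The key observation is that $Q_1(z;\sg)\dd z^2$, viewed as a meromorphic quadratic differential on $\CP^1$, has a single pole, located at $\infty$, of order $10$. By the structure theorem for trajectories of meromorphic quadratic differentials on $\CP^1$ (Strebel, \S 11), the complement of the critical graph decomposes into end, strip, ring, and density domains; in the presence of a pole of order $\geq 2$, a density domain would have to be bounded by critical trajectories none of which tends to the pole, and Teichm\"uller's lemma together with the residue structure at such a high-order pole forces the boundary to absorb the full order of the pole, which is incompatible with density-domain boundaries. Consequently $Q_1(z;\sg)\dd z^2$ admits no recurrent trajectories. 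I would record this as a direct consequence of Strebel's theorem (noting that the argument only uses that $\infty$ is a pole of order $\ge 2$ and that there are only finitely many finite critical points).

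With (a), (b), and (d) excluded, the trajectory $\ga$ cannot remain in any compact subset of $\C$, since otherwise its $\omega$-limit set would have to contain a critical point (giving (a) or (b)) or force recurrent behavior (case (d)). Therefore $\ga$ must leave every compact set, and by the local description at $\infty$ following \eqref{QD near infinity} (where the $8$ asymptotic directions for critical trajectories are the rays $\arg z = \tfrac{\pi}{8}+\tfrac{k\pi}{4}$, $k=0,\dots,7$), it approaches $\infty$ along one of these directions. This yields the claim.
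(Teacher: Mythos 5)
Your proof is correct and follows essentially the same route as the paper: since $\pm z_0(\sigma)$ are excluded by hypothesis, the only finite critical points available to $\ga$ are $\pm b_1(\sigma)$, and any such connection is ruled out by Lemma \ref{Lemma one or two vetex polygons are ruled out}, leaving extension to infinity as the only option. You are in fact slightly more careful than the paper, which tacitly assumes that a trajectory not terminating at a finite critical point must tend to $\infty$; your explicit exclusion of recurrent trajectories (no density domains for a polynomial quadratic differential, whose only pole is the order-$10$ pole at $\infty$) makes rigorous a step the paper leaves implicit.
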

	\begin{proof}
		First assume that $z_0(\sigma)$ (and thus $-z_0(\sigma)$ due to Lemma \ref{Lemma symmetry of J}) does not lie on $\mathscr{J}^{(1)}_{\sigma}$. This means that no critical arcs emanating from $\pm z_0(\sigma)$ can be connected to the critical arcs emanating from $\pm b_1(\sigma)$. Now, due to Lemma \ref{Lemma one or two vetex polygons are ruled out} the only possibility left for the critical arcs $\mathcal{l}^{(\pm b_1)}_{2}$ and $\mathcal{l}^{(\pm b_1)}_{3}$ is that all of them must extend to infinity. Now, if $z_0(\sigma)$ does lie on $\mathscr{J}^{(1)}_{\sigma}$, then $-z_0(\sigma)$ also lies on $\mathscr{J}^{(1)}_{\sigma}$. Therefore one critical arc emanating from $z_0(\sigma)$ must be connected to one of the critical arcs emanating from $\pm b_1(\sigma)$, and one critical arc emanating from $-z_0(\sigma)$ must be connected to one of the critical arcs emanating from $\pm b_1(\sigma)$, these two connections mean that the only other possibility for the other two critical arcs (which $\pm z_0(\sigma)$ do not hit) is to extend to infinity.
	\end{proof}

	\begin{figure}[!htbp]
		\centering
		\begin{subfigure}{0.44\textwidth}
			\centering
			\includegraphics[width=\textwidth]{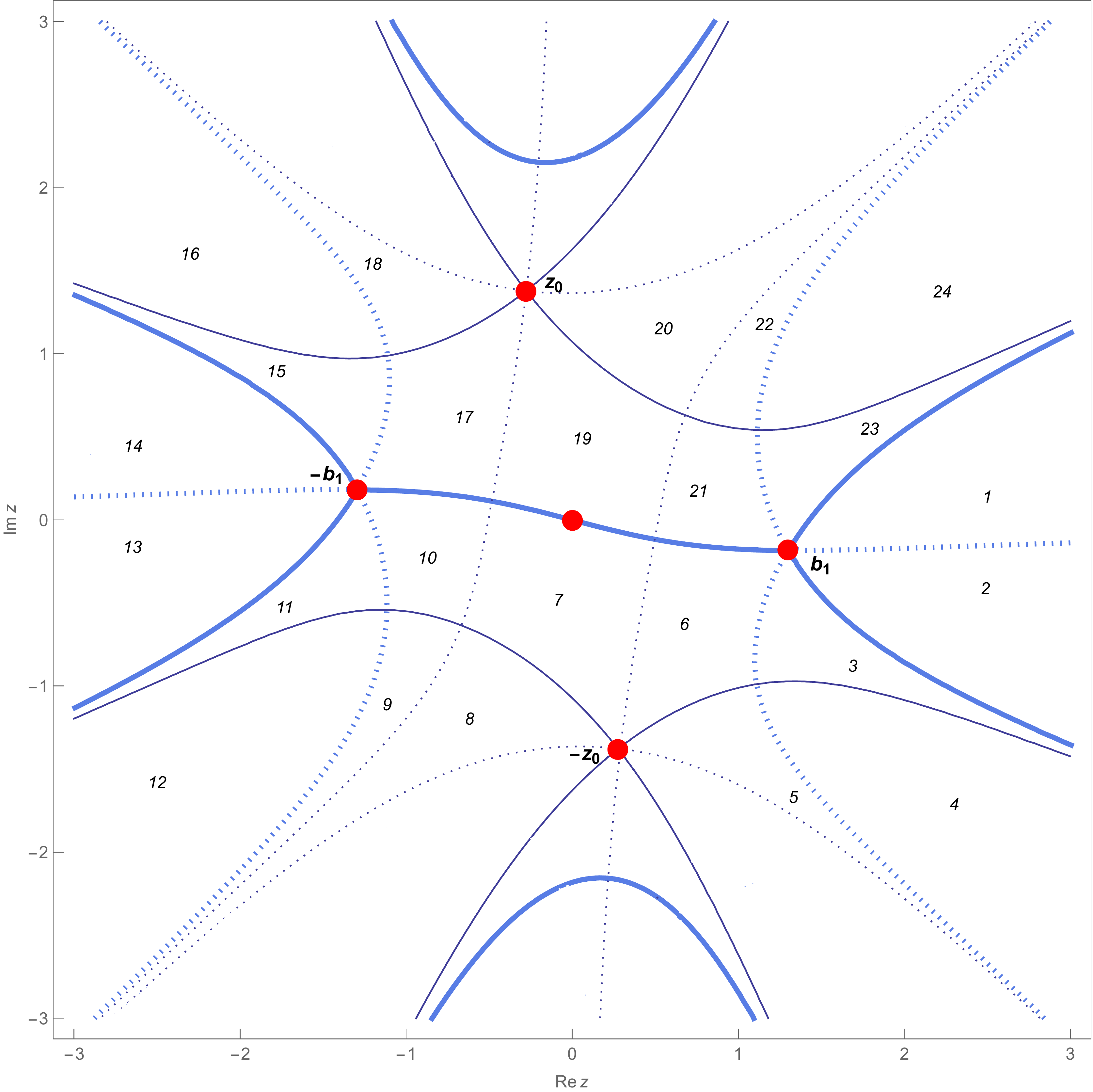}
			\caption{The critical and critical orthogonal trajectories of the one-cut quadratic differential $Q_1(z;\sigma) \dd z^2$ incident with $\pm b_1$ and $\pm z_0$. Thick lines are associated with $\Re[\eta_{1}(z;\sigma)]=0$ (solid) and $\Im[\eta_{1}(z;\sigma)]=0$ (dotted), while the thin lines are associated with $\Re[\eta_{1,z_0}(z;\sigma)]=0$ (solid) and $\Im[\eta_{1,z_0}(z;\sigma)]=0$ (dotted), where we remind that $\eta_{1,e}(z;\sigma)=\int^z_{e}\sqrt{Q_1(s;\sigma)}\dd s$, and $\eta_{1}(z;\sigma)\equiv \eta_{1,b_1}(z;\sigma)$. The red dot which is not labeled represents the origin. We have only shown the "humps" $\mathcal{L}^{(1)}_{\sigma}$ and $\mathcal{L}^{(2)}_{\sigma}$ associated with $\Re[\eta_{1,b_1}(z;\sigma)]=0$ (See Lemma \ref{lemma humps}), and have not shown the humps associated with $\Im[\eta_{1,b_1}(z;\sigma)]=0$, $\Re[\eta_{1,z_0}(z;\sigma)]=0$, and $\Im[\eta_{1,z_0}(z;\sigma)]=0$ for simplicity of the Figure.  Notice that $\eta_1(z;\sigma) \sim (z \mp b_1)^{3/2}$ as $z \to \pm b_1$, while $\eta_1(z;\sigma) \sim (z \mp z_0)^{2}$ as $z \to \pm z_0$, which determines the number of critical (critical orthogonal) trajectories incident with $\pm b_1$ and $\pm z_0$. The critical trajectories approach to infinity along the eight directions $\pi/8 + k \pi/4$, and orthogonal trajectories approach to infinity along the eight directions $k \pi/4$, $k=0,\cdots,7$. }
			\label{fig:Conf map 1}
		\end{subfigure} \hspace{0.09\textwidth}
		\begin{subfigure}{.44\textwidth}
			\centering
			\includegraphics[width=\textwidth]{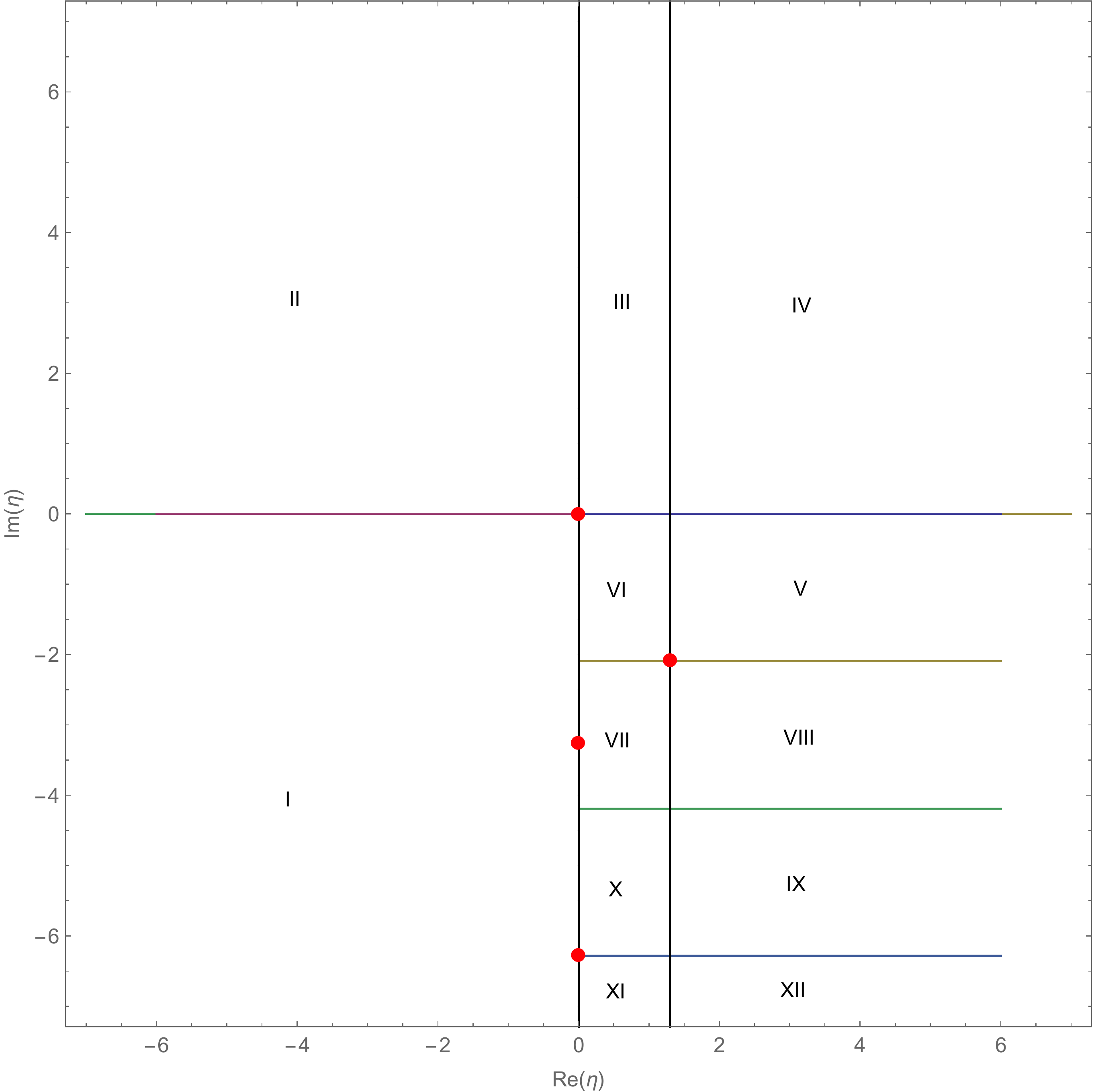}
			\caption{The image $\eta_1(\Om;\sigma)$, where $\Om$ is the union of the regions labeled by $1$ through $12$ in Figure \ref{fig:Conf map 1}. Each strip-like region labeled by a Roman numeral corresponds to the region in Figure \ref{fig:Conf map 1} labeled by the same number in Arabic numerals; for example the image of the region labeled by $6$ in Figure \ref{fig:Conf map 1} is the region labeled by VI. Notice that $\eta_1(b_1;\sigma)=0$, $\eta_{1,-}(0;\sigma)=-\pi \ii$, and $\eta_{1,-}(-b_1;\sigma)=-2\pi \ii$, while the red point in the fourth quadrant represents $\eta_1(-z_0;\sigma)$. The map $\eta$ is clearly not conformal at  $\pm b_1$ and $\pm z_0$. For example a neighborhood of $b_1$ intersected with the regions labeled by 1, 2, 3, and 6 gets mapped to a full neighborhood of the origin in the $\eta$-plane due to $\eta_1(z;\sigma) \sim (z - b_1)^{3/2}$ as $z \to b_1$, while  a neighborhood of $-z_0$ intersected with the regions labeled by 5, 6, 7, and 8 gets mapped to a full neighborhood of $\eta_1(-z_0;\sigma)$ in the $\eta$-plane due to $\eta_1(z;\sigma) \sim (z + z_0)^{2}$ as $z \to - z_0$. The image $\eta_1(\Om;\sigma)$ as depicted above shows that the regions in Figure \ref{fig:Conf map 1} labeled by 1 and 2 are stable lands, meaning that they can host the complementary contours $\Ga_{\sigma}[b_1, \infty]$ and $\Ga_{\sigma}[ -\infty, -b_1]$, respectively.}
			\label{fig:Conf map 2}    
		\end{subfigure}
		\caption{ Demonstration of the conformal mapping between the regions labeled by 1 through 12 in the $z$-plane to the $\eta_1$-plane. $\eta_1$ also maps the regions labeled by 13 to 24 to the entire plane as well. These conformal maps illustrate that the regions labeled by 1,2,13, and 14 are stable lands as shown in Figure \ref{fig:fertile barren -1 1.6}. The stable lands in Figures \ref{fig:stable barren 1 3.8} through \ref{fig:stable barren -1.35 41} can be justified similarly.}
		\label{fig:Conf Map}
	\end{figure}
	
	\begin{definition}
		\normalfont	If $ \Re \left[ \eta_1(\pm z_0(\sigma);\sigma) \right]\neq 0$, we define $\De^{(b_1)}$ (resp. $\De^{(-b_1)}$) to be the geodesic polygon with vertices $ b_1$ (resp. $-b_1$) and $\infty$, composed of $\ell_2^{( b_1)}$ and $\ell_3^{(b_1)}$ (resp. $\ell_2^{(-b_1)}$ and $\ell_3^{(-b_1)}$) with interior angle $2\pi/3$ at $b_1$ (resp. $-b_1$).
	\end{definition}
	
	The following lemma is an immediate consequence of applying the Teichm\"uller's lemma to the polygons $\De^{(\pm b_1)}$.

	\begin{lemma}\label{lemma angles at infty}
		The critical trajectories $\ell_2^{(b_1)}$ and $\ell_3^{(b_1)}$ approach to infinity along two directions $\pi/4$ apart if the geodesic polygon $\De^{(b_1)}$ does not enclose $\pm z_0$, while they approach to infinity along two directions $3\pi/4$ apart if the geodesic polygon $\De^{(b_1)}$ encloses one of $\pm z_0$. Due to symmetry the above statement is correct if we replace $b_1$ by $-b_1$. 
	\end{lemma}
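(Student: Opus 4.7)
The plan is a direct application of the Teichm\"uller's lemma \eqref{Teich Lemma} to the $Q_1$-polygon $\Delta^{(b_1)}$, whose two vertices are $b_1$ and $\infty$. Write $\theta_\infty$ for the interior angle of $\Delta^{(b_1)}$ at $\infty$, and let $S=\sum_{z\in \texttt{Int}\,\Delta^{(b_1)}} \texttt{ord}(z)$.

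First I would record the relevant data for $Q_1(z;\sigma)\dd z^2$ in \eqref{OneCut QD}: the zero at $b_1$ is simple, so $\texttt{ord}(b_1)=1$, and the interior angle at $b_1$ is $2\pi/3$ by the very definition of $\Delta^{(b_1)}$ (the two critical arcs $\ell_2^{(b_1)},\ell_3^{(b_1)}$ are adjacent among the three critical arcs emanating from $b_1$, which are separated by $2\pi/3$). From the representation \eqref{QD near infinity} together with $Q_1(z;\sigma)\sim z^6$ as $z\to\infty$, infinity is a pole of order $10$, so $\texttt{ord}(\infty)=-10$. Plugging into \eqref{Teich Lemma} gives
\begin{equation*}
2-2 = (1+2)\cdot\frac{2\pi/3}{2\pi}+(-10+2)\cdot\frac{\theta_\infty}{2\pi}+S,
\end{equation*}
which simplifies to $\theta_\infty = \pi(1+S)/4$.

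Next I would carry out the case analysis on $S$. The only finite zeros of $Q_1$ are $\pm b_1$ (of order $1$) and $\pm z_0$ (of order $2$); since $b_1$ is already a vertex and $-b_1$ cannot be interior to $\Delta^{(b_1)}$ (otherwise Lemma \ref{Lemma symmetry of J} together with Lemma \ref{Lemma one or two vetex polygons are ruled out} would be violated, because $\Delta^{(-b_1)}$ would sit inside $\Delta^{(b_1)}$ and force a one- or two-vertex singular polygon to arise from the remaining critical arcs), the only possible interior singular points are $\pm z_0$. If neither $z_0$ nor $-z_0$ is enclosed, then $S=0$ and $\theta_\infty=\pi/4$; if exactly one of $\pm z_0$ is enclosed, then $S=2$ and $\theta_\infty=3\pi/4$.

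Finally, I would interpret $\theta_\infty$ as the angle between the two asymptotic directions along which $\ell_2^{(b_1)}$ and $\ell_3^{(b_1)}$ approach infinity. As already recorded in the discussion following \eqref{QD near infinity}, critical trajectories of $Q_1(z;\sigma)\dd z^2$ arrive at $\infty$ along the eight equally spaced directions $\pi/8+k\pi/4$, $k=0,\dots,7$, so the interior angle at $\infty$ computed via Teichm\"uller's lemma coincides with the actual angular separation of the two incoming directions. Thus $\theta_\infty=\pi/4$ corresponds to adjacent directions, and $\theta_\infty=3\pi/4$ to directions separated by three steps, which is exactly the statement of the lemma. The case of $-b_1$ is identical by the symmetry $z\mapsto -z$ used in Lemma \ref{Lemma symmetry of J}. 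The only non-routine point in this argument is ruling out the possibility of $-b_1$ lying in the interior of $\Delta^{(b_1)}$ (and analogously of $\Delta^{(b_1)}$ enclosing both $\pm z_0$ simultaneously, which would give $\theta_\infty=5\pi/4>\pi$, inconsistent with the polygon being on one side of its boundary); I expect this to be handled by combining Lemmas \ref{Lemma symmetry of J}--\ref{Lemma one or two vetex polygons are ruled out} with a short topological argument about the position of the remaining critical arcs.
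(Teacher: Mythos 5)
Your proof is correct and follows exactly the route the paper intends: the paper states this lemma as ``an immediate consequence of applying the Teichm\"uller's lemma to the polygons $\De^{(\pm b_1)}$,'' and your computation $\theta_\infty=\pi(1+S)/4$ with $S=0$ or $S=2$ is precisely the intended calculation, with the supporting points (why $-b_1$ cannot be interior, and the identification of $\theta_\infty$ with the separation of asymptotic directions among the eight rays $\pi/8+k\pi/4$) correctly identified and consistent with how the paper uses the same argument elsewhere (e.g.\ in Lemmas \ref{lemma humps} and \ref{where things go two cut}).
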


	\begin{figure}[h]
		\centering
		\begin{subfigure}{0.24\textwidth}
			\centering
			\includegraphics[width=\textwidth]{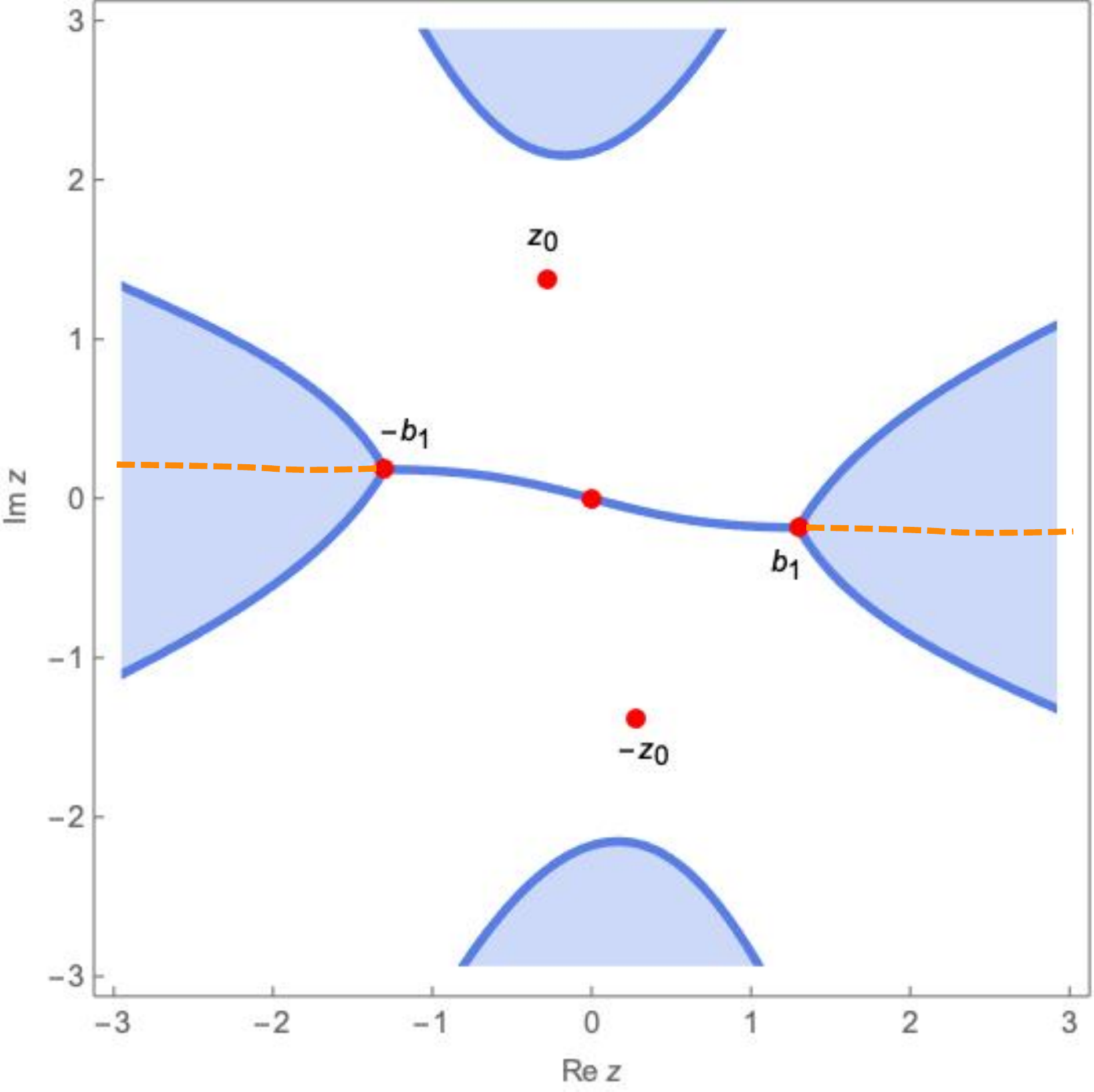}
			\caption{$\sigma=1+\ii$.}
			\label{fig:fertile barren -1 1.6}
		\end{subfigure}%
		\begin{subfigure}{.24\textwidth}
			\centering
			\includegraphics[width=\textwidth]{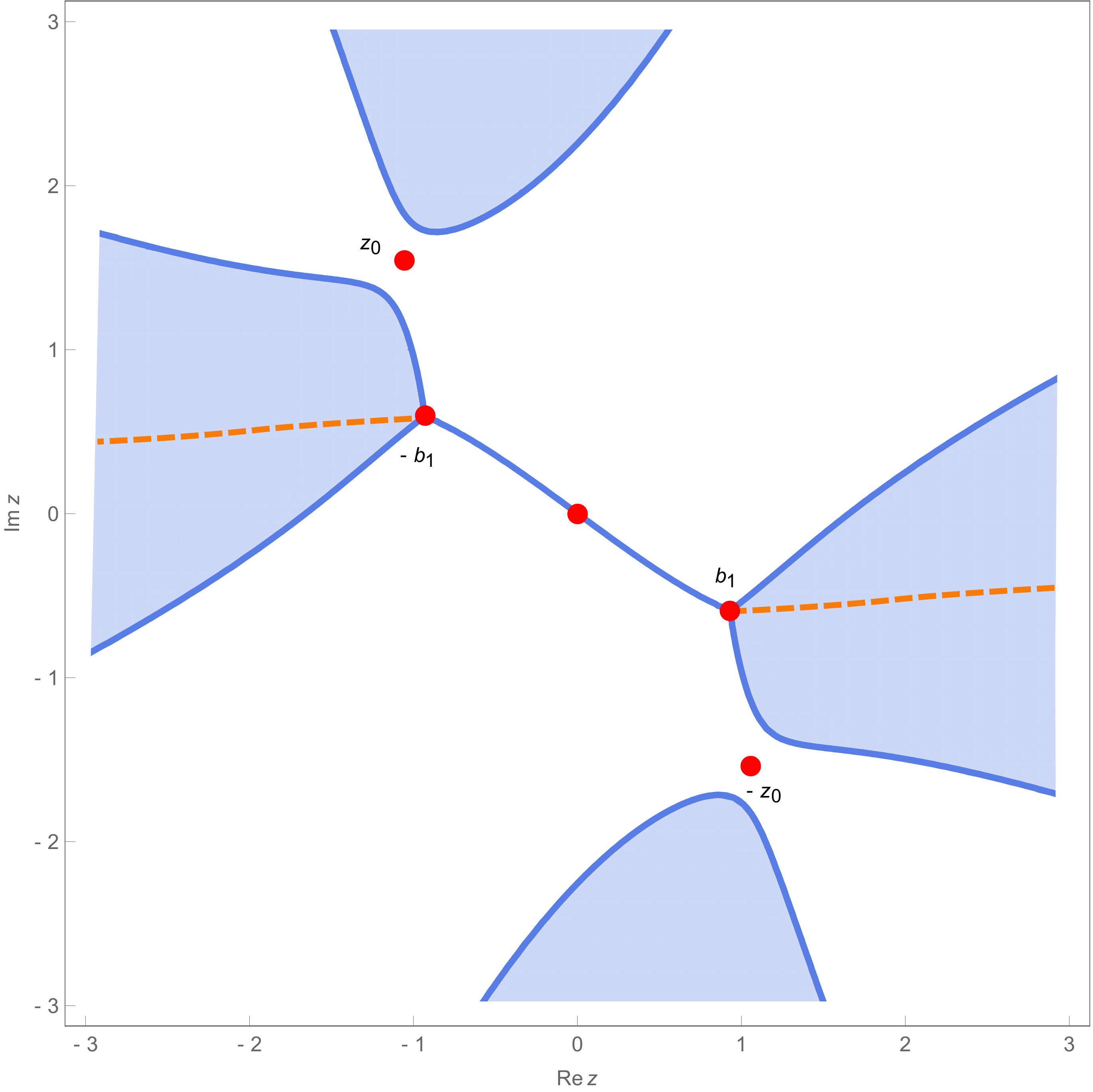}
			\caption{$\sigma=1+3.8\ii$.}
			\label{fig:stable barren 1 3.8}    
		\end{subfigure}
		\begin{subfigure}{.24\textwidth}
			\centering
			\includegraphics[width=\textwidth]{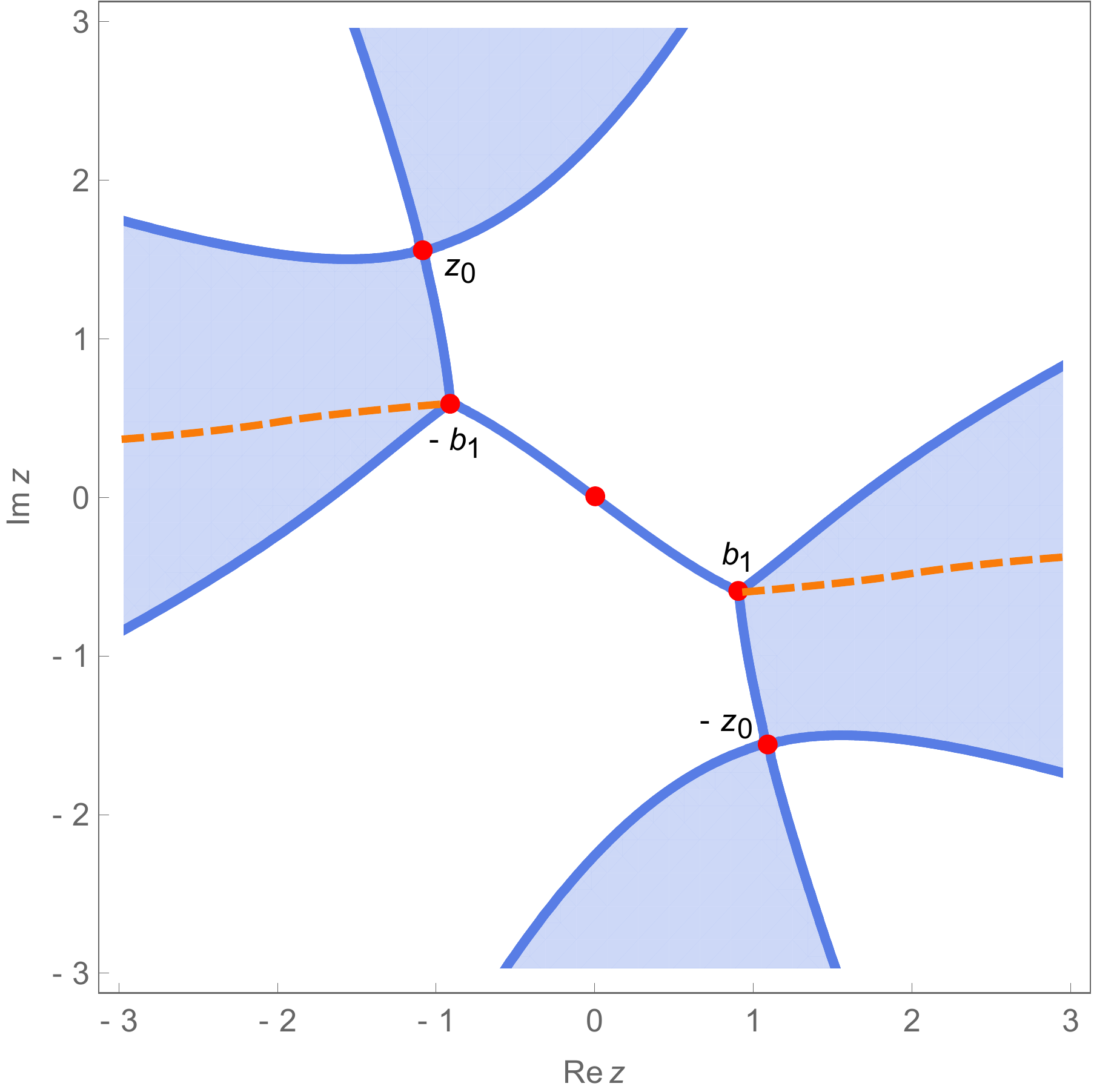}
			\caption{$\sigma_{\mbox{cr}}\simeq1+3.92\ii$.}
			\label{fig:stable barren critical}    
		\end{subfigure}
		\begin{subfigure}{.24\textwidth}
			\centering
			\includegraphics[width=\textwidth]{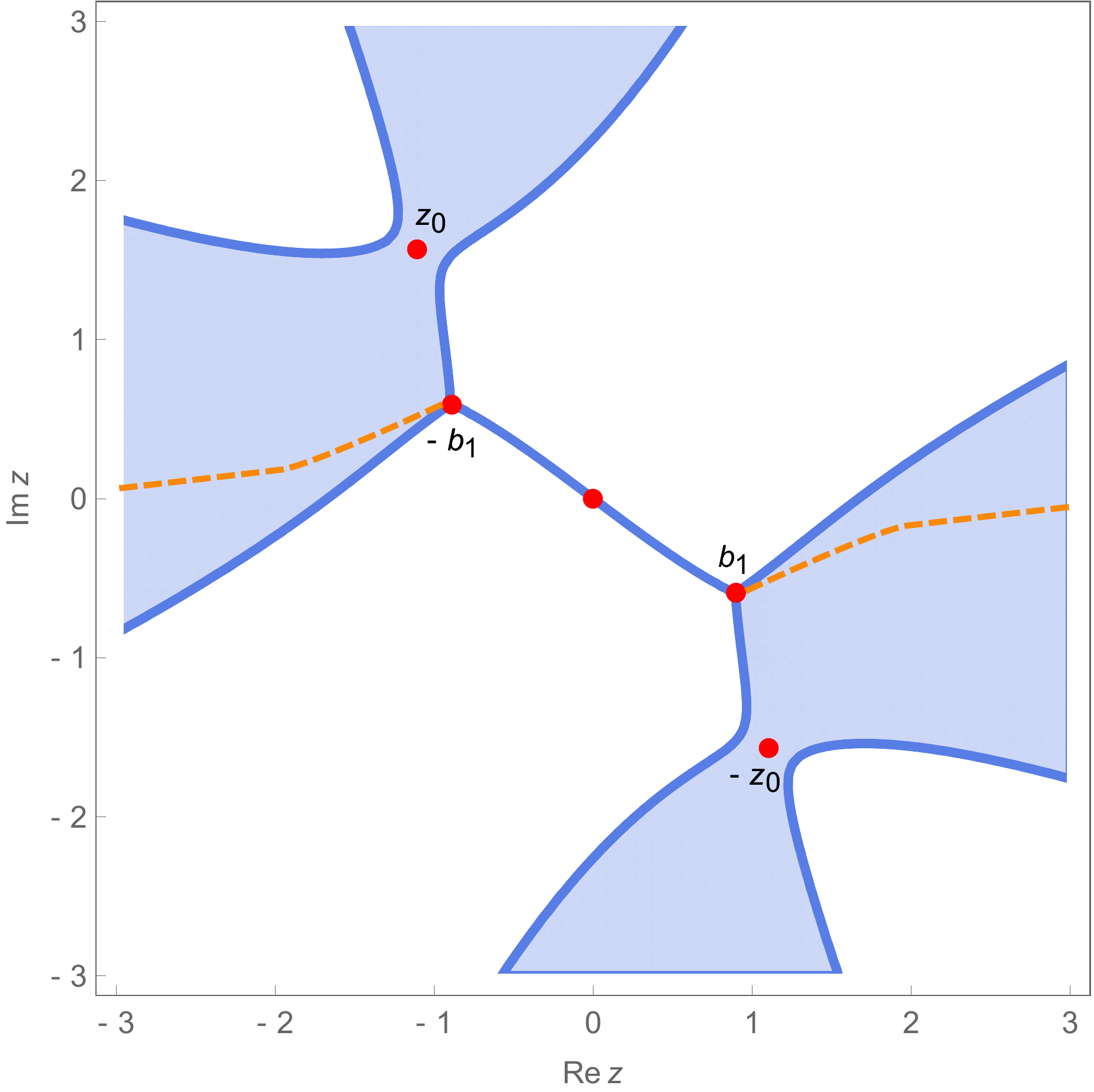}
			\caption{$\sigma=1+4\ii$.}
			\label{fig:stable barren 1 4}    
		\end{subfigure}
		
		\begin{subfigure}{0.24\textwidth}
			\centering
			\includegraphics[width=\textwidth]{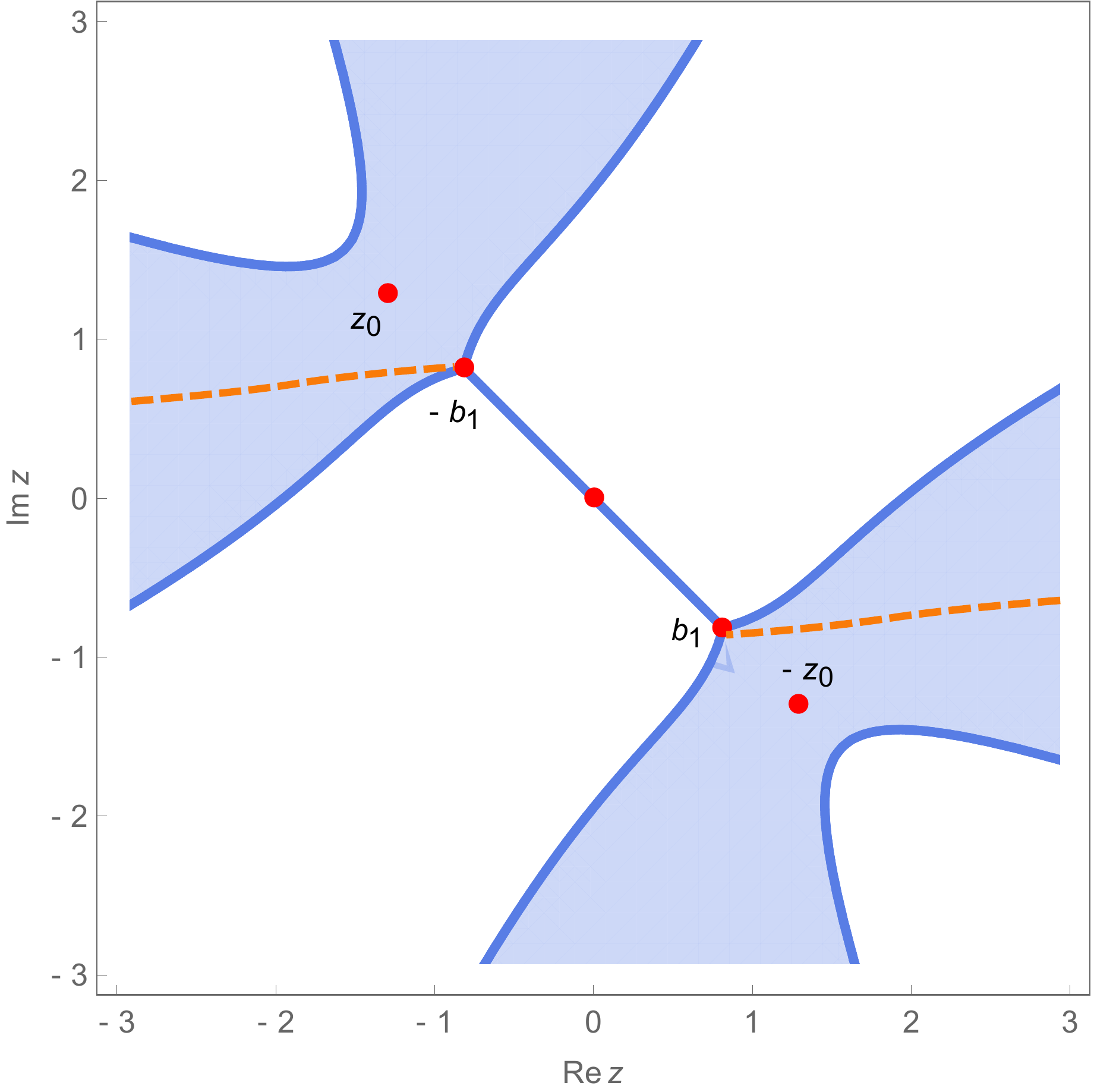}
			\caption{$\sigma=4 \ii$.}
			\label{fig:fertile barren 4}
		\end{subfigure}%
		\begin{subfigure}{.24\textwidth}
			\centering
			\includegraphics[width=\textwidth]{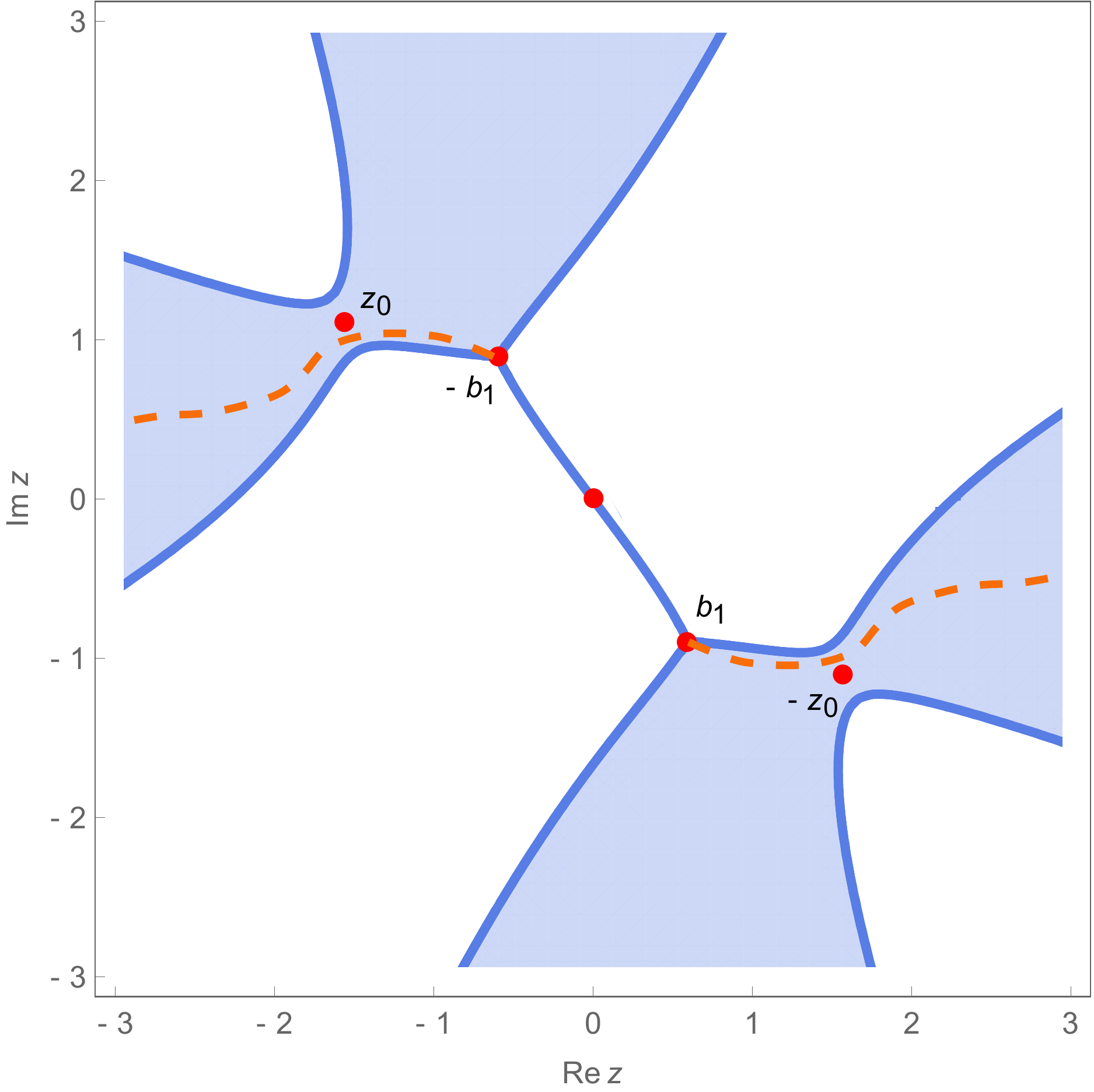}
			\caption{$\sigma=-1+4\ii$.}
			\label{fig:stable barren -1 4}    
		\end{subfigure}
		\begin{subfigure}{.24\textwidth}
			\centering
			\includegraphics[width=\textwidth]{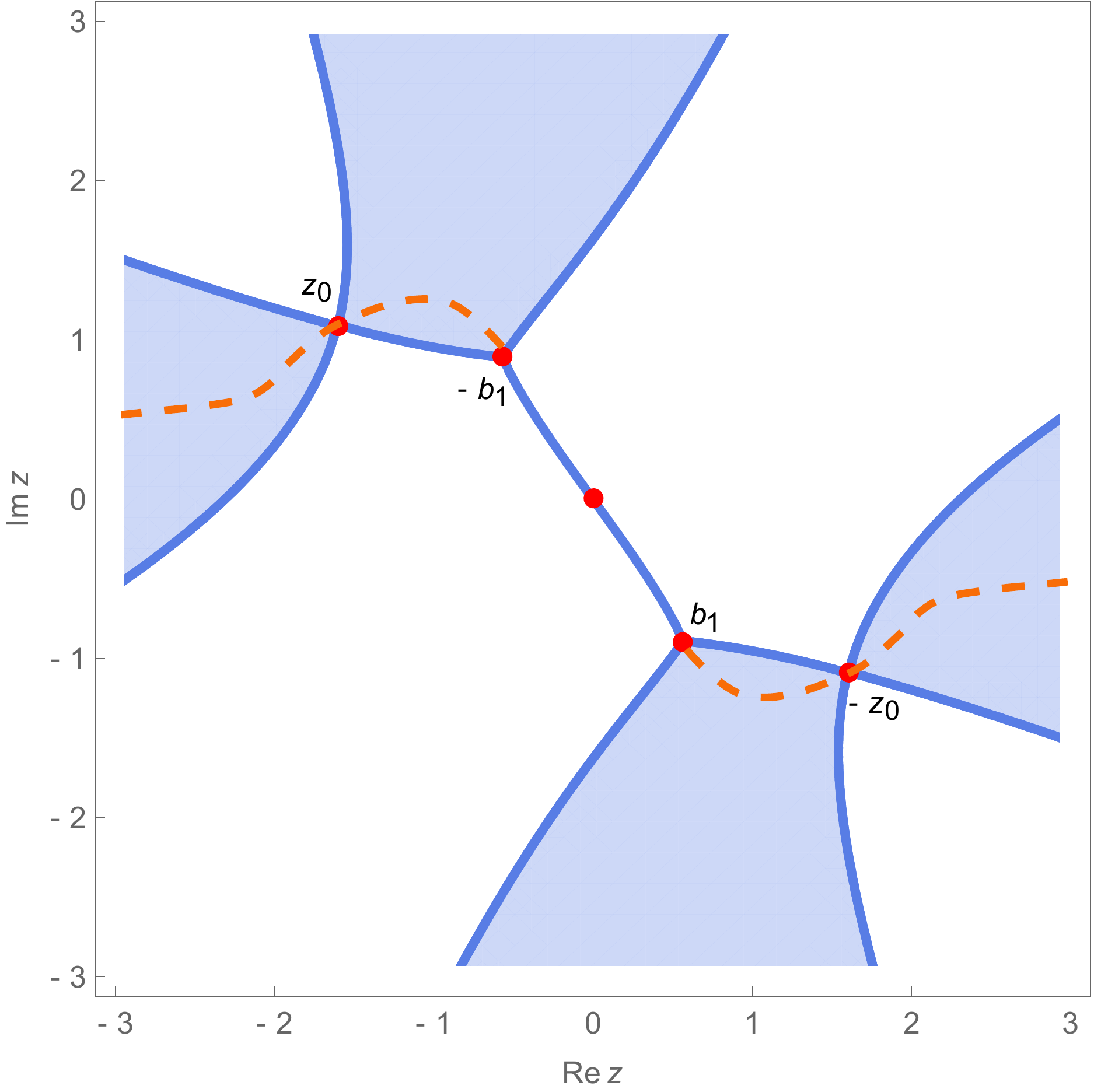}
			\caption{$\sigma_{\mbox{cr}}\simeq -1.15+4\ii$.}
			\label{fig:stable barren -1.15 4}    
		\end{subfigure}
		\begin{subfigure}{.24\textwidth}
			\centering
			\includegraphics[width=\textwidth]{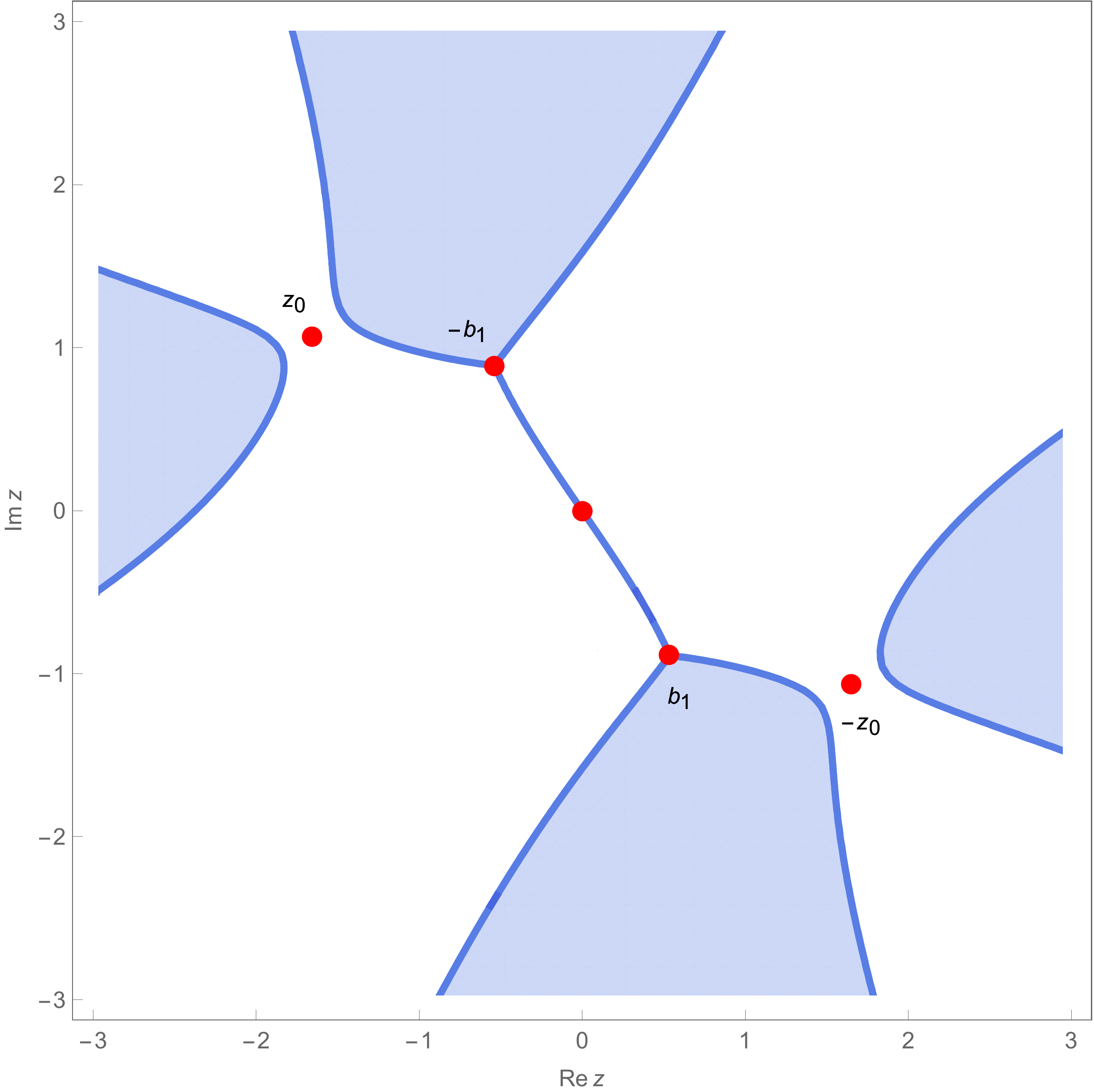}
			\caption{$\sigma=-1.35+4\ii$.}
			\label{fig:stable barren -1.35 41}    
		\end{subfigure}
		\caption{
			This sequence of figures shows allowable regions in light blue through which the contour of integration (for the orthogonal polynomials) must pass, for a varying collection of values of $\sigma$.	Regions in light blue are the $\sigma$-stable lands where $  \Re[\eta_1(z;\sigma)]<0$ and the regions in white are the $\sigma$-unstable lands where $  \Re[\eta_1(z;\sigma)]>0$. We denote the local critical arcs incident to $\pm b_1(\sigma)$ by $\mathcal{l}^{(\pm b_1)}_{1}$, $\mathcal{l}^{(\pm b_1)}_{2}$, and $\mathcal{l}^{(\pm b_1)}_{3}$ (labeled in counterclockwise direction), where  $\mathcal{l}^{(b_1)}_{1}$ and $\mathcal{l}^{(-b_1)}_{1}$ are the ones which are part of $J_{\sigma}\equiv \Ga_{\sigma}[-b_1,b_1]$. When $\pm z_0(\sigma)$ are in the $\sigma$-unstable lands, the Teichm\"uller's lemma for the geodesic polygon comprised of $\ell^{(b_1)}_2$ and $\ell^{(b_1)}_3$ with vertices at $b_1(\sigma)$ and $\infty$, necessitates that $\ell^{(b_1)}_2$ and $\ell^{(b_1)}_3$ approach infinity along two directions $\pi/4$ radians apart (see Figures (a), (b) and (h) above), while when $\pm z_0(\sigma)$ are in the $\sigma$-stable lands, the Teichm\"uller's lemma for the same geodesic polygon necessitates that $\ell^{(b_1)}_2$ and $\ell^{(b_1)}_3$ approach infinity along two directions $3\pi/4$ radians apart (see Figures (d), (e) and (f) above). In all figures above the first and the second requirements of Definition \ref{Def one cut sigma} is fulfilled. However, Figures (g) and (h) correspond to $\sigma$ values where the third requirement of Definition \ref{Def one cut sigma} is not met, while for $\sigma$ values corresponding to Figures (a) through (f), the orange dashed lines show that this requirement is fulfilled. In Figure \ref{fig:legend1} we show the location of these points with respect to the critical lines in the $\sigma$-plane.}
		\label{fig:stable and barren}
	\end{figure}

	\begin{definition}\label{Def one cut sigma minus fake transition}
		\normalfont	The subset
		$\mathcal{O}^*_{1}$ in the $\sigma$-plane is the collection of all $\sigma \in \C$ such that
		\begin{enumerate}
			\item The critical graph $\mathscr{J}^{(1)}_{\sigma}$ of all points $z$ satisfying \begin{equation}\label{level set}
			\Re \left[ \eta_1(z;\sigma) \right]=0,
			\end{equation}
			contains a single Jordan arc $J_{\sigma}$ connecting $-b_1(\sigma)$ to $b_1(\sigma)$,
			\item The points $\pm z_0(\sigma)$ do not lie on $\mathscr{J}^{(1)}_{\sigma}$, and
			\item There exists a complementary arc $\Ga_{\sigma}(b_1(\sigma), \infty)$ which lies entirely in the component of the set \begin{equation}
			\left\{ z : \Re \left[ \eta_1(z;\sigma) \right]<0 \right\},
			\end{equation} which encompasses $(M(\sigma),\infty)$ for some $M(\sigma)>0$.
		\end{enumerate}
	\end{definition}
	
	Notice that $\mathcal{O}^*_1 \subseteq \mathcal{O}_1$, since for the definition of $\mathcal{O}^*_1$ the location of $z_0$ is further restricted than what is required for the definition of $\mathcal{O}_1$. In Theorem \ref{THM Fake Transition} the significance of distinguishing these two sets will become more clear. Here in the rest of this section we will focus on proving the following Theorem:
	
	\begin{theorem}\label{O1* is open}
		The set $\mathcal{O}^*_1$ is open.
	\end{theorem}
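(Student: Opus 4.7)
The plan is to fix $\sigma_0 \in \mathcal{O}^*_1$ and produce a disk around $\sigma_0$ in which all three conditions of Definition \ref{Def one cut sigma minus fake transition} persist. The key ingredients are the analyticity of the end-points $b_1(\sigma)$ and $z_0(\sigma)$ given by the explicit formulae \eqref{em37} away from the branch cuts $L, L_\pm$ of Remark \ref{Remark branches}, together with the continuous deformation of the critical graph furnished by Lemma \ref{continuous deformations of one cut critical graph}. First I would choose a disk $U\ni\sigma_0$ small enough that $b_1(\sigma), z_0(\sigma)$ are analytic and bounded on $U$, and that the two compact sets $\{\pm z_0(\sigma_0)\}$ and $\mathscr{J}^{(1)}_{\sigma_0}\cap D_R$ are separated by a fixed positive distance $d>0$ on a large disk $D_R$ containing all the relevant finite critical geometry.

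Next I would verify condition (2). Since $\pm z_0(\sigma_0)\notin\mathscr{J}^{(1)}_{\sigma_0}$, continuity of $z_0(\sigma)$ together with the Hausdorff continuity of $\mathscr{J}^{(1)}_{\sigma}\cap D_R$ supplied by Lemma \ref{continuous deformations of one cut critical graph} gives, after shrinking $U$, that $\pm z_0(\sigma)$ stays at distance at least $d/2$ from $\mathscr{J}^{(1)}_{\sigma}$. To verify condition (1), I would use that the arc $J_{\sigma_0}\subset \mathscr{J}^{(1)}_{\sigma_0}$ connecting $-b_1(\sigma_0)$ to $b_1(\sigma_0)$ deforms continuously, by Lemma \ref{continuous deformations of one cut critical graph}, to a Jordan arc $J_\sigma$ with end-points $\pm b_1(\sigma)$; by step 2, $\pm z_0(\sigma)$ do not lie on $J_\sigma$, so by Lemmas \ref{Lemma one or two vetex polygons are ruled out} and \ref{Lemma extension to infinity of critical trajectories} the other local critical arcs $\ell_{2,3}^{(\pm b_1)}$ must still extend to infinity, and no additional finite arc can be created joining $\pm b_1(\sigma)$ (a short arc would violate Teichmüller's identity as in Lemma \ref{Lemma one or two vetex polygons are ruled out}). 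Hence $J_\sigma$ is the only Jordan arc in $\mathscr{J}^{(1)}_{\sigma}$ joining $\pm b_1(\sigma)$.

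For condition (3), I would exploit the fact that $\eta_1(z;\sigma)\sim -z^4/4$ as $z\to\infty$ uniformly in $\sigma\in U$, so that the eight asymptotic stable sectors at infinity are the same for every $\sigma\in U$; the complementary arc $\Gamma_{\sigma_0}(b_1(\sigma_0),\infty)$ enters one of these sectors. I would then construct $\Gamma_\sigma(b_1(\sigma),\infty)$ by first taking a short analytic arc from $b_1(\sigma)$ into the unbounded component of $\{\Re\eta_1(\cdot;\sigma)<0\}$ containing the chosen asymptotic sector, and then following a ray into that sector. Continuity of $\Re\eta_1(z;\sigma)$ in both variables on $D_R\times U$, together with the uniform asymptotic description at infinity and Lemma \ref{lemma angles at infty} determining the angles of $\ell_{2,3}^{(b_1)}$ at $\infty$, guarantees that the appropriate stable component persists for $\sigma\in U$ after possibly shrinking $U$.

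The main obstacle is ensuring that the global topology of $\mathscr{J}^{(1)}_\sigma$ cannot change discontinuously: in principle a new short critical trajectory could appear, or the component of the stable land hosting the complementary arc could become disconnected from infinity. Both scenarios are excluded by combining Lemma \ref{continuous deformations of one cut critical graph} with the Teichmüller-based rigidity of Lemmas \ref{Lemma one or two vetex polygons are ruled out}, \ref{Lemma extension to infinity of critical trajectories}, and \ref{lemma angles at infty}; the nontrivial content is that as long as $\pm z_0(\sigma)$ stays off the critical graph (condition (2)), the combinatorial structure of how $\ell_{2,3}^{(\pm b_1)}$ extend to infinity is locked in, and the stable component containing the tail of $\Gamma_{\sigma_0}(b_1(\sigma_0),\infty)$ deforms continuously with $\sigma$.
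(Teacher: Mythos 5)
Your proposal matches the paper's proof in both structure and substance: the paper likewise splits the argument into the three conditions of Definition \ref{Def one cut sigma minus fake transition} (Lemmas \ref{lemma delta}, \ref{lemma connectivity}, and \ref{lemma good region}), keeps $\pm z_0(\sigma)$ off the critical graph by continuity (tracking the sign of the scalar $\Re[\eta_1(z_0(\sigma);\sigma)]$ rather than a Hausdorff separation, which is equivalent here since $\mathscr{J}^{(1)}_{\sigma}$ is exactly the zero level set), and then locks in the combinatorics of $\mathscr{J}^{(1)}_{\sigma}$ --- persistence of the connection $J_{\sigma}$ and of the stable component reaching $+\infty$ --- via the continuous-deformation lemma together with Teichm\"uller rigidity and the count of the eight asymptotic directions at infinity. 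The one place where the paper is slightly more explicit than you are is Lemma \ref{lemma connectivity}: rather than asserting that $J_{\sigma_0}$ deforms to a Jordan arc with endpoints $\pm b_1(\sigma)$, it argues by contradiction that losing the connection would force a combinatorial reconfiguration of the critical graph that can only occur when $z_0(\sigma)$ crosses $\mathscr{J}^{(1)}_{\sigma}$, which your choice of neighborhood already forbids.
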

	
	We prove this Theorem by proving several lemmas associated to the different requirements of Definition \ref{Def one cut sigma minus fake transition}. In the following three lemmas we establish some structural properties of the critical graph $\mathscr{J}^{(1)}_{\sigma}$.
	
	\begin{lemma}\label{lemma humps}
		Suppose that $\sigma \in \mathcal{O}^*_1$ and $ \Re[\eta_1(\pm z_0(\sigma);\sigma)]>0$. Then there exist two disjoint curves $\mathcal{L}^{(1)}_{\sigma}$ and $\mathcal{L}^{(2)}_{\sigma}$  as subsets of $\mathscr{J}^{(1)}_{\sigma}$, which have no intersections with $\pm b_1(\sigma)$,  $\pm z_0(\sigma)$,  $J_{\sigma}$,  $\ell_2^{(b_1)}, \ell_3^{(b_1)}, \ell_2^{(-b_1)}$, and $ \ell_3^{(-b_1)}$. Moreover, the curve $\mathcal{L}^{(1)}_{\sigma}$ approaches to infinity along the two directions $3\pi/8$ and $5\pi/8$, the curve $\mathcal{L}^{(2)}_{\sigma}$ approaches to infinity along the two directions $-3\pi/8$ and $-5\pi/8$, and the rays $\ell_2^{(b_1)}, \ell_3^{(b_1)}, \ell_2^{(-b_1)}$, and $\ell_3^{(-b_1)}$ respectively approach to infinity along the directions $-\pi/8$, $\pi/8$, $7\pi/8$, and $-7\pi/8$.
	\end{lemma}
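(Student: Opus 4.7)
The plan is to first analyse the asymptotic behaviour of $\eta_1$ at infinity, then allocate the eight directions along which $\mathscr{J}^{(1)}_\sigma = \{z: \Re[\eta_1(z;\sigma)]=0\}$ approaches infinity among the critical trajectories issuing from $\pm b_1(\sigma)$ and the remaining (to be shown to exist) components, and finally use the origin symmetry together with topological arguments to pin down the pairing of the leftover directions into the two claimed humps.  From the integrand of $\eta_1$ being $\sim s^3$ at infinity one has $\eta_1(z;\sigma)\sim -z^4/4$ as $z\to\infty$, so $\mathscr{J}^{(1)}_\sigma$ meets $\infty$ in exactly the eight rays $\theta_k := \pi/8 + k\pi/4$, $k=0,\ldots,7$; these separate four stable sectors (where $\Re[\eta_1]\to -\infty$) from four unstable sectors in an alternating pattern.

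For the four rays emanating from $\pm b_1(\sigma)$: since the hypothesis $\Re[\eta_1(\pm z_0;\sigma)]>0$ places $\pm z_0$ in the unstable lands, in particular off $\mathscr{J}^{(1)}_\sigma$, Lemma \ref{Lemma extension to infinity of critical trajectories} guarantees that $\ell_2^{(\pm b_1)}$ and $\ell_3^{(\pm b_1)}$ all extend to infinity.  The polygon $\De^{(b_1)}$ contains the stable sector around the positive real direction (where $\eta_1\sim -z^4/4\to -\infty$), so its interior is stable and hence cannot contain the unstable points $\pm z_0$.  Lemma \ref{lemma angles at infty} then says the two bounding rays of $\De^{(b_1)}$ approach infinity in two directions $\pi/4$ apart; combined with their local departure angles $-\pi/3$ and $\pi/3$ at $b_1$ and the requirement that they cannot cross the bounded arc $J_\sigma$, the only $\pi/4$-separated pair in $\{\theta_k\}$ consistent with the local picture is $\{-\pi/8,\pi/8\}$.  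Thus $\ell_3^{(b_1)}\to\pi/8$ and $\ell_2^{(b_1)}\to-\pi/8$, and the origin symmetry $\eta_1(-z)=\eta_1(z)\pm 2\pi\ii$ (Lemma \ref{Lemma symmetry of J}) yields $\ell_2^{(-b_1)}\to 7\pi/8$ and $\ell_3^{(-b_1)}\to -7\pi/8$, as claimed.

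This leaves the four directions $\{\pm 3\pi/8,\pm 5\pi/8\}$ to be accounted for by additional components of $\mathscr{J}^{(1)}_\sigma$.  Since $\pm z_0\notin\mathscr{J}^{(1)}_\sigma$, each such component is a smooth level curve of the harmonic function $\Re[\eta_1]$ whose only possible branching points in $\C$ would be $\pm b_1$; as all three branches there are already accounted for, no further component can pass through $\pm b_1$.  The maximum principle applied to $\Re[\eta_1]$ (harmonic off $J_\sigma$) excludes bounded closed components, so each further component is an unbounded smooth arc joining two of the leftover directions, giving exactly two additional components $\mathcal{L}^{(1)}_\sigma,\mathcal{L}^{(2)}_\sigma$.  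The required disjointness from $J_\sigma$, $\ell_j^{(\pm b_1)}$, $\pm b_1$, and $\pm z_0$ is then automatic, since distinct smooth level curves of $\Re[\eta_1]$ can meet only at its branching points, which have been excluded.

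The main obstacle is to determine the pairing.  The involution $z\mapsto -z$ acts on the four leftover directions by $3\pi/8\leftrightarrow -5\pi/8$ and $5\pi/8\leftrightarrow -3\pi/8$, so a priori either the two components are swapped by this involution (two sub-pairings: the claimed upper/lower one $\{3\pi/8,5\pi/8\}$ with $\{-3\pi/8,-5\pi/8\}$ and the "left/right" pairing $\{3\pi/8,-3\pi/8\}$ with $\{5\pi/8,-5\pi/8\}$), or each component is itself $(-)$-invariant (the "diagonal" pairing $\{3\pi/8,-5\pi/8\}$ and $\{5\pi/8,-3\pi/8\}$).  A $(-)$-invariant unbounded smooth arc must pass through the unique finite fixed point of $z\mapsto -z$, namely the origin; but the explicit formula \eqref{explicit integration one cut} gives $\eta_{1,\pm}(0;\sigma)=\pm\pi\ii$, so the origin lies on $\mathscr{J}^{(1)}_\sigma$ as a regular point of $J_\sigma$, and no second level curve can cross $J_\sigma$ there.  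To exclude the left/right pairing, observe that an arc from $3\pi/8$ to $-3\pi/8$ has both asymptotic ends in the right half-plane but with one endpoint above and one below the real axis, so it must cross the real axis an odd number of times; any such crossing $x_0$ either satisfies $|x_0|>b_1$, placing it in the strict interior of $\De^{(\pm b_1)}$ where $\Re[\eta_1]<0$, or $|x_0|\le b_1$, placing it on $J_\sigma$, which is a smooth locus of $\mathscr{J}^{(1)}_\sigma$ admitting no second level curve through it.  The only remaining possibility is the pairing $\{3\pi/8,5\pi/8\}$ and $\{-3\pi/8,-5\pi/8\}$, yielding the claimed upper and lower humps with $\mathcal{L}^{(2)}_\sigma = -\mathcal{L}^{(1)}_\sigma$.
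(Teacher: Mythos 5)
Your overall architecture matches the paper's: count the eight asymptotic directions of $\mathscr{J}^{(1)}_{\sigma}$, send the four rays $\ell_{2}^{(\pm b_1)},\ell_{3}^{(\pm b_1)}$ off to infinity, use Lemma \ref{lemma angles at infty} to pin them to $\pm\pi/8$ and $\pm 7\pi/8$, and account for the remaining four directions by two origin-symmetric humps. However, two steps have genuine gaps. First, your justification that $\De^{(b_1)}$ does not enclose $\pm z_0$ --- ``$\De^{(b_1)}$ contains the stable sector around the positive real direction, so its interior is stable'' --- is a non sequitur: containing some stable points does not preclude an unstable island inside $\De^{(b_1)}$ bounded by a further component of the level set, and at this stage you do not yet know the asymptotic directions of $\partial\De^{(b_1)}$, so you cannot rule such a component out by counting directions. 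This is exactly where the paper works hardest: it supposes a sign-separating trajectory $\mathcal{L}$ inside $\De^{(b_1)}$, notes it must extend to infinity, and derives a contradiction from Teichm\"uller's lemma (if $\mathcal{L}$ enclosed $z_0$ its asymptotic opening would be $3\pi/4$, forcing an opening of $5\pi/4$ for $\De^{(b_1)}$, which by the symmetry $\eta_1(-z;\sigma)=\eta_1(z;\sigma)\pm 2\pi\ii$ would make $\partial\De^{(b_1)}$ and $\partial\De^{(-b_1)}$ intersect at a regular point). Your later maximum-principle remark only excludes closed loops, not unbounded separating arcs, so it does not close this gap.

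Second, your exclusion of the ``left/right'' pairing is carried out in the real-$\sigma$ picture: the dichotomy ``$|x_0|>b_1$ places $x_0$ in $\De^{(\pm b_1)}$, $|x_0|\le b_1$ places $x_0$ on $J_{\sigma}$'' presupposes $J_{\sigma}=[-b_1,b_1]\subset\R$ and $\De^{(\pm b_1)}\supset\{x\in\R:|x|>b_1\}$, which fails for general complex $\sigma\in\mathcal{O}^*_1$ (the endpoints $\pm b_1$ need not be real and $J_{\sigma}$ is a curved arc); a real crossing point can perfectly well lie in an unstable land, or on the hump itself. The robust argument, implicit in the paper's ``it is now clear,'' is either a separation argument --- the closed connected set $K=J_{\sigma}\cup\ell_2^{(b_1)}\cup\ell_3^{(b_1)}\cup\ell_2^{(-b_1)}\cup\ell_3^{(-b_1)}$ separates the directions $3\pi/8,\,5\pi/8$ from $-3\pi/8,\,-5\pi/8$, so a hump disjoint from $K$ must have both ends on the same side --- or Teichm\"uller's lemma applied to each hump, which forces its two asymptotic ends to be $\pi/4$ apart, a condition only the claimed pairing satisfies among the leftover directions. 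Your exclusion of the diagonal pairing via the fixed point of $z\mapsto -z$ at the origin is correct, and is a nice point not spelled out in the paper.
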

	\begin{proof}
		Since $ \Re[\eta_1(\pm z_0(\sigma);\sigma)] \neq 0$, all four rays  $\ell_2^{(b_1)}, \ell_3^{(b_1)}, \ell_2^{(-b_1)}$, and $\ell_3^{(-b_1)}$ must extend off to infinity according to Lemma \ref{Lemma extension to infinity of critical trajectories}. By a conformal mapping argument, one can easily confirm that in a neighborhood $\mathscr{O}$ of $b_1(\sigma)$, for all $z \in \mathscr{O} \cap \De^{(b_1)}$ we have $\Re[\eta_1(z;\sigma)]<0$. Since $\sigma \in \mathcal{O}^*_1$, some ray $\Ga_{\sigma}(b_1(\sigma), \infty)$ must start from $b_1(\sigma)$ within the subset $\mathscr{O} \cap \De^{(b_1)}$ and stay within $\De^{(b_1)}$ (intersection of $\Ga_{\sigma}(b_1(\sigma), \infty)$ with boundaries of $\De^{(b_1)}$ is not possible since on $\Ga_{\sigma}(b_1(\sigma), \infty)$ we have $\Re[\eta_1(z;\sigma)]<0$ while on the boundaries of $\De^{(b_1)}$ we have $\Re[\eta_1(z;\sigma)]=0$).
		
		Now, we show that the interior of $\De^{(b_1)}$ does not contain $\pm z_0(\sigma)$. It suffices to prove that the sign of  $\Re[\eta_1(z;\sigma)]$ does not change in the interior of $\De^{(b_1)}$, because if so, then for all $z$ in the interior of $\De^{(b_1)}$ we would have $\Re[\eta_1(z;\sigma)]<0$, while it is assumed that $\Re[\eta_1(\pm z_0(\sigma);\sigma)]>0$. Notice that due to continuity, the sign of  $\Re[\eta_1(z;\sigma)]$ could only change in the interior of $\De^{(b_1)}$ if there is a curve $\mathcal{L}$ separating the regions where $\Re[\eta_1(z;\sigma)]<0$ and $\Re[\eta_1(z;\sigma)]>0$  with the following properties: $\mathcal{L}$ is a solution of $\Re[\eta_1(z;\sigma)]=0$, lies within $\De^{(b_1)}$ and not intersecting its boundaries $\ell_2^{(b_1)}$ and $\ell_3^{(b_1)}$. Being a critical trajectory, the curve $\mathcal{L}$ must go off to infinity. In the region circumscribed by $\mathcal{L}$ and the boundaries of $\De^{(b_1)}$ we have $\Re[\eta_1(z;\sigma)]<0$ so it can not contain $\pm z_0(\sigma)$. The interior of $\mathcal{L}$ (where $\Re[\eta_1(z;\sigma)]>0$) can not contain $z_0(\sigma)$ either, since if it does, $\mathcal{L}$ has to approach to infinity along two directions $3\pi/4$ radians apart by Teichm\"uller's lemma, which then means that the boundaries of $\De^{(b_1)}$ must approach to infinity along two directions $5\pi/4$ radians apart. But this is a contradiction, since the symmetry relation \eqref{eta z and eta -z} would imply that there has to be intersections between  the boundaries of $\De^{(b_1)}$ and $\De^{(-b_1)}$, which is not possible as the only singular points for the quadratic differential \eqref{OneCut QD} are $\pm b_1(\sigma)$ and $\pm z_0(\sigma)$. This finishes the proof that the interior of $\De^{(b_1)}$ does not contain $\pm z_0(\sigma)$.

		Now it is clear that $\ell_2^{(b_1)}$ and $\ell_3^{(b_1)}$ must approach to infinity along the directions $-\pi/8$ and $\pi/8$ respectively, as any other choice either: a) does not allow $\De^{(b_1)}$ to encompass $(M(\sigma),\infty)$ for some $M(\sigma)>0$, or b) violates Lemma \ref{lemma angles at infty}. By the symmetry relation \eqref{eta z and eta -z} we immediately conclude that  $\ell_2^{(-b_1)}$, and $\ell_3^{(-b_1)}$ respectively approach to infinity along the directions $7\pi/8$, and $-7\pi/8$.
		
		These rays provide four solutions at infinity. Since there are eight solutions at infinty, the other four solutions must come from two curves $\mathcal{L}^{(1)}_{\sigma}$ and $\mathcal{L}^{(2)}_{\sigma}$  each pointing towards infinity in two directions $\pi/4$ radians apart. Each of these curves do approach to infinity along \textit{two} directions as they can not be incident with $\pm z_0(\sigma)$ or $\pm b_1(\sigma)$. The curves $\mathcal{L}^{(1)}_{\sigma}$ and $\mathcal{L}^{(2)}_{\sigma}$ must be symmetric with respect to the origin due to \eqref{eta z and eta -z}. We denote the one in the upper-half plane by $\mathcal{L}^{(1)}_{\sigma}$ and the one in the lower-half plane by $\mathcal{L}^{(2)}_{\sigma}$. From what we proved earlier about  the rays $\ell_2^{(b_1)}, \ell_3^{(b_1)}, \ell_2^{(-b_1)}$, and $\ell_3^{(-b_1)}$, it is now clear that the curve $\mathcal{L}^{(1)}_{\sigma}$ approaches to infinity along the two directions $3\pi/8$ and $5\pi/8$, the curve $\mathcal{L}^{(2)}_{\sigma}$ approaches to infinity along the two directions $-3\pi/8$ and $-5\pi/8$.
	\end{proof}
	
	The following lemmas can be proven using identical arguments and thus we only state the result (see Figure \ref{fig:stable barren 1 4}).
	
	\begin{lemma}\label{lemma structure 2}
		Suppose that $\sigma \in \mathcal{O}^*_1$, $ \Re[\eta_1(\pm z_0(\sigma);\sigma)]<0$, and $\Re[z_0(\sigma)]<0$. Then there exist two disjoint curves $\mathcal{L}^{(3)}_{\sigma}$ and $\mathcal{L}^{(4)}_{\sigma}$  as subsets of $\mathscr{J}^{(1)}_{\sigma}$, which have no intersections with $\pm b_1(\sigma)$,  $\pm z_0(\sigma)$,  $J_{\sigma}$,  $\ell_2^{(b_1)}, \ell_3^{(b_1)}, \ell_2^{(-b_1)}$, and $ \ell_3^{(-b_1)}$. Moreover, the curve $\mathcal{L}^{(3)}_{\sigma}$ approaches to infinity along the two directions $5\pi/8$ and $7\pi/8$, the curve $\mathcal{L}^{(4)}_{\sigma}$ approaches to infinity along the two directions $-3\pi/8$ and $-\pi/8$, and the rays $\ell_2^{(b_1)}, \ell_3^{(b_1)}, \ell_2^{(-b_1)}$, and $\ell_3^{(-b_1)}$ respectively approach to infinity along the directions $-5\pi/8$, $\pi/8$, $3\pi/8$, and $-7\pi/8$.
	\end{lemma}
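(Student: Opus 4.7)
The plan is to follow the argument of Lemma \ref{lemma humps} essentially verbatim, replacing the ``unstable'' sign hypothesis on $\pm z_0$ by the ``stable'' one and using the constraint $\Re[z_0(\sigma)] < 0$ to fix the orientation of the geodesic polygons $\Delta^{(\pm b_1)}$. First, since $\Re[\eta_1(\pm z_0;\sigma)] \ne 0$, Lemma \ref{Lemma extension to infinity of critical trajectories} guarantees that each of $\ell_2^{(\pm b_1)}, \ell_3^{(\pm b_1)}$ extends to infinity, and the same local conformal-map computation at $b_1(\sigma)$ used in Lemma \ref{lemma humps} shows that $\Re[\eta_1(z;\sigma)] < 0$ throughout $\Delta^{(b_1)}$, so the complementary contour $\Gamma_\sigma(b_1,\infty)$ required by Definition \ref{Def one cut sigma minus fake transition} must leave $b_1$ into $\Delta^{(b_1)}$ and remain in $\Delta^{(b_1)}$ all the way to infinity.

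The crucial new step, replacing the corresponding step of Lemma \ref{lemma humps} with its sign-reversed counterpart, is to establish that $\Delta^{(b_1)}$ \emph{does} enclose $-z_0(\sigma)$ (and by the symmetry \eqref{eta z and eta -z} that $\Delta^{(-b_1)}$ encloses $z_0(\sigma)$). Under the hypothesis $\Re[z_0] < 0$, the point $-z_0$ lies in the right half-plane, making it the natural candidate for enclosure by $\Delta^{(b_1)}$, which also extends to the right. I would argue by contradiction: if $-z_0 \notin \Delta^{(b_1)}$, then by Lemma \ref{lemma angles at infty} the rays $\ell_2^{(b_1)}, \ell_3^{(b_1)}$ would approach infinity in two directions $\pi/4$ apart, and together with the requirement that $\Gamma_\sigma(b_1,\infty)$ reach $+\infty$ these directions could only be $(-\pi/8, \pi/8)$. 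A sign-change analysis inside $\Delta^{(b_1)}$, dual to the one in Lemma \ref{lemma humps} and using the symmetry \eqref{eta z and eta -z} to exclude the existence of a separating critical trajectory that would place $-z_0$ on the wrong side, then contradicts the assumption $\Re[\eta_1(-z_0;\sigma)]<0$.

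Once the enclosures $-z_0 \in \Delta^{(b_1)}$ and $z_0 \in \Delta^{(-b_1)}$ are in hand, Lemma \ref{lemma angles at infty} pins the opening of $\Delta^{(b_1)}$ at infinity to $3\pi/4$; the requirement that $\Gamma_\sigma(b_1,\infty) \subset \Delta^{(b_1)}$ reach $+\infty$, together with the counterclockwise labeling of $\ell_1^{(b_1)}, \ell_2^{(b_1)}, \ell_3^{(b_1)}$ at $b_1$, singles out the pair $(-5\pi/8, \pi/8)$ for $(\ell_2^{(b_1)}, \ell_3^{(b_1)})$, and the symmetry \eqref{eta z and eta -z} yields the pair $(3\pi/8, -7\pi/8)$ for $(\ell_2^{(-b_1)}, \ell_3^{(-b_1)})$. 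Four of the eight admissible asymptotic directions $\pi/8 + k\pi/4$ at the order-ten pole at infinity are now accounted for; by Lemma \ref{Lemma one or two vetex polygons are ruled out} the remaining four must be realized by two disjoint nonsingular humps $\mathcal{L}^{(3)}, \mathcal{L}^{(4)}$ in $\mathscr{J}^{(1)}_{\sigma}$ that are disjoint from $\pm b_1, \pm z_0, J_\sigma$ and from $\ell_2^{(\pm b_1)}, \ell_3^{(\pm b_1)}$, and which are exchanged by $z \mapsto -z$ via \eqref{eta z and eta -z}; this produces the stated pairing $(5\pi/8, 7\pi/8)$ for the upper-half hump $\mathcal{L}^{(3)}$ and $(-3\pi/8, -\pi/8)$ for the lower-half hump $\mathcal{L}^{(4)}$. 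The main obstacle is precisely the sign-reversed step above: in Lemma \ref{lemma humps} the unstable hypothesis on $\pm z_0$ directly excluded these points from $\Delta^{(\pm b_1)}$, whereas here the stable hypothesis is a priori compatible with $-z_0$ lying either inside or outside $\Delta^{(b_1)}$, and it is the auxiliary half-plane assumption $\Re[z_0]<0$, combined with the global alternation of stable and unstable sectors at infinity, that forces the correct placement.
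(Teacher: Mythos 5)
The paper does not actually prove this lemma: it is introduced with the remark that it ``can be proven using identical arguments'' to Lemma \ref{lemma humps}, so your reconstruction has to be judged against that intended adaptation. Your skeleton is the right one, but the step you yourself flag as crucial does not go through as written. In Lemma \ref{lemma humps} the contradiction scheme is: show that $\Re[\eta_1(z;\sigma)]<0$ throughout $\Delta^{(b_1)}$, which is incompatible with $\pm z_0$ lying in $\Delta^{(b_1)}$ \emph{because there} $\Re[\eta_1(\pm z_0;\sigma)]>0$. When you reverse the signs, the same computation still gives $\Re[\eta_1]<0$ on $\Delta^{(b_1)}$, and this is perfectly \emph{compatible} with the hypothesis $\Re[\eta_1(-z_0;\sigma)]<0$ whether $-z_0$ lies inside $\Delta^{(b_1)}$ or not; so your sentence ``\,\dots then contradicts the assumption $\Re[\eta_1(-z_0;\sigma)]<0$'' establishes nothing. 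The configuration you must actually exclude is the one in which $\pm z_0$ sit in the \emph{other} stable components, namely the interiors of the two humps: in that case Lemma \ref{lemma angles at infty} would leave $\ell_2^{(b_1)},\ell_3^{(b_1)}$ at $\mp\pi/8$, every pointwise hypothesis of the lemma would still appear to be met, and the conclusion would be false.

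That alternative can be ruled out, but it needs its own argument. If a hump (a singular geodesic polygon with single vertex at $\infty$) encloses $z_0$, Teichm\"uller's lemma \eqref{Teich Lemma} forces its interior angle at $\infty$ to be $3\pi/4$; since each of the eight directions $\pi/8+k\pi/4$ carries exactly one unbounded branch of $\mathscr{J}^{(1)}_{\sigma}$, the only apertures of size $3\pi/4$ left over after $\Delta^{(\pm b_1)}$ occupy $(-\pi/8,\pi/8)$ and $(7\pi/8,-7\pi/8)$ are the sectors bounded by $\{-3\pi/8,3\pi/8\}$ or by $\{5\pi/8,-5\pi/8\}$, and each of these swallows the asymptotic directions of $\ell_{2,3}^{(\pm b_1)}$; since no branch of $\mathscr{J}^{(1)}_{\sigma}$ can cross the hump at a regular point, the hump would then have to enclose $\pm b_1$ and $J_{\sigma}$ as well, and \eqref{Teich Lemma} (the extra enclosed orders) yields the contradiction. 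Alternatively---and this is what the paper in effect does when it really needs this information---one argues by continuous deformation through the critical line VI: Theorem \ref{Lemma Who Hits Who} places $-z_0(\sigma_*)$ on $\ell_2^{(b_1)}$ for $\sigma_*\in$ VI, and the bookkeeping in the proof of Theorem \ref{THMK2K3} shows that upon entering $\boldsymbol{K}_2$ the point $-z_0$ lands in $\Delta^{(b_1)}$ rather than in a hump. Either route also pins down, more convincingly than ``natural candidate,'' why it is $-z_0$ and not $z_0$ that $\Delta^{(b_1)}$ captures when $\Re[z_0(\sigma)]<0$. With that step supplied, the remainder of your argument---the $3\pi/4$ opening from Lemma \ref{lemma angles at infty}, the assignment of the four ray directions compatible with encompassing $(M(\sigma),\infty)$, and the two leftover humps exchanged by \eqref{eta z and eta -z}---is correct.
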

	
	\begin{lemma}\label{lemma structure 3}
		Suppose that $\sigma \in \mathcal{O}^*_1$, $ \Re[\eta_1(\pm z_0(\sigma);\sigma)]<0$, and $\Re[z_0(\sigma)]>0$. Then there exist two disjoint curves $\mathcal{L}^{(5)}_{\sigma}$ and $\mathcal{L}^{(6)}_{\sigma}$  as subsets of $\mathscr{J}^{(1)}_{\sigma}$, which have no intersections with $\pm b_1(\sigma)$,  $\pm z_0(\sigma)$,  $J_{\sigma}$,  $\ell_2^{(b_1)}, \ell_3^{(b_1)}, \ell_2^{(-b_1)}$, and $ \ell_3^{(-b_1)}$. Moreover, the curve $\mathcal{L}^{(5)}_{\sigma}$ approaches to infinity along the two directions $\pi/8$ and $3\pi/8$, the curve $\mathcal{L}^{(6)}_{\sigma}$ approaches to infinity along the two directions $-5\pi/8$ and $-7\pi/8$, and the rays $\ell_2^{(b_1)}, \ell_3^{(b_1)}, \ell_2^{(-b_1)}$, and $\ell_3^{(-b_1)}$ respectively approach to infinity along the directions $-\pi/8$, $5\pi/8$, $7\pi/8$, and $-3\pi/8$.
	\end{lemma}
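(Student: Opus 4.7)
The plan is to mirror the proof of Lemma \ref{lemma humps}, replacing its case analysis by the one appropriate to $\Re[\eta_1(\pm z_0(\sigma);\sigma)]<0$ and $\Re[z_0(\sigma)]>0$. Since $\Re[\eta_1(\pm z_0(\sigma);\sigma)] \neq 0$, Lemma \ref{Lemma extension to infinity of critical trajectories} immediately guarantees that all four rays $\ell_2^{(\pm b_1)}$ and $\ell_3^{(\pm b_1)}$ must extend off to infinity. So the work is entirely in pinning down the asymptotic directions and showing that the remaining four solutions at infinity must be furnished by exactly two more curves $\mathcal{L}^{(5)}_\sigma$ and $\mathcal{L}^{(6)}_\sigma$ with the stated properties.

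First I would locate $\pm z_0(\sigma)$ relative to the geodesic polygons $\Delta^{(b_1)}$ and $\Delta^{(-b_1)}$. The conformal-map argument from Lemma \ref{lemma humps} (near $b_1(\sigma)$, the interior of $\Delta^{(b_1)}$ has $\Re[\eta_1(z;\sigma)]<0$) still shows that some ray $\Gamma_\sigma(b_1,\infty)$ required by $\sigma \in \mathcal{O}^*_1$ must start inside $\Delta^{(b_1)}$ and stay there. Now, in contrast to Lemma \ref{lemma humps}, the hypothesis $\Re[\eta_1(z_0(\sigma);\sigma)]<0$ is \emph{consistent} with $z_0(\sigma) \in \Delta^{(b_1)}$. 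I would then run the same separator-curve argument as in Lemma \ref{lemma humps} in the other direction: if $z_0(\sigma)$ did not lie in $\Delta^{(b_1)}$, the sign of $\Re[\eta_1(z;\sigma)]$ would be unchanged throughout the interior of $\Delta^{(b_1)}$, and any separating level curve inside $\Delta^{(b_1)}$ would have to escape to infinity in a way ruled out by Teichm\"uller's lemma combined with the symmetry \eqref{eta z and eta -z}. The assumption $\Re[z_0(\sigma)]>0$ then selects the right half-plane polygon $\Delta^{(b_1)}$ as the one containing $z_0(\sigma)$ (and by \eqref{eta z and eta -z}, $\Delta^{(-b_1)}$ contains $-z_0(\sigma)$).

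With $z_0 \in \Delta^{(b_1)}$ established, Lemma \ref{lemma angles at infty} forces $\ell_2^{(b_1)}$ and $\ell_3^{(b_1)}$ to approach infinity along two admissible directions $3\pi/4$ radians apart. Among the eight asymptotic directions $\{\pi/8 + k\pi/4\}_{k=0}^{7}$, the only pair compatible with $\Delta^{(b_1)}$ enclosing $(M(\sigma),\infty)$ and containing $z_0(\sigma)$ (with $\Re[z_0(\sigma)] > 0$) is $\ell_2^{(b_1)}\to e^{-\ii\pi/8}\infty$ and $\ell_3^{(b_1)}\to e^{\ii 5\pi/8}\infty$; the symmetry \eqref{eta z and eta -z} then yields $\ell_2^{(-b_1)}\to e^{\ii 7\pi/8}\infty$ and $\ell_3^{(-b_1)}\to e^{-\ii 3\pi/8}\infty$.

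These four rays use up four of the eight asymptotic directions. The remaining four directions, $\pi/8,\,3\pi/8,\,-5\pi/8,\,-7\pi/8$, must be supplied by further solutions of \eqref{level set}. As in Lemma \ref{lemma humps}, each such solution cannot be incident with $\pm b_1(\sigma)$ or $\pm z_0(\sigma)$ (those critical points already have their full complement of trajectories accounted for), cannot lie on $J_\sigma$ or on the rays $\ell_j^{(\pm b_1)}$, and cannot cross them without creating a singular polygon forbidden by Lemma \ref{Lemma one or two vetex polygons are ruled out}. Hence each goes to infinity along \emph{two} directions, giving two such curves. The symmetry \eqref{eta z and eta -z} pairs up the asymptotic ends $\{\pi/8, 3\pi/8\}$ with $\{-7\pi/8, -5\pi/8\}$, which we label $\mathcal{L}^{(5)}_\sigma$ (upper half-plane) and $\mathcal{L}^{(6)}_\sigma$ (lower half-plane), completing the claim. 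The main technical obstacle is the very first step, namely verifying that $z_0(\sigma) \in \Delta^{(b_1)}$ rather than on its boundary or exterior; this requires re-running the level-curve/Teichm\"uller argument from Lemma \ref{lemma humps} with the roles of the stable and unstable lands interchanged, and carefully using $\Re[z_0(\sigma)]>0$ to select the correct polygon.
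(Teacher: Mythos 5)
Your skeleton matches what the paper actually does: the paper gives no separate proof of this lemma, asserting only that it follows ``using identical arguments'' to Lemma \ref{lemma humps}, and your outline --- rays to infinity via Lemma \ref{Lemma extension to infinity of critical trajectories}, locating $\pm z_0(\sigma)$ relative to $\De^{(\pm b_1)}$, invoking Lemma \ref{lemma angles at infty}, and then counting the eight asymptotic directions --- is precisely that adaptation, with the correct final list of directions.

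However, the step you yourself flag as the main obstacle is argued by the wrong mechanism and, as written, does not close. Showing that the sign of $\Re[\eta_1(z;\sigma)]$ is constant on the interior of $\De^{(b_1)}$ under the assumption $z_0(\sigma)\notin\De^{(b_1)}$ yields no contradiction with $\Re[\eta_1(z_0(\sigma);\sigma)]<0$: unlike in Lemma \ref{lemma humps}, here $z_0(\sigma)$ is by assumption not in $\De^{(b_1)}$, so the sign of $\Re[\eta_1]$ on that polygon says nothing about $\Re[\eta_1(z_0(\sigma);\sigma)]$, and a priori $z_0(\sigma)$ could sit in some other stable face. The contradiction must instead come from enumerating the faces of $\mathscr{J}^{(1)}_{\sigma}$ in the ``no enclosure'' configuration: if neither of $\pm z_0(\sigma)$ lies in $\De^{(b_1)}\cup\De^{(-b_1)}$, then exactly as in Lemma \ref{lemma humps} the four rays exit at $\pm\pi/8$, $\pm 7\pi/8$ and the remaining four directions are carried by two humps, each opening $\pi/4$ at infinity; since every face is unbounded (no finite poles, maximum principle), the only stable faces are $\De^{(\pm b_1)}$ and the two hump interiors, so $z_0(\sigma)$ would have to lie inside a hump --- but Teichm\"uller's lemma \eqref{Teich Lemma} forces a one-vertex geodesic polygon enclosing a double zero to open $3\pi/4$ at infinity, not $\pi/4$. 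A second gap is the selection of the pair $(-\pi/8,\,5\pi/8)$: the pairs $(-5\pi/8,\,\pi/8)$ and $(-3\pi/8,\,3\pi/8)$ also bracket the positive real axis and can contain points of positive real part, so ``compatible with enclosing $(M(\sigma),\infty)$ and containing $z_0$ with $\Re z_0>0$'' does not single out the claimed directions. You need the branch convention of Remark \ref{Remark branches} (which places $z_0(\sigma)$ in the closed upper half-plane, so $\Re[z_0(\sigma)]>0$ puts it in the first quadrant) to separate this case from Lemma \ref{lemma structure 2}, and you must exclude $(-3\pi/8,\,3\pi/8)$ by noting that the face adjacent to $b_1$ inside $\De^{(b_1)}$ must be the stable component encompassing $(M(\sigma),\infty)$, which fails if a hump occupies the sector about the positive real axis.
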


	\begin{lemma}\label{humps connect at criticality}
		When $z_0(\sigma) \in \mathscr{J}^{(1)}_{\sigma}$, the components $\mathcal{L}^{(j)}_{\sigma}$ and $\mathcal{L}^{(j+1)}_{\sigma}$, $j=1,3,5$ (respectively for the curves defined in Lemmas \ref{lemma humps}, \ref{lemma structure 2}, and \ref{lemma structure 3}),  are connected to the rest of the critical graph at $\pm z_0$.
	\end{lemma}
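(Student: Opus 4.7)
The plan is to combine the continuous deformation of the critical graph (Lemma \ref{continuous deformations of one cut critical graph}) with the local trajectory structure of $Q_1(z;\sigma)\dd z^2$ at $\pm z_0$ and a global count of trajectories escaping to infinity. Fix $\sigma_*$ with $z_0(\sigma_*) \in \mathscr{J}^{(1)}_{\sigma_*}$ and approach it along a sequence $\sigma_n \to \sigma_*$ with $\pm z_0(\sigma_n) \notin \mathscr{J}^{(1)}_{\sigma_n}$, so that exactly one of Lemmas \ref{lemma humps}, \ref{lemma structure 2}, \ref{lemma structure 3} applies along the sequence. By Lemma \ref{continuous deformations of one cut critical graph} the critical graphs $\mathscr{J}^{(1)}_{\sigma_n}$ converge set-theoretically to $\mathscr{J}^{(1)}_{\sigma_*}$, and the humps $\mathcal{L}^{(j)}_{\sigma_n}$, $\mathcal{L}^{(j+1)}_{\sigma_n}$ possess limits $\mathcal{L}^{(j)}_{\sigma_*}$, $\mathcal{L}^{(j+1)}_{\sigma_*}$ inside $\mathscr{J}^{(1)}_{\sigma_*}$.

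The local input is that $\eta_1(z;\sigma) \sim (z\mp z_0(\sigma))^2$ near $\pm z_0$, so $\pm z_0(\sigma_*)$ are zeros of order two of $Q_1(z;\sigma_*)\dd z^2$, whence exactly four critical trajectory arms emanate from each of $\pm z_0(\sigma_*)$ at mutual angles $\pi/2$. By the symmetry \eqref{eta z and eta -z}, the local picture at $-z_0(\sigma_*)$ is the reflection through the origin of that at $z_0(\sigma_*)$. At most two of the four arms at $z_0(\sigma_*)$ can be the smooth local continuation of the limiting hump through $z_0$, and the same at $-z_0(\sigma_*)$; so at each of $\pm z_0(\sigma_*)$ at least two extra arms of the critical graph must be attached.

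The global input that closes the argument is a counting of trajectories at infinity. There are exactly eight asymptotic directions $\pi/8+k\pi/4$, $k=0,\ldots,7$, along which critical trajectories can approach $\infty$. In each of the subcritical configurations of Lemmas \ref{lemma humps}--\ref{lemma structure 3} all eight are already taken up: four by $\ell_{2,3}^{(\pm b_1)}$ and four by the two ends of each of $\mathcal{L}^{(j)}_{\sigma_n}$ and $\mathcal{L}^{(j+1)}_{\sigma_n}$. By the continuous deformation of the critical graph and the fact that the endpoint behaviour at infinity of $\ell_{2,3}^{(\pm b_1)}$ is preserved (the $\pm b_1$ arms are analytic in $\sigma$ away from $\pm z_0$), these eight rays persist at $\sigma_*$, so no additional arm issuing from $\pm z_0(\sigma_*)$ can escape to infinity. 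Nor can the extra arms form a short loop at $z_0$ (the four local arms leave in four distinct directions), nor terminate at $\pm b_1$ (the $\ell_{2,3}^{(\pm b_1)}$ are consumed going to infinity and $J_{\sigma_*}$ is a single Jordan arc disjoint from $\pm z_0$). Invoking Teichm\"uller's relation \eqref{Teich Lemma} for any small candidate geodesic polygon with vertex at $z_0$ in the spirit of Lemma \ref{Lemma one or two vetex polygons are ruled out} rules out small closed components as well.

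The only remaining possibility, consistent with the symmetry \eqref{eta z and eta -z}, is that the extra arms at $z_0(\sigma_*)$ are matched with the extra arms at $-z_0(\sigma_*)$ via pieces of the limiting humps; this identifies $\mathcal{L}^{(j)}_{\sigma_*}$ and $\mathcal{L}^{(j+1)}_{\sigma_*}$ as attached to the rest of the critical graph precisely at $\pm z_0$. The main obstacle will be making the set-theoretic limit argument sufficiently rigorous, in particular showing that no portion of either hump "runs off to infinity" or pinches off during the limit and that the $\pm z_0$ attachments inherited in the limit are the ones described; this reduces to a careful local analysis near $\pm z_0(\sigma_*)$ combined with the global trajectory count above, together with uniform-in-$\sigma$ estimates on $\eta_1$ away from a small neighbourhood of $\pm z_0$.
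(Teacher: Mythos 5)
Your argument is correct and its decisive step --- that detaching the humps from $\pm z_0$ would force extra critical arms to escape to infinity, exceeding the eight available asymptotic directions $\pi/8+k\pi/4$ --- is exactly the counting argument the paper uses (its proof is essentially that one sentence). The additional local analysis at the order-two zeros $\pm z_0$ and the deformation-limit framing are consistent elaborations rather than a different route.
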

	\begin{proof}
		This is the only possibility, as if $\mathcal{L}^{(j)}_{\sigma}$ and $\mathcal{L}^{(j+1)}_{\sigma}$ are not connected to the rest of the critical graph at $\pm z_0$, one would have too many (more than $8$) solutions of the equation $\Re[\eta_1(z;\sigma)]=0$ at $\infty$.
	\end{proof}

	\begin{lemma}\label{one cut is non-empty!}
		Any $\sigma>-2$ belongs to $\mathcal{O}^*_1$ and $\pm z_0(\sigma)$ belong to unstable lands.
	\end{lemma}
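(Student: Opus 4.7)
The plan is to verify each of the three requirements of Definition \ref{Def one cut sigma minus fake transition} directly, and strengthen the second by showing that $\pm z_0$ lie in the open region $\{z:\Re\eta_1(z;\sigma)>0\}$. By \eqref{em37} and Remark \ref{Remark branches}, for real $\sigma>-2$ we have $b_1(\sigma)>0$ and $z_0(\sigma)=\ii y_0$ with $y_0>0$, so the four critical points $\pm b_1,\pm z_0$ are distinct. I would take $J_\sigma:=[-b_1,b_1]$ as the candidate support and $(b_1,\infty)\subset\R$ as the candidate complementary arc.

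For the first requirement: on $(-b_1,b_1)$ we have $s^2-z_0^2=s^2+y_0^2>0$ and $\sqrt{s^2-b_1^2}_{\pm}=\pm\ii\sqrt{b_1^2-s^2}$, so the integrand in \eqref{eta def} is purely imaginary, giving $\Re\eta_{1,\pm}\equiv 0$ on $J_\sigma$ and hence $[-b_1,b_1]\subset\mathscr{J}^{(1)}_\sigma$. Uniqueness of this arc among components of $\mathscr{J}^{(1)}_\sigma$ joining $-b_1$ to $b_1$ follows from Lemma \ref{Lemma one or two vetex polygons are ruled out}, since a second such arc would together with $[-b_1,b_1]$ bound a singular $Q_1$-polygon with only two vertices. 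For the third requirement: on $(b_1,\infty)$ the integrand $(s^2+y_0^2)\sqrt{s^2-b_1^2}$ is strictly positive real, so $\eta_1(x;\sigma)<0$ for all $x>b_1$, and any $M(\sigma)>b_1$ suffices.

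For the main step, i.e.\ the second requirement together with the strengthened claim, I would evaluate \eqref{explicit integration one cut} at $z=\ii y_0$. Using the identity $b_1^2+2z_0^2=-2\sigma$ from \eqref{em35} and the branch $\sqrt{z^2-b_1^2}|_{z=\ii y_0}=\ii\sqrt{y_0^2+b_1^2}$ dictated by the normalization $\sqrt{z^2-b_1^2}\sim z$ at infinity, a short computation gives
\[ \Re\eta_1(\ii y_0;\sigma)=\frac{y_0\sigma\sqrt{y_0^2+b_1^2}}{4}+2\log\frac{y_0+\sqrt{y_0^2+b_1^2}}{b_1}, \]
where the second term comes from analytically continuing $\log\zeta$, $\zeta=(z+\sqrt{z^2-b_1^2})/b_1$, from $\zeta=1$ at $z=b_1$ to $\zeta$ on the positive imaginary axis at $z=\ii y_0$. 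Parametrising by $u:=y_0/b_1\geq 0$ and using $b_1^4=16/(1+4u^2)$ from \eqref{em35}, this reduces to
\[ \phi(u):=\frac{2u(2u^2-1)\sqrt{u^2+1}}{1+4u^2}+2\log\bigl(u+\sqrt{u^2+1}\bigr), \qquad \phi(0)=0. \]
A direct computation then yields $\phi'(u)=32u^2(u^2+1)^{3/2}/(1+4u^2)^2$, so $\phi$ is strictly increasing on $(0,\infty)$ and hence $\phi(u)>0$ for $u>0$; this both rules out $\pm z_0\in\mathscr{J}^{(1)}_\sigma$ and places them in the unstable region.

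The main obstacle, if any, is the bookkeeping in the simplification leading to the closed form of $\phi'(u)$; the remaining ingredients are elementary branch-of-square-root calculations on the real and imaginary axes, together with one application of Lemma \ref{Lemma one or two vetex polygons are ruled out} to secure uniqueness of the Jordan arc.
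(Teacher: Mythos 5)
Your proposal is correct and follows essentially the same route as the paper's proof: verify the three requirements of Definition \ref{Def one cut sigma minus fake transition} directly, with the real interval $[-b_1,b_1]$ as the support, $(b_1,\infty)$ as the complementary arc, and the key point being $\Re[\eta_1(\pm z_0(\sigma);\sigma)]>0$ for $\sigma>-2$. The paper merely asserts this last positivity ``using the explicit formula \eqref{F 1 to 3}'', whereas you actually carry it out; I checked your reduction to $\phi(u)$ via $u=y_0/b_1$, $b_1^4=16/(1+4u^2)$, $\sigma=b_1^2(2u^2-1)/2$, and the closed form $\phi'(u)=32u^2(u^2+1)^{3/2}/(1+4u^2)^2$, and they are all correct (the only minor remark is that your uniqueness-of-the-arc argument via Lemma \ref{Lemma one or two vetex polygons are ruled out} implicitly uses that $\pm z_0\notin\mathscr{J}^{(1)}_\sigma$, so it logically belongs after, not before, the positivity step).
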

	
	\begin{proof}
		For $\sigma>-2$, we know that $b_1>0$ and $z_0=\ii y_0$, with $y_0>0$.  The local structure of the critical trajectories in a neighborhood of the critical points can be easily found by finding a ray on which $Q_1(z;\sigma) \dd z^2<0$. Locally, the other critical trajectories will be then determined based on how many critical directions are incident with the critical point. It is clear that the real interval $(-b_1,b_1)$ must be a short critical trajectory, because it is incident with $\pm b_1$,  $Q_1(z;\sigma)<0$ for all $z \in (-b_1,b_1)$, and $\dd z^2>0$ for all infinitesimal real line segments $\dd z$. Using the explicit formula \eqref{F 1 to 3} one can show that $\Re[\eta_1( z_0(\sigma);\sigma)] > 0$ for all $\sigma > -2$. Using \eqref{eta z and eta -z}, we immediately have $\Re\left[ \eta_1\left(-z_0(\sigma);\sigma\right) \right] >0$ for all $\sigma > -2$, as well. So far we have shown that all $\sigma > -2$ satisfy the first two requirements of Definition \ref{Def one cut sigma minus fake transition}. Now, we prove that the third requirement is met as well. Notice that, for fixed $\sigma>-2$, the function $\eta_1(x;\sigma)$ is real and negative for all $x>b_1(\sigma)$. This means that the complementary arc $\Ga_{\sigma}(b_1(\sigma),\infty)$ in the third requirements of Definition \ref{Def one cut sigma minus fake transition}, can be chosen as the real interval $(b_1(\sigma);\infty)$ for $\sigma>-2$.
		

	\end{proof}

	
	\begin{figure}[h]
		\centering
		\begin{subfigure}{0.31\textwidth}
			\centering
			\includegraphics[width=\textwidth]{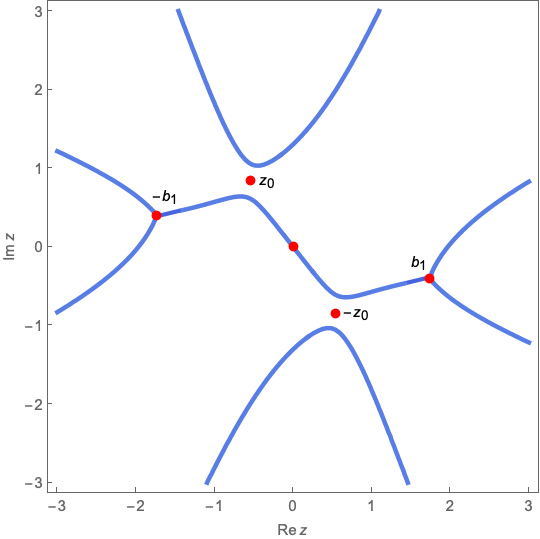}
			\caption{The critical graph $\mathscr{J}^{(1)}_{\sigma}$ when there is a connection from  $-b_1$ to $b_1$.}
			\label{fig:Connection and humps}
		\end{subfigure}%
		\hfill
		\begin{subfigure}{.31\textwidth}
			\centering
			\includegraphics[width=\textwidth]{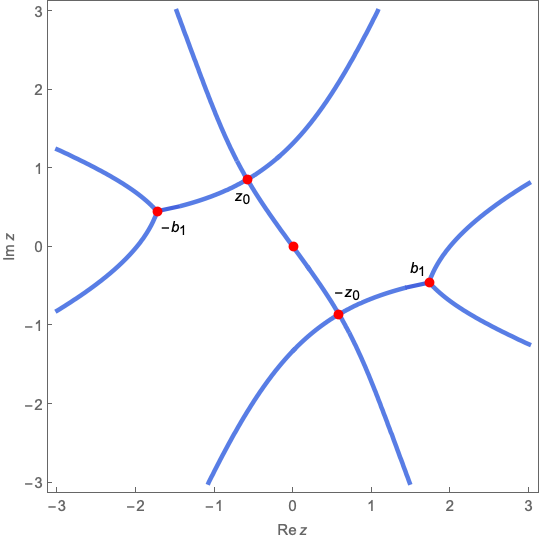}
			\caption{The critical graph $\mathscr{J}^{(1)}_{\sigma}$ at a critical value $\sigma_*$ (See Lemma \ref{lemma connectivity}).}
			\label{fig:Connection about to be lost}    
		\end{subfigure}
		\hfill
		\begin{subfigure}{.31\textwidth}
			\centering
			\includegraphics[width=\textwidth]{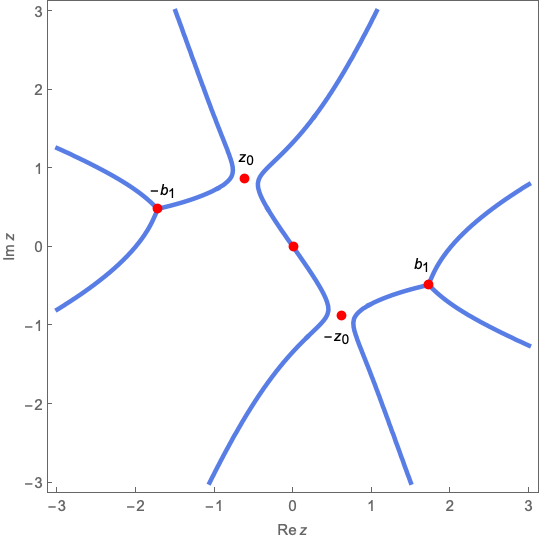}
			\caption{The critical graph $\mathscr{J}^{(1)}_{\sigma}$ when there is no connection from  $-b_1$ to $b_1$.}
			\label{fig:No Connection}    
		\end{subfigure}%
		\caption{Schematic of the continuous deformation of the critical graph $\mathscr{J}^{(1)}_{\sigma}$ ( the collection of all points $z$ satisfying $
			\Re \left[ \eta_1(z;\sigma) \right]=0$ ) from a $\sigma \in \mathcal{O}^*_1$ where there is a connection from  $-b_1$ to $b_1$, to a $\sigma \notin \mathcal{O}^*_1$ where $-z_0$ and $z_0$ both lie on $\mathscr{J}^{(1)}_{\sigma}$, and finally to a $\sigma \notin \mathcal{O}^*_1$ where there is no connection from  $-b_1$ to $b_1$.}
		\label{fig:No Connection transition}
	\end{figure}
	

	

	\begin{lemma}\label{lemma delta}
		Let $\sigma_0 \in \mathcal{O}^*_1$ and not on the branch cuts of $\Re[\eta_1(z_0(\sigma);\sigma)]$. Then there exists $\de > 0$, such that for all $\sigma$ in the $\de$-neighborhood $\{ \sigma : |\sigma - \sigma_0|<\de\}$ of $\sigma_0$, the points $\pm z_0(\sigma)$ do not lie on $\mathscr{J}^{(1)}_{\sigma}$.
		\begin{proof}
			Since $\sigma_0 \in \mathcal{O}^*_1$ we have $\Re[\eta_1(z_0(\sigma_0);\sigma_0)] \neq 0$, so without loss of generality assume that $\Re[\eta_1(z_0(\sigma_0);\sigma_0)]>0$. Since the function $\Re[\eta_1(z_0(\sigma);\sigma)]$ is continuous at $\sigma_0$, there exists $\de > 0$, such that the sign of $\Re[\eta_1(z_0(\sigma);\sigma)]$ is the same as sign of $\Re[\eta_1(z_0(\sigma_0);\sigma_0)]$ for all $\sigma$ in the $\de$-neighborhood $\{ \sigma : |\sigma - \sigma_0|<\de\}$ of $\sigma_0$. 
		\end{proof}
	\end{lemma}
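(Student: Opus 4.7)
The plan is to reduce the claim to a scalar continuity argument. By construction, $z_0(\sigma) \in \mathscr{J}^{(1)}_\sigma$ if and only if $\Re[\eta_1(z_0(\sigma);\sigma)] = 0$, and the symmetry identity $\eta_1(-z;\sigma) = \eta_1(z;\sigma) \pm 2\pi \ii$ (used in the proof of Lemma \ref{Lemma symmetry of J}) shows that $-z_0(\sigma) \in \mathscr{J}^{(1)}_\sigma$ precisely when $z_0(\sigma)$ does. So it suffices to produce a neighborhood of $\sigma_0$ on which the single real-valued function $\sigma \mapsto \Re[\eta_1(z_0(\sigma);\sigma)]$ is nonzero.

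The hypothesis $\sigma_0 \in \mathcal{O}^*_1$, combined with the second clause of Definition \ref{Def one cut sigma minus fake transition}, says exactly that $\pm z_0(\sigma_0) \notin \mathscr{J}^{(1)}_{\sigma_0}$, hence $\Re[\eta_1(z_0(\sigma_0);\sigma_0)] \neq 0$; without loss of generality I assume this value is strictly positive. What remains is continuity of $\sigma \mapsto \Re[\eta_1(z_0(\sigma);\sigma)]$ at $\sigma_0$. Using the explicit representation \eqref{eta def1}, $\eta_1(z_0(\sigma);\sigma)$ is obtained as an integral of a polynomial in $s$ and $\sigma$ from $b_1(\sigma)$ to $z_0(\sigma)$, with a branch of $\sqrt{s^2 - b_1(\sigma)^2}$ chosen off $\Ga_\sigma(-\infty, b_1]$. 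Because $\sigma_0$ is assumed to lie off the branch cuts $L$ and $L_\pm$ fixed in Remark \ref{Remark branches}, both endpoint functions $b_1(\sigma)$ and $z_0(\sigma)$ vary analytically on a neighborhood of $\sigma_0$, and a continuous choice of integration contour yields continuous (in fact analytic) dependence of $\eta_1(z_0(\sigma);\sigma)$ on $\sigma$.

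Once continuity is in hand, the standard sign-preservation argument produces $\delta > 0$ such that $\Re[\eta_1(z_0(\sigma);\sigma)] > 0$ for all $\sigma$ with $|\sigma - \sigma_0| < \delta$, which in particular is nonzero, and the lemma is established. There is no serious obstacle here; the lemma is essentially a continuity observation once the correct reformulation in terms of a level set of $\Re[\eta_1(z_0(\sigma);\sigma)]$ is identified. The only subtle point is the role of the branch cuts: across them $z_0(\sigma)$ jumps, so $\Re[\eta_1(z_0(\sigma);\sigma)]$ need not be continuous, which is exactly the reason the hypothesis forbids $\sigma_0$ from lying on those cuts.
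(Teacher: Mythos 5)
Your proposal is correct and follows essentially the same route as the paper: reduce the claim to the nonvanishing of the scalar function $\sigma \mapsto \Re[\eta_1(z_0(\sigma);\sigma)]$ near $\sigma_0$ and conclude by continuity and sign preservation, with the branch-cut hypothesis guaranteeing that continuity. The extra remarks you add (the symmetry handling of $-z_0(\sigma)$ and the justification of continuity via analyticity of the endpoints) only flesh out steps the paper leaves implicit.
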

	
	\begin{lemma}\label{lemma connectivity}
		Let $\sigma_0$ and $\de$ have the same meaning as in Lemma \ref{lemma delta}. For any $\hat{\sigma}$ in the $\de$-neighborhood of $\sigma_0$, there is still a connection from $-b_1(\hat{\sigma})$ to $b_1(\hat{\sigma})$ and therefore there still exist two disjoint curves $\mathcal{L}^{(1)}_{\hat{\sigma}}$ and $\mathcal{L}^{(2)}_{\hat{\sigma}}$  as subsets of $\mathscr{J}^{(1)}_{\hat{\sigma}}$ with the same description as given in Lemma \ref{lemma humps}, Lemma \ref{lemma structure 2}, or Lemma \ref{lemma structure 3}.
	\end{lemma}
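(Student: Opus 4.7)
The plan is to combine Lemma~\ref{continuous deformations of one cut critical graph} (continuous deformation of the critical graph with $\sigma$) with Lemma~\ref{lemma delta} (which keeps $\pm z_0(\sigma)$ off $\mathscr{J}^{(1)}_\sigma$), and then rule out any topological change in the critical graph by a direction-count at infinity.

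First, I would shrink $\delta$ (if needed) so that on the $\delta$-neighborhood of $\sigma_0$ the branch cuts of $b_1$, $z_0$, and of $\Re[\eta_1(z_0(\sigma);\sigma)]$ are avoided, the sign of $\Re[\eta_1(\pm z_0(\sigma);\sigma)]$ is constant, and (when $\Re z_0(\sigma_0)\neq 0$) the sign of $\Re z_0(\sigma)$ is constant; shrinking $\delta$ further, the four singular points $\pm b_1(\sigma), \pm z_0(\sigma)$ remain mutually distinct, since they only collide at $\sigma=-2$. Under these hypotheses the local structure of $Q_1(z;\sigma)\dd z^2$ at each singular point is unchanged for every $\hat\sigma$ in the neighborhood: three critical rays meeting at angles $2\pi/3$ at each $\pm b_1(\hat\sigma)$, and four critical rays meeting at angles $\pi/2$ at each $\pm z_0(\hat\sigma)$. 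Moreover, by Lemma~\ref{Lemma extension to infinity of critical trajectories}, the four rays $\ell_2^{(\pm b_1)}, \ell_3^{(\pm b_1)}$ extend off to infinity for every such $\hat\sigma$.

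Next, I would follow the connected component $\mathcal{C}(\sigma)$ of $\mathscr{J}^{(1)}_\sigma$ that coincides with $J_{\sigma_0}$ at $\sigma=\sigma_0$. By Lemma~\ref{continuous deformations of one cut critical graph}, $\mathcal{C}(\sigma)$ varies continuously in $\sigma$, and in particular its endpoints move continuously. The endpoints of $J_{\sigma_0}$ are $\pm b_1(\sigma_0)$; the only critical points of $Q_1(z;\hat\sigma)\dd z^2$ available as endpoints are $\pm b_1(\hat\sigma)$ and $\pm z_0(\hat\sigma)$; and Lemma~\ref{lemma delta} excludes $\pm z_0(\hat\sigma)$ from $\mathscr{J}^{(1)}_{\hat\sigma}$. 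Therefore the endpoints of $\mathcal{C}(\hat\sigma)$ must remain at $\pm b_1(\hat\sigma)$, and a short critical trajectory $J_{\hat\sigma}\subset\mathscr{J}^{(1)}_{\hat\sigma}$ connecting $-b_1(\hat\sigma)$ to $b_1(\hat\sigma)$ persists.

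Finally, to recover the humps I would count the asymptotic directions at $\infty$: since $Q_1\sim z^6$ there, $\mathscr{J}^{(1)}_{\hat\sigma}$ has exactly eight asymptotic directions. Four are occupied by $\ell_2^{(\pm b_1)}, \ell_3^{(\pm b_1)}$, in the configuration dictated by the constant sign pattern of $\Re[\eta_1(\pm z_0(\hat\sigma);\hat\sigma)]$ and $\Re z_0(\hat\sigma)$ as enumerated in Lemma~\ref{lemma humps}, Lemma~\ref{lemma structure 2}, or Lemma~\ref{lemma structure 3}. The remaining four directions must be carried by two curves symmetric under $z\mapsto -z$ via \eqref{eta z and eta -z}, disjoint from the rest of $\mathscr{J}^{(1)}_{\hat\sigma}$ because $\pm z_0(\hat\sigma)\notin\mathscr{J}^{(1)}_{\hat\sigma}$; these are $\mathcal{L}^{(1)}_{\hat\sigma}$ and $\mathcal{L}^{(2)}_{\hat\sigma}$, with the asymptotic directions asserted in the appropriate lemma. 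The main obstacle is the persistence argument in the third paragraph: ensuring that $\mathcal{C}(\sigma)$ cannot bifurcate or shed an endpoint under small perturbation. This is the force of Lemma~\ref{continuous deformations of one cut critical graph} (Theorem~3 of \cite{BBGMT2}), and the delicate point is invoking it to simultaneously exclude endpoint-jumping and topological bifurcation, which is possible here precisely because no trajectory of the critical graph passes through $\pm z_0(\hat\sigma)$.
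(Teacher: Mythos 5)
Your strategy---continuity of the critical graph combined with the exclusion of $\pm z_0(\sigma)$ from $\mathscr{J}^{(1)}_{\sigma}$ throughout the $\delta$-neighborhood, plus a count of the eight asymptotic directions at infinity---is in substance the same as the paper's. But the second paragraph of your argument has a genuine gap, and it is exactly the point you flag at the end as ``the main obstacle.'' The inference ``$\mathscr{J}^{(1)}_{\sigma}$ deforms continuously, hence the connected component $\mathcal{C}(\sigma)$ and its endpoints vary continuously'' is not valid: continuity of the critical graph as a set (which is all that Lemma~\ref{continuous deformations of one cut critical graph} provides) is perfectly compatible with a component splitting or with the endpoint of a trajectory jumping from $-b_1$ to $\infty$---this is precisely what happens as $\sigma$ crosses the curves I, XII, VI, VII, even though the graph deforms continuously there. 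Likewise, your claim that ``the only critical points available as endpoints are $\pm b_1$ and $\pm z_0$'' overlooks that a trajectory need not have a finite endpoint at all: the alternative to terminating at $-b_1$ is escaping to infinity, not terminating at $\pm z_0$, so excluding $\pm z_0$ from the graph does not by itself pin the endpoint at $-b_1$.

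The missing step is the contradiction/intermediate-value argument that the paper uses. Assume the connection is lost at some $\hat\sigma$ in the neighborhood; then all six rays $\ell_j^{(\pm b_1)}$ must run to infinity (they cannot join one another by Lemma~\ref{Lemma one or two vetex polygons are ruled out}, nor reach $\pm z_0$ by the choice of $\delta$), which forces a specific disconnected configuration of $\mathscr{J}^{(1)}_{\hat\sigma}$. Joining $\sigma_0$ to $\hat\sigma$ by a path inside the $\delta$-disk, the continuous deformation of $\mathscr{J}^{(1)}_{\sigma}$ can pass from the connected to the disconnected configuration only through an intermediate $\sigma_*$ at which the trajectory emanating from $b_1(\sigma_*)$ collides with a critical point other than $-b_1(\sigma_*)$, i.e., $z_0(\sigma_*)\in\mathscr{J}^{(1)}_{\sigma_*}$. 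This contradicts Lemma~\ref{lemma delta}, which excludes $\pm z_0(\sigma)$ from the graph for every $\sigma$ in the neighborhood, not merely at $\hat\sigma$. With the connection secured, your final count of the asymptotic directions recovering $\mathcal{L}^{(1)}_{\hat\sigma}$ and $\mathcal{L}^{(2)}_{\hat\sigma}$ is fine and agrees with the paper.
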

	\begin{proof}
		Assume that, at $\hat{\sigma}$ there is no longer a connection from $-b_1(\hat{\sigma})$ to $b_1(\hat{\sigma})$. Therefore all six rays $\ell_1^{(b_1(\hat{\sigma}))}$,$\ell_2^{(b_1(\hat{\sigma}))},$ $ \ell_3^{(b_1(\hat{\sigma}))},$ $ \ell_1^{(-b_1(\hat{\sigma}))},$ $ \ell_2^{(-b_1(\hat{\sigma}))},$ $\ell_3^{(-b_1(\hat{\sigma}))}$ must extend to infinity since none can be connected to either $z_0(\hat{\sigma})$ or $-z_0(\hat{\sigma})$ by the choice of $\de$. The other two solutions at infinity must come from a curve  $\mathcal{L}$ symmetric with respect to the origin. Since we have four finite singular points, $\mathcal{L}$ must have one singular point of order 2 ($z_0(\hat{\sigma})$ or $-z_0(\hat{\sigma})$) and one singular point of order one ($b_1(\hat{\sigma})$ or $-b_1(\hat{\sigma})$) on one side, and the other pair of singular points on the other side. By Teichm\"uller's lemma this curve approaches to infinity along two rays $\pi$ radians apart as shown in Figure \ref{fig:No Connection}.
		
		Consider a path $\ga: [0,1] \to \{ \sigma : |\sigma - \sigma_0|<\de\}$, with $\ga(0)=\sigma_0$ and $\ga(1)=\hat{\sigma}$. Since the level sets $\mathscr{J}^{(1)}_{\sigma}$ deform in a continuous fashion with respect to $\sigma$ (for a schematic of three snapshots of this deformation see Figure \ref{fig:No Connection transition}), the above scenario requires existence of a value $\sigma_* =\ga(t_*)$ for some $0<t_*<1$ such that $z_0(\sigma_*) \in \mathscr{J}^{(1)}_{\sigma_*}$. But this is impossible by the choice of $\de$ in Lemma \ref{lemma delta}, as it would mean $\Re[\eta_1(z_0(\sigma_*),\sigma_*)]=0$.   
	\end{proof}
	
	\begin{lemma}\label{lemma good region}
		Let $\sigma_0$ and $\de$ have the same meaning as in Lemma \ref{lemma delta}. Then for all $\sigma$ in the $\de$-neighborhood of $\sigma_0$, there exists a complementary arc $\Ga_{\sigma}(b_1(\sigma), \infty)$ which lies entirely in the component of the set \begin{equation}
		\left\{ z : \Re \left[ \eta_1(z;\sigma) \right]<0 \right\},
		\end{equation} which encompasses $(M(\sigma),\infty)$ for some $M(\sigma)>0$.
	\end{lemma}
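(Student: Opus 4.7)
The plan is to combine the continuous deformation of $\mathscr{J}^{(1)}_\sigma$ from Lemma \ref{continuous deformations of one cut critical graph} with the asymptotic descriptions of the critical trajectories provided by Lemmas \ref{lemma humps}, \ref{lemma structure 2}, and \ref{lemma structure 3}, exploiting that the complementary arc $\Ga_{\sigma_0}(b_1(\sigma_0),\infty)$ is already given at $\sigma_0$. The overall strategy is to perturb a long compact initial segment of that arc, and then glue on a uniform tail supplied by an open sector at infinity that remains inside the stable land under small perturbations of $\sigma$.

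First, I would fix an arc $\Ga_{\sigma_0}(b_1(\sigma_0),\infty)$ lying in the component $\mathcal U_{\sigma_0}$ of $\{z:\Re\eta_1(z;\sigma_0)<0\}$ that contains the tail $(M(\sigma_0),\infty)$, choose $R$ so large that the portion with $|z|>R$ is a connected subarc running off to $\infty$ through $\mathcal U_{\sigma_0}$, and denote by $\Gamma_0$ its compact sub-arc from $b_1(\sigma_0)$ to a point $z_R$ with $|z_R|=R$. Uniform continuity of $\Re\eta_1(z;\sigma)$ on $\Gamma_0\times\{|\sigma-\sigma_0|\le\de\}$, combined with the local behavior $\eta_1(z;\sigma)\sim \tfrac{2}{3}(z-b_1(\sigma))^{3/2}$ near the endpoint, lets me push $\Gamma_0$ by a uniformly small displacement to an arc $\Gamma_0(\sigma)$ starting at $b_1(\sigma)$ and ending close to $z_R$, on which $\Re\eta_1(\,\cdot\,;\sigma)\le -c<0$, provided $\de$ is small.

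Next, I would produce a tail inside the stable land uniformly in $\sigma$. Whichever of Lemmas \ref{lemma humps}, \ref{lemma structure 2}, or \ref{lemma structure 3} applies at $\sigma_0$ (the case being selected by the signs of $\Re\eta_1(z_0(\sigma_0);\sigma_0)$ and of $\Re z_0(\sigma_0)$) produces an open asymptotic sector bounded by the directions of $\ell_2^{(b_1(\sigma_0))}$ and $\ell_3^{(b_1(\sigma_0))}$ that contains $\arg z=0$ and lies in $\mathcal U_{\sigma_0}$. By Lemma \ref{lemma delta} and continuity of $z_0(\sigma)$, the same structural lemma applies throughout the $\de$-neighborhood with unchanged sign conditions, so the asymptotic directions of those trajectories are independent of $\sigma$, and a slightly smaller sector $\{|z|>R',\ \arg z\in(\alpha_1+\ep,\alpha_2-\ep)\}$ lies in $\mathcal U_\sigma$ for all $\sigma$ near $\sigma_0$. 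I would then join the endpoint of $\Gamma_0(\sigma)$ to a ray in this sector by a short arc inside $\mathcal U_\sigma$ and follow the ray to $\infty$.

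The main obstacle is verifying that the perturbed arc $\Gamma_0(\sigma)$ and the chosen tail lie in the \emph{same} connected component of $\{z:\Re\eta_1(z;\sigma)<0\}$; a priori, a small deformation of $\sigma$ could pinch $\mathcal U_{\sigma_0}$ by producing a new saddle on the level set. Lemma \ref{lemma delta} prevents $\pm z_0(\sigma)$ from landing on the critical graph, and Lemma \ref{lemma connectivity} preserves the short trajectory joining $\pm b_1(\sigma)$; combined with the continuous deformation of $\mathscr{J}^{(1)}_\sigma$, these rule out the topological bifurcations of the level set that would be needed to disconnect $\mathcal U_{\sigma_0}$. Hence $\mathcal U_\sigma$ is a $\sigma$-continuous family of components, and the arc constructed above lies inside it.
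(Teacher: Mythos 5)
Your proof is correct and rests on the same key mechanism as the paper's: the topology of the critical graph $\mathscr{J}^{(1)}_{\sigma}$, and hence of the stable lands, cannot change unless $\pm z_0(\sigma)$ hits $\mathscr{J}^{(1)}_{\sigma}$, which the choice of $\de$ in Lemma \ref{lemma delta} rules out. The paper states this in two sentences, whereas you additionally spell out the explicit construction of the arc (perturbed compact segment plus asymptotic-sector tail); this is a harmless elaboration, not a different argument.
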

	\begin{proof}
		The structure of critical trajectories does not change unless $z_0(\sigma)$ hits the set $\mathscr{J}^{(1)}_{\sigma}$. By the choice of $\de$ and $\sigma_0$, this does not happen for any $\sigma$ in the $\de$-neighborhood of $\sigma_0$.
	\end{proof}
	Lemmas \ref{lemma delta}, \ref{lemma connectivity}, and \ref{lemma good region} are together equivalent to Theorem \ref{O1* is open}.

	\subsection{The Two-cut Regime.}
	Let us recall from \S \ref{Sec two cut eq meas} that the quadratic differential for the two cut regime is
	\begin{equation}\label{TwoCut QD}
	Q_2(z;\sigma) \dd z^2:=z^2\left(z^2-a^2_2(\sigma)\right)\left(z^2-b^2_2(\sigma)\right) \dd z^2.
	\end{equation}
	From \eqref{em46} we recall that  $a_2 \neq b_2$ and $a_2, b_2 \neq 0$ away from $\sigma= \pm 2$. Identical to the one-cut quadratic differential \eqref{OneCut QD}, we can show that the solutions to $\Re[\eta_2(z;\sigma)]$ approach to infinity along the eight directions
	$$\{ \pi/8 + k \pi/4: k=0,\cdots,7\}.$$
	
	\begin{lemma}\label{Lemma one or two vetex polygons are ruled out two cut}
		There are no singular finite geodesic polygons with one or two vertices associated to the quadratic differential \eqref{TwoCut QD}.
	\end{lemma}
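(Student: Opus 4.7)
My plan is to mimic the Teichm\"uller-lemma argument used for the one-cut quadratic differential in Lemma \ref{Lemma one or two vetex polygons are ruled out}. First I will catalogue the critical structure of $Q_2(z;\sigma)\dd z^2$: the five finite zeros are $z=0$ of order $2$ (coming from the $z^2$ factor) and $z=\pm a_2(\sigma),\,\pm b_2(\sigma)$ of order $1$, and they are pairwise distinct in the open two-cut region by the explicit end-point formulae \eqref{em46} (which give $a_2\neq b_2$ and both nonzero away from $\sigma=\pm 2$); meanwhile $z=\infty$ is the unique pole, of order $10$. At each simple zero, three critical trajectories emanate at angles of $2\pi/3$, so the admissible interior angles of a polygon-vertex at $\pm a_2$ or $\pm b_2$ lie in $\{2\pi/3,\,4\pi/3\}$; at the double zero, four critical trajectories emanate at angles of $\pi/2$, so the admissible interior angles at $z=0$ lie in $\{\pi/2,\pi,3\pi/2\}$.

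Next I will argue by contradiction. Suppose a singular finite $Q_2$-polygon $\Sigma$ exists with $\#\texttt{V}_\Sigma\in\{1,2\}$. Because $\Sigma$ is finite, its interior is bounded and cannot contain the pole at infinity, so any critical point in $\texttt{Int}\,\Sigma$ must be one of the five finite zeros, all of which have non-negative order. Hence the second sum on the right-hand side of Teichm\"uller's identity \eqref{Teich Lemma} is non-negative. For the vertex contributions, I observe that in every allowed case
\[
(\texttt{ord}(z)+2)\,\frac{\theta(z)}{2\pi}\;\geq\;1,
\]
since at a simple zero the minimum value is $(1+2)\cdot(2\pi/3)/(2\pi)=1$ and at the origin it is $(2+2)\cdot(\pi/2)/(2\pi)=1$. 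Therefore the right-hand side of \eqref{Teich Lemma} is at least $\#\texttt{V}_\Sigma$, while the left-hand side equals $\#\texttt{V}_\Sigma-2$; comparing, we obtain $-2\geq 0$, a contradiction in both cases $\#\texttt{V}_\Sigma=1$ and $\#\texttt{V}_\Sigma=2$.

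I do not anticipate a genuine obstacle here, as the argument is a near-verbatim transcription of the one-cut case, with the book-keeping of orders adjusted. The only item worth double-checking is that the five finite zeros of $Q_2$ are truly distinct and carry the stated orders throughout the open two-cut regime $\mathcal{O}_2$, which is guaranteed by \eqref{em46}; on the phase boundaries $\sigma=\pm 2$ these zeros collide, but those values are excluded from the regime under consideration.
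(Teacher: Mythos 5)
Your argument is correct and is essentially the paper's own proof: the paper disposes of this lemma by noting it is identical to the one-cut case, i.e.\ an application of Teichm\"uller's lemma \eqref{Teich Lemma} in which the left-hand side is $-1$ or $0$ while the right-hand side is a positive integer because there are no finite poles and every admissible vertex contributes a positive integer. Your explicit book-keeping of the orders and minimal interior angles at the double zero at the origin and the simple zeros $\pm a_2,\pm b_2$ is just a more detailed rendering of the same computation.
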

	\begin{proof}
		The proof is identical to the proof of Lemma \ref{Lemma one or two vetex polygons are ruled out}.
	\end{proof}

	\begin{definition}\label{Def two cut sigma minus fake transition}
		Define the subset
		$\mathcal{O}_{2}$ in the $\sigma$-plane as the collection of all $\sigma \in \C$ such that
		\begin{enumerate}
			\item The critical graph $\mathscr{J}^{(2)}_{\sigma}$ of all points $z$ satisfying \begin{equation}\label{level set 2}
			\Re \left[ \eta_2(z;\sigma) \right]=0,
			\end{equation}
			contains a single Jordan arc connecting $-b_2(\sigma)$ to $-a_2(\sigma)$ and a single Jordan arc connecting $a_2(\sigma)$ to $b_2(\sigma)$,
			\item There exists a complementary arc $\Ga_{\sigma}(b_2(\sigma), \infty)$ which lies entirely in the component of the set \begin{equation}
			\left\{ z : \Re \left[ \eta_2(z;\sigma) \right]<0 \right\},
			\end{equation} which encompasses $(M(\sigma),\infty)$ for some $M(\sigma)>0$,
			\item There exists a complementary arc $\Ga_{\sigma}(-a_2(\sigma), a_2(\sigma))$ which lies entirely in the component of the set \begin{equation}
			\left\{ z : \Re \left[ \eta_2(z;\sigma) \right]<0 \right\}.
			\end{equation} 
		\end{enumerate}
	\end{definition}
	
	\begin{figure}[h]
		\centering
		\begin{subfigure}{0.33\textwidth}
			\centering
			\includegraphics[width=\textwidth]{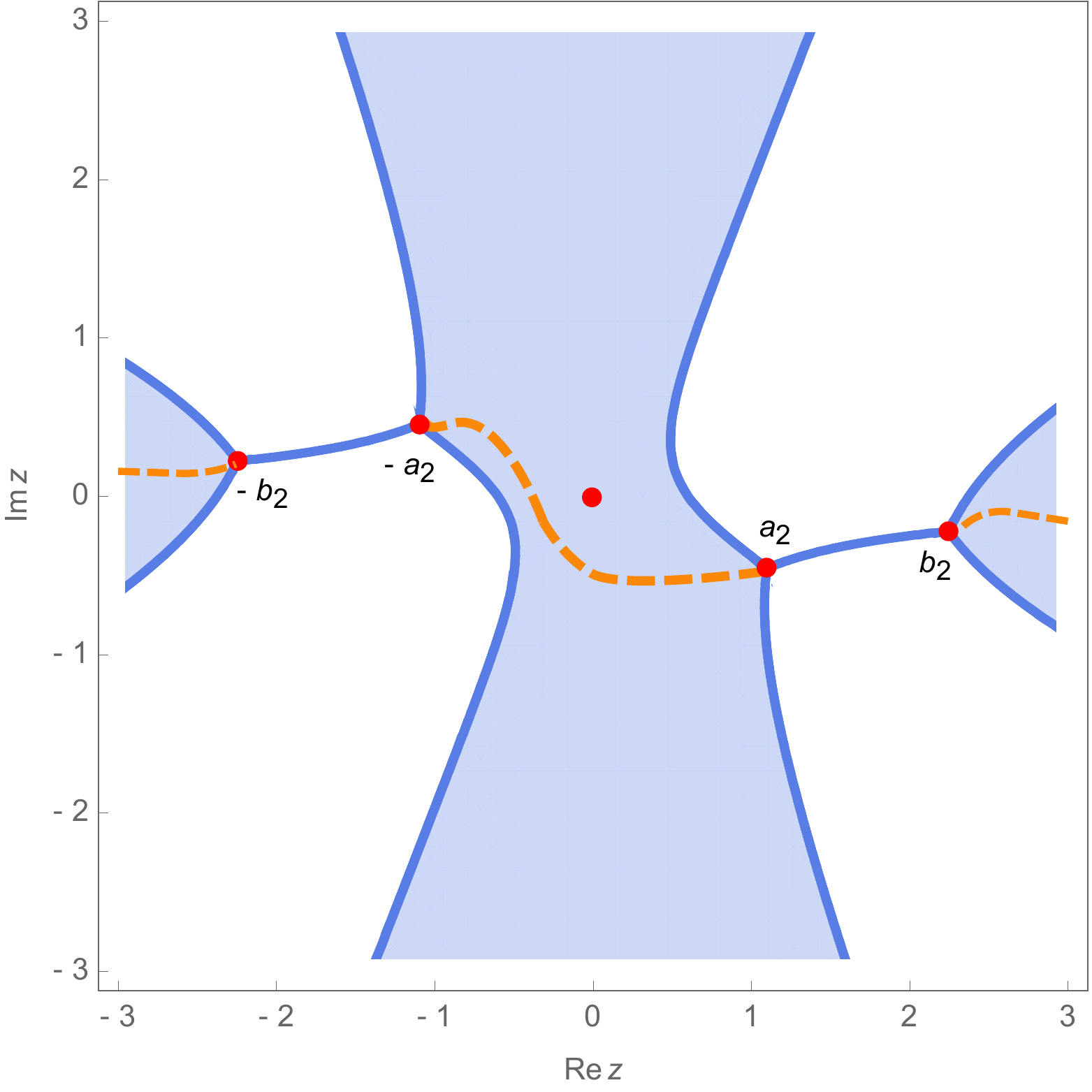}
			\caption{$\sigma=-3+\ii$.}
			\label{fig:fertile barren -3 1}
		\end{subfigure}%
		\begin{subfigure}{.33\textwidth}
			\centering
			\includegraphics[width=\textwidth]{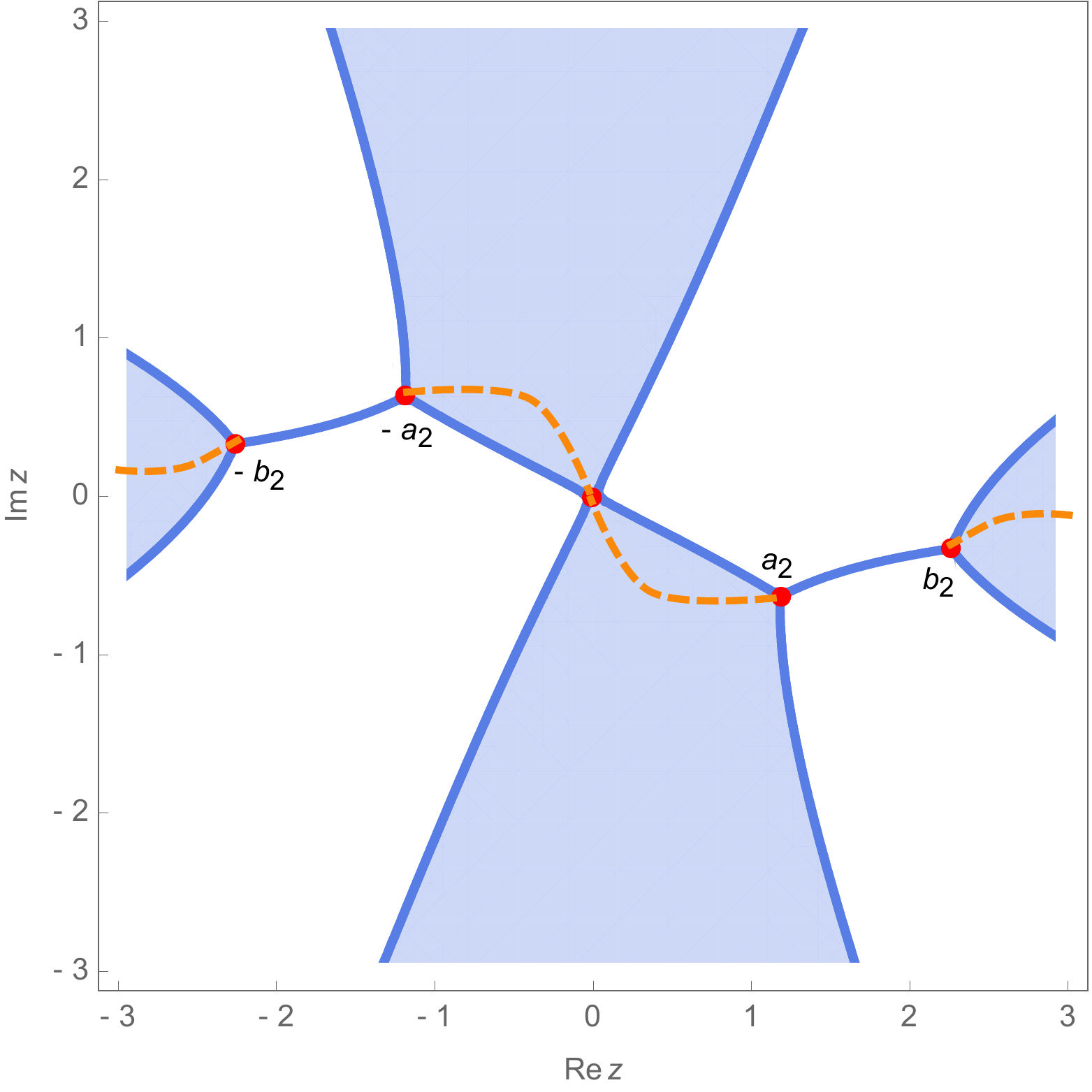}
			\caption{$\sigma_{\mbox{cr}}\simeq-3+1.5\ii$.}
			\label{fig:stable barren -3 1.5}    
		\end{subfigure}
		\begin{subfigure}{.33\textwidth}
			\centering
			\includegraphics[width=\textwidth]{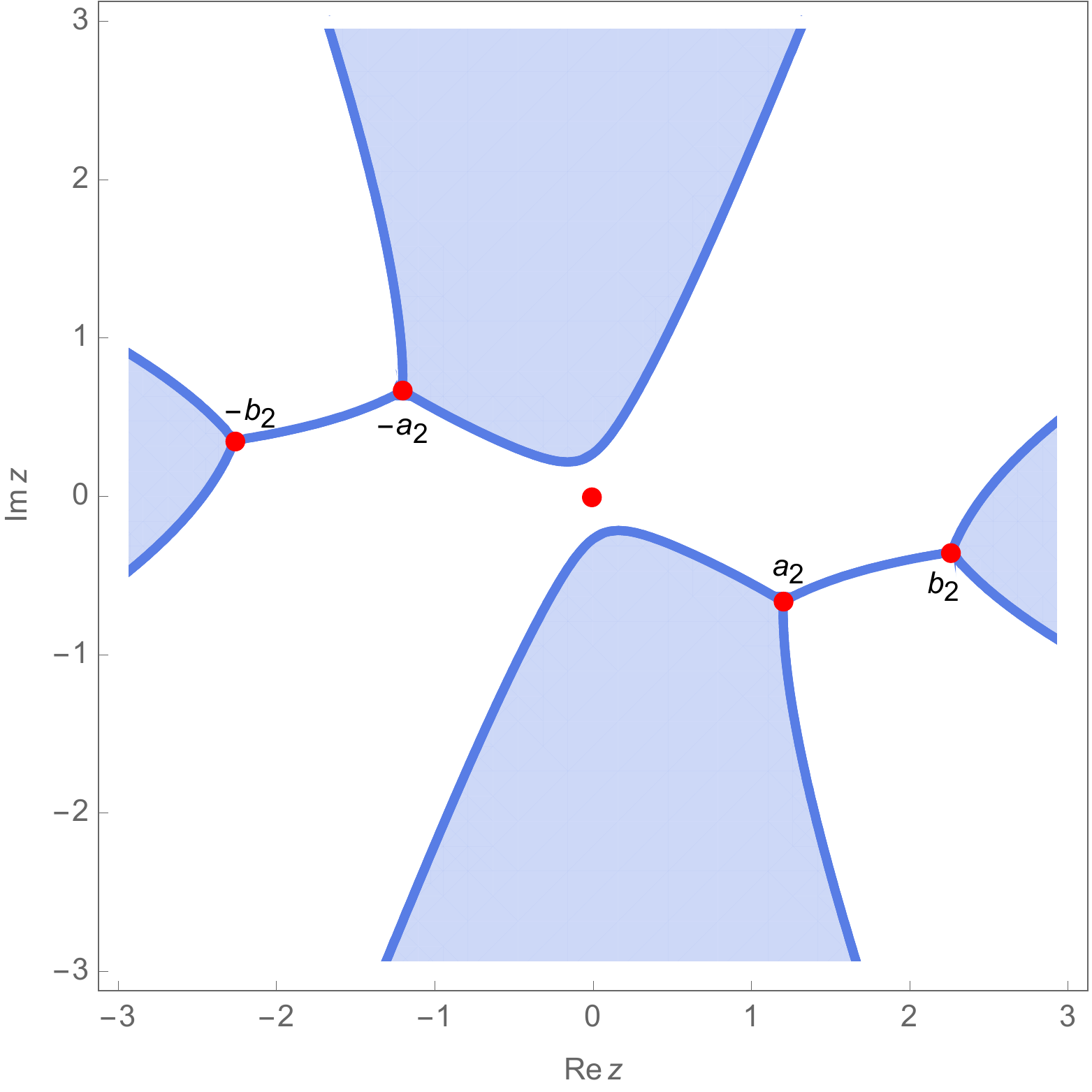}
			\caption{$\sigma=-3+1.6\ii$.}
			\label{fig:stable barren -3 1.6}    
		\end{subfigure}
		\caption{This sequence of figures shows allowable regions in light blue through which the contour of integration (for the orthogonal polynomials) must pass, for a varying collection of values of $\sigma$. Regions in light blue are the $\sigma$-stable lands where $  \Re[\eta_2(z;\sigma)]<0$ and the regions in white are the $\sigma$-unstable lands where $  \Re[\eta_2(z;\sigma)]>0$. Figure (a) corresponds to a $\sigma \in \mathcal{O}_2$ as all conditions of Definition \ref{Def two cut sigma minus fake transition} are satisfied. Figures (b) and (c) do not correspond to $\sigma \in \mathcal{O}_2$, as the third requirement of the definition \ref{Def two cut sigma minus fake transition} is not satisfied.}
		\label{fig:stable and barren 2cut}
	\end{figure}

	\begin{theorem}\label{thm 2cut is open}
		The set $\mathcal{O}_2$ as defined in Definition \ref{Def two cut sigma minus fake transition} is open.
	\end{theorem}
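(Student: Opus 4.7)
The plan is to mirror the strategy used for Theorem \ref{O1* is open}, establishing analogues of Lemmas \ref{lemma delta}, \ref{lemma connectivity}, and \ref{lemma good region} adapted to the two-cut setting. Fix $\sigma_0 \in \mathcal{O}_2$. The situation is in several respects simpler than the one-cut case: the two-cut quadratic differential $Q_2(z;\sigma)\,\dd z^2$ has a \emph{fixed} critical point at the origin (a zero of order two independent of $\sigma$) and four simple zeros $\pm a_2(\sigma), \pm b_2(\sigma)$ that depend analytically on $\sigma$ via the explicit formulas \eqref{em46}. In particular, there is no analogue of the moving pair $\pm z_0(\sigma)$ that could collide with the critical graph and force a topological change, so no counterpart of Definition \ref{Def one cut sigma minus fake transition} (the starred, restricted set) is needed here.

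The first step is to invoke the two-cut case of Theorem 3 of \cite{BBGMT2}, stated here in the general form of Lemma \ref{continuous deformations of one cut critical graph}, to conclude that $\mathscr{J}^{(2)}_\sigma$ deforms continuously in $\sigma$. Next, for each of the three conditions in Definition \ref{Def two cut sigma minus fake transition} I would establish persistence in a small neighborhood of $\sigma_0$. For condition (1), the short trajectories $\Gamma_\sigma[a_2,b_2]$ and $\Gamma_\sigma[-b_2,-a_2]$ persist because they cannot break unless one of the endpoints merges with another critical point; Lemma \ref{Lemma one or two vetex polygons are ruled out two cut} excludes degenerate one- and two-vertex $Q_2$-polygons, and a local conformal analysis shows the three critical rays at each endpoint emanate at angles $2\pi/3$, a combinatorial datum preserved by continuous deformation.

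For conditions (2) and (3), I would argue by continuity of $\eta_2(z;\sigma)$ in both variables, uniformly on compacta away from the branch cuts. If at $\sigma_0$ the arc $\Gamma_{\sigma_0}(b_2,\infty)$ lies in the open stable land $\{\Re\,\eta_2(z;\sigma_0)<0\}$, then by continuity the deformed arc still lies in $\{\Re\,\eta_2(z;\sigma)<0\}$ for all $\sigma$ sufficiently close to $\sigma_0$; the same reasoning produces the complementary arc $\Gamma_\sigma(-a_2(\sigma),a_2(\sigma))$ passing through a neighborhood of the origin.

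The main obstacle, and the step requiring the most care, is condition (3): one must exclude the scenario depicted in Figures \ref{fig:stable barren -3 1.5} and \ref{fig:stable barren -3 1.6} in which the four critical trajectories emanating from $z=0$ rearrange so as to isolate the origin inside an unstable component, cutting off any candidate arc through $(-a_2,a_2)$. The way to rule this out in a neighborhood of $\sigma_0 \in \mathcal{O}_2$ is to observe that such a rearrangement is a discrete event: it requires a separatrix from the origin to collide with one of the singular points $\pm a_2, \pm b_2$, producing an additional short trajectory. But the presence or absence of such a short trajectory is a closed condition, and by the assumption $\sigma_0 \in \mathcal{O}_2$ no such collision occurs at $\sigma_0$; by continuous deformation of $\mathscr{J}^{(2)}_\sigma$ it does not occur in a sufficiently small $\delta$-neighborhood either. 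Combining these three persistence statements yields the openness of $\mathcal{O}_2$.
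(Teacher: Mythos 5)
Your overall strategy---mirroring the one-cut openness proof, invoking the continuous deformation of $\mathscr{J}^{(2)}_{\sigma}$ from Theorem 3 of \cite{BBGMT2}, and observing that the only ``dangerous'' critical point is the fixed double zero at the origin---is the same as the paper's. But the two steps where the actual content lies are left unsupported. For condition (1), your claim that the short trajectories $\Gamma_\sigma[a_2,b_2]$ and $\Gamma_\sigma[-b_2,-a_2]$ ``cannot break unless one of the endpoints merges with another critical point'' is not justified and is not the relevant mechanism: the connection can be lost with $\pm a_2,\pm b_2$ all distinct, the degeneration being the one in Lemma \ref{legs connect at criticality 2cut}, where critical arcs from $\pm a_2$ reach the origin. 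The paper's argument (Lemma \ref{lemma connectivity 2 cut}) is instead a counting argument: once one knows the origin stays off $\mathscr{J}^{(2)}_{\sigma}$ in the neighborhood, a loss of connectivity would force all twelve local critical arcs $\ell^{(p)}_{1},\ell^{(p)}_{2},\ell^{(p)}_{3}$, $p=\pm a_2,\pm b_2$, to run to infinity (mutual connections being excluded by Teichm\"uller's lemma), contradicting the fact that there are only eight admissible asymptotic directions $\pi/8+k\pi/4$. The local angle $2\pi/3$ at the simple zeros, which you invoke, is a purely local datum and cannot substitute for this global count.

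For condition (3), you assert that the rearrangement of the separatrices emanating from the origin ``is a closed condition'' and hence does not occur near $\sigma_0$, but you never exhibit the continuous function whose non-vanishing certifies this, so the appeal to continuity is circular. The paper supplies it explicitly: the origin lies on $\mathscr{J}^{(2)}_{\sigma}$ if and only if $\Re[\eta_2(0;\sigma)]=0$; Corollary \ref{Cor 0 on critical graph} shows that $\sigma_0\in\mathcal{O}_2$ forces $\Re[\eta_2(0;\sigma_0)]\neq 0$ (indeed $<0$ by Lemma \ref{where things go two cut}), and continuity of $\sigma\mapsto\Re[\eta_2(0;\sigma)]$ away from its branch cuts yields the $\delta$-neighborhood of Lemma \ref{lemma delta 2-cut} in which the origin never joins the critical graph. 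It is this single scalar sign condition that simultaneously locks in all the requirements of Definition \ref{Def two cut sigma minus fake transition}; reducing the ``discrete event'' to the vanishing of $\Re[\eta_2(0;\sigma)]$ is the missing ingredient your sketch needs before the rest of the argument goes through.
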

	
	The following Lemmas, collectively, establish the above Theorem.
	
	\begin{lemma}
		The set $\mathscr{J}^{(2)}_{\sigma}$ is symmetric with respect to the origin.
	\end{lemma}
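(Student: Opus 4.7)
The plan is to mirror the proof of Lemma \ref{Lemma symmetry of J}. The goal is to establish, for each $z$ off the branch cut of $\eta_2$, the identity
\begin{equation}
\eta_2(-z;\sigma) = \eta_2(z;\sigma) + C,
\end{equation}
where $C = C(\sigma)$ is a purely imaginary constant. Taking real parts then gives $\Re\eta_2(-z;\sigma) = \Re\eta_2(z;\sigma)$, i.e., $-z \in \mathscr{J}^{(2)}_\sigma$ whenever $z \in \mathscr{J}^{(2)}_\sigma$.

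First I would show that $\eta_2(-z;\sigma) - \eta_2(z;\sigma)$ is locally constant in $z$. Since $V(z;\sigma) = \sigma z^2/2 + z^4/4$ is even, $V'$ is odd, and the equilibrium measure $\nu_{\rm eq}$ is even (by uniqueness of the equilibrium measure for an even external field, as used in \S \ref{SubSec EqMeas quartic}); consequently the resolvent $\omega(z)$ is odd. Differentiating \eqref{eta_2 def} yields $\eta_2'(z;\sigma) = 2\omega(z) - V'(z)$, so
\begin{equation}
\frac{d}{dz}\eta_2(-z;\sigma) = -\eta_2'(-z;\sigma) = -\bigl(2\omega(-z) - V'(-z)\bigr) = 2\omega(z) - V'(z) = \eta_2'(z;\sigma).
\end{equation}

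Next I would verify that the constant $C$ is purely imaginary. Splitting the integration path defining $\eta_2(-z;\sigma)$ into a segment from $b_2$ to $-b_2$ (avoiding the support) followed by a segment from $-b_2$ to $-z$, and applying the substitution $s \mapsto -s$ in the second segment, the form $s\sqrt{(s^2-b_2^2)(s^2-a_2^2)}\,ds$ is invariant: the branch, fixed by $\sqrt{Q_2(s;\sigma)} \sim s^2$ at infinity, is preserved under $s \mapsto -s$ because the cuts $\Gamma_\sigma[a_2,b_2]\cup\Gamma_\sigma[-b_2,-a_2]$ are symmetric about the origin. Hence the second segment reproduces $\eta_2(z;\sigma)$ exactly, while the first segment contributes
\begin{equation}
C = -\int_{b_2}^{-b_2} s\sqrt{(s^2-b_2^2)(s^2-a_2^2)}\,ds,
\end{equation}
a period of $\sqrt{Q_2(s;\sigma)}\,ds$ which by \eqref{em13} equals an integer combination of terms of the form $2\pi i\int_{\mathrm{cut}} d\nu_{\rm eq}$, hence is purely imaginary.

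The main technical care required is to keep track of branch choices for the square root across the two symmetric cuts under $s \to -s$ and to handle the distinct cases of whether the splitting path passes through $\Gamma_\sigma[-a_2,a_2]$ or around; once the asymptotic normalization at infinity is fixed, however, the reflection symmetry of the cuts makes this bookkeeping routine, not a conceptual obstacle.
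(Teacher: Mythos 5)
Your proposal is correct and follows essentially the same route as the paper: the paper's proof simply invokes the identity $\eta_2(-z;\sigma)=\eta_2(z;\sigma)\pm 2\pi \ii$ and takes real parts, which is exactly the identity you establish (your derivative computation and period argument supply the justification that the paper leaves implicit, and your evaluation of the constant is consistent with the paper's $\pm 2\pi\ii$, using also that $\int_{-a_2}^{a_2}s\sqrt{(s^2-b_2^2)(s^2-a_2^2)}\,\dd s=0$ by oddness of the integrand). No substantive difference in method.
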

	\begin{proof}
		In view of the first part of Definition \ref{Def two cut sigma minus fake transition}, this simply follows from the identity 
		\begin{equation}	\eta_2(-z;\sigma)=\eta_2(z;\sigma) \pm 2\pi \ii.
		\end{equation}
	\end{proof}
	
	\begin{lemma}\label{continuous deformations of two cut critical graph}\normalfont{[Theorem 3 of \cite{BBGMT2}]}
		\textit{		The critical graph $\mathscr{J}^{(2)}_{\sigma}$ deforms continuously with respect to $\sigma$.}
	\end{lemma}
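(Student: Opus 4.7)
The plan is to prove continuous deformation by combining three ingredients: (i) analytic dependence of the endpoints $\pm a_2(\sigma),\pm b_2(\sigma)$ on $\sigma$, (ii) local normal forms for critical trajectories near regular and critical points of the quadratic differential $Q_2(z;\sigma)\,\dd z^2$, and (iii) control of the behavior near $\infty$. The first ingredient is immediate from the explicit formulas \eqref{em46}: away from $\sigma=\pm 2$ the square roots $\sqrt{\pm 2-\sigma}$ are analytic, so $a_2(\sigma)$ and $b_2(\sigma)$ depend analytically on $\sigma$, and consequently $\eta_2(z;\sigma)$ as given by \eqref{eta_2 def} is jointly analytic in $(z,\sigma)$ on its natural domain, with its branch cuts transported continuously as $\sigma$ varies.

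Next I would establish local continuous deformation of $\mathscr{J}^{(2)}_\sigma$ near each point. At a regular point $z_*$ of $Q_2(\cdot;\sigma_0)\,\dd z^2$, since $\partial_z\eta_2(z_*;\sigma_0)\ne 0$ the implicit function theorem applied to $\Re[\eta_2(z;\sigma)]=0$ shows that $\mathscr{J}^{(2)}_\sigma$ is locally a smooth arc depending real-analytically on $\sigma$. At a finite critical point $p(\sigma)\in\{0,\pm a_2(\sigma),\pm b_2(\sigma)\}$ of order $n$ (so $n=2$ at $0$ and $n=1$ at the other four), a Weierstrass-type change of variables gives a local conformal coordinate $w=w(z;\sigma)$, depending analytically on $\sigma$, such that
\begin{equation}
\eta_2(z;\sigma)-\eta_2(p(\sigma);\sigma)=w^{(n+2)/2}.
\end{equation}
The critical arcs incident with $p(\sigma)$ are then the pre-images of the $n+2$ rays $\arg w = \tfrac{2\pi k}{n+2}$, and since both $p(\sigma)$ and $w(\cdot;\sigma)$ move continuously, so do these local arcs.

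The third ingredient is to fix a large disk $D_R=\{|z|\le R\}$ containing all the finite critical points for $\sigma$ in a small neighborhood $U$ of $\sigma_0$. Inside $D_R$ the critical graph is a finite embedded graph whose edges are smooth arcs joining critical points and points of $\partial D_R$, and the local analysis just described shows this finite graph deforms continuously in the Hausdorff sense as long as its combinatorial type does not change. Outside $D_R$, the asymptotic
\begin{equation}
\eta_2(z;\sigma)\;\sim\;-\frac{z^4}{4},\qquad z\to\infty,
\end{equation}
is independent of $\sigma$, so the eight asymptotic directions $\arg z=\pi/8+k\pi/4$, $k=0,\dots,7$, are fixed; an elementary estimate shows that each unbounded trajectory exiting $D_R$ enters and remains in a thin sector around one of these rays, uniformly for $\sigma\in U$.

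The main obstacle is ruling out a sudden global reconnection: two nearby arcs could in principle merge, or an arc could jump from one asymptotic direction to another, as $\sigma$ varies. Such an event would force $\eta_2$ to take equal real parts at two distinct critical points of $Q_2(\cdot;\sigma)\,\dd z^2$, which is a codimension-one condition in the parameter $\sigma$; combined with the local analysis above this shows that for $\sigma$ in a full neighborhood of a generic $\sigma_0$ the combinatorial type of $\mathscr{J}^{(2)}_\sigma$ is locally constant while its embedding is a continuous (indeed real-analytic) function of $\sigma$, and at codimension-one points one still obtains Hausdorff continuity by matching local sheets of trajectories meeting at the colliding critical saddles. The uniform-in-parameter version of this argument, valid for the general $q$-cut regime of an arbitrary even polynomial potential, is carried out in Theorem 3 of \cite{BBGMT2}, to which we refer for the complete proof.
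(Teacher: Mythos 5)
The paper itself offers no proof of this lemma: it is imported verbatim from Theorem~3 of \cite{BBGMT2}, which treats the general $q$-cut, degree-$2p$ case, so there is no internal argument to compare yours against. Your sketch follows the standard route and its three ingredients are sound: analyticity of $a_2(\sigma),b_2(\sigma)$ away from $\sigma=\pm2$ via \eqref{em46}; the local normal form $w^{(n+2)/2}$ at a zero of order $n$ (a small slip: the critical arcs are the preimages of $\arg w=\tfrac{(2k+1)\pi}{n+2}$, since they solve $\Re[w^{(n+2)/2}]=0$, though this is only a matter of normalizing $w$); and the $\sigma$-independent asymptotic directions $\tfrac{\pi}{8}+\tfrac{k\pi}{4}$ coming from $\eta_2\sim-\tfrac{z^4}{4}$. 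This is genuinely more self-contained than what the paper does, and it buys a direct, quartic-specific proof rather than an appeal to a companion paper.

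Your final paragraph, however, is the crux and is asserted rather than proved; it can be tightened as follows. Since $\Re[\eta_2(\cdot;\sigma)]$ is harmonic away from the zeros of $Q_2$, the topology of the level set $\{\Re[\eta_2]=0\}$ can change only when a zero of $Q_2$ crosses it. The four simple zeros $\pm a_2,\pm b_2$ lie on it for \emph{every} $\sigma$ (on the support $\eta_{2,+}+\eta_{2,-}=0$ and $\eta_{2,+}$ is purely imaginary, the gap integral through the origin vanishing by oddness), so the only reconnection condition is $\Re[\eta_2(0;\sigma)]=0$, i.e.\ the vanishing of $\Re\,\Phi(\sigma)$ with $\Phi$ as in \eqref{Function : phase transition 2 to 3}; this is sharper than your ``equal real parts at two distinct critical points.'' For continuity \emph{across} those transition values, rather than ``matching local sheets at colliding saddles,'' it is cleaner to note that joint continuity of $\Re[\eta_2(z;\sigma)]$ gives upper semicontinuity of the zero set, while the absence of local extrema for a nonconstant harmonic function forces a sign change in every neighborhood of each zero, giving lower semicontinuity; together these yield local Hausdorff continuity for all $\sigma$, and the uniform separation of the eight asymptotic rays rules out reconnection through infinity. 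With these repairs your outline is a complete proof in the two-cut quartic case.
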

	
	\begin{lemma}\label{legs connect at criticality 2cut}
		If $0 \in \mathscr{J}^{(2)}_{\sigma}$, either $\ell^{(a_2)}_2$ must connect to $\ell^{(-a_2)}_2$, or $\ell^{(a_2)}_3$ must connect to $\ell^{(-a_2)}_3$ at the origin. 
	\end{lemma}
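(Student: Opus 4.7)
The plan is to classify the four critical trajectories emanating from $0$ by combining the local structure of $Q_2(z;\sigma)\dd z^2$ at its double zero $0$ with the $z\mapsto -z$ symmetry of $\mathscr{J}^{(2)}_{\sigma}$ and with Teichm\"uller's lemma. Since $Q_2$ has a zero of order $2$ at the origin, exactly four critical trajectories terminate at $0$ at angles $\pi/2$ apart, and opposite trajectories glue into two smooth local curves $C_1, C_2$ crossing transversally at $0$. The identity $\eta_2(-z;\sigma)=\eta_2(z;\sigma)\pm 2\pi\ii$ makes $\mathscr{J}^{(2)}_{\sigma}$ invariant under $z\mapsto -z$, and comparing local angles shows that each $C_i$ is itself set-wise invariant. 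Hence the two remote endpoints of each $C_i$ form a $\pm$-symmetric pair: $\{a_2,-a_2\}$, $\{b_2,-b_2\}$, or opposite asymptotic directions at $\infty$; a loop back to $0$ would give a one-vertex singular polygon, excluded by Lemma \ref{Lemma one or two vetex polygons are ruled out two cut}.

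The heart of the proof is to rule out every configuration in which neither $C_1$ nor $C_2$ uses the pair $\{a_2,-a_2\}$. If some $C_i$ has remote endpoints $\{b_2,-b_2\}$, then its half from $b_2$ to $0$ occupies one of $\ell_2^{(b_2)},\ell_3^{(b_2)}$; the other, call it $\ell$, must terminate at a critical point of $Q_2$ or at $\infty$. Every finite termination of $\ell$ produces a singular geodesic polygon with at most four vertices (formed by $\ell$, $C_i$, its $z\mapsto -z$ image where relevant, and the cut trajectories $\Ga_{\sigma}[\pm a_2,\pm b_2]$) to which either Lemma \ref{Lemma one or two vetex polygons are ruled out two cut} applies directly or Teichm\"uller's identity \eqref{Teich Lemma} gives a contradiction, since each simple-zero vertex with interior angle $2\pi/3$ contributes at least $1$ to the right-hand side and the double-zero vertex $0$ with interior angle at least $\pi/2$ contributes at least $1$, making the right-hand side exceed the left-hand side $\#V-2$ in every case with $\#V\le 4$. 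The remaining possibility is that $\ell$ runs to $\infty$; propagating this analysis to the other legs and to the all-$\infty$ scenario for both $C_1,C_2$, and noting that the pole of order $10$ of $Q_2(z;\sigma)\dd z^2$ at infinity admits only $8$ asymptotic critical directions, a count of trajectory ends ($16$ finite ends at $\{\pm a_2,\pm b_2,0\}$ and $8$ ends at $\infty$) forces the existence of additional finite-to-finite short trajectories among the legs $\ell_{2,3}^{(\pm a_2)}, \ell_{2,3}^{(\pm b_2)}$ not passing through $0$. By symmetry these are either self-symmetric or $\pm$-symmetric pairs. Any self-symmetric trajectory from $a_2$ to $-a_2$ (or $b_2$ to $-b_2$) must pass through the unique fixed point $0$, contradicting the assumption; any $\pm$-symmetric pair of trajectories of type $a_2\leftrightarrow b_2$ forms a two-vertex polygon with the corresponding cut, excluded by Lemma \ref{Lemma one or two vetex polygons are ruled out two cut}; and any remaining pairs (such as $a_2\leftrightarrow -b_2$ paired with $-a_2\leftrightarrow b_2$) together with the two cuts form a four-vertex polygon whose Teichm\"uller count $4-2=2$ is strictly less than the right-hand side $\geq 4$ coming from the four simple-zero vertices of angle $2\pi/3$.

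Once these alternatives are ruled out, at least one of $C_1,C_2$ is a smooth critical trajectory from $a_2$ through $0$ to $-a_2$. Since $z\mapsto -z$ sends the ray $\ell_j^{(a_2)}$ to $\ell_j^{(-a_2)}$ for $j=2,3$, this trajectory realizes either $\ell_2^{(a_2)}$ joined to $\ell_2^{(-a_2)}$ at $0$ or $\ell_3^{(a_2)}$ joined to $\ell_3^{(-a_2)}$ at $0$, as claimed. The main obstacle is the Teichm\"uller bookkeeping in the middle paragraph: enumerating candidate polygons for each termination of $\ell$ and each infinity configuration, and verifying in every case that the right-hand side of \eqref{Teich Lemma} strictly exceeds the left---but each ingredient uses only the methods already introduced in the excerpt.
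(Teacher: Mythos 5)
Your overall strategy is the same as the paper's: the paper disposes of this lemma in one sentence by invoking the $z\mapsto -z$ symmetry and the fact that the level set $\Re[\eta_2]=0$ has only $8$ unbounded branches, and your write-up is an (otherwise welcome) expansion of exactly that argument, with the local structure at the double zero and the Teichm\"uller exclusions spelled out. The classification of the two branches $C_1,C_2$ through the origin into symmetric endpoint pairs $\{a_2,-a_2\}$, $\{b_2,-b_2\}$, or two opposite directions at infinity is correct, as is the exclusion of leg-to-leg connections away from the origin.

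However, there is a genuine gap in the step where you dispose of the pairing $\{b_2,-b_2\}$. Count the ends: the five finite critical points carry $4\cdot 3+4=16$ trajectory ends, and infinity carries exactly $8$; writing $f$ for the number of trajectories with both ends finite and $m$ for those with one end at infinity, $2f+m=16$ and $m\le 8$ give only $f\ge 4$. In the configuration where one branch through $0$ is the short trajectory $b_2\to 0\to -b_2$ and the other branch runs to infinity, the two cuts together with the two halves of that branch already realize $f=4$, and the remaining six legs plus the two rays of the second branch account for exactly $m=8$. So the count does \emph{not} ``force the existence of additional finite-to-finite short trajectories,'' contrary to what you assert, and this configuration --- in which neither branch through the origin joins $\pm a_2$, so the lemma's conclusion fails --- survives every step of your argument. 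Note that the pairing $\{b_2,-b_2\}$ cannot be excluded outright: at $\sigma=0$ one has $b_2=\sqrt{2}$, $a_2=\ii\sqrt{2}$, and the real segment $[-\sqrt{2},\sqrt{2}]$ is precisely such a short trajectory through the origin (there the second branch runs along the imaginary axis from $a_2$ to $-a_2$, so the conclusion still holds). What you must actually prove is that whenever one branch pairs $\{b_2,-b_2\}$, the other cannot escape to infinity; this requires an argument beyond the $8$-direction count --- for instance a Teichm\"uller computation for a polygon having $\infty$ as a vertex (whose order $-10$ contributes negatively, so your ``every vertex contributes at least $1$'' estimate no longer applies), or a sign/level analysis of $\Re[\eta_2]$ in the faces adjacent to the cut $\Ga_\sigma[a_2,b_2]$, where $\Re[\eta_2]>0$ on \emph{both} sides. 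As written, the proof is incomplete at this point.
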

	\begin{proof}
		This is necessary due to symmetry and to avoid too many (more than $8$) solutions of the equation $\Re[\eta_2(z;\sigma)]=0$ at $\infty$.
	\end{proof}

	\begin{corollary}\label{Cor 0 on critical graph}
		If for some $\sigma$ we have  $0 \in \mathscr{J}^{(2)}_{\sigma}$, then $\sigma \notin \mathcal{O}_2$.
	\end{corollary}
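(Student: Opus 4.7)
The plan is to argue by contradiction, assuming $0 \in \mathscr{J}^{(2)}_{\sigma}$ together with $\sigma \in \mathcal{O}_{2}$, and to derive a violation of the strong maximum principle for the harmonic function $\Re[\eta_{2}(\,\cdot\,;\sigma)]$. First I would invoke Lemma \ref{legs connect at criticality 2cut} to obtain a short critical arc $A := \ell^{(a_{2})}_{j} \cup \{0\} \cup \ell^{(-a_{2})}_{j}$ (with $j = 2$ or $j = 3$) joining $-a_{2}$ to $a_{2}$ through the origin, along which $\Re[\eta_{2}] \equiv 0$. Condition~(3) of Definition \ref{Def two cut sigma minus fake transition} provides a complementary arc $\Gamma := \Gamma_{\sigma}(-a_{2}, a_{2})$ contained in the open set $\{\Re[\eta_{2}] < 0\}$. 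Since $A$ lies on the critical graph while the interior of $\Gamma$ lies in the open stable region, the two arcs meet only at their common endpoints $\pm a_{2}$, so $C := A \cup \Gamma$ is a closed Jordan curve enclosing a bounded region $R \subset \C$.

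With the branch cuts of $\eta_{2}$ placed along the support $\Ga[\pm a_{2}, \pm b_{2}]$ (which does not meet $R$), the function $\Re[\eta_{2}]$ is harmonic on $R$. The maximum principle gives $\Re[\eta_{2}] \leq 0$ on $\bar{R}$, and since $\Re[\eta_{2}] < 0$ on the interior of $\Gamma \subset \partial R$, the strong maximum principle forces $\Re[\eta_{2}] < 0$ strictly throughout the interior of $R$ (otherwise $\Re[\eta_{2}]$ would be identically zero on $R$, contradicting strict negativity on $\Gamma$). The contradiction then comes from the local picture at $0$: because $Q_{2}(z;\sigma)$ vanishes to order two at the origin, one has $\eta_{2}(z;\sigma) - \eta_{2}(0;\sigma) \sim c\,z^{2}$ near $z = 0$, so the level set $\{\Re[\eta_{2}] = 0\}$ consists locally of two perpendicular lines through $0$. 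Two of the four critical trajectories emanating from $0$ comprise $A$ locally, while the remaining two form the perpendicular line through $0$; since $R$ lies on one side of $A$ in a neighborhood of $0$, one of those perpendicular trajectories emanates directly into $R$. That trajectory cannot be a sub-arc of $A$ (it is transverse to $A$ at $0$) nor of $\Gamma$ (where $\Re[\eta_{2}] < 0$), so a short initial segment of it lies in the open interior of $R$ with $\Re[\eta_{2}] \equiv 0$---contradicting the strict negativity just established.

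The main obstacle I anticipate is ensuring that the region $R$ genuinely contains no branch cut of $\eta_{2}$ and that $\partial R = A \cup \Gamma$ exactly, with no additional pieces of the critical graph clinging to the boundary. Both concerns are resolved by placing the branch cuts along the support $\Ga[\pm a_{2}, \pm b_{2}]$, which is disjoint from $R$ since $\Gamma$ joins $\pm a_{2}$ without passing through $\pm b_{2}$ and $A$ is a finite arc through $0$; a secondary technical point, that $A$ itself is a Jordan arc, follows from the standard fact that through each regular point of the quadratic differential $Q_{2}(z;\sigma)\,\dd z^{2}$ passes exactly one trajectory, so the constituent critical arcs cannot intersect away from the critical point $0$.
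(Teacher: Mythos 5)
Your proof is correct, but it supplies substantially more than the paper does: the paper's own proof of Corollary \ref{Cor 0 on critical graph} is a one-line appeal to Lemma \ref{legs connect at criticality 2cut}, asserting that the connection $\ell^{(a_2)}_j \cup \{0\} \cup \ell^{(-a_2)}_j$ ``has formed a barrier'' so that no complementary arc $\Ga_{\sigma}(-a_2,a_2)$ can stay in $\{\Re[\eta_2]<0\}$, with a pointer to a figure. You start from the same lemma but then actually prove the barrier property: you close up the putative complementary arc $\Ga$ with the short critical arc $A$ into a Jordan curve, run the maximum principle for $\Re[\eta_2]$ on the enclosed domain $R$, and derive a contradiction from the local four-fold trajectory structure at the double zero $z=0$ of $Q_2$. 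This is a genuine improvement in rigor over the paper's assertion, at the cost of having to police harmonicity of $\Re[\eta_2]$ on $R$.

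That policing is the one place your argument is loose. You claim the support $J_\sigma=\Ga[-b_2,-a_2]\cup\Ga[a_2,b_2]$ is disjoint from $R$ because it avoids $\partial R$ away from $\pm a_2$; but a priori the complementary arc $\Ga$ is only required to exist somewhere in the stable land, so $R$ could be large and could contain $\pm b_2$ together with the support arcs, on which $\Re[\eta_2]=0$ --- and then the \emph{strong} maximum principle step (strict negativity throughout $R$) breaks down. The fix is cheap and makes the strong maximum principle unnecessary: the weak maximum principle, applied on each component of $R\setminus J_\sigma$ (whose boundaries carry $\Re[\eta_2]\le 0$), already gives $\Re[\eta_2]\le 0$ on all of $R$; meanwhile, since $\eta_2(z;\sigma)-\eta_2(0;\sigma)\sim \tfrac{c}{2}z^2$ with $c\neq 0$, the four sectors at the origin carry alternating signs of $\Re[\eta_2]$, so the side of $A$ interior to $R$ contains a sector in which $\Re[\eta_2]>0$. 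Points of that sector lie in $R$ arbitrarily close to $0\notin J_\sigma$, contradicting $\Re[\eta_2]\le 0$ directly. With that adjustment your argument is airtight and, in my view, is the proof the paper's one-liner is implicitly relying on.
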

	\begin{proof}
		This is obvious now from the previous Lemma, since any path $\Ga_{\sigma}(-a_2,a_2)$ can not entirely lie in $\left\{ z : \Re \left[ \eta_2(z;\sigma) \right]<0 \right\}$ as $\mathscr{J}^{(2)}_{\sigma}$ has formed a barrier between $-a_2$ and $a_2$ (See Figure \ref{fig:stable barren -3 1.5}).
	\end{proof}
	
	\begin{lemma}\label{2cut is nonempty}
		Any $\sigma<-2$ belongs to $\mathcal{O}_2$.
	\end{lemma}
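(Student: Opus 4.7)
The starting point is that, for real $\sigma<-2$, formulae \eqref{em46} give $a_2(\sigma)=\sqrt{-\sigma-2}$ and $b_2(\sigma)=\sqrt{-\sigma+2}$, both real and satisfying $0<a_2(\sigma)<b_2(\sigma)$. Hence all five finite critical points $\{0,\pm a_2,\pm b_2\}$ of $Q_2(z;\sigma)\dd z^2$ lie on the real axis, and I will verify each of the three requirements of Definition~\ref{Def two cut sigma minus fake transition} separately.

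For requirement (1), on the real intervals $(a_2,b_2)$ and $(-b_2,-a_2)$ one has $Q_2(x;\sigma)=x^2(x^2-a_2^2)(x^2-b_2^2)<0$, so $Q_2(x;\sigma)\dd x^2<0$ there; these two segments are therefore short critical trajectories of $Q_2(z;\sigma)\dd z^2$, they sit inside $\mathscr{J}^{(2)}_\sigma$, and they supply the two disjoint Jordan arcs required. For requirement (2), on $(b_2,\infty)$ one has $Q_2(x;\sigma)>0$, hence $\sqrt{R(x)}$ is real positive and $\eta_2(x;\sigma)=-\int_{b_2}^{x}\sqrt{R(s)}\,\dd s<0$; the real half-line $(b_2,\infty)$ therefore lies in the unbounded component of $\{\Re\eta_2<0\}$ and may be chosen as $\Gamma_\sigma(b_2,\infty)$.

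The crux is requirement (3): producing an arc from $-a_2$ to $a_2$ inside $\{\Re\eta_2<0\}$. I will show $\Re\eta_{2,+}(x;\sigma)<0$ for every $x\in(-a_2,a_2)$, from which a slight upward deformation of the real segment $(-a_2,a_2)$ does the job. The cleanest route is to invoke \cite{BI}: for real $\sigma<-2$ the regular two-cut equilibrium measure on $[-b_2,-a_2]\cup[a_2,b_2]$ exists and the Euler-Lagrange inequality \eqref{em7} is strict on $(-a_2,a_2)$. Combining this with the identity $g_+(x;\sigma)=\tfrac{1}{2}\bigl(V(x;\sigma)+\ell_*^{(2)}+\eta_{2,+}(x;\sigma)\bigr)$ and the relation $\ell=-\Re\ell_*^{(2)}/2$ from \S~\ref{Sec two cut eq meas} translates the strict EL inequality directly into $\Re\eta_{2,+}(x;\sigma)<0$ on $(-a_2,a_2)$.

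A self-contained alternative is to compute $\Re\eta_{2,+}$ directly from the explicit formula in \S~\ref{Sec two cut eq meas} by analytic continuation from $x>b_2$, where $\eta_2$ is real negative, into the upper half-plane. For example, at $x=0$ one obtains $\Re\eta_{2,+}(0;\sigma)=\tfrac{\sigma\sqrt{\sigma^2-4}}{4}+\log\tfrac{-\sigma+\sqrt{\sigma^2-4}}{2}$, whose derivative in $\sigma$ equals $\tfrac{\sqrt{\sigma^2-4}}{2}>0$ for $\sigma<-2$ and which vanishes at $\sigma=-2$, so this quantity is strictly negative throughout $\sigma<-2$; an analogous computation, or a convexity/monotonicity argument in $x$, settles the remaining points of $(-a_2,a_2)$. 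I expect the main technical obstacle to be precisely the branch bookkeeping in this direct approach (the signs of $\sqrt{(z^2+\sigma)^2-4}$ and of the logarithm must be tracked along an upper-half-plane path from $b_2$ to the target), which is why appealing to the classical EL inequality via \cite{BI} is preferable.
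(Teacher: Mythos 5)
Your proposal is correct and follows essentially the same route the paper intends: the paper's proof of this lemma is just the one line ``identical to Lemma \ref{one cut is non-empty!}'', i.e.\ exhibit the real segments where $Q_2<0$ as short critical trajectories, take the real ray $(b_2,\infty)$ for the unbounded complementary arc, and verify the sign of $\Re[\eta_2]$ on the gap via the explicit formula (your direct computation of $\Re[\eta_{2}(0;\sigma)]$ and its $\sigma$-derivative $\tfrac{1}{2}\sqrt{\sigma^2-4}$ is exactly the analogue of the paper's use of \eqref{F 1 to 3}, matching \eqref{Function : phase transition 2 to 3} and \eqref{2 to 3 QD}). You in fact supply more detail than the paper does, and either of your two ways of handling the gap condition (the direct monotonicity computation, or citing the strict Euler--Lagrange inequality from \cite{BI}) is adequate.
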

	\begin{proof}
		This can be proven in an identical way as Lemma \ref{one cut is non-empty!} and we do not provide the details here. 
	\end{proof}
	

	\begin{lemma}\label{lemma delta 2-cut}
		Let $\sigma_0 \in \mathcal{O}_2$ and not on the branch cuts of $\Re[\eta_2(0;\sigma)]$. Then there exists $\de > 0$, such that for all $\sigma$ in the $\de$-neighborhood $\{ \sigma : |\sigma - \sigma_0|<\de\}$ of $\sigma_0$, the point at the origin does not lie on $\mathscr{J}^{(2)}_{\sigma}$.
		\begin{proof}
			Since $\sigma_0 \in \mathcal{O}_2$ we have $\Re[\eta_2(0;\sigma_0)] \neq 0$. Since the function $\Re[\eta_2(0;\sigma)]$ is continuous at $\sigma_0$, there exists $\de > 0$, such that the sign of $\Re[\eta_2(0;\sigma)]$ is the same as sign of $\Re[\eta_2(0;\sigma_0)]$ for all $\sigma$ in the $\de$-neighborhood $\{ \sigma : |\sigma - \sigma_0|<\de\}$ of $\sigma_0$. 
		\end{proof}
	\end{lemma}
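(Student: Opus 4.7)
The plan is to reduce the geometric statement $0 \notin \mathscr{J}^{(2)}_{\sigma}$ to a sign condition on the scalar function $\sigma \mapsto \Re[\eta_2(0;\sigma)]$, and then invoke elementary continuity. First I would note that by the very definition \eqref{level set 2} of the critical graph, the origin lies on $\mathscr{J}^{(2)}_{\sigma}$ if and only if $\Re[\eta_2(0;\sigma)] = 0$. Thus the conclusion of the lemma is equivalent to the claim that $\Re[\eta_2(0;\sigma)] \neq 0$ throughout some open disk about $\sigma_0$.

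Second, I would use the hypothesis $\sigma_0 \in \mathcal{O}_2$ together with Corollary \ref{Cor 0 on critical graph} in its contrapositive form: if $\sigma_0 \in \mathcal{O}_2$ then $0 \notin \mathscr{J}^{(2)}_{\sigma_0}$, and therefore $\Re[\eta_2(0;\sigma_0)] \neq 0$. At this point the assumption that $\sigma_0$ avoids the branch cuts of $\Re[\eta_2(0;\sigma)]$ (which arise from the formulas for $a_2(\sigma)$ and $b_2(\sigma)$ via \eqref{em46} and the integral representation \eqref{eta_2 def}) ensures that $\sigma \mapsto \Re[\eta_2(0;\sigma)]$ is real-analytic, in particular continuous, in a neighborhood of $\sigma_0$. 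Choosing $\delta > 0$ small enough that (i) the disk $\{\sigma : |\sigma-\sigma_0| < \delta\}$ stays off the branch cuts, and (ii) the oscillation of $\Re[\eta_2(0;\cdot)]$ over this disk is strictly less than $|\Re[\eta_2(0;\sigma_0)]|$, guarantees that $\Re[\eta_2(0;\sigma)]$ retains the sign of $\Re[\eta_2(0;\sigma_0)]$, and in particular stays nonzero, throughout that disk.

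There is no genuine obstacle in this argument, because we only need continuity of a single pointwise evaluation at $z = 0$ rather than continuity of the whole critical graph (the latter is the content of the heavier Lemma \ref{continuous deformations of two cut critical graph}). The branch-cut hypothesis is essential rather than cosmetic: without it, the jump of $\Re[\eta_2(0;\sigma)]$ across a cut could send the function from positive to negative within an arbitrarily small neighborhood of $\sigma_0$, and then the origin could land on $\mathscr{J}^{(2)}_{\sigma}$ for some nearby $\sigma$ even though it misses $\mathscr{J}^{(2)}_{\sigma_0}$.
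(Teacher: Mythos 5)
Your proposal is correct and follows essentially the same route as the paper: both reduce membership of the origin in $\mathscr{J}^{(2)}_{\sigma}$ to the vanishing of $\Re[\eta_2(0;\sigma)]$, note that $\sigma_0\in\mathcal{O}_2$ forces $\Re[\eta_2(0;\sigma_0)]\neq 0$, and conclude by continuity of $\sigma\mapsto\Re[\eta_2(0;\sigma)]$ off the branch cuts. Your added remarks (the contrapositive of Corollary \ref{Cor 0 on critical graph} and the role of the branch-cut hypothesis) merely make explicit what the paper leaves implicit.
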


	The points $\pm a_2$ and $\pm b_2$ are all simple zeros of the quadratic differential. So three critical trajectories emanate from each one. We denote the local critical arcs incident to $p$ by $\ell^{(p)}_{1}$, $\ell^{(p)}_{2}$, $\ell^{(p)}_{3}$ (labeled in counterclockwise direction), where $\ell^{(p)}_{1}$ is the critical arc emanating from $p$ which makes the connection prescribed in the first requirement of Definition \ref{Def two cut sigma minus fake transition}, $p=\pm a_2, \pm b_2$.
	
	\begin{lemma}\label{where things go two cut}
		Let $\sigma \in \mathcal{O}_2$. The critical arcs $\ell_2^{(-b_2)}$, $\ell_3^{(-b_2)}$, $\ell_2^{(-a_2)}$, $\ell_3^{(-a_2)}$,$\ell_2^{(a_2)}$, $\ell_3^{(a_2)}$,$\ell_2^{(b_2)}$, and $\ell_3^{(b_2)}$, respectively approach to infinity along the directions $7\pi/8$, $-7\pi/8$, $-5\pi/8$, $5\pi/8$, $3\pi/8$, $-3\pi/8$, $-\pi/8$, and $\pi/8$. Moreover, the $\sigma$-stable and $\sigma$-unstable lands having these critical arcs as boundaries are as given in Figure \ref{fig:fertile barren -3 1}, and in particular, $\Re[\eta_2(0;\sigma)]<0$. 
	\end{lemma}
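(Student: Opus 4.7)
The plan is to parallel the one-cut analysis carried out in Lemmas \ref{Lemma extension to infinity of critical trajectories}--\ref{lemma structure 3}, with the four simple zeros $\pm a_2,\pm b_2$ and the double zero at the origin of $Q_2(z;\sigma)\dd z^2$ replacing the simple-zero/double-zero pair $(\pm b_1,\pm z_0)$. First I would show that each of the eight arcs $\ell_2^{(\pm a_2)},\ell_3^{(\pm a_2)},\ell_2^{(\pm b_2)},\ell_3^{(\pm b_2)}$ extends to infinity. Because $\sigma\in\mathcal{O}_2$, Corollary \ref{Cor 0 on critical graph} forces $0\notin\mathscr{J}^{(2)}_\sigma$, so no $\ell_j^{(p)}$ can terminate at the origin. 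An $\ell_j^{(p)}$ terminating at its support partner (i.e.\ $b_2\leftrightarrow a_2$ or $-b_2\leftrightarrow -a_2$) would close a two-vertex singular $Q_2$-polygon with the corresponding support arc, contradicting Lemma \ref{Lemma one or two vetex polygons are ruled out two cut}, while termination at the same vertex closes a one-vertex polygon, equally excluded. Any remaining ``cross'' termination (for example $\ell_j^{(b_2)}$ to $\ell_k^{(-a_2)}$) would leave fewer than eight vertex-to-infinity arcs; together with the fact that the pole of order $10$ at $\infty$ produces exactly eight asymptotic directions $\{\pi/8+k\pi/4:k=0,\ldots,7\}$ for the level set $\{\Re[\eta_2]=0\}$ and with the $\eta_2(-z)=\eta_2(z)\pm 2\pi\ii$ symmetry, this would require an infinity-to-infinity branch of $\mathscr{J}^{(2)}_\sigma$ separating the stable lands in a way incompatible with the complementary arcs of Definition \ref{Def two cut sigma minus fake transition}. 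Thus the eight arcs fill the eight asymptotic directions bijectively.

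The core computation is then Teichm\"uller's lemma \eqref{Teich Lemma} applied to each geodesic polygon $\De^{(p)}=\ell_2^{(p)}\cup\ell_3^{(p)}$, with interior angle $2\pi/3$ at the finite simple-zero vertex $p$ and interior angle $\alpha$ at $\infty$ (pole of order $10$). The formula collapses to $\alpha=\pi(1+S)/4$, where $S$ is the sum of orders of singularities of $Q_2\dd z^2$ enclosed in $\De^{(p)}$. The complementary arc $\Ga_\sigma(b_2,\infty)$ leaves $b_2$ into the sector opposite $\ell_1^{(b_2)}$, i.e.\ into $\De^{(b_2)}$, and stays stable all the way out to $(M,\infty)$, so $\De^{(b_2)}$ is the narrow wedge hosting this arc and encloses no other finite singularity; hence $S=0$ and $\alpha=\pi/4$. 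By the $z\mapsto-z$ symmetry the same conclusion holds for $\De^{(-b_2)}$. Dually, $\Ga_\sigma(-a_2,a_2)$ must leave $a_2$ into $\De^{(a_2)}$ and terminate at $-a_2$, so $\De^{(a_2)}$ contains $-a_2$; it also contains $-b_2$, which lies on the support arc attached to $-a_2$ inside $\De^{(a_2)}$; and finally it contains $0$, by a connectedness argument using $0\notin\mathscr{J}^{(2)}_\sigma$ and the bijection of the eight arcs with the eight asymptotic rays. Therefore $S=1+1+2=4$ and $\alpha=5\pi/4$, and the same holds for $\De^{(-a_2)}$ by symmetry.

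Precise asymptotic directions then follow by matching these interior angles at $\infty$ with the CCW labeling and the eight allowed rays. The wedge $\De^{(b_2)}$ of angle $\pi/4$ must contain the asymptotic positive-real direction supported by $\Ga_\sigma(b_2,\infty)$, pinning $\ell_2^{(b_2)}\to-\pi/8$ and $\ell_3^{(b_2)}\to\pi/8$; the wide polygon $\De^{(a_2)}$ of angle $5\pi/4$ has its complementary $3\pi/4$-wedge centred on the positive-real direction, pinning $\ell_2^{(a_2)}\to 3\pi/8$ and $\ell_3^{(a_2)}\to-3\pi/8$. Reflection $z\mapsto-z$ delivers the four remaining directions. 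The stable/unstable lands of Figure \ref{fig:fertile barren -3 1} are then recovered by noting that each crossing of $\mathscr{J}^{(2)}_\sigma$ flips the sign of $\Re[\eta_2]$; since the origin can be connected to any point of $\Ga_\sigma(-a_2,a_2)$ without crossing $\mathscr{J}^{(2)}_\sigma$, it lies in the stable component, yielding $\Re[\eta_2(0;\sigma)]<0$. The main obstacle is the enclosure claim $0\in\De^{(a_2)}$: verifying it rigorously requires a careful topological argument that the eight asymptotic rays together with the two support arcs partition $\C$ in exactly one admissible way, so that the origin -- although not on $\mathscr{J}^{(2)}_\sigma$ -- is trapped inside $\De^{(\pm a_2)}$.
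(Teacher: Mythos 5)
Your overall strategy coincides with the paper's: show that all eight non-support arcs run off to infinity, apply Teichm\"uller's lemma \eqref{Teich Lemma} to the geodesic polygons bounded by them, and use the complementary-arc requirements of Definition \ref{Def two cut sigma minus fake transition} together with the $z\mapsto-z$ symmetry to pin down the asymptotic directions; your bookkeeping $\alpha=\pi(1+S)/4$ for a two-vertex polygon with a simple zero at the finite vertex is correct. The gap is exactly where you flag it, and it is not a minor one: every application of Teichm\"uller's lemma in your argument requires knowing which singular points lie in the interior of the polygon, and you assert rather than prove all of these enclosure statements. For $\De^{(b_2)}$ you claim $S=0$ because the polygon is ``the narrow wedge hosting'' $\Ga_{\sigma}(b_2,\infty)$ --- but this is circular, since whether that wedge is narrow ($\alpha=\pi/4$) or wide is precisely what $S$ determines. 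For $\De^{(a_2)}$ you need $0$, $-a_2$, $-b_2$ all in the interior (so that $S=4$), and your final sentence concedes that you cannot verify this.

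The paper closes both holes with a reflection argument that you never invoke: since $\mathscr{J}^{(2)}_{\sigma}$ is symmetric under $z\mapsto-z$, the arcs $\ell_{2,3}^{(b_2)}$ are the reflections of $\ell_{2,3}^{(-b_2)}$, so if $\De^{(b_2)}$ enclosed the origin (or any of the other endpoints) then so would $\De^{(-b_2)}$, which forces a critical trajectory emanating from $b_2$ to meet one emanating from $-b_2$ at a regular point --- impossible. This gives $S=0$ for $\De^{(\pm b_2)}$ directly, and the third requirement of Definition \ref{Def two cut sigma minus fake transition} then fixes the directions $\mp\pi/8$. Crucially, the paper never computes $S$ for $\De^{(a_2)}$ at all: it instead runs the same non-enclosure argument on the \emph{three-vertex} polygons with vertices $a_2$, $b_2$, $\infty$ (bounded by $\ell_1^{(b_2)},\ell_3^{(b_2)},\ell_2^{(a_2)}$, respectively $\ell_1^{(b_2)},\ell_2^{(b_2)},\ell_3^{(a_2)}$), concluding that $\ell_3^{(b_2)}$ and $\ell_2^{(a_2)}$ (resp.\ $\ell_2^{(b_2)}$ and $\ell_3^{(a_2)}$) are $\pi/4$ apart at infinity, which yields $\pm3\pi/8$; the origin is then located in the polygon with vertices $\pm a_2$ and $\infty$ by elimination, not by a direct enclosure proof. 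To repair your write-up, replace the direct enclosure claims by this reflection/intersection argument (for the two-vertex and three-vertex polygons) and obtain the position of the origin as a consequence rather than a hypothesis; as written, the proof does not go through.
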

	
	\begin{proof}
		It is easy to verify that no two critical arcs from the set $\boldsymbol{L}:=\{ \ell^{(p)}_{2}, \ell^{(p)}_{3} : p=\pm a_2, \pm b_2 \}$, can be connected to one another, as it would violate Lemma \ref{Lemma one or two vetex polygons are ruled out two cut}, or would lead to geodesic polygons with more than two vertices which are not allowed by Teichm\"uller's Lemma. Moreover, no critical arc from the set $\boldsymbol{L}$ can be connected to the origin due to the third requirement of the Definition \ref{Def two cut sigma minus fake transition}.  Therefore all critical arcs from the set $\boldsymbol{L}$ must approach infinity.
		
		Notice that the point at the origin can not be enclosed by the geodesic polygon with vertices $b_2$ and $\infty$ defined by $\ell_2^{(b_2)}$, $\ell_3^{(b_2)}$. This is because,
		due to the symmetry of the critical graph with respect to the origin,  $\ell_2^{(b_2)}$ and $\ell_3^{(b_2)}$ are respectively reflections of $\ell_2^{(-b_2)}$ and $\ell_3^{(-b_2)}$ through the origin. So if the the geodesic polygon with vertices $b_2$ and $\infty$ defined by $\ell_2^{(b_2)}$, $\ell_3^{(b_2)}$ encloses the origin, so does the geodesic polygon with vertices $-b_2$ and $\infty$ defined by $\ell_2^{(-b_2)}$, $\ell_3^{(-b_2)}$. But this would mean that there is an intersection between at least one ray emanating from $b_2$ and one ray emanating from $-b_2$ at a regular point, which is impossible. For a similar reason, one can show that the endpoints $-b_2,-a_2$ and $a_2$ can not be enclosed by the geodesic polygon with vertices $b_2$ and $\infty$ defined by $\ell_2^{(b_2)}$, $\ell_3^{(b_2)}$. Now the Teichm\"uller's Lemma implies that the critical rays $\ell_2^{(b_2)}$ and $\ell_3^{(b_2)}$ must approach $\infty$ along two directions $\pi/4$ radians apart. Therefore, in order to satisfy the third requirement of the Definition \ref{Def two cut sigma minus fake transition}, $\ell_2^{(b_2)}$ and $\ell_3^{(b_2)}$ must respectively approach to $-\pi/8$, and $\pi/8$. 
		
		Now, we notice that the geodesic polygon with three vertices $a_2, b_2$, and $\infty$, comprised of $\ell^{(b_2)}_1, \ell^{(b_2)}_3$, and $\ell^{(a_2)}_2$ can not enclose the origin, because in that case, by symmetry it would enforce the geodesic polygon with three vertices $-a_2, -b_2$, and $\infty$, comprised of $\ell^{(-b_2)}_1, \ell^{(-b_2)}_3$, and $\ell^{(-a_2)}_2$ also enclose the origin, which then implies the failure of the fourth requirement of the Definition \ref{Def two cut sigma minus fake transition} (See Figure \ref{fig:stable barren -3 1.6}). The same argument shows that the geodesic polygon with three vertices $a_2, b_2$, and $\infty$, comprised of $\ell^{(b_2)}_1, \ell^{(b_2)}_2$, and $\ell^{(a_2)}_3$ can not enclose the origin as well. The Teichm\"uller's lemma for these geodesic polygons now ensures that \textbf{a)} the critical rays $\ell^{(b_2)}_3$, and $\ell^{(a_2)}_2$ must approach infinity along two directions $\pi/4$ radians apart, and \textbf{b)} the critical rays  $\ell^{(b_2)}_2$ and $\ell^{(a_2)}_3$ must approach infinity along two directions $\pi/4$ radians apart. Due to what we have already found about $\ell_2^{(b_2)}$ and $\ell_3^{(b_2)}$, we immediately conclude that  $\ell_2^{(a_2)}$ and $\ell_3^{(a_2)}$, respectively approach to infinity along the directions $3\pi/8$ and $-3\pi/8$. The angles of approach to infinity for $\ell_2^{(-b_2)}$, $\ell_3^{(-b_2)}$, $\ell_2^{(-a_2)}$, and $\ell_3^{(-a_2)}$ are found by symmetry. 
		
		As shown above, the origin does not belong to any of the three geodesic polygons having $b_2$ as a common vertex, and due to symmetry, it does not belong to any of the three geodesic polygons having $-b_2$ as a common vertex. So the origin has to belong to the geodesic polygon with three vertices $\pm a_2$ and $\infty$ (see Figure \ref{fig:fertile barren -3 1}). By a straightforward conformal mapping argument similar to the one shown in Figure \ref{fig:Conf Map}, we can show that one has the $\sigma$-stable and $\sigma$-unstable lands as shown in Figure \ref{fig:fertile barren -3 1}, which in particular implies $\Re[\eta_2(0;\sigma)]<0$.
	\end{proof}
	
	\begin{lemma}\label{lemma connectivity 2 cut}
		Let $\sigma_0$ and $\de$ have the same meaning as in Lemma \ref{lemma delta 2-cut}. For any $\sigma$ in the $\de$-neighborhood of $\sigma_0$, the first requirement of Definition \ref{Def two cut sigma minus fake transition} is still met.
	\end{lemma}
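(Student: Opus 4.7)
The plan is to follow closely the strategy used in the one-cut setting (Lemma \ref{lemma connectivity}) and proceed by contradiction. Suppose for some $\hat{\sigma}$ in the $\delta$-neighborhood of $\sigma_0$ the first requirement of Definition \ref{Def two cut sigma minus fake transition} fails. By the symmetry of $\mathscr{J}^{(2)}_\sigma$ with respect to the origin, if the short arc connecting $a_2(\hat{\sigma})$ to $b_2(\hat{\sigma})$ is absent, so is the one connecting $-b_2(\hat{\sigma})$ to $-a_2(\hat{\sigma})$. I would then fix a path $\ga\colon [0,1]\to \{\sigma : |\sigma-\sigma_0|<\delta\}$ joining $\sigma_0$ to $\hat{\sigma}$ and invoke Lemma \ref{continuous deformations of two cut critical graph} to deform the critical graph continuously along $\ga$. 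At $\sigma_0$ the structure is pinned down by Lemma \ref{where things go two cut} (the two prescribed short trajectories plus eight rays to infinity in the directions $\pi/8 + k\pi/4$), while at $\hat{\sigma}$ these short trajectories are missing. Continuous deformation then forces the existence of a parameter $t_*\in(0,1)$ at which the combinatorial type of $\mathscr{J}^{(2)}_{\ga(t_*)}$ jumps.

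Such a jump corresponds to a saddle-connection event. By shrinking $\delta$ if necessary, I may assume that along $\ga$ the four simple zeros $\pm a_2(\sigma),\pm b_2(\sigma)$ remain uniformly separated from one another and from the origin; this is possible since $\sigma_0$ is bounded away from the branch points $\sigma=\pm 2$ at which $a_2 = b_2$ or $a_2 = 0$ or $b_2 = 0$. Consequently the only remaining finite critical point of $Q_2(z;\sigma)\dd z^2$ that could participate in a new saddle-connection event is the double zero at the origin. The key input from Lemma \ref{lemma delta 2-cut} is precisely that $0\notin\mathscr{J}^{(2)}_\sigma$ throughout the $\delta$-neighborhood, so no such event can occur at $\ga(t_*)$, producing the desired contradiction.

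The main obstacle is that, a priori, the combinatorial type of $\mathscr{J}^{(2)}_\sigma$ could deform through more subtle mechanisms than the origin joining the level set, for instance via two rays at infinity swapping their asymptotic directions, or via the sudden appearance of a short trajectory linking critical points across different cuts (say $a_2$ with $-a_2$, or $b_2$ with $-b_2$). I would rule these configurations out in the same spirit as the proof of Lemma \ref{lemma connectivity}: applying Teichm\"uller's lemma to any newly formed geodesic polygon, invoking Lemma \ref{Lemma one or two vetex polygons are ruled out two cut} to eliminate one- or two-vertex singular polygons, and using the asymptotic directions at infinity inherited from Lemma \ref{where things go two cut}. Every such alternative mechanism will be seen either to require the origin to lie on $\mathscr{J}^{(2)}_{\ga(t_*)}$ or to contradict the allowed count of eight directions at infinity for the pole of order $10$ at $\infty$, thereby reducing the entire argument to the single exclusion already provided by Lemma \ref{lemma delta 2-cut}.
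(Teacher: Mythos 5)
Your argument transplants the dynamic, path--deformation strategy of the one--cut connectivity result (Lemma \ref{lemma connectivity}) into the two--cut setting, whereas the paper's own proof of Lemma \ref{lemma connectivity 2 cut} is entirely static and much shorter: working directly at $\hat{\sigma}$, if neither prescribed Jordan arc exists then no local critical arc can terminate at the origin (since $\Re[\eta_2(0;\hat{\sigma})]\neq 0$ by the choice of $\de$), no two of the twelve local critical arcs emanating from the four simple zeros $\pm a_2,\pm b_2$ can be joined to one another (Teichm\"uller's lemma and Lemma \ref{Lemma one or two vetex polygons are ruled out two cut}), hence all twelve arcs must escape to infinity --- contradicting the fact that the pole of order $10$ at $\infty$ admits only eight asymptotic directions. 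The count $12>8$ is exactly what makes the deformation machinery unnecessary here; in the one--cut case the analogous count ($6$ rays plus the $2$ ends of a hump equals $8$) is consistent, which is why the path argument you are imitating is needed there. So your route is heavier than it has to be, and it additionally leans on the structural--stability input of Lemma \ref{continuous deformations of two cut critical graph} in an essential way.

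Beyond the comparison, there is one step in your version that is wrong as stated and whose repair you only sketch. All four simple zeros lie on $\{\Re[\eta_2]=0\}$ for \emph{every} $\sigma$ (indeed $\eta_2(b_2)=0$, $\eta_2(a_2)=\pm\pi\ii$, and the values at $-a_2,-b_2$ differ by $\pm 2\pi\ii$), so it is not true that ``the only remaining finite critical point that could participate in a new saddle--connection event is the double zero at the origin'': the combinatorial type can a priori reorganize through a new saddle connection among $\pm a_2,\pm b_2$ themselves. The dangerous configuration is the symmetric pair $b_2\leftrightarrow -a_2$ and $-b_2\leftrightarrow a_2$: it consumes only four of the twelve arcs and leaves exactly eight for infinity, so it \emph{cannot} be excluded by the count of directions at infinity, contrary to what your last paragraph suggests. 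It must be killed by an actual Teichm\"uller computation --- e.g.\ the resulting four--vertex geodesic polygon with simple--zero vertices forces $\sum_v\theta(v)=4\pi/3$ if it omits the origin and $\sum_v\theta(v)=0$ if it encloses it, both impossible since each interior angle at a simple zero is at least $2\pi/3$. To be fair, the paper's exclusion of mutual connections is asserted with comparable brevity, so this is a shared soft spot; but as long as you keep the deformation route you must actually carry out this classification of degeneration events, whereas the static count gives the contradiction immediately once that exclusion is in place.
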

	\begin{proof}
		For the sake of arriving at a contradiction, assume that for some $\hat{\sigma}$ in the $\de$-neighborhood of $\sigma_0$ there is no connection between $-b_2(\hat{\sigma})$ to $-a_2(\hat{\sigma})$, and thus, due to symmetry, no connection between $a_2(\hat{\sigma})$ to $b_2(\hat{\sigma})$.
		By the choice of $\de$, for all $\sigma$ in the $\de$-neighborhood of $\sigma_0$, in particular for $\hat{\sigma}$, the point at the origin does not lie on $\mathscr{J}^{(2)}_{\sigma}$. Therefore $\Re[\eta_2(0;\hat{\sigma})] \neq 0$. This means that all critical arcs in the set $\{ \ell^{(p)}_{1}, \ell^{(p)}_{2}, \ell^{(p)}_{3} : p=\pm a_2, \pm b_2 \}$ must approach to infinity (again, it is easy to observe that no two critical arcs in the set  $\{ \ell^{(p)}_{1}, \ell^{(p)}_{2}, \ell^{(p)}_{3} : p=\pm a_2, \pm b_2 \}$ can be connected to one another, as it would violate the Teichm\"uller's lemma). But this would mean one has twelve solutions at $\infty$, which is a contradiction.
	\end{proof}

	\begin{lemma}\label{lemma complementary contour from b_2 to infinity}
		Let $\sigma_0$ and $\de$ have the same meaning as in Lemma \ref{lemma delta 2-cut}. For any $\sigma$ in the $\de$-neighborhood of $\sigma_0$, the third and the fourth requirements of Definition \ref{Def two cut sigma minus fake transition} are still met.
	\end{lemma}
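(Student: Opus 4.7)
The plan is to leverage the structural description of the $\sigma_0$-stable lands from Lemma \ref{where things go two cut} together with the no-crossing guarantee of Lemma \ref{lemma delta 2-cut} and the persistence of the short-trajectory connections from Lemma \ref{lemma connectivity 2 cut}, and then to show that these together force the topological type of $\mathscr{J}^{(2)}_\sigma$ and of its complementary stable lands to remain unchanged throughout a sufficiently small neighborhood of $\sigma_0$. First, I would possibly shrink $\delta$ so that, in addition to the conclusions of Lemma \ref{lemma delta 2-cut}, the endpoints $\pm a_2(\sigma), \pm b_2(\sigma)$ given by \eqref{em46} remain analytic, distinct from each other, and distinct from $0$ for $|\sigma-\sigma_0|<\delta$.

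Next I would argue that for every such $\sigma$ the critical graph $\mathscr{J}^{(2)}_\sigma$ has the same topological configuration as $\mathscr{J}^{(2)}_{\sigma_0}$, namely the one described in Lemma \ref{where things go two cut}. Indeed, by Lemma \ref{lemma connectivity 2 cut} the two short connections $\Gamma_\sigma[-b_2,-a_2]$ and $\Gamma_\sigma[a_2,b_2]$ persist; by Lemma \ref{lemma delta 2-cut} the origin does not meet $\mathscr{J}^{(2)}_\sigma$; the eight infinite critical rays $\ell_2^{(p)},\ell_3^{(p)}$, $p\in\{\pm a_2,\pm b_2\}$, must each still approach infinity, and their admissible asymptotic directions form the rigid discrete set $\{\pm\pi/8,\pm 3\pi/8,\pm 5\pi/8,\pm 7\pi/8\}$ determined by $Q_2$ at infinity. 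Any rearrangement would require either a new short trajectory to form between two endpoints (forbidden by Lemma \ref{Lemma one or two vetex polygons are ruled out two cut} and the Teichm\"uller argument already used in Lemma \ref{where things go two cut}), a ray to run into the origin (forbidden by Lemma \ref{lemma delta 2-cut}), or a discrete jump in an asymptotic direction (forbidden by continuity together with the rigidity of the admissible directions). Combined with the continuous deformation statement of Lemma \ref{continuous deformations of two cut critical graph}, this forces $\C\setminus\mathscr{J}^{(2)}_\sigma$ to have the same connected components as at $\sigma_0$, with the sign of $\Re[\eta_2(\cdot;\sigma)]$ on each component varying continuously in $\sigma$. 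In particular, the unbounded component containing a neighborhood of $(M(\sigma_0),\infty)$ and the bounded component containing $0$ both remain subsets of $\{z:\Re[\eta_2(z;\sigma)]<0\}$.

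Finally, I would construct the complementary contours as small perturbations of those chosen at $\sigma_0$: pick $\Gamma_\sigma(b_2(\sigma),\infty)$ as a curve starting at $b_2(\sigma)$ and lying in the unbounded $\sigma$-stable component (e.g.\ following an orthogonal trajectory of $Q_2$ out to infinity along the positive real direction), and pick $\Gamma_\sigma(-a_2(\sigma),a_2(\sigma))$ as a curve through the origin inside the bounded $\sigma$-stable component. Continuity of $a_2(\sigma), b_2(\sigma)$ and openness of the stable components make these choices valid uniformly in the $\delta$-neighborhood. The main obstacle is the bifurcation analysis in the middle step, i.e.\ exhausting the ways in which the critical graph could topologically reorganize; but with the forbidden events enumerated above, and with the angles at infinity pinned to a rigid grid, no other configuration is possible and the stable-land picture of Figure \ref{fig:fertile barren -3 1} survives intact.
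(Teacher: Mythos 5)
Your proposal is correct and takes essentially the same route as the paper: the paper's proof likewise argues that, by the continuous deformation of $\mathscr{J}^{(2)}_{\sigma}$ (Lemma \ref{continuous deformations of two cut critical graph}) and the fact that the trajectories cannot hit the origin for $|\sigma-\sigma_0|<\de$ (Lemma \ref{lemma delta 2-cut}), the critical graph retains the structure of Figure \ref{fig:fertile barren -3 1}, so the required complementary arcs persist. Your write-up simply makes explicit the rigidity ingredients (Teichm\"uller exclusion of new short trajectories, the fixed grid of asymptotic directions) that the paper leaves implicit via its earlier Lemma \ref{where things go two cut}.
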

	
	\begin{proof}
		Due to continuous deformation of $\mathscr{J}^{(2)}_{\sigma}$ with respect to $\sigma$, as $\sigma$ varies from $\sigma_0$ in the $\de$-neighborhood of $\sigma_0$, the critical trajectories $\ell^{(\pm a_2)}_2$, $\ell^{(\pm a_2)}_3$, $\ell^{(\pm b_2)}_2$, and $\ell^{(\pm b_2)}_3$ continuously deform without hitting the origin. This ensures that one has the same structure for the critical graph as shown in Figure \ref{fig:fertile barren -3 1}. Thus, the third and the fourth requirements of the Definition \ref{Def two cut sigma minus fake transition} are still met.
	\end{proof}

	Lemmas \ref{lemma delta 2-cut}, \ref{lemma connectivity 2 cut}, and \ref{lemma complementary contour from b_2 to infinity} together imply Theorem \ref{thm 2cut is open}.
	
	\subsection{The Three-cut Regime}
	
	The quadratic differential for the three-cut regime is
	\begin{equation}\label{ThreeCut QD}
	Q_3(z;\sigma) \dd z^2:=\left(z^2-a^2_3(\sigma)\right)\left(z^2-b^2_3(\sigma)\right)\left(z^2-c^2_3(\sigma)\right) \dd z^2.
	\end{equation}
	Also denote
	\begin{equation}
	\eta_3(z;\sigma) := \int_{c_3}^{z} \sqrt{\left(s^2-a^2_3(\sigma)\right)\left(s^2-b^2_3(\sigma)\right)\left(s^2-c^2_3(\sigma)\right)} \dd s
	\end{equation}
	
	Identical to the one-cut quadratic differential \eqref{OneCut QD}, we can show that the three-cut critical trajectories (solutions of $\Re[\eta_3(z;\sigma)]=0$) approach to infinity along the eight directions
	$$\{ \pi/8 + k \pi/4: k=0,\cdots,7\}.$$

	\begin{definition}\label{Def three cut sigma}
		Define the subset
		$\mathcal{O}_{3}$ in the $\sigma$-plane as the collection of all $\sigma \in \C$ such that the points $a_3(\sigma) \neq 0$, $b_3(\sigma)$, and $c_3(\sigma)$ as solutions of \eqref{em52}, \eqref{3cut gap condition}, and \eqref{b3 on the same level set as c3 21} are distinct and
		\begin{enumerate}
			\item The critical graph $\mathscr{J}^{(3)}_{\sigma}$ of all points $z$ satisfying \begin{equation}\label{level set 3}
			\Re \left[ \eta_3(z;\sigma) \right]=0,
			\end{equation}
			contains a single Jordan arc connecting $-c_3(\sigma)$ to $-b_3(\sigma)$, a single Jordan arc connecting $-a_3(\sigma)$ to $a_3(\sigma)$, and a single Jordan arc connecting $b_3(\sigma)$ to $c_3(\sigma)$.
			\item There exists a complementary arc $\Ga_{\sigma}(c_3(\sigma), \infty)$ which lies entirely in the component of the set \begin{equation}
			\left\{ z : \Re \left[ \eta_3(z;\sigma) \right]<0 \right\},
			\end{equation} which encompasses $(M(\sigma),\infty)$ for some $M(\sigma)>0$.
			\item There exists a complementary arc $\Ga_{\sigma}(a_3(\sigma), b_3(\sigma))$ which lies entirely in the component of the set \begin{equation}
			\left\{ z : \Re \left[ \eta_3(z;\sigma) \right]<0 \right\}.
			\end{equation} 
		\end{enumerate}
	\end{definition}

	\begin{theorem}
		$\mathcal{O}_3$ is an open set.
	\end{theorem}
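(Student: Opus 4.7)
My plan is to prove the openness of $\mathcal{O}_3$ by following the same template that established Theorem \ref{O1* is open} and Theorem \ref{thm 2cut is open}, adapted to the three-cut configuration with its six simple critical points $\pm a_3(\sigma), \pm b_3(\sigma), \pm c_3(\sigma)$. Given $\sigma_0\in\mathcal{O}_3$, I need to produce $\delta>0$ such that each of the three requirements of Definition \ref{Def three cut sigma} persists on $\{\sigma:|\sigma-\sigma_0|<\delta\}$. The starting input is twofold: the real-analytic dependence of the endpoints on $\Re\sigma,\Im\sigma$ (stated in the remark after Theorem \ref{thm 1.2} and proven in \cite{BBGMT2}), which keeps the six endpoints distinct, with $a_3\neq 0$, on a small enough neighborhood of $\sigma_0$; and the continuous deformation of the critical graph $\mathscr{J}^{(3)}_\sigma$ with respect to $\sigma$, which is the three-cut analog of Lemma \ref{continuous deformations of one cut critical graph} (Theorem 3 of \cite{BBGMT2}). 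Combined with the symmetry identity $\eta_3(-z;\sigma)=\eta_3(z;\sigma)\pm 2\pi\ii$, the critical graph remains symmetric under deformation.

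The key structural step is to classify, at $\sigma_0$, the eighteen local critical rays emanating from the six simple zeros (three per endpoint, at mutual angles $2\pi/3$). Three of them form the short cuts prescribed by the first requirement. For the remaining twelve, I would rule out one- and two-vertex singular geodesic polygons by the same Teichmüller-lemma argument used in Lemma \ref{Lemma one or two vetex polygons are ruled out} and Lemma \ref{Lemma one or two vetex polygons are ruled out two cut}, noting $\mathrm{ord}(\pm a_3)=\mathrm{ord}(\pm b_3)=\mathrm{ord}(\pm c_3)=1$ and that \eqref{ThreeCut QD} has no finite poles. After taking into account the short cuts and the symmetry, the remaining rays must extend to infinity along the eight admissible directions $\pi/8+k\pi/4$, $k=0,\ldots,7$. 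A direct adaptation of the argument in Lemma \ref{where things go two cut}, which uses Teichmüller's lemma on the finite geodesic polygons sharing a common vertex at each $\pm c_3$ and at each $\pm b_3$, pins down the assignment of the outgoing directions and consequently the stable-unstable land structure around $\sigma_0$. In particular, the second and third requirements of Definition \ref{Def three cut sigma} hold at $\sigma_0$ because the complementary contours can be placed in the prescribed stable land components.

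With this structural picture in hand, the openness statement is obtained by a trio of lemmas strictly parallel to Lemmas \ref{lemma delta 2-cut}, \ref{lemma connectivity 2 cut}, and \ref{lemma complementary contour from b_2 to infinity}. First, continuity of the endpoints and of $\mathscr{J}^{(3)}_\sigma$ ensures that no short cut can suddenly break on a small neighborhood, since breaking would require either a collision of endpoints (excluded by real-analyticity and distinctness) or a new configuration with too many incoming rays at infinity (excluded by the count $8$ of admissible directions and Teichmüller's lemma). Second, continuous deformation of the stable lands ensures that $\Ga_\sigma(c_3,\infty)$ and $\Ga_\sigma(a_3,b_3)$ persist. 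Third, the gap conditions \eqref{3cut gap condition} and \eqref{b3 on the same level set as c3 21} remain satisfied by construction of the endpoints.

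The main obstacle, and the point where this proof differs substantively from the one- and two-cut cases, is that the endpoints in the three-cut regime are not given by explicit algebraic formulas in $\sigma$ — they are determined implicitly by the polynomial equations \eqref{em52} together with the transcendental gap conditions \eqref{3cut gap condition} and \eqref{b3 on the same level set as c3 21}. Establishing continuous (in fact real-analytic) dependence of $a_3,b_3,c_3$ on $(\Re\sigma,\Im\sigma)$ in the face of these integral constraints is nontrivial; I would invoke the general result from \cite{BBGMT2} rather than reprove it here. An alternative route, should one wish to be self-contained, is an implicit function theorem argument on the six real equations defining the endpoints, verifying nondegeneracy of the Jacobian at regular points of $\mathcal{O}_3$; but in view of the cited result, the cleanest presentation is the approach above, after which the remainder of the argument mirrors the proof of Theorem \ref{thm 2cut is open} almost line by line.
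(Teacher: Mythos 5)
Your proposed route is entirely different from the one the paper actually takes, and it contains a concrete gap. The paper's proof is a short indirect argument: if $\sigma_0\in\mathcal{O}_3$ were not an interior point, there would be a sequence $\sigma_k\to\sigma_0$ with $\sigma_k\notin\mathcal{O}_3$; by existence and uniqueness of the equilibrium measure each such $\sigma_k$ lies in $\overline{\mathcal{O}_1}\cup\overline{\mathcal{O}_2}$, and after passing to a subsequence and a diagonal refinement one obtains a sequence entirely inside the open set $\mathcal{O}_1$ (or $\mathcal{O}_2$) converging to $\sigma_0$, forcing $\sigma_0\in\overline{\mathcal{O}_1}\setminus\mathcal{O}_1$. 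By the characterization of that boundary (the curves I, XII, VII, IX), this forces $a_3(\sigma_0)=b_3(\sigma_0)$ or $b_3(\sigma_0)=c_3(\sigma_0)$, contradicting the distinctness of endpoints required in Definition \ref{Def three cut sigma}. This sidesteps any structural analysis of the three-cut critical graph, at the price of relying on the already-established openness of $\mathcal{O}_1$ and $\mathcal{O}_2$ and on Theorems \ref{THM ONE CUT} and \ref{THM 1 2}.

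The gap in your direct approach is in the structural classification step. You assert that, after removing the six local rays absorbed by the three short cuts, ``the remaining rays must extend to infinity along the eight admissible directions $\pi/8+k\pi/4$.'' But the quadratic differential \eqref{ThreeCut QD} has six simple zeros, hence eighteen local critical rays; the three cuts use six of them, leaving twelve, while the pole of order ten at infinity admits only eight critical directions. Unlike the one-cut case (where $\pm z_0$ sit on a different level set when $\sigma\in\mathcal{O}_1^*$) and the two-cut case (where the origin's four rays are off the level set $\Re\eta_2=0$), here \emph{all six} zeros lie on the single level set $\Re\eta_3=0$ by the gap conditions \eqref{3cut gap condition} and \eqref{b3 on the same level set as c3 21}. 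So four of the twelve remaining rays must pair into additional finite critical trajectories whose location you would have to determine, and the Teichm\"uller-lemma bookkeeping from Lemmas \ref{Lemma one or two vetex polygons are ruled out two cut} and \ref{where things go two cut} does not transfer ``almost line by line.'' Until that extra connectivity is pinned down, neither the claimed stable/unstable land structure nor the persistence of the complementary arcs $\Ga_\sigma(c_3,\infty)$ and $\Ga_\sigma(a_3,b_3)$ under perturbation is established. Your reliance on \cite{BBGMT2} for continuity and real-analyticity of the implicitly defined endpoints is legitimate and matches what the paper itself cites, but it does not repair this counting issue.
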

	\begin{proof}
		Let $\sigma_0 \in \mathcal{O}_3$. For the sake of arriving at a contradiction, let us assume that there is no neighborhood of $\sigma_0$ consisting only of $\sigma \in \mathcal{O}_3$. This means that there exists a sequence $\{\sigma_k\}^{\infty}_{k=1}$ converging to $\sigma_0$, so that $\sigma_k \notin \mathcal{O}_3$. Since for all $\sigma$, the equilibrium measure and the Riemann-Hilbert contour exists and is unique (\cite{KS} uniqueness in the gaps are up to homotopy) $\sigma_k$ belongs to $\overline{\mathcal{O}_1} \cup \overline{\mathcal{O}_2}$. Therefore there is a subsequence $\{\hat{\sigma}_j\}^{\infty}_{j=1}$ of $\{\sigma_k\}^{\infty}_{k=1}$ convergent to $\sigma_0$, with  $\hat{\sigma}_j$ either all belong to $\overline{\mathcal{O}_1}$ or all belong to $\overline{\mathcal{O}_2}$. Without loss of generality, let us assume that $\hat{\sigma}_j$ all belong to $\overline{\mathcal{O}_1}$. Now consider a subsequence $\{\tilde{\sigma}_{\ell}\}^{\infty}_{\ell=1}$ of $\{\hat{\sigma}_j\}^{\infty}_{j=1}$ convergent to $\sigma_0$ so that all $\tilde{\sigma}_{\ell}$ belong to $\mathcal{O}_1$. Notice that we can always choose such a sequence, because even if there are infinitely many members of $\{\hat{\sigma}_j\}^{\infty}_{j=1}$ belonging to $\overline{\mathcal{O}_1} \setminus \mathcal{O}_1$, for each $j$ we can consider a sequence $\{ \sigma^{(j)}_m \}^{\infty}_{m=1} \subset \mathcal{O}_1$ convergent to $\hat{\sigma}_j$, and then via a diagonal process we can choose a sequence entirely in $\mathcal{O}_1$ convergent to $\sigma_0$. But since $\mathcal{O}_1$ is open, a sequence entirely in $\mathcal{O}_1$ can only converge to $\sigma_0 \notin \mathcal{O_1}$, only if $\sigma_0 \in \overline{\mathcal{O}_1} \setminus \mathcal{O}_1$. But if $\sigma_0 \in$ I $\cup$ XII we know that $a_3(\sigma_0)=b_3(\sigma_0)$, and if $\sigma_0 \in$ VII $\cup$ IX we know that $b_3(\sigma_0)=c_3(\sigma_0)$ (See Figure \ref{fig:Phase Diagram}), in either case we would have $\sigma_0 \notin \mathcal{O}_3$, which is a contradiction.
	\end{proof}
	
	\subsection{Evolution of the Critical Graphs and the Support of the Equilibrium Measure Through Phase Transitions.}

	The critical contours in Figure \ref{fig:Phase Diagram} divide the complex $\sigma$-plane into the one-cut, two-cut and three-cut regimes. We observe that the phase transition from the one-cut to the two-cut regime occurs only through the multi-critical point at $\sigma=-2$. Indeed, in the following figures one can see how the support of the equilibrium measure splits into two symmetric cuts as $\sigma$ is altered from one-cut regime through $\sigma=-2$ into the two-cut regime:
	
	\begin{figure}[h]
		\centering
		\begin{subfigure}{0.24\textwidth}
			\centering
			\includegraphics[width=\textwidth]{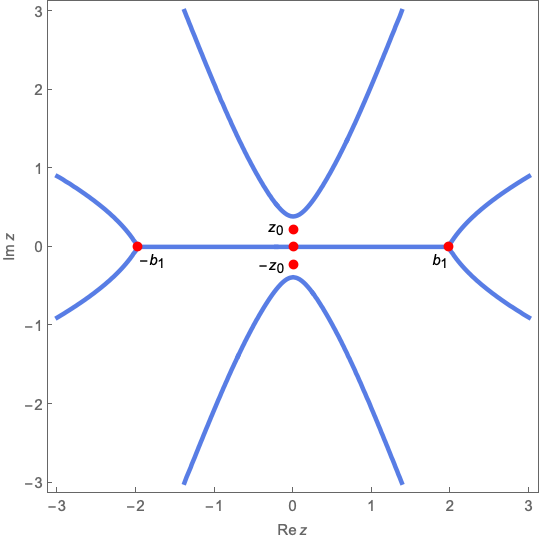}
			\caption{$\mathscr{J}^{(1)}_{\sigma_1}$ at $\sigma_1=-1.9$.}
			\label{fig1-3birth:sigma=-1.9}
		\end{subfigure}%
		\hfill
		\begin{subfigure}{.24\textwidth}
			\centering
			\includegraphics[width=\textwidth]{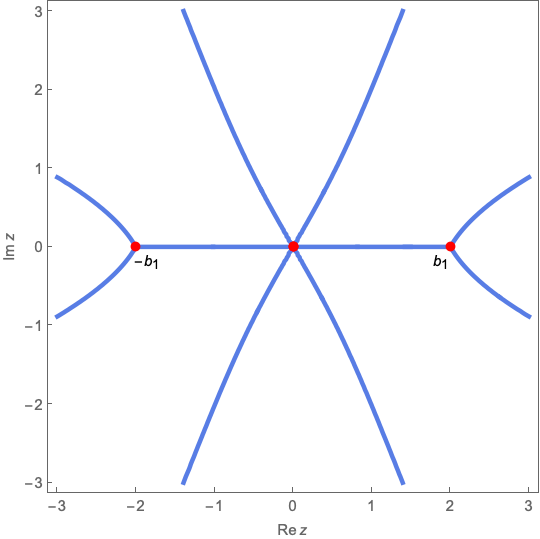}
			\caption{$\mathscr{J}^{(1)}_{\sigma_{\mbox{\tiny cr}}}$ (or $\mathscr{J}^{(2)}_{\sigma_{\mbox{\tiny cr}}}$) at $\sigma_{\mbox{\tiny cr}}=-2$}
			\label{fig1-3birth:sigma=-2}    
		\end{subfigure}
		\hfill
		\begin{subfigure}{.24\textwidth}
			\centering
			\includegraphics[width=\textwidth]{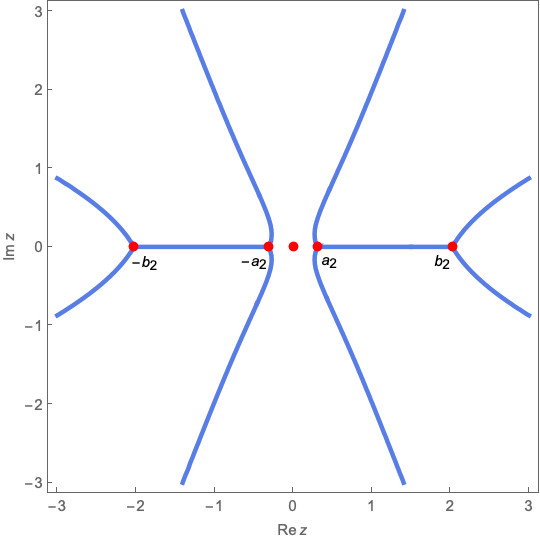}
			\caption{$\mathscr{J}^{(2)}_{\sigma_2}$ at $\sigma_2=-2.1$}
			\label{fig1-3birth:sigma=-2.1}    
		\end{subfigure}%
		\caption{Snapshots of the continuous evolution of the critical graph $\mathscr{J}^{(1)}_{\sigma_1}$ to the critical graph $\mathscr{J}^{(2)}_{\sigma_2}$ as $\sigma$ varies from $\sigma_1=-1.9$ to $\sigma_2=-2.1$ through the multi-critical point $\sigma_{\mbox{\tiny cr}}=-2$ (locate the $\sigma$-values in Figure \ref{fig:Phase Diagram}). At the critical value, just before the split, the point $z_0$ gets trapped at the origin between different portions of the critical graph.}
		\label{fig:1-2birth}
	\end{figure}
	
	Figures \ref{fig:1-3splitt}, and \ref{fig:2-3birth} below show how the critical graph continuously evolves (see Lemmas \ref{continuous deformations of one cut critical graph} and \ref{continuous deformations of two cut critical graph}) as $\sigma$ changes from a non-critical real value to a critical value.
	
	\begin{figure}[h]
		\centering
		\begin{subfigure}{0.24\textwidth}
			\centering
			\includegraphics[width=\textwidth]{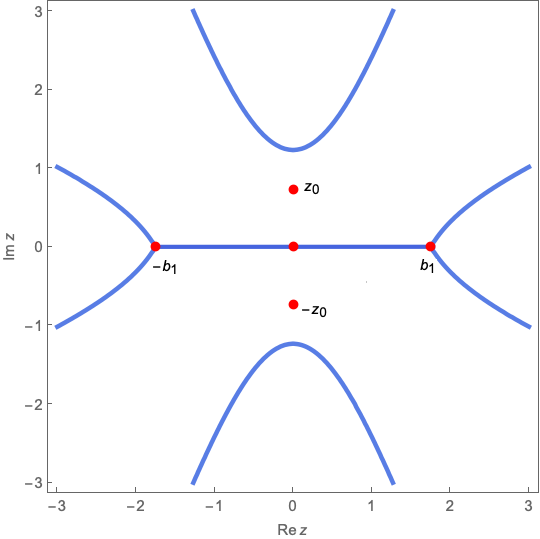}
			\caption{$\sigma=-1$}
			\label{fig1-3birth:sigma=-1}
		\end{subfigure}%
		\begin{subfigure}{.24\textwidth}
			\centering
			\includegraphics[width=\textwidth]{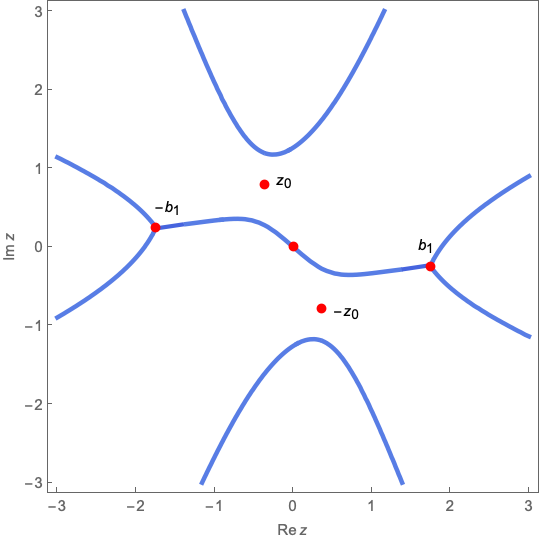}
			\caption{$\sigma=-1+\ii$}
			\label{fig1-3birth:sigma=-1+i}    
		\end{subfigure}
		\begin{subfigure}{.24\textwidth}
			\centering
			\includegraphics[width=\textwidth]{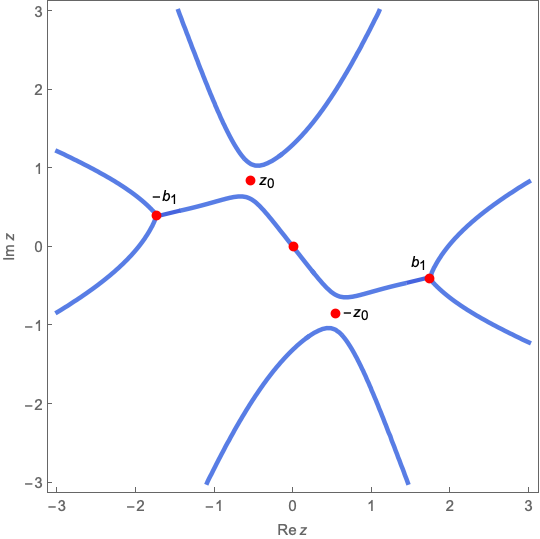}
			\caption{$\sigma=-1+1.6\ii$}
			\label{fig1-3birth:sigma=-1+1.6i}    
		\end{subfigure}%
		\begin{subfigure}{.24\textwidth}
			\centering
			\includegraphics[width=\textwidth]{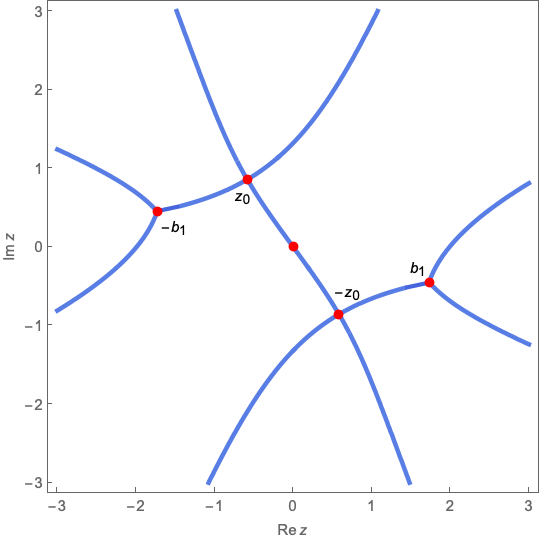}
			\caption{$\sigma_{\mbox{\tiny cr}}\simeq-1+1.7795\ii$}
			\label{fig1-3birth:sigma=-1+1.779i}    
		\end{subfigure}
		\caption{Snapshots of the continuous evolution of the critical graph $\mathscr{J}^{(1)}_{\sigma}$ as $\sigma$ changes from $-1$ in the vertical direction up to the critical value $\sigma_{\mbox{\tiny cr}}\simeq-1+1.7795\ii$ (locate the $\sigma$-values in Figure \ref{fig:Phase Diagram}). At the critical value, just before the split of the support of the equilibrium measure, the point $z_0$ gets trapped between various portions of the critical graph.}
		\label{fig:1-3splitt}
	\end{figure}
	
	\begin{figure}[h]
		\centering
		\begin{subfigure}{0.24\textwidth}
			\centering
			\includegraphics[width=\textwidth]{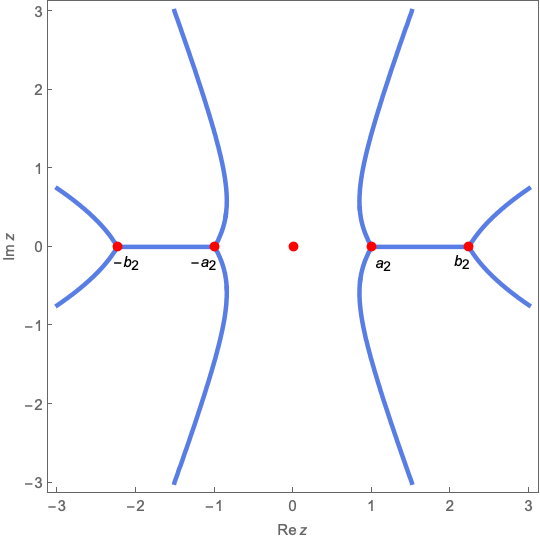}
			\caption{$\sigma=-3$}
			\label{fig1-3birth:sigma=-3}
		\end{subfigure}%
		\begin{subfigure}{.24\textwidth}
			\centering
			\includegraphics[width=\textwidth]{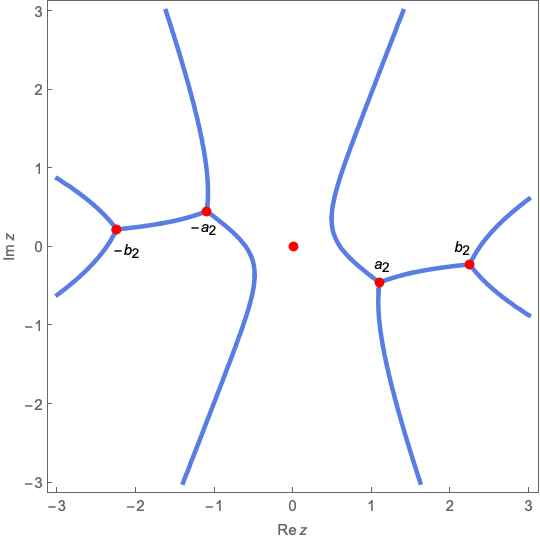}
			\caption{$\sigma=-3+\ii$}
			\label{fig1-3birth:sigma=-3+i}    
		\end{subfigure}
		\begin{subfigure}{.24\textwidth}
			\centering
			\includegraphics[width=\textwidth]{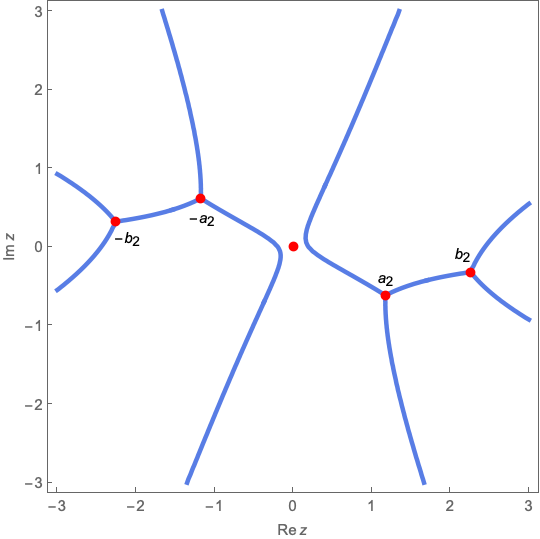}
			\caption{$\sigma=-3+1.45\ii$}
			\label{fig1-3birth:sigma=-3+1.45i}    
		\end{subfigure}%
		\begin{subfigure}{.24\textwidth}
			\centering
			\includegraphics[width=\textwidth]{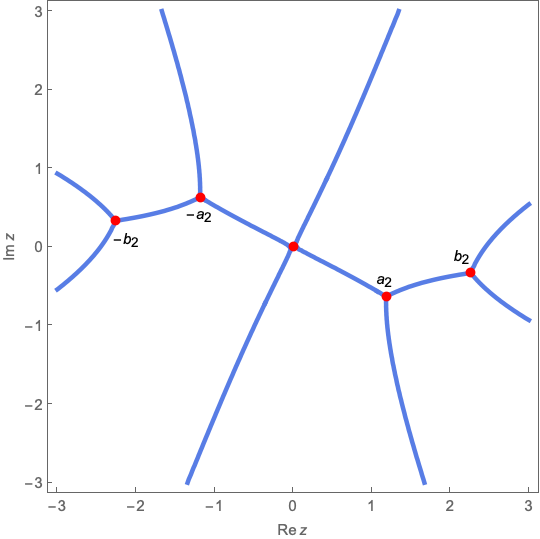}
			\caption{$\sigma_{\mbox{\tiny cr}}\simeq-3+1.5025\ii$}
			\label{fig1-3birth:sigma=-3+1.5i}    
		\end{subfigure}
		\caption{Snapshots of the continuous evolution of the critical graph $\mathscr{J}^{(2)}_{\sigma}$ as $\sigma$ changes from $-3$ in the vertical direction up to the critical value $\sigma_{\mbox{\tiny cr}}\simeq-3+1.5025\ii$ (locate the $\sigma$-values in Figure \ref{fig:Phase Diagram}). At the critical value, just before the birth of a cut at the origin, the origin gets trapped between different portions of the critical graph.}
		\label{fig:2-3birth}
	\end{figure}
	
	
	In Figures \ref{fig:1-3splitSEQM} through \ref{fig:2-3birthSEQM} we show how the support of the equilibrium measure evolves when it is altered from pre-critical one-cut or two-cut values to post-critical three-cut ones.
	
	\begin{figure}[h]
		\centering
		\begin{subfigure}{0.19\textwidth}
			\centering
			\includegraphics[width=\textwidth]{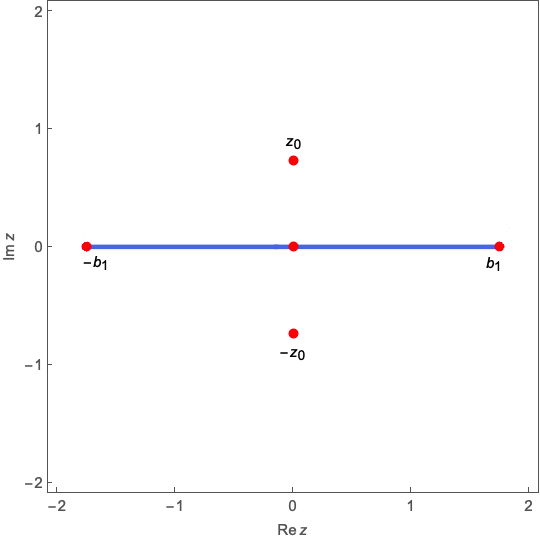}
			\caption{\tiny $\sigma=-1$}
			\label{fig1-3splitSEQM:sigma=-1}
		\end{subfigure}%
		\begin{subfigure}{.19\textwidth}
			\centering
			\includegraphics[width=\textwidth]{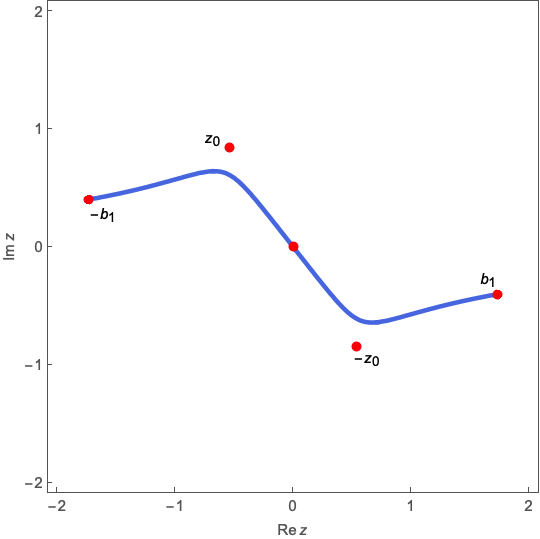}
			\caption{\tiny $\sigma=-1+1.6\ii$}
			\label{fig1-3split:sigmaSEQM=-1+1.6i}    
		\end{subfigure}
		\begin{subfigure}{.19\textwidth}
			\centering
			\includegraphics[width=\textwidth]{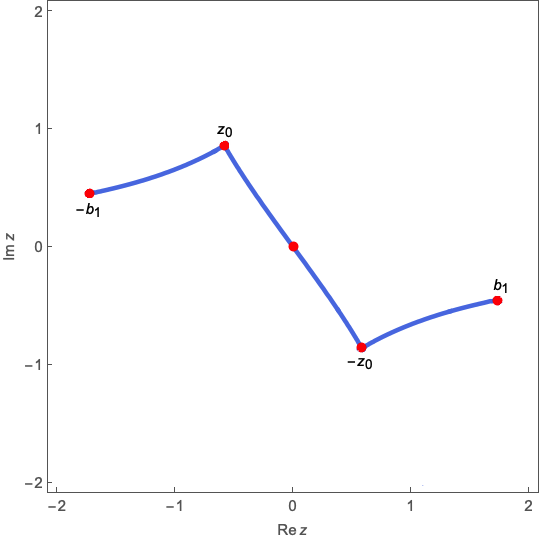}
			\caption{\tiny $\sigma_{\mbox{\tiny cr}}\simeq-1+1.7795\ii$}
			\label{fig1-3split:sigmaSEQM=-1+1.78i}    
		\end{subfigure}%
		\begin{subfigure}{.19\textwidth}
			\centering
			\includegraphics[width=\textwidth]{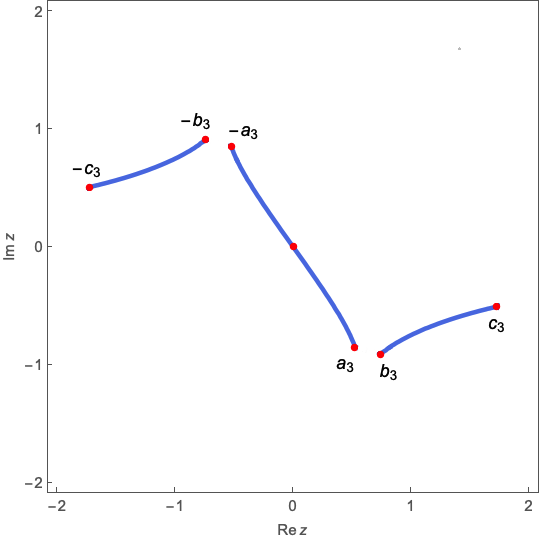}
			\caption{\tiny $\sigma=-1+2\ii$}
			\label{fig1-3split:sigmaSEQM=-1+2i}    
		\end{subfigure}
		\begin{subfigure}{.19\textwidth}
			\centering
			\includegraphics[width=\textwidth]{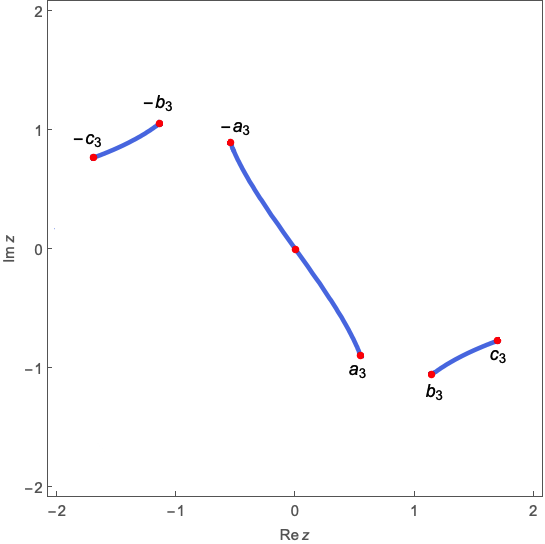}
			\caption{\tiny $\sigma=-1+3\ii$}
			\label{fig1-3split:sigmaSEQM=-1+3i} 
		\end{subfigure}
		\caption{Snapshots of the continuous evolution of the support of the equilibrium measure in transition from the one-cut into the three-cut regime: the support of the equilibrium measure is at the onset of splitting into three symmetric cuts with respect to the origin at a critical value $\sigma_{\mbox{\tiny cr}} \in \ga_1$ , See Figure \ref{fig:Phase Diagram}.}
		\label{fig:1-3splitSEQM}
	\end{figure}
	
	\begin{figure}[h]
		\centering
		\begin{subfigure}{0.19\textwidth}
			\centering
			\includegraphics[width=\textwidth]{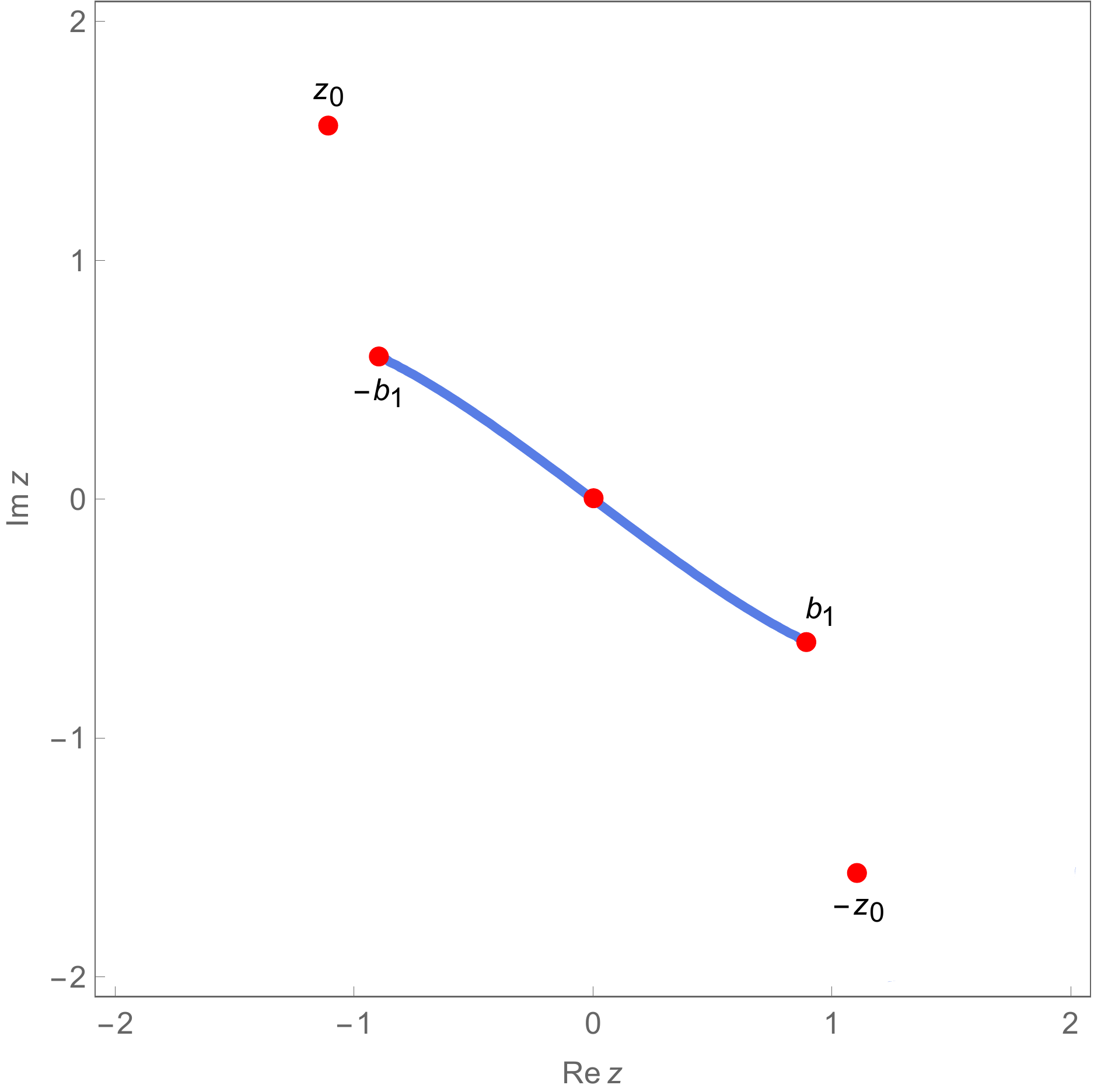}
			\caption{\tiny $\sigma=1+4\ii$}
			\label{fig1-3birthSEQM:sigma=1+4i}
		\end{subfigure}%
		\begin{subfigure}{.19\textwidth}
			\centering
			\includegraphics[width=\textwidth]{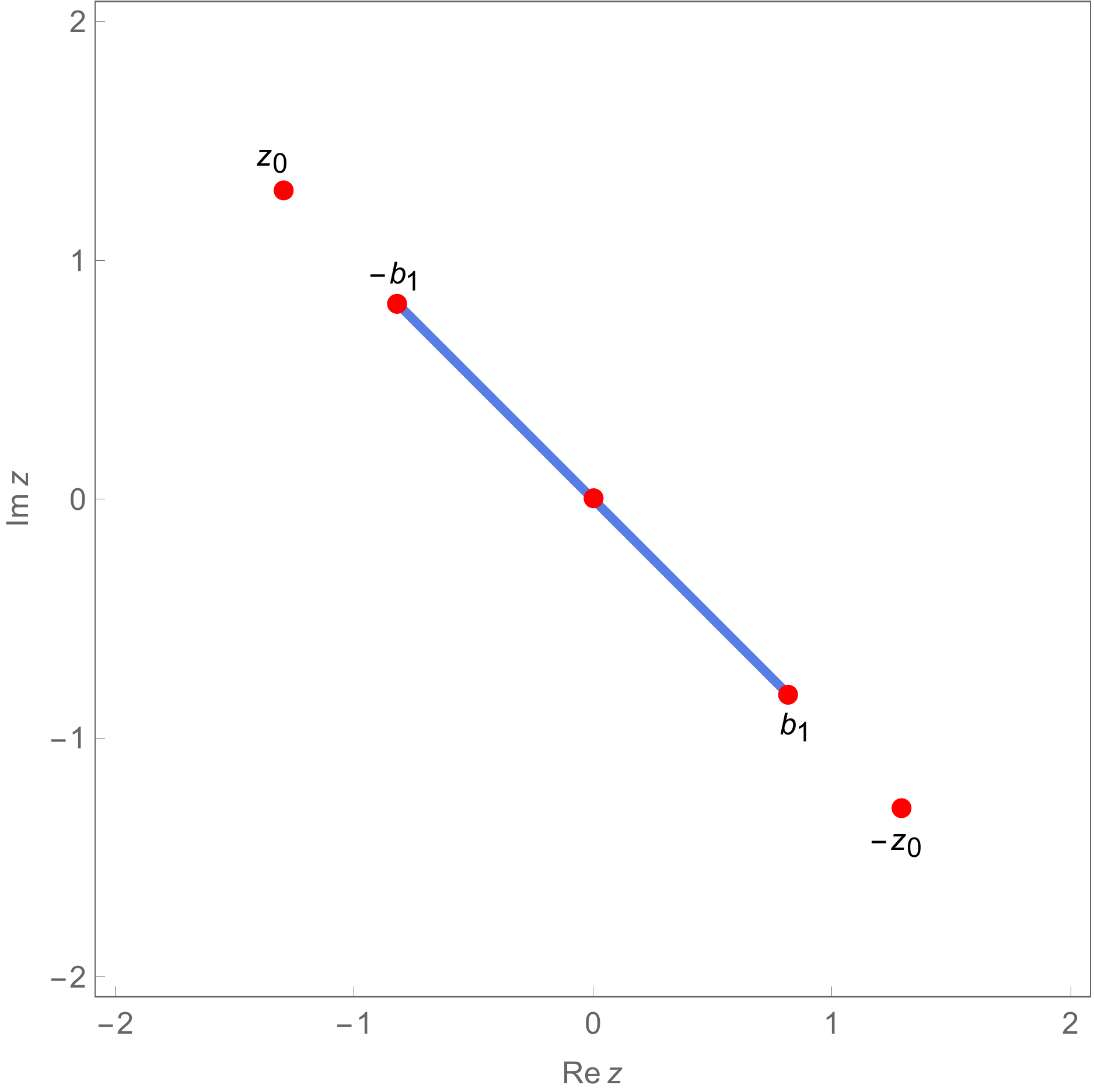}
			\caption{\tiny $\sigma=4\ii$}
			\label{fig1-3birth:sigmaSEQM=4i}    
		\end{subfigure}
		\begin{subfigure}{.19\textwidth}
			\centering
			\includegraphics[width=\textwidth]{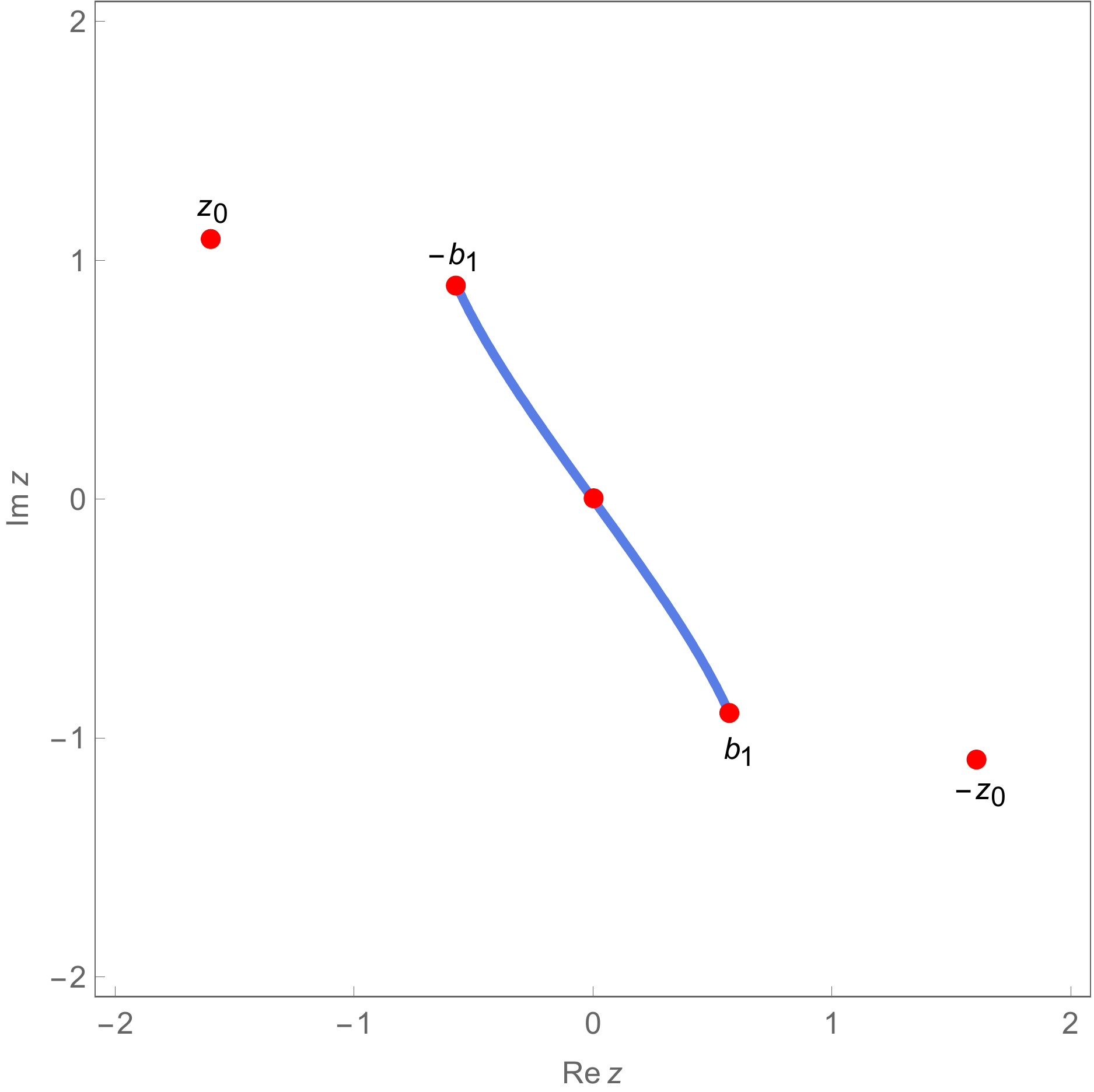}
			\caption{\tiny $\sigma_{\mbox{\tiny cr}}\simeq-1.15+4\ii$}
			\label{fig1-3birth:sigmaSEQM=-1.15+4i}    
		\end{subfigure}%
		\begin{subfigure}{.19\textwidth}
			\centering
			\includegraphics[width=\textwidth]{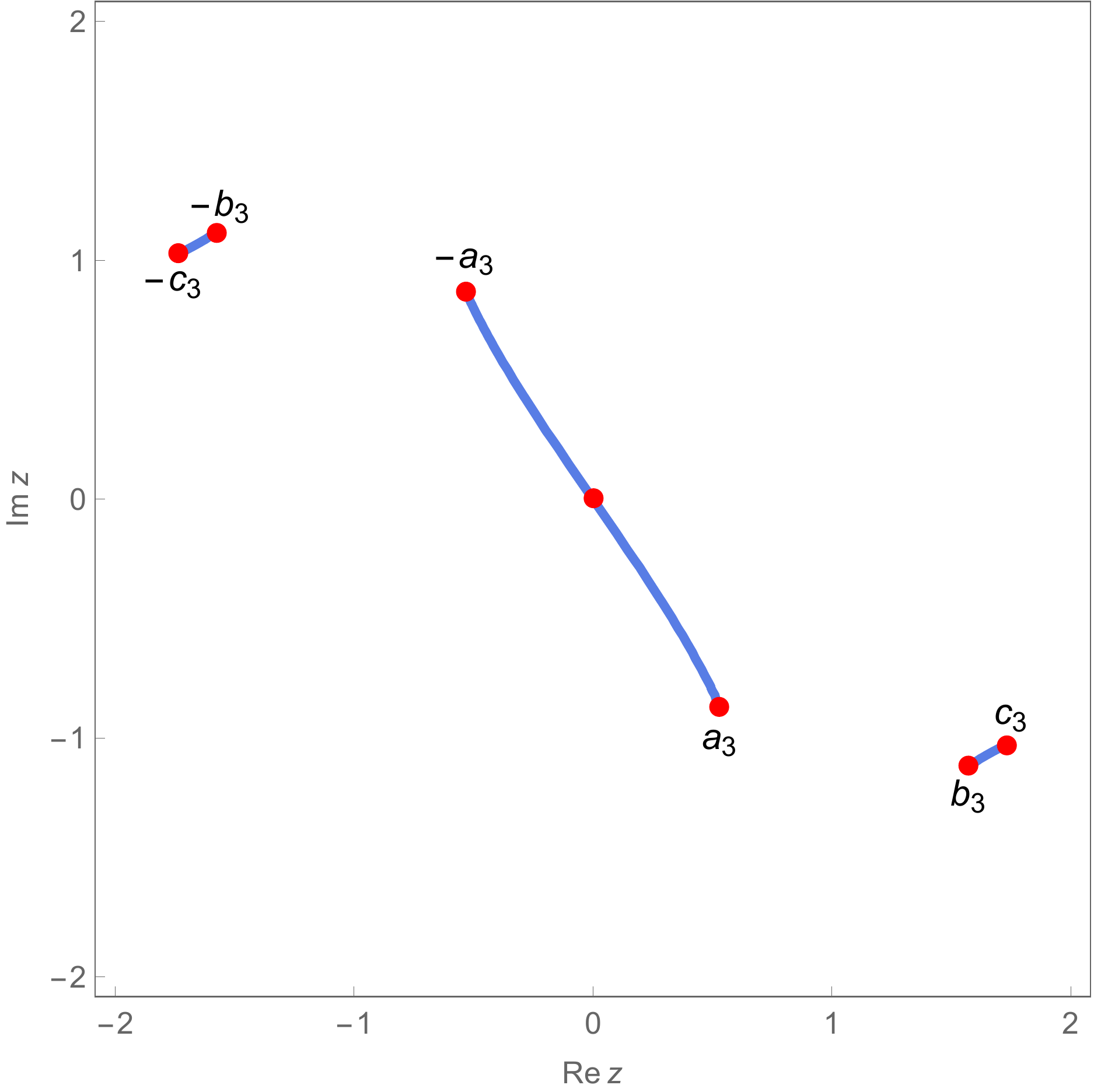}
			\caption{\tiny $\sigma=-1.35+4\ii$}
			\label{fig1-3birth:sigmaSEQM=-1.35+4i}    
		\end{subfigure}
		\begin{subfigure}{.19\textwidth}
			\centering
			\includegraphics[width=\textwidth]{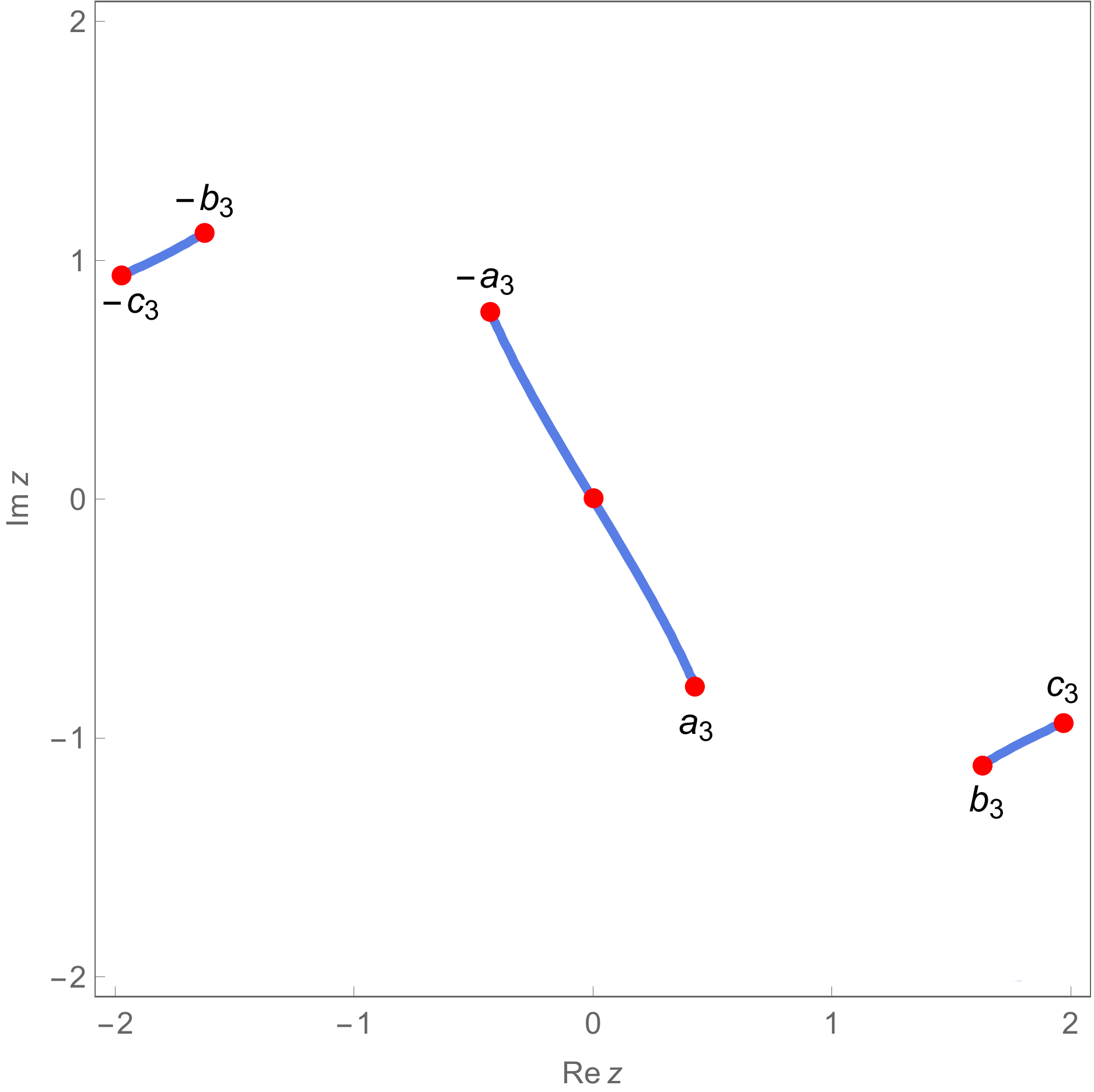}
			\caption{\tiny $\sigma=-2+4\ii$}
			\label{fig1-3birth:sigmaSEQM=-2+4i} 
		\end{subfigure}
		\caption{Snapshots of the continuous evolution of the support of the equilibrium measure in transition from the one-cut into the three-cut regime via birth of two symmetric cuts at $\pm z_0(\sigma_{\mbox{cr}})$, for some $\sigma_{\mbox{cr}} \in \ga_3$, See Figure \ref{fig:Phase Diagram}. Also, see Figure \ref{fig:legend1} to see the respective location of points with respect to the critical contours in the $\sigma$-plane.}
		\label{fig:1-3birthSEQM}
	\end{figure}

	\begin{figure}[h]
		\centering
		\begin{subfigure}{0.19\textwidth}
			\centering
			\includegraphics[width=\textwidth]{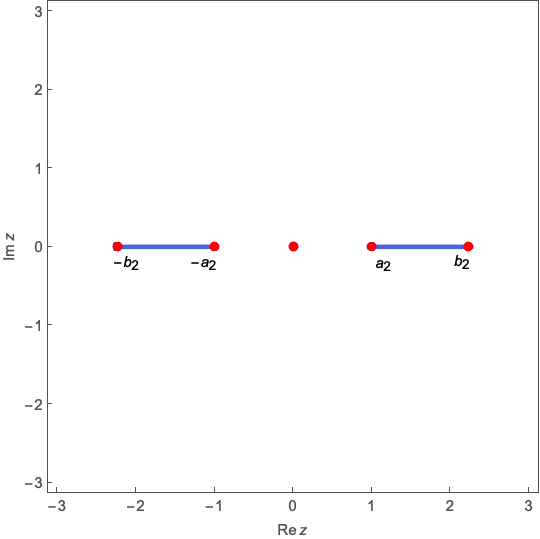}
			\caption{\tiny $\sigma=-3$}
			\label{fig2-3birthSEQM:sigma=-3}
		\end{subfigure}%
		\begin{subfigure}{.19\textwidth}
			\centering
			\includegraphics[width=\textwidth]{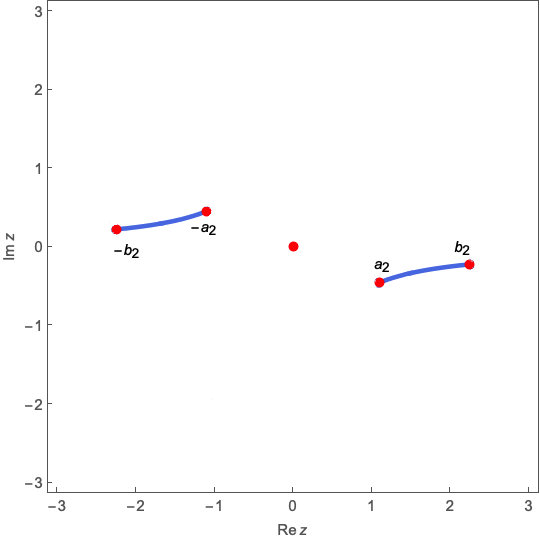}
			\caption{\tiny $\sigma=-3+\ii$}
			\label{fig2-3birthSEQM:sigma=-3+i}    
		\end{subfigure}
		\begin{subfigure}{.19\textwidth}
			\centering
			\includegraphics[width=\textwidth]{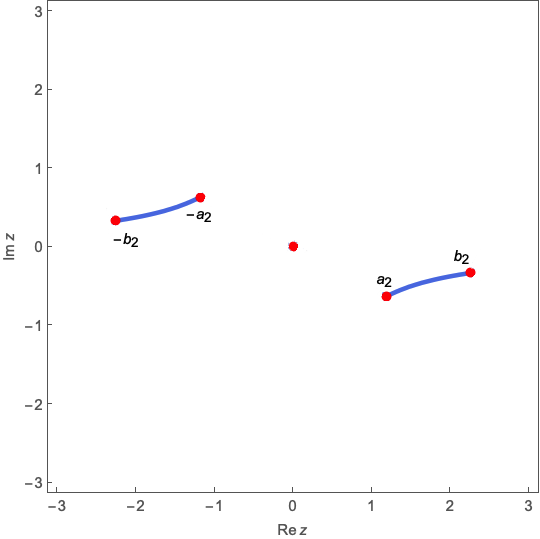}
			\caption{\tiny $\sigma_{\mbox{\tiny cr}}=-3+1.5025\ii$}
			\label{fig2-3birthSEQM:sigma=-3+1.5i}    
		\end{subfigure}%
		\begin{subfigure}{.19\textwidth}
			\centering
			\includegraphics[width=\textwidth]{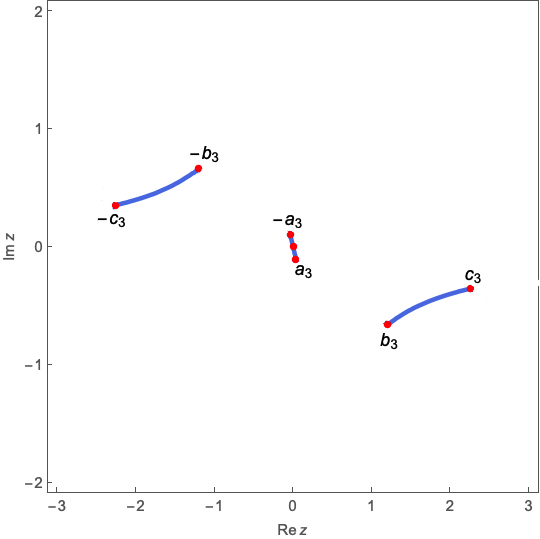}
			\caption{\tiny $\sigma=-3+1.6\ii$}
			\label{fig2-3birthSEQM:sigma=-3+1.6i}    
		\end{subfigure}
		\begin{subfigure}{.19\textwidth}
			\centering
			\includegraphics[width=\textwidth]{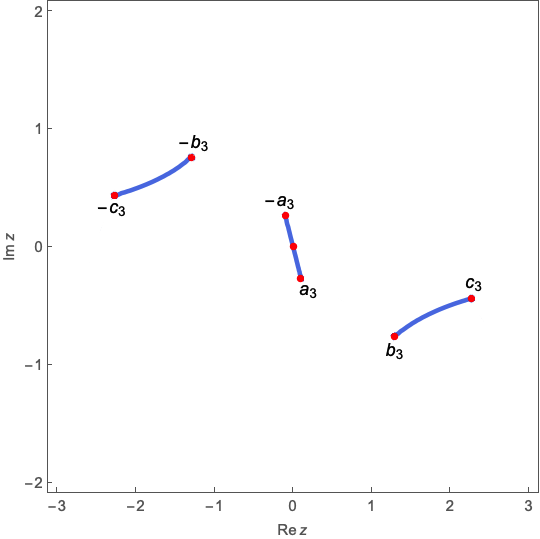}
			\caption{\tiny $\sigma=-3+2\ii$}
			\label{fig2-3birthSEQM:sigma=-3+2i}    
		\end{subfigure}
		\caption{Snapshots of the continuous evolution of the support of the equilibrium measure in transition from the two-cut into the three-cut regime: at a critical value $\sigma_{\mbox{\tiny cr}} \in \ga_5$ (see Figure \ref{fig:Phase Diagram}) a cut is about to be born at the origin yielding a system of three symmetric cuts with respect to the origin.}
		\label{fig:2-3birthSEQM}
	\end{figure}

	\section{Phase Diagram in the $\sigma$-plane and Auxiliary Quadratic Differentials}\label{Sec Aux QD} Similar to the approach taken in \cite{BDY}, to analytically describe the transitions from the one-cut to the three-cut regime and from the two-cut to the three-cut regime we can use the critical trajectories of the associated \textit{auxiliary} quadratic differentials. 
	
	\subsection{One-cut to Three-cut Transition.}\label{subsec 1 to 3 aux QD} In this subsection we search for an analytic description for the values of $\sigma$ such that $z_0(\sigma) \in \mathscr{J}^{(1)}_{\sigma}$, that is $\Re[\Psi(\sigma)]=0$ where
	
	\begin{equation}\label{F 1 to 3}
	\begin{split}
	\Psi(\sigma)& := \eta_1(z_0(\sigma);\sigma) = -\frac{\sigma}{4}\sqrt{\frac{1}{3}\left( -2\sigma -  \sqrt{12+\sigma^2} \right)} \sqrt{-\sqrt{12+\sigma^2}} \\ &  +2\log\left( \frac{\sqrt{\frac{1}{3}\left( -2\sigma -  \sqrt{12+\sigma^2} \right)}+  \sqrt{-\sqrt{12+\sigma^2}}}{\sqrt{\frac{2}{3}\left( -\sigma +  \sqrt{12+\sigma^2} \right)}} \right).
	\end{split}
	\end{equation}
	If we compute
	\begin{equation}\label{1 to 3 QD trial}
	\left[\frac{\dd \Psi}{\dd \sigma}\right]^2=\frac{1}{12}\left(12 + \sigma^2 + 2\sigma \sqrt{12 + \sigma^2}\right),
	\end{equation}
	we do not obtain a meromorphic quadratic differential, which is the preferred object to deal with (as opposed to what we had in \eqref{2 to 3 QD}). However, if we express $\sigma$ and $z_0(\sigma)$ in terms of $b_1 \equiv b_1(\sigma)$ via \eqref{em35}, then a direct calculation shows that in the variable $b_1$ we do obtain a meromorphic quadratic differential:
	\begin{equation}
	\Xi(b_1):=\left[\frac{\dd \Psi}{\dd b_1}  \right]^2 = \frac{(16-b_1^4)(16+3b_1^4)^3}{256b_1^{10}}.
	\end{equation}
	We can make things a bit simpler, as in the variable $\be:=b_1^2$ we arrive at:
	\begin{equation}
	\Xi(\be)= \left[\frac{\dd \Psi}{\dd \be}  \right]^2 = \frac{(16-\be^2)(16+3\be^2)^3}{1024\be^{6}}.
	\end{equation}
	Thus we can express $\Psi$ as \begin{equation}\label{Psi}
	\Psi(\be) = \int^{\be}_{-\frac{4 \ii}{\sqrt{3}}} \sqrt{\Xi(s) } \dd s. 
	\end{equation}
	The initial point of integration is chosen to be $\be=-4 \ii / \sqrt{3}$ as this corresponds to $b_1 = z_0$ (and $\sigma = i \sqrt{12}$) where $\Psi=0$. Therefore, the preimage (under the map $\sigma \mapsto \be$) of the critical trajectories of the quadratic differential $\Xi(\be) \dd \be^2$ \textit{includes}, and as described further below \textit{not} equal to, the set $\{\sigma \ : \ \Re[\eta_1(z_0(\sigma);\sigma)]=0\}$. So one is naturally directed to study the critical trajectories of the auxiliary quadratic differential $\Xi(\be) \dd \be^2$. Notice that it has two simple zeros at $\pm 4$, two zeros of order three at $\pm 4 \ii / \sqrt{3}$, a pole of order six at zero and a pole of order six at infinity (recall \eqref{QD near infinity}). Therefore three critical trajectories emanate from $\be=4$ and $\be=-4$ each, while five critical trajectories emanate from $\be=4 \ii / \sqrt{3}$ and $\be=-4 \ii / \sqrt{3}$ each (Theorem $7.1$ of \cite{Strebel}). Also, there are 4 critical trajectories incident with $\be=0$ and $\be=\infty$ (Theorem $7.4$ of \cite{Strebel}). The local structure of the critical trajectories in a neighborhood of the critical points can be easily found by finding a ray on which $\Xi(\be)\dd \be^2<0$. Locally, the other critical trajectories will be then determined based on how many critical directions are incident with the critical point. For example it is simple to check that $\Xi(\be)\dd \be^2<0$ when $\be = \ep i + 4 \ii / \sqrt{3}$, $\ep>0$. The other four critical directions at $4 \ii / \sqrt{3}$ are now determined by forming equal angles $2\pi/5$ between adjacent critical directions.  Similar analysis gives the local structure in the neighborhood of other critical points. At infinity $\Xi(\be) \dd \be^2 \sim - \frac{27}{1024}\be^2 \dd \be^2 $, thus the integral of its square root behaves like $\frac{3 \ii \sqrt{3}}{64}\be^2$ and thus the four solutions to $\Re[\Psi(\be)]=0$ near infinity respectively have asymptotic angles $0$, $\pi/2$, $\pi$, and $3\pi/2$. Using this for solutions near infinity, and having already determined the local critical structure near finite critical points, the only global structure (connection of critical trajectories) consistent with the Teichm{\"u}ller's lemma is shown in Figure \ref{fig:Crit Traj be plane}. A calculation shows that $\Psi(\be)$ as defined in \eqref{Psi} differs from $\Psi(-\be)$ and from $\Psi(\overline{\be})$ by additive purely imaginary quantities. This explains the symmetry with respect to the origin and the real axis in Figure \ref{fig:Crit Traj be plane}.
	\begin{figure}[b]
		\centering
		\includegraphics[scale=0.35]{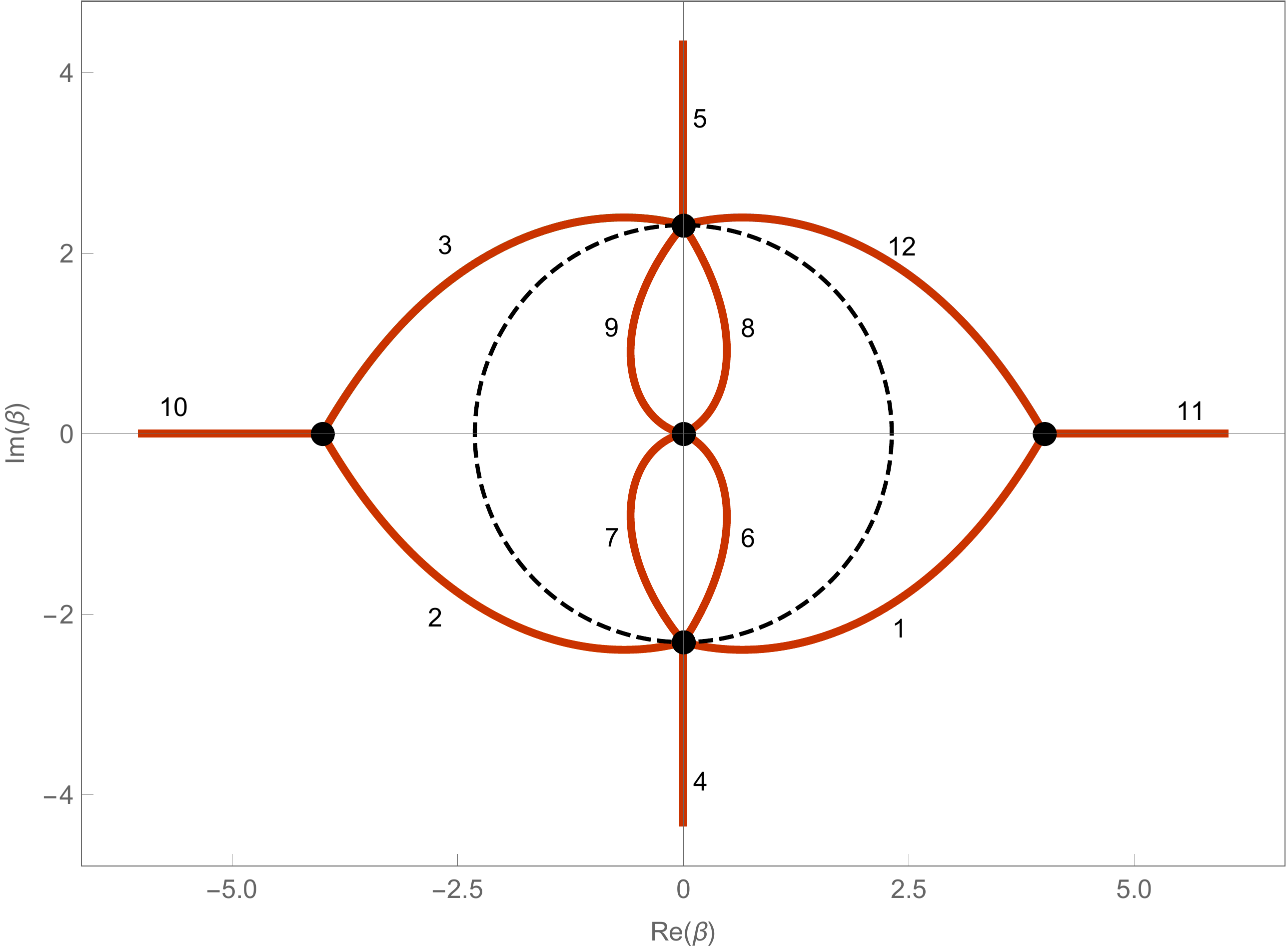}
		\caption{The red lines show the critical graph $\mathcal{T}$ of the auxiliary quadratic differential $ 2^{-10}\be^{-6} (16-\be^2)(16+3\be^2)^3 \dd \be^2 \equiv \Xi(\be)\dd \be^2$ in the $\be$-plane. The black dots show the critical points of $\Xi(\be)\dd \be^2$: simple zeros at $\pm 4$, zeros of order three at $\pm 4 \ii / \sqrt{3}$, pole of order six at zero. The actual critical graph in the $\sigma$-plane corresponding to the transition form the 1-cut regime to the 3-cut regime is a subset of the image of $\mathcal{T}$ under the Joukowsky map $\sigma=-\frac{3}{4}\be+\frac{4}{\be}$ (see Figure \ref{fig:Crit Traj 1to3 sigma plane}) which maps both the interior and the exterior of the circle of radius $4 / \sqrt{3}$ onto the complement of the imaginary line segment in the $\sigma$-plane connecting $-\ii \sqrt{12}$ to $\ii \sqrt{12}$.}
		\label{fig:Crit Traj be plane}
	\end{figure}
	
	From \eqref{em36} we can simply express $z^2_0$ and $\sigma$ in terms of $\beta$ as
	
	\noindent\begin{minipage}{.5\linewidth}
		\begin{alignat}{2}
		&z^2_0 &&= \frac{\be}{4}+\frac{4}{\be}, \label{z0 beta}
		\end{alignat}	
	\end{minipage}	
	\begin{minipage}{.5\linewidth}
		\begin{alignat}{2}
		&\sigma &&= -\frac{3}{4}\be+\frac{4}{\be}. \label{sigma beta} 
		\end{alignat}	
	\end{minipage}
	We observe that the map from the $\be$-plane to the $\sigma$-plane is a Joukowski map which maps both the interior and the exterior of the circle of radius $4 / \sqrt{3}$ onto the complement of the imaginary line segment in the $\sigma$-plane connecting $-\ii \sqrt{12}$ to $\ii \sqrt{12}$. Therefore the image $\widehat{\Sigma}$ of the critical trajectories of $\Xi(\be)\dd \be^2$ in the $\be$-plane under the Joukowsky map $\be \mapsto \sigma$ provides all the candidates for the  1-cut to 3-cut phase transition in the $\sigma$-plane. 
	\begin{figure}[h]
		\centering
		\includegraphics[scale=0.35]{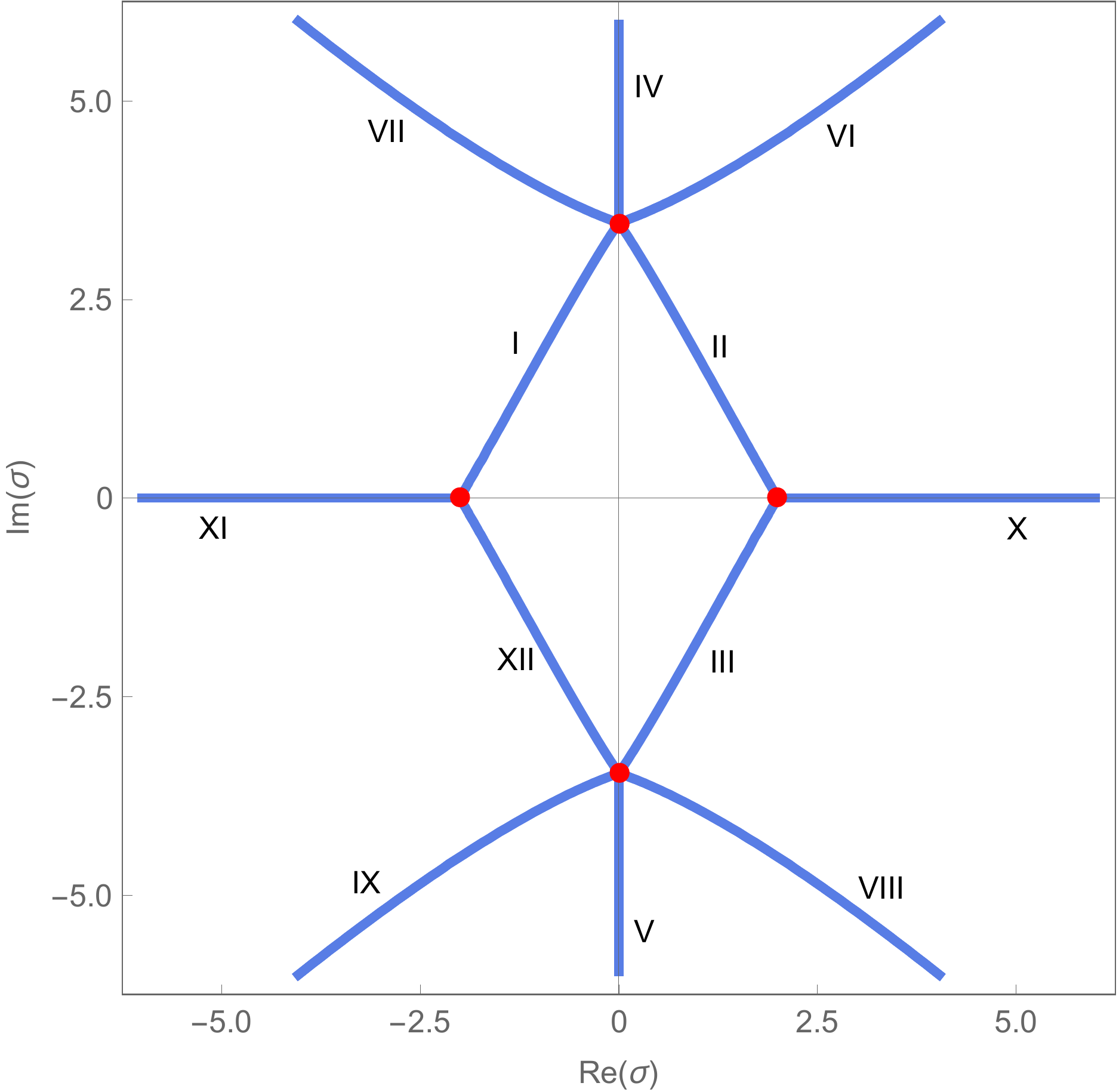}
		\caption{The image $\widehat{\Sigma}$ of the critical graph $\mathcal{T}$ of $2^{-10}\be^{-6} (16-\be^2)(16+3\be^2)^3 \dd \be^2 \equiv \Xi(\be)\dd \be^2$ under the Joukowsky map $\be \mapsto \sigma = -\frac{3}{4}\be+\frac{4}{\be}$. The red dots at $\pm2$ and $\pm \ii \sqrt{12}$ are the images of the critical points of $\Xi(\be)\dd \be^2$. The components I, II, III, IV, V, X, XI, and XII are the images of the parts of $\mathcal{T}$ in the \textit{exterior} of the circle of radius $4 / \sqrt{3}$, while the components VI, VII, VIII, and IX  are the images of the parts of $\mathcal{T}$ in the \textit{interior} of the circle of radius $4 / \sqrt{3}$ (see Figure \ref{fig:Crit Traj be plane}).}
		\label{fig:Crit Traj 1to3 sigma plane}
	\end{figure}
	
	Inverting the Joukowsky map we obtain \begin{equation}\label{be + be -}
	\be^{(\pm)}(\sigma) = \frac{2}{3} \left(-\sigma \pm \sqrt{12+\sigma^2}\right).
	\end{equation} 
	We choose the branch cuts for the square root to be the two rays connecting $ \ii \sqrt{12}$ to $-\infty + \ii \sqrt{12}$ and $- \ii \sqrt{12}$ to $-\infty - \ii \sqrt{12}$, and we fix the branch according to $\arg(\sigma-\ii\sqrt{12})=0$ for $\sigma=x+\ii\sqrt{12}$, and $\arg(\sigma+\ii\sqrt{12})=0$ for $\sigma=x-\ii\sqrt{12}$, $x>0$.  However, recalling \eqref{em37}, our one-cut computations are based on $\be^{(+)}$, not $\be^{(-)}$. Therefore, among the twelve components of $\widehat{\Sigma}$, the actual candidates for 
	1-cut to 3-cut phase transition in the $\sigma$-plane are those which get mapped by $\be^{(+)}$ to the critical trajectories of $\Xi(\be)\dd \be^2$ in the $\be$-plane. By straight-forward calculations we observe that $\be^{(+)}$ does \textit{not} map the components of $\widehat{\Sigma}$  labeled by II, III, IV, V, and X to $\mathcal{T}$, while it does map the components of $\widehat{\Sigma}$ labeled by I, XII, VI, VII, VIII, IX, and XI respectively to the components of $\mathcal{T}$  labeled by $1$, $12$, $6$, $7$, $8$, $9$, and $11$ ( Actually it can be checked that $\be^{(-)}$ maps the components of $\widehat{\Sigma}$ labeled by II, III, IV, V, and X respectively to the components of $\mathcal{T}$  labeled by $2$, $3$, $4$, $5$, and $10$ \label{footnote be -}). This means that the only places in the $\sigma$-plane at which 1-cut to 3-cut phase transition could happen are the components of $\widehat{\Sigma}$ labeled by I, XII, VI, VII, VIII, IX, and XI, see Figure \ref{fig:Crit Traj 1to3 sigma plane: obvious stuff removed}.
	
	\begin{figure}[h]
		\centering
		\begin{subfigure}{0.46\textwidth}
			\includegraphics[scale=0.3]{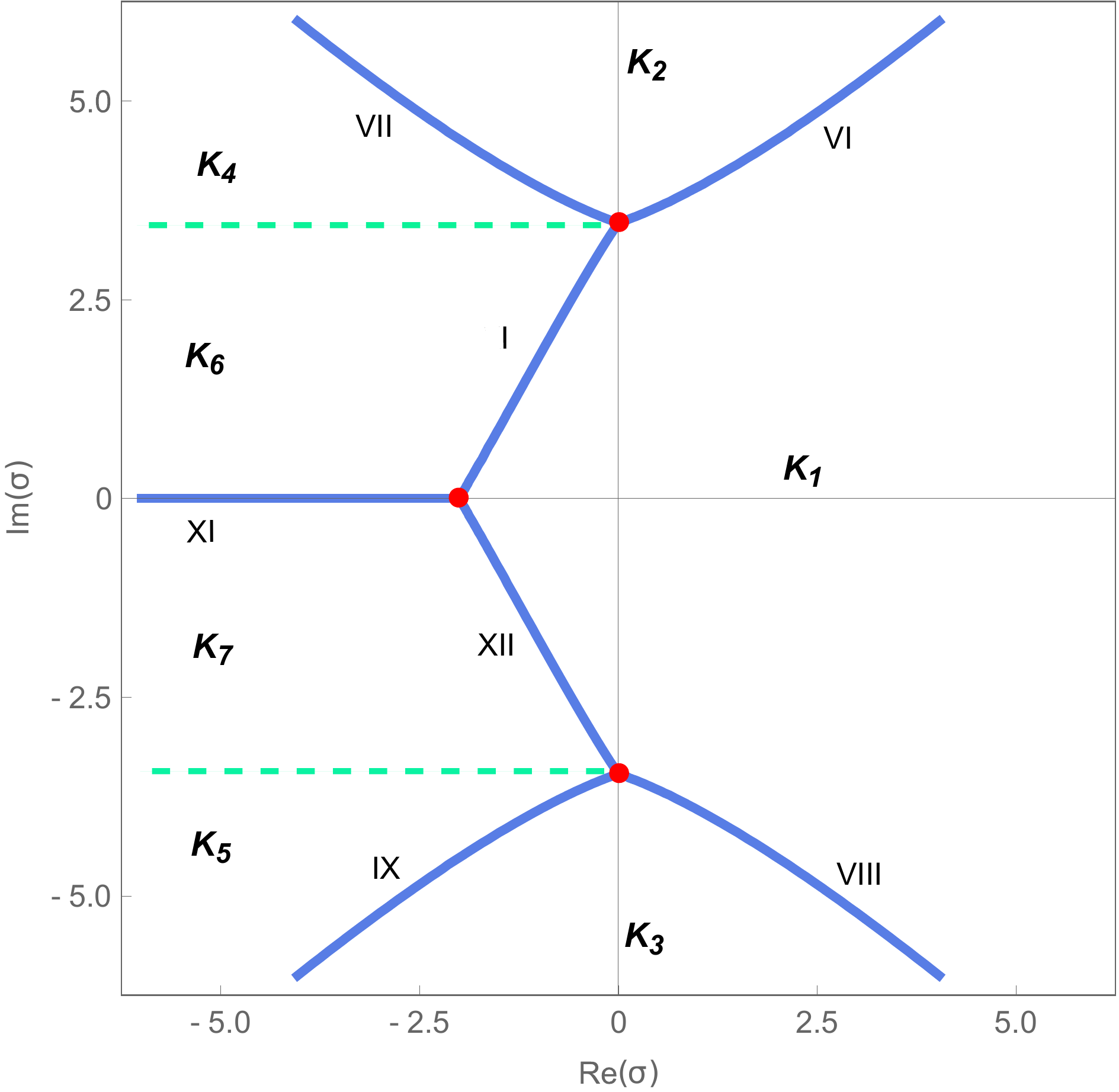}
			\caption{The set $\widehat{\Sigma}$ shown in Figure \ref{fig:Crit Traj 1to3 sigma plane} \textit{without} its components labeled by II, III, IV, V, and X. Notice that $\be^{(+)}$ maps the components of $\widehat{\Sigma}$ labeled by I, XII, VI, VII, VIII, IX, and XI respectively to the components of $\mathcal{T}$  labeled by $1$,$12$,$6$,$7$,$8$,$9$, and $11$, while it does \textit{not} map the components of $\widehat{\Sigma}$  labeled by II, III, IV, V, and X to $\mathcal{T}$ (See Figures \ref{fig:Crit Traj be plane} and \ref{fig:Crit Traj 1to3 sigma plane}). We use the convention that the multicritical points at $\sigma=-2$ and $\sigma=\pm \ii \sqrt{12}$ belong to the critical lines incident to them, for instance $-2,\ii \sqrt{12} \in$ I. The green dashed lines represent the branch cuts $L_{\pm}$ (see Remark \ref{Remark branches}). To rigorously understand the boundaries of the one-cut region we study the sign of $\Re[\eta_1(\pm z_0(\sigma);\sigma)]$ in the infinite regions $\boldsymbol{K}_i$, $i=1,\cdots,5$.}
			\label{fig:Crit Traj 1to3 sigma plane: obvious stuff removed}
		\end{subfigure}
		\hfill
		\begin{subfigure}{0.46\textwidth}
			\includegraphics[scale=0.3]{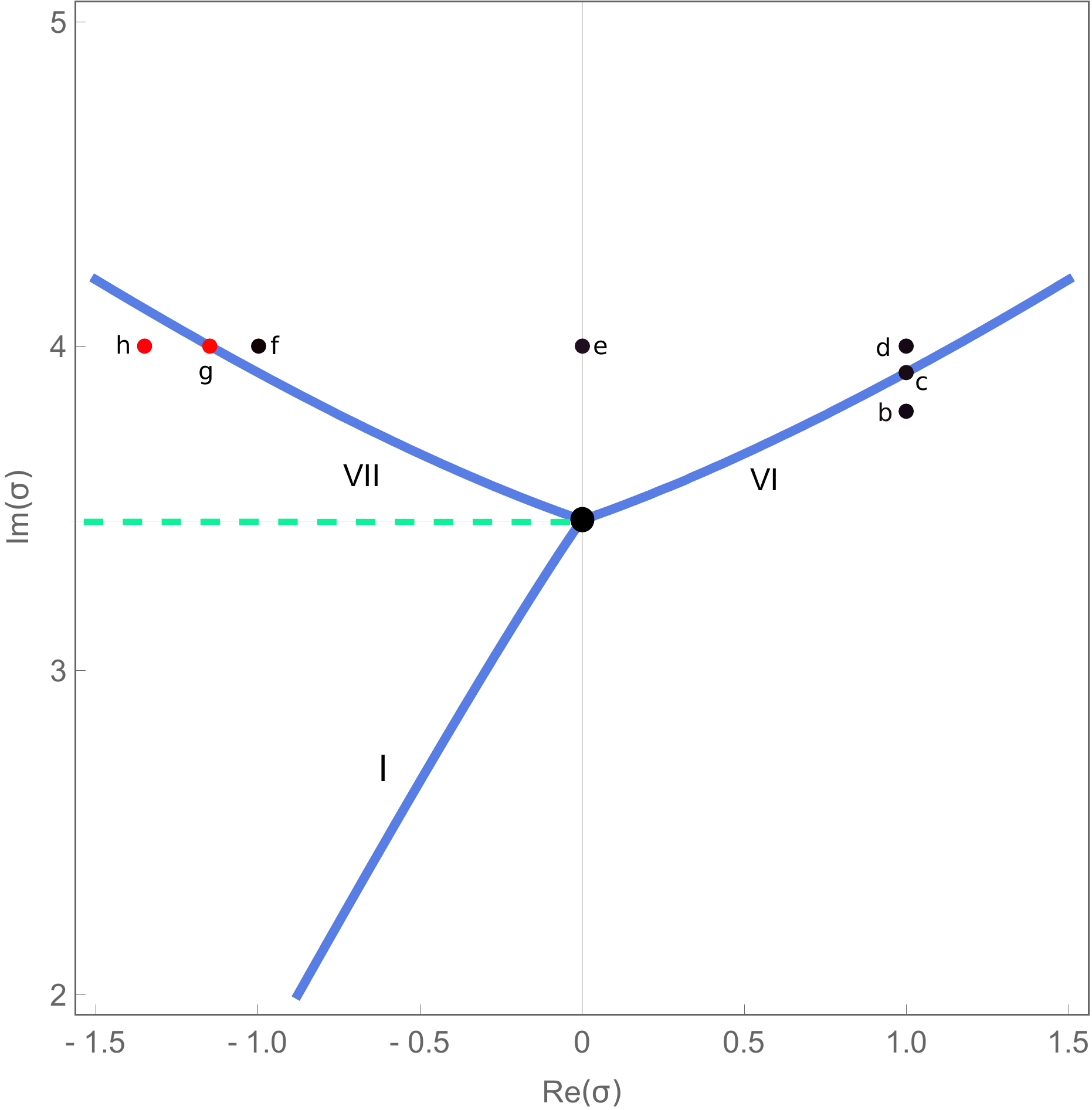}
			\caption{An enlargement of Figure \ref{fig:Crit Traj 1to3 sigma plane: obvious stuff removed} around the point $\ii \sqrt{12}$ which shows the relative location of points B through H considered in Figure \ref{fig:stable and barren} with respect to the lines VI and VII. We recall that $b=1+3.8\ii$, $c\simeq1+3.9187\ii$, $d=1+4\ii$, $e=4\ii$, $f=-1+4\ii$, $g\simeq-1.15+4\ii$, $h=-1.35+4\ii$. As shown in Figure \ref{fig:stable and barren}, at the parameter values $c$ and $g$, the points $\pm z_0(\sigma)$ lie on the critical graph. At the parameter values $b$ and $h$, the points $\pm z_0(\sigma)$ belong to the unstable lands, while at the parameter values $d$, $e$, and $f$, the points $\pm z_0(\sigma)$ belong to stable lands. The alignment of the points $\pm b_1$ and $\pm z_0$ along the bisector of the second and fourth quadrants for $\sigma=\ii y$, with $y>\sqrt{12}$, is in particular shown for point $e$ in Figure \ref{fig:fertile barren 4}.}
			\label{fig:legend1}
		\end{subfigure}
		\label{fig: remaining 1to3}
		\caption{Part (a) shows the remaining candidates (all values of $\sigma$ on the blue curves) for the one-cut to three-cut transitions. Part (b) shows an enlarged picture in a neighborhood of the tri-critical point $\ii \sqrt{12}$.}
	\end{figure}	
	We will later show that for another reason the 1-cut to 3-cut phase transition could not happen along the component labeled by XI, and for yet another reason it can not happen along the components labeled by VI and VIII.
	
	
	
	

	\begin{lemma}\label{Lemma who hits who 1}
		Let $\sigma \in$ \normalfont{I} $\cup$ VI $\cup$ XII $\cup$ VIII \textit{\ and different from $-2$ and $\pm \ii \sqrt{12}$. Then one of the following three possibilities holds:}
		\begin{itemize}
			\item[a)]  $\pm z_0(\sigma) \in J_{\sigma}$
			\item[b)]    $z_0(\sigma) \in \ell_2^{(-b_1(\sigma))}$ and $-z_0(\sigma) \in \ell_2^{(b_1(\sigma))}$
			\item[c)]  $z_0(\sigma) \in \ell_3^{(b_1(\sigma))}$ and $-z_0(\sigma) \in \ell_3^{(-b_1(\sigma))}$.
		\end{itemize}
		\begin{proof}
			This follows from continuous deformations of $z_0(\sigma)$ and $\mathscr{J}^{(1)}_{\sigma}$ with respect to $\sigma$. So we start from some $\sigma_0>-2$ where we know the structure $\mathscr{J}^{(1)}_{\sigma}$, and $\pm z_0(\sigma_0)= \pm \ii y_0$, for some $y_0>0$ (See, e.g. Figure \ref{fig1-3birth:sigma=-1} for $\sigma_0=-1$). If we continuously deform $z_0(\sigma)$ and $\mathscr{J}^{(1)}_{\sigma}$ starting from $\sigma_0$ it is clear that $z_0(\sigma_1)$ can only hit $J_{\sigma_1}$, $\ell_2^{(-b_1(\sigma_1))}$, or $\ell_3^{(b_1(\sigma_1))}$. The three possibilities in the statement of the Lemma now follow from Lemma \ref{Lemma symmetry of J}, more precisely: i) if $z \in \ell_2^{(-b_1(\sigma))}$ then  $-z \in \ell_2^{(b_1(\sigma))}$, ii) if $z \in \ell_3^{(-b_1(\sigma))}$ then  $-z \in \ell_3^{(b_1(\sigma))}$, and iii) if  $z \in J_{\sigma}$ then $-z \in J_{\sigma}$. 
		\end{proof}	 
	\end{lemma}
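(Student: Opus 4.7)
The plan is to argue by continuous deformation of $\sigma$, starting from a reference point $\sigma_0 > -2$ where the structure of $z_0(\sigma_0)$ and $\mathscr{J}^{(1)}_{\sigma_0}$ is completely understood from Lemma \ref{one cut is non-empty!}, Lemma \ref{lemma humps}, and Remark \ref{Remark branches}. At such $\sigma_0$, we have $z_0(\sigma_0) = \ii y_0$ with $y_0 > 0$, the interval $J_{\sigma_0} = [-b_1,b_1]$ is real, and the rays $\ell_2^{(b_1)}, \ell_3^{(b_1)}, \ell_2^{(-b_1)}, \ell_3^{(-b_1)}$ together with the two humps $\mathcal{L}^{(1)}_{\sigma_0}, \mathcal{L}^{(2)}_{\sigma_0}$ go off to infinity along the prescribed directions. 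In particular, $\pm z_0(\sigma_0)$ lie in the unstable lands, strictly off the graph.

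Since the hypothesis $\sigma \in $ I $\cup$ VI $\cup$ XII $\cup$ VIII gives $\Re\,\eta_1(z_0(\sigma);\sigma)=0$, by definition $z_0(\sigma)\in\mathscr{J}^{(1)}_\sigma$. I would pick any continuous path from $\sigma_0$ to $\sigma$ avoiding the branch cuts $L_\pm$ and $L$, and track the motion of $z_0(\sigma')$ relative to $\mathscr{J}^{(1)}_{\sigma'}$ as $\sigma'$ traverses this path. By Lemma \ref{continuous deformations of one cut critical graph} the graph $\mathscr{J}^{(1)}_{\sigma'}$ deforms continuously, while $z_0(\sigma')$ is continuous in $\sigma'$ away from the branch cut. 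At the first parameter value $\sigma_1$ along the path at which $z_0$ touches $\mathscr{J}^{(1)}_{\sigma_1}$, the landing site must lie on exactly one of the connected smooth components of the graph reachable by a continuous trajectory from $\ii y_0$ in the upper half-plane. These are precisely $J_{\sigma_1}$, $\ell_2^{(-b_1(\sigma_1))}$, and $\ell_3^{(b_1(\sigma_1))}$; landing on $\ell_2^{(b_1)}$ or $\ell_3^{(-b_1)}$ would force $z_0$ to have crossed one of the first three components already, contradicting minimality of $\sigma_1$, while the humps $\mathcal{L}^{(j)}$ are excluded because by Lemma \ref{humps connect at criticality} they attach to the rest of the graph precisely at $\pm z_0$, so at a contact moment $z_0$ is at a junction of a hump with one of the three admissible components, giving the same three options.

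Finally, once $z_0(\sigma)$ has been placed on one of $J_\sigma$, $\ell_2^{(-b_1)}$, or $\ell_3^{(b_1)}$, the location of $-z_0(\sigma)$ follows from Lemma \ref{Lemma symmetry of J} and the symmetry identity $\eta_1(-z;\sigma)=\eta_1(z;\sigma)\pm 2\pi\ii$: the map $z\mapsto -z$ sends $J_\sigma$ to itself, and it permutes the labeled rays according to $\ell_2^{(-b_1)}\leftrightarrow \ell_2^{(b_1)}$ and $\ell_3^{(b_1)}\leftrightarrow \ell_3^{(-b_1)}$, yielding the three paired alternatives (a), (b), (c). For subsequent values of $\sigma$ on I $\cup$ VI $\cup$ XII $\cup$ VIII beyond $\sigma_1$, the same classification persists by continuing the deformation argument, since $z_0$ remains pinned to the graph and can only transition between the three components at the multicritical points $-2$ and $\pm\ii\sqrt{12}$ that are excluded from the statement. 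The main subtlety is verifying that these three landing possibilities exhaust the cases and that no hump-only contact can occur, which is where Lemma \ref{humps connect at criticality} is essential.
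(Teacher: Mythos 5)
Your proposal is correct and follows essentially the same route as the paper's own proof: start from a reference value $\sigma_0>-2$ where the critical graph and the location of $\pm z_0(\sigma_0)=\pm\ii y_0$ are known, deform continuously using Lemma \ref{continuous deformations of one cut critical graph}, observe that the only components of $\mathscr{J}^{(1)}_{\sigma}$ that $z_0$ can first reach are $J_{\sigma}$, $\ell_2^{(-b_1)}$, or $\ell_3^{(b_1)}$, and then pair the alternatives via the symmetry of Lemma \ref{Lemma symmetry of J}. The only difference is that you spell out, via a first-contact argument and Lemma \ref{humps connect at criticality}, why contact with a hump alone is impossible — a point the paper leaves as ``it is clear that'' — so your write-up is, if anything, slightly more complete.
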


	\begin{theorem}\label{sigma in I}
		For $\sigma \in \mbox{\normalfont I} \cup \mbox{\normalfont XII}$, it holds that $\pm z_0(\sigma) \in J_{\sigma}$.
	\end{theorem}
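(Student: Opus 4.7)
The plan is to invoke Lemma \ref{Lemma who hits who 1} and then rule out alternatives (b) and (c) by a connectedness and continuity argument, anchored at the multicritical endpoint $\sigma = -2$.

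First I would observe that on $\mbox{I} \cup \mbox{XII}$ with its endpoints $\{-2, \pm \ii\sqrt{12}\}$ removed, the coincidence $z_0(\sigma) = \pm b_1(\sigma)$ cannot occur: by \eqref{em37} it forces $12 + \sigma^2 = 0$, leaving only $\sigma = \pm \ii\sqrt{12}$, which are excluded. Hence the three cases (a), (b), (c) of Lemma \ref{Lemma who hits who 1} are pairwise exclusive. Using the continuity of $\sigma \mapsto z_0(\sigma)$ and $\sigma \mapsto b_1(\sigma)$, together with the continuous deformation of the critical graph $\mathscr{J}^{(1)}_\sigma$ (Lemma \ref{continuous deformations of one cut critical graph}), each of (a), (b), (c) is a relatively open condition on the punctured curves I and XII. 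Since I and XII minus endpoints are each connected arcs, exactly one of the three cases must hold throughout each of them.

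Next I would identify the prevailing case by analyzing the limit $\sigma \to -2$. By \eqref{em37} we have $b_1(\sigma) \to 2$ and $z_0(\sigma) \to 0$. From $\eta_1(z;\sigma) \sim C(z \mp b_1(\sigma))^{3/2}$ near $\pm b_1(\sigma)$, the four legs $\ell_2^{(\pm b_1)}, \ell_3^{(\pm b_1)}$ emanate from $\pm b_1(\sigma)$ and are confined to small neighborhoods of $\pm 2$ for $\sigma$ close to $-2$. By contrast, the main arc $J_\sigma$ is a continuous deformation of the segment $[-2, 2]$ and contains points arbitrarily close to the origin. Consequently, for $\sigma$ on I (or XII) sufficiently close to $-2$, the point $z_0(\sigma)$, being near $0$, cannot lie on any of the legs and must therefore belong to $J_\sigma$; the corresponding statement for $-z_0(\sigma)$ follows from the symmetry $J_\sigma = -J_\sigma$ (Lemma \ref{Lemma symmetry of J}). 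This is case (a), and by the connectedness argument above, case (a) then holds on the whole of I and XII.

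The principal obstacle is the rigorous control of $\mathscr{J}^{(1)}_\sigma$ in a full neighborhood of the multicritical point $\sigma = -2$, where the one-cut structure is on the verge of splitting and the structural lemmas of \S\ref{Sec crit graphs z plane} for $\mathcal{O}^*_1$ approach the boundary of their validity. An efficient route is to transport the analysis to the $\beta$-plane via \eqref{int19}: curve I is the short critical trajectory of $\Xi(\beta)\dd\beta^2$ from $\beta = 4$ (the pre-image of $\sigma = -2$) to $\beta = -4\ii/\sqrt{3}$ (the pre-image of $\sigma = \ii\sqrt{12}$), and the explicit parameterization \eqref{z0 beta}--\eqref{sigma beta} lets one verify, at the level of this trajectory, that $z_0(\sigma)$ never escapes from $J_\sigma$ onto one of the side legs before reaching the terminal coincidence $z_0 = b_1$ at $\beta = -4\ii/\sqrt{3}$.
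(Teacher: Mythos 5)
Your overall skeleton coincides with the paper's: invoke the trichotomy of Lemma \ref{Lemma who hits who 1}, anchor case (a) at the multicritical endpoint $\sigma=-2$ (where $\pm z_0(-2)=0$ lies in the interior of $J_{-2}=[-2,2]$), and propagate along the connected arc I by continuity. The gap is in the step where you declare that each of the alternatives (a), (b), (c) ``is a relatively open condition'' on the punctured curve as a consequence of continuity of $z_0$, $b_1$, and of the critical graph. That openness is precisely the nontrivial content of the theorem. Continuity of $\mathscr{J}^{(1)}_{\sigma}$ as a set does not prevent the connectivity pattern through the double zero $z_0(\sigma)$ from reorganizing: a switch from (a) to (b) or (c) can a priori occur either through the coincidence $z_0=\pm b_1$ (which you correctly exclude on the punctured arc via \eqref{em36}), or through an intermediate configuration in which $z_0(\sigma_0)$ lies simultaneously in the closure of $J_{\sigma_0}$ and of one of the legs $\ell^{(\pm b_1)}_{2}$, $\ell^{(\pm b_1)}_{3}$ away from $\pm b_1$. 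The paper kills this second mode explicitly: such a configuration would create a geodesic polygon with the two vertices $z_0(\sigma_0)$ and one of $\pm b_1(\sigma_0)$, which is forbidden by Lemma \ref{Lemma one or two vetex polygons are ruled out} (Teichm\"uller's lemma). Without that ingredient your connectedness argument does not close.

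Two smaller points. First, your description of the anchoring limit is inaccurate: the legs $\ell^{(\pm b_1)}_{2}$, $\ell^{(\pm b_1)}_{3}$ are unbounded critical trajectories, so they are not ``confined to small neighborhoods of $\pm 2$'' as $\sigma\to-2$; what you need (and what is true) is that they stay away from a neighborhood of the origin, so that $z_0(\sigma)\approx 0$ cannot lie on them. The paper sidesteps this limit altogether by adopting the convention $-2\in$ I and using $\pm z_0(-2)=0\in J_{-2}$ directly as the base point. Second, the proposed detour through the $\be$-plane does not supply the missing step: the auxiliary quadratic differential $\Xi(\be)\,\dd\be^2$ only records when $\Re[\eta_1(z_0(\sigma);\sigma)]=0$, i.e.\ when $z_0$ lies somewhere on the critical graph; it cannot by itself distinguish which arc of $\mathscr{J}^{(1)}_{\sigma}$ is hit, which is the whole point of the theorem.
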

	\begin{proof}
		We only prove the theorem for $\sigma \in$ I, as the theorem for $\sigma \in$ XII can be proven identically. Notice that for $\sigma=-2 \in$ I (see the caption of Figure \ref{fig:Crit Traj 1to3 sigma plane: obvious stuff removed}), we indeed have $\pm z_0(-2)=0 \in J_{-2} = [-2,2]$. Obviously for all $\sigma \in$ I, we need to have $z_0(\sigma)$ either belong to $J_{\sigma}$ or to $\mathscr{J}_{\sigma} \setminus J_{\sigma}$. For the sake of arriving at a contradiction, let us assume that for some $\sigma_1 \in$ I,  $z_0(\sigma_1)$ belongs to  $\mathscr{J}_{\sigma} \setminus J_{\sigma}$. Due to continuity in deformations of $z_0(\sigma)$ and   $\mathscr{J}_{\sigma}^{(1)}$, there has to be some intermediate $\sigma_0 \in$ I between $\sigma=-2$ and $\sigma_1$ so that $z_0(\sigma_0)$ simultaneously belongs to $J_{\sigma_0}$ and $\mathscr{J}_{\sigma_0} \setminus J_{\sigma_0}$. But this would lead to a geodesic polygon with two vertices at one of $\pm b_1(\sigma_0)$ and $z_0(\sigma_0)$ which is impossible due to Lemma \ref{Lemma one or two vetex polygons are ruled out}. One gets the same contradiction using the identical argument for $-z_0(\sigma_1)$.
	\end{proof}
	
	\begin{lemma}\label{Lemma K1 K2 K3}
		In $\boldsymbol{K}_1$ we have $\Re[\eta_1(\pm z_0(\sigma);\sigma)]>0$ while in $\boldsymbol{K}_2 \cup \boldsymbol{K}_3$ we have $\Re[\eta_1(\pm z_0(\sigma);\sigma)]<0$.
	\end{lemma}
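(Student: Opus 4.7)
The plan is to exploit the harmonicity of $\Re[\Psi(\sigma)] := \Re[\eta_1(z_0(\sigma);\sigma)]$. Wherever $\Psi$ is analytic, $\Re[\Psi]$ is harmonic, and by the very construction of the curves I, VI, VII, VIII, IX, XI, XII from the auxiliary quadratic differential $\Xi(\beta)\dd \beta^2$, its zero set contains all of them. Hence $\Re[\Psi]$ has constant sign on each connected component of its complement, so it suffices to evaluate it at a single test point in each of $\boldsymbol{K}_1$, $\boldsymbol{K}_2$, $\boldsymbol{K}_3$. Moreover, the symmetry \eqref{eta z and eta -z} gives $\Re[\eta_1(-z_0(\sigma);\sigma)]=\Re[\eta_1(z_0(\sigma);\sigma)]$, so the $\pm$ assertion reduces to a single statement about $+z_0$.

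The first substantive step is to verify that $\Re[\Psi]$ extends continuously across the branch cuts $L_\pm$ and $L$ of Remark \ref{Remark branches}. Away from these cuts $b_1(\sigma)$ and $z_0(\sigma)$ are analytic, hence so is $\Psi$; crossing a cut only changes the sign of $b_1(\sigma)$ and/or of $z_0(\sigma)$, while $b_1^2$ and $z_0^2$ stay single-valued. A direct calculation then shows that the induced jump in $\eta_1(z_0(\sigma);\sigma)$ is a period integral of the meromorphic differential $(s^2-z_0^2)\sqrt{s^2-b_1^2}\,\dd s$ of the form $\int_{-b_1}^{b_1}\!\cdot\,\dd s$ or $\int_{z_0}^{-z_0}\!\cdot\,\dd s$, which is purely imaginary (on $(-b_1,b_1)$ the square root is purely imaginary, and the second case is analogous after deforming the contour). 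Hence $\Re[\Psi]$ is globally continuous, and the maximum principle yields constant sign on each $\boldsymbol{K}_i$.

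To fix the signs I would pick the following test points. In $\boldsymbol{K}_1$ take any $\sigma_0>-2$ on the real axis: by Lemma \ref{one cut is non-empty!} the points $\pm z_0(\sigma_0)$ lie in the unstable land, i.e.\ $\Re[\Psi(\sigma_0)]>0$, and therefore $\Re[\Psi]>0$ throughout $\boldsymbol{K}_1$. In $\boldsymbol{K}_2\cup\boldsymbol{K}_3$ I would use asymptotics at $\sigma\to\infty$: from $\Xi(\beta)\sim 64/\beta^6$ as $\beta\to 0$ one obtains $\Psi(\beta)\sim -4/\beta^2$, and combining with $\beta^{(+)}(\sigma)\sim 4/\sigma$ for large $|\sigma|$ yields $\Psi(\sigma)\sim -\sigma^2/4$. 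Hence $\Re[\Psi(\sigma)]\to-\infty$ along any unbounded ray on which $\Re(\sigma^2)>0$, and since $\boldsymbol{K}_2$ and $\boldsymbol{K}_3$ are both unbounded and each contains such a ray (readable off from the asymptotic directions of the critical curves computed in \S\ref{subsec 1 to 3 aux QD}), we conclude $\Re[\Psi]<0$ on $\boldsymbol{K}_2\cup\boldsymbol{K}_3$.

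The hard part will be the continuity check across the branch cuts: one must keep careful track of which branch of $\sqrt{s^2-b_1^2}$, of $b_1$ and of $z_0$ is in use on each side of each cut, and verify that the accumulated discontinuities land in the purely imaginary period lattice rather than producing a genuine jump of the real part. A secondary but routine verification is that the asymptotic test rays actually enter $\boldsymbol{K}_2$ and $\boldsymbol{K}_3$ rather than the other two infinite complementary regions $\boldsymbol{K}_4$ and $\boldsymbol{K}_5$, which is settled by inspecting the asymptotic angles of the critical curves already obtained from the Teichm\"uller analysis.
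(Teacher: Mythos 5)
Your overall strategy --- constant sign of $\Re\Psi$ on each region plus one test point per region --- is the right shape, and your treatment of $\boldsymbol{K}_1$ (the test point $\sigma_0>-2$ via Lemma \ref{one cut is non-empty!}, together with the symmetry \eqref{eta z and eta -z}) is exactly what the paper does. The step that fixes the sign on $\boldsymbol{K}_2\cup\boldsymbol{K}_3$, however, is broken, and it only reaches the correct conclusion because two sign errors cancel. First, the asymptotics: near $\be=0$ one has $\sqrt{\Xi(\be)}=\epsilon\left(8\be^{-3}+O(\be^{-1})\right)$ for a single fixed sign $\epsilon=\pm1$, since the factor $\sqrt{(16-\be^2)(16+3\be^2)^3/1024}$ is analytic and nonvanishing there; hence $\Psi(\be)\approx-4\epsilon\,\be^{-2}$, i.e.\ $\Psi(\sigma)\approx-\epsilon\,\sigma^2/4$, uniformly in the angle of approach. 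The sign $\epsilon$ is pinned down by the requirement that $\Re\Psi>0$ on $(-2,\infty)\subset\boldsymbol{K}_1$, where $\sigma\to+\infty$ corresponds to $\be^{(+)}\to0^{+}$: this forces $\epsilon=-1$ and therefore $\Psi(\sigma)\sim+\sigma^2/4$, not $-\sigma^2/4$. Second, the geometry: the curves VI and VII go to infinity along the directions $\pi/4$ and $3\pi/4$ respectively (the four asymptotic directions of the image curves are $\pm\pi/4,\pm3\pi/4$, and $\ga_3=\mathrm{VII}$ tends to $e^{3\ii\pi/4}\infty$), so asymptotically $\boldsymbol{K}_2$ is the sector $\pi/4<\arg\sigma<3\pi/4$ --- it contains $4\ii$, $1+4\ii$, $-1+4\ii$ --- on which $\Re(\sigma^2)<0$. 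It contains no unbounded ray with $\Re(\sigma^2)>0$; such rays lie in $\boldsymbol{K}_1$ or beyond the branch cuts $L_{\pm}$. With both corrections one gets $\Re\Psi\sim\Re(\sigma^2)/4<0$ at infinity in $\boldsymbol{K}_2$, which is the desired sign, but your argument as written does not establish it.

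There is also a secondary gap in the ``constant sign on each component'' step. You invoke the inclusion of the critical curves in the zero set of $\Re\Psi$, but what the maximum-principle (really just intermediate-value) argument needs is the reverse inclusion on the regions in question: one must rule out additional components of $\{\Re\Psi=0\}$ --- non-critical vertical trajectories of $\Xi(\be)\,\dd\be^2$ on which $\Re\Psi$ happens to vanish --- passing through the interiors of $\boldsymbol{K}_1,\boldsymbol{K}_2,\boldsymbol{K}_3$. The paper disposes of both difficulties at once by working in the $\be$-plane: $\Psi$ of \eqref{Psi} maps the union $K_1\cup K_2$ of the two adjacent trajectory domains conformally onto the entire plane, so $\Re\Psi$ is nonvanishing on each of $K_1$, $K_2$ and takes opposite signs on the two; the single test point $(0,4)\subset K_1$ then settles everything, and the companion primitive $\Psi_*$ based at $+4\ii/\sqrt{3}$ handles $K_1\cup K_3$. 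Your concern about continuity of $\Re\Psi$ across $L_{\pm}$ and $L$ is legitimate and your sketch (purely imaginary period jumps) is plausible, but it does not substitute for this missing domain-structure input, nor for a correct determination of the branch in the asymptotics at $\be=0$.
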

	
	\begin{proof}
		Consider the region $\boldsymbol{K}_1$ shown in Figure \ref{fig:Crit Traj 1to3 sigma plane: obvious stuff removed}. under the map $\be^{(+)}$ (see \eqref{be + be -}) this region is mapped to the region bounded by the components 1, 12, 6, and 8 shown in Figure \ref{fig:Crit Traj be plane}, which we denote by $K_1$. Also, the region $\boldsymbol{K}_2$ gets mapped by  $\be^{(+)}$ to the interior of the components 6 and 7 of Figure \ref{fig:Crit Traj be plane} , which we denote by $K_2$, while the region $\boldsymbol{K}_3$ is mapped by  $\be^{(+)}$  to the interior of the components 8 and 9 of Figure \ref{fig:Crit Traj be plane}, which we denote by $K_3$.

		Consider the conformal map \eqref{Psi} restricted to $K_1 \cup K_2$. It is straightforward to see that $\Psi$ maps $K_1 \cup K_2$ to the entire $\Psi$-plane, where $K_1$ is either mapped to the right-half or the left-half plane. Indeed, $\Psi$ maps $K_1$ to the right half plane and $K_2$ to the left half plane. To see this it is enough to find a single point $\be_0 \in K_1$ and show that $\Re[\Psi(\be_0)]>0$.  Recall that the set $\{\sigma: \sigma>-2\}$ is inside $\boldsymbol{K}_1$, and its image $(0,4) \subset K_1$. From Lemma  \ref{one cut is non-empty!} we know that $\Re[\Psi(\be)]>0$ for all $\be \in (0,4)$ and thus for all $\be \in K_1$, and consequently $\Re[\Psi(\be)]<0$ for all $\be \in K_2$. Consequently, $\Re\left[ \Psi\left(\sigma\right) \right] >0$ for all $\sigma \in \boldsymbol{K}_1$ and $\Re\left[ \Psi\left(\sigma\right) \right] <0$ for all $\sigma \in \boldsymbol{K}_2$. 
		
		Similarly, by considering the conformal map 
		\begin{equation*}\label{Psi22}
		\Psi_*(\be) = \int^{\be}_{\frac{4 \ii}{\sqrt{3}}} \sqrt{\Xi(s) } \dd s,
		\end{equation*}
		restricted to $K_1 \cup K_3$, we can show that $\Re\left[ \Psi\left(\sigma\right) \right] <0$ for all $\sigma \in \boldsymbol{K}_3$. So we have justified that when $\sigma$ passes from $\boldsymbol{K}_1$ to $\boldsymbol{K}_2$  (resp. $\boldsymbol{K}_3$) through VI (resp. VIII), the function $\Re\left[ \Psi\left(\sigma\right) \right] \equiv \Re\left[ \eta_1\left(z_0(\sigma);\sigma\right)\right]$ changes sign from positive to negative, that is $z_0(\sigma)$ moves from an unstable land to a stable land. We have the same conclusion for $-z_0(\sigma)$ due to \eqref{eta z and eta -z}.
	\end{proof}

	\begin{theorem}\label{Lemma Who Hits Who}
		It holds that
		
		\begin{itemize}
			\item $-z_0(\sigma) \in \ell_2^{(b_1)}$ and $z_0(\sigma) \in \ell_2^{(-b_1)}$ for $\sigma \in \mbox{\normalfont VI} $, 
			\item $z_0(\sigma) \in \ell_3^{(b_1)}$ and $-z_0(\sigma) \in \ell_3^{(-b_1)}$, for $\sigma \in \mbox{\normalfont VIII}$,
			\item $-z_0(\sigma) \in \ell_3^{(b_1)}$ and $z_0(\sigma) \in \ell_3^{(-b_1)}$, for $\sigma \in \mbox{\normalfont VII},$ and
			\item $z_0(\sigma) \in \ell_2^{(b_1)}$ and $-z_0(\sigma) \in \ell_2^{(-b_1)}$, for $\sigma \in \mbox{\normalfont IX} $.
		\end{itemize}

	\end{theorem}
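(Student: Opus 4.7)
The overall strategy is continuous deformation combined with the structural lemmas \ref{lemma humps}, \ref{lemma structure 2}, and \ref{lemma structure 3} that describe the stable and unstable lands adjacent to each component of $\widehat{\Sigma}$. The symmetry identity $\eta_1(-z;\sigma) = \eta_1(z;\sigma) \pm 2\pi \ii$ from Lemma \ref{Lemma symmetry of J} automatically pairs the assignment of $z_0(\sigma)$ to a ray $\ell_j^{(\pm b_1)}$ with $-z_0(\sigma)$ on its mirror $\ell_j^{(\mp b_1)}$, so it suffices to pin down the ray containing $z_0(\sigma)$ in each of the four cases.

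For $\sigma \in$ VI, Lemma \ref{Lemma who hits who 1} reduces the possibilities to (a), (b), (c). Case (a) is excluded by Theorem \ref{sigma in I}, which characterizes case (a) as $\sigma \in$ I $\cup$ XII; these curves share with VI only the multi-critical point $\ii \sqrt{12}$, so a continuous deformation away from this point that preserved $\pm z_0(\sigma) \in J_\sigma$ would have to remain in I, contradicting $\sigma \in$ VI. To distinguish (b) from (c), I approach VI from within the adjacent open region $\boldsymbol{K}_2$: Lemma \ref{Lemma K1 K2 K3} ensures $\pm z_0(\sigma)$ lie in the stable lands throughout $\boldsymbol{K}_2$, while the chosen branch of \eqref{em37} fixes the sign of $\Re z_0(\sigma)$ there. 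Lemma \ref{lemma structure 2} or \ref{lemma structure 3} (according to that sign) identifies the unique connected component of the stable region that hosts $z_0(\sigma)$; its boundary inside $\mathscr{J}^{(1)}_\sigma$ is a concatenation of one critical ray and pieces of the humps. Since accumulation on a hump is forbidden by Lemma \ref{Lemma who hits who 1}, $z_0(\sigma)$ must land on the distinguished ray, which by direct inspection of the angular layout at infinity is $\ell_2^{(-b_1)}$; the symmetry then yields $-z_0(\sigma) \in \ell_2^{(b_1)}$.

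The case $\sigma \in$ VIII is handled identically, with $\boldsymbol{K}_3$ replacing $\boldsymbol{K}_2$; the opposite sign of $\Re z_0(\sigma)$ and the corresponding structural lemma produce $z_0(\sigma) \in \ell_3^{(b_1)}$ and $-z_0(\sigma) \in \ell_3^{(-b_1)}$. For $\sigma \in$ VII $\cup$ IX, Lemma \ref{Lemma who hits who 1} does not apply directly, but its proof scheme goes through verbatim to yield the same three-possibility trichotomy, and case (a) is again excluded by Theorem \ref{sigma in I}. To distinguish (b) from (c), I approach VII from $\boldsymbol{K}_2$ and IX from $\boldsymbol{K}_3$, this time from the opposite side of the corresponding $\beta$-plane critical curve; the Joukowsky preimage of VII (resp.\ IX) is the mirror image (under $\beta \mapsto -\beta$) of the preimage of VI (resp.\ VIII), and this mirror relation precisely swaps the roles of $z_0$ and $-z_0$ in the final assignment, producing the remaining two statements.

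The main obstacle is the final step of the second paragraph: verifying that the specific ray bounding the target stable component is $\ell_2^{(-b_1)}$ and not, say, $\ell_3^{(b_1)}$ or one of the $\ell_j^{(b_1)}$ rays. This is a careful Teichm{\"u}ller-type bookkeeping task, requiring that the asymptotic angles at infinity listed in Lemmas \ref{lemma humps}, \ref{lemma structure 2}, and \ref{lemma structure 3} be matched to the labeling convention $\ell_1^{(\pm b_1)} \subset J_\sigma$ throughout the continuous deformation from a conveniently chosen reference point, such as a point on the positive real axis inside $\boldsymbol{K}_1$, into each of the target regions.
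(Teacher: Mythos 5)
There is a genuine gap at the decisive step, and you have in effect flagged it yourself in your final paragraph: the entire content of the theorem is the discrimination between $\ell_2^{(\pm b_1)}$ and $\ell_3^{(\pm b_1)}$ on each of the four curves, and your proposal defers exactly that to ``a careful Teichm\"uller-type bookkeeping task'' without supplying the mechanism that performs it. The structural Lemmas \ref{lemma humps}, \ref{lemma structure 2}, and \ref{lemma structure 3} give the asymptotic directions of the rays and humps \emph{given} that $\pm z_0$ lie in stable or unstable lands, but they do not identify \emph{which} connected component of the stable lands hosts $z_0(\sigma)$ for $\sigma\in\boldsymbol{K}_2$, so they cannot by themselves tell you which boundary ray $z_0$ limits onto as $\sigma$ approaches VI or VII. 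Worse, your route risks circularity: in the paper the detailed structure of the critical graph for $\sigma\in\boldsymbol{K}_2$ (which component contains $z_0$, how the rays reorganize) is \emph{derived from} Theorem \ref{Lemma Who Hits Who} in Theorem \ref{THMK2K3}, not the other way around, so it is not available as an input here. The paper's actual mechanism is a concrete local computation at the tri-critical point: the expansion $\eta_1(-z_0(\sigma);\sigma)=\frac{4\sqrt{2}}{5}3^{-1/8}e^{3\pi\ii/8}(\sigma-\ii\sqrt{12})^{5/4}(1+O((\sigma-\ii\sqrt{12})^{1/4}))$, combined with the local angles $\theta_1=-7\pi/10$, $\theta_6=\pi/10$, $\theta_7=9\pi/10$ of I, VI, VII at $\ii\sqrt{12}$ (obtained from the $2\pi/5$ angles of the auxiliary quadratic differential and the vanishing of $\sigma'(-4\ii/\sqrt{3})$), which pins down the sign of $\Im[\eta_1(-z_0(\sigma);\sigma)]$ on each curve near $\ii\sqrt{12}$; matching that sign against the known sign of $\Im\,\eta_1$ on each side of each ray (from the conformal-map picture of Figure \ref{fig:Conf Map}) selects $\ell_2^{(b_1)}$ for VI and $\ell_3^{(b_1)}$ for VII. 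The global statement then follows by the same connectedness argument as in Theorem \ref{sigma in I}. Nothing in your proposal substitutes for this computation.

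Two secondary points. First, your exclusion of case (a) on VI via Theorem \ref{sigma in I} uses that theorem in the wrong direction: it asserts $\pm z_0\in J_\sigma$ \emph{for} $\sigma\in\mathrm{I}\cup\mathrm{XII}$, not that this happens \emph{only} there; the claim that a deformation preserving $\pm z_0\in J_\sigma$ ``would have to remain in I'' is precisely what needs proof (it can be repaired with the two-vertex-polygon argument of Lemma \ref{Lemma one or two vetex polygons are ruled out}, but you should say so). Second, Lemma \ref{Lemma who hits who 1} is stated only for $\sigma\in\mathrm{I}\cup\mathrm{VI}\cup\mathrm{XII}\cup\mathrm{VIII}$; asserting that its proof ``goes through verbatim'' for VII and IX is not automatic, since the starting configuration there ($z_0$ in the stable lands of $\boldsymbol{K}_2$ or $\boldsymbol{K}_3$) differs from the one ($\sigma_0>-2$, $z_0$ unstable) from which that lemma's deformation argument is launched.
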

	
	\begin{proof}
		We only prove this for $\sigma \in$ VI and $\sigma \in$ VII, as the proof for $\sigma \in$ VIII and $\sigma \in$ IX can be done identically.	We first show this locally in an $\ep$-neighborhood $D_{\ep}(\ii \sqrt{12})$ of $\ii \sqrt{12}$, for small enough $\ep>0$. Notice that as $\sigma$ approaches to $\ii \sqrt{12}$, $\mp z_0(\sigma)$ approaches to $\pm b_1(\sigma)$. So we consider the asymptotics of  $\eta_1(-z_0(\sigma);\sigma)$ as $\sigma$ approaches to $\ii \sqrt{12}$. We indeed find that the order of vanishing is $5/4$: 
		\begin{equation}\label{asymp eta -z0}
		\eta_1(-z_0(\sigma);\sigma) = \frac{4\sqrt{2}}{5}3^{-1/8}e^{3 \pi \ii/8} \left(\sigma - \ii \sqrt{12}\right)^{5/4} \left(1+O\left(\left(\sigma - \ii \sqrt{12}\right)^{1/4}\right)\right). 
		\end{equation}
		From the properties of the auxiliary quadratic differential we know that the local angle between the components labeled by $1$ and $6$ in Figure \ref{fig:Crit Traj be plane} is $2 \pi/5$. The map \eqref{sigma beta} is not conformal at $\be=-4 \ii/ \sqrt{3}$, indeed
		\[ \sigma(-\frac{4\ii}{\sqrt{3}})=\ii \sqrt{12}, \qquad \sigma^{\prime}(-\frac{4\ii}{\sqrt{3}})=0, \qquad \sigma^{\prime \prime}(-\frac{4\ii}{\sqrt{3}})=-\frac{3\sqrt{3}}{8}\ii. \]
		This means that the local angle at $\ii \sqrt{12}$ between the images I and VI (see Figure \ref{fig:Crit Traj 1to3 sigma plane: obvious stuff removed}) of $1$ and $6$ is $4 \pi/5$. This analysis also gives us the local angles $\theta_1, \theta_6, \theta_{7}$ respectively of components I, VI, and VII made with the ray $x+\ii \sqrt{12}$, $x>0$: \[ \theta_1 = \frac{-7\pi}{10}, \qquad \theta_6 = \frac{\pi}{10}, \qquad \theta_7 = \frac{9\pi}{10}, \qquad \mbox{where} \qquad  \sigma - \ii \sqrt{12} = \rho e^{\ii \theta}, \qquad -\pi<\theta<\pi.  \]
		We can now notice that
		\begin{itemize}
			\item If $\sigma \in$ I, the leading order approximation of $\eta_1(-z_0(\sigma);\sigma)$ given by \eqref{asymp eta -z0},  is purely imaginary $\ii y_1(\rho)$, with  $y_1(\rho)<0$,as $3\pi/8+5\theta_1/4=-\pi/2$, 
			\item If $\sigma \in$ VI,  the leading order approximation of $\eta_1(-z_0(\sigma);\sigma)$ given by \eqref{asymp eta -z0},  is purely imaginary $\ii y_6(\rho)$, with  $y_6(\rho)>0$, as $3\pi/8+5\theta_6/4=\pi/2$,  and 
			\item If $\sigma \in$ VII,  the leading order approximation of $\eta_1(-z_0(\sigma);\sigma)$ given by \eqref{asymp eta -z0},  is purely imaginary $\ii y_7(\rho)$, with  $y_7(\rho)<0$, as $3\pi/8+5\theta_7/4=3\pi/2$. 	
		\end{itemize}
		
		This means that as $\sigma$ approaches to VI, from $\boldsymbol{K}_1 \cap D_{\ep}(\ii \sqrt{12})$, $-z_0(\sigma)$ must approach $\ell^{(b_1(\sigma))}_2$ from the right (We orient $\ell^{(b_1(\sigma))}_2$,  $\ell^{(b_1(\sigma))}_3$, $\ell^{(-b_1(\sigma))}_2$, and  $\ell^{(-b_1(\sigma))}_3$ in the outward direction as they emanate from $\pm b_1(\sigma)$), where we know that $\Im(\eta(z;\sigma))>0$ (See Figures \ref{fig:Conf map 1} and \ref{fig:Conf map 2}), and as $\sigma$ approaches to VII, from $\boldsymbol{K}_2 \cap D_{\ep}(\ii \sqrt{12})$, $-z_0(\sigma)$ must approach from the right to $\ell^{(b_1(\sigma))}_3$ where we know that $\Im(\eta(z;\sigma))<0$ by the identical conformal mapping arguments used to draw the Figures \ref{fig:Conf map 1} and \ref{fig:Conf map 2}. Notice in the latter case, $-z_0(\sigma)$ can not approach the support $J_{\sigma}$ where we also know $\Im(\eta(z;\sigma))<0$. This is because it has to do so via the unstable lands, while in $\boldsymbol{K}_2$ we know that $\Re[\eta(-z_0(\sigma);\sigma)]<0$. The symmetry implies that as $\sigma$ approaches to VI, from $\boldsymbol{K}_1\cap D_{\ep}(\ii \sqrt{12})$, $z_0(\sigma)$ must approach $\ell^{(-b_1(\sigma))}_2$ from the right, and as $\sigma$ approaches to VII, from $\boldsymbol{K}_2 \cap D_{\ep}(\ii \sqrt{12})$, $z_0(\sigma)$ must approach from the right to $\ell^{(-b_1(\sigma))}_3$.
		
		Now we extend this local result to the entirety of VI and VII using the same argument presented in Theorem \ref{sigma in I}.

	\end{proof}

	Due to Lemma \ref{Lemma Who Hits Who}, the values of  $\sigma \in \mbox{\normalfont VI} \cup \mbox{\normalfont VIII}$ do not belong to $\mathcal{O}^*_1$. However, in the following Theorem we show that they do belong to the larger set  $\mathcal{O}_1$ (recall the Definitions \ref{Def one cut sigma} and \ref{Def one cut sigma minus fake transition}).
	\begin{theorem}\label{THM Fake Transition}
		All $\sigma \in \mbox{\normalfont VI} \cup \mbox{\normalfont VIII}$ belong to $\mathcal{O}_1$. 
	\end{theorem}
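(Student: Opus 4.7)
The plan is to check each of the three conditions in Definition \ref{Def one cut sigma} for $\sigma \in \mbox{\normalfont VI} \cup \mbox{\normalfont VIII}$. Using the $z \mapsto -z$ symmetry $\eta_1(-z;\sigma) = \eta_1(z;\sigma) \pm 2\pi\ii$ and the fact that $\ell_2$ and $\ell_3$ play mirrored roles on VI and VIII respectively (Theorem \ref{Lemma Who Hits Who}), it suffices to handle $\sigma \in \mbox{\normalfont VI}$.

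Condition (2) is immediate from Theorem \ref{Lemma Who Hits Who}: on VI we have $-z_0(\sigma) \in \ell_2^{(b_1)}$ and $z_0(\sigma) \in \ell_2^{(-b_1)}$, while $J_\sigma$ coincides with $\ell_1^{(b_1)} = \ell_1^{(-b_1)}$, a distinct critical arc at $b_1(\sigma)$ separated from $\ell_2^{(b_1)}$ by the angle $2\pi/3$; hence $\pm z_0(\sigma) \notin J_\sigma$. For condition (1), I would approximate $\sigma \in \mbox{\normalfont VI}$ by a sequence $\sigma_n \in \mathcal{O}_1^* \cap \boldsymbol{K}_1$ converging to $\sigma$. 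Such sequences exist because $\mathcal{O}_1^*$ is open (Theorem \ref{O1* is open}), $\boldsymbol{K}_1 \cap \mathcal{O}_1^* \supset (-2,\infty)$ by Lemma \ref{one cut is non-empty!}, and $\Re[\eta_1(\pm z_0;\sigma)] > 0$ throughout $\boldsymbol{K}_1$ by Lemma \ref{Lemma K1 K2 K3}, so there is no topological obstruction to membership in $\mathcal{O}_1^*$ inside $\boldsymbol{K}_1$. By Lemma \ref{continuous deformations of one cut critical graph}, the single Jordan arcs $J_{\sigma_n}$ deform continuously to a subset of $\mathscr{J}^{(1)}_\sigma$ joining $-b_1(\sigma)$ to $b_1(\sigma)$, which remains a single Jordan arc because $\pm z_0(\sigma)$ lie on $\ell_2$ rather than on $\ell_1$.

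For condition (3), the expansion $\eta_1(z;\sigma) \sim -z^4/4$ at infinity places the sector $\{r e^{\ii\theta} : r > R,\ |\theta| < \pi/8 - \ep\}$ inside $\{\Re \eta_1 < 0\}$ for $R$ large. At $\sigma \in \mbox{\normalfont VI}$, the trajectory $\ell_3^{(b_1)}$ still emanates from $b_1(\sigma)$ and approaches infinity at angle $\pi/8$ by continuity from $\boldsymbol{K}_1$, while $\ell_2^{(b_1)}$ is now a short trajectory from $b_1(\sigma)$ to the double zero $-z_0(\sigma)$. Among the three critical trajectories emanating from $-z_0(\sigma)$ other than the incoming $\ell_2^{(b_1)}$, exactly one continues to infinity at angle $-\pi/8$, identified as the limit of the portion of $\ell_2^{(b_1)}$ past $-z_0$ on the $\boldsymbol{K}_1$ side. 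These three pieces, together with the arc at infinity between $-\pi/8$ and $\pi/8$, bound a simply connected stable region attached to $b_1(\sigma)$ and reaching $+\infty$ in the positive real direction, inside which a complementary arc $\Ga_\sigma(b_1(\sigma), \infty)$ may be chosen, verifying condition (3).

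The main obstacle is the precise global topology of $\mathscr{J}^{(1)}_\sigma$ at the saddle point $-z_0(\sigma)$: I need to show that the four critical trajectories incident with the double zero $-z_0$ fit together globally as described, and in particular that the remaining two trajectories from $-z_0$ do not combine with $\ell_3^{(b_1)}$, with the symmetric trajectories at $z_0$, or with each other, to produce an unexpected $Q_1$-polygon that would either break the Jordan arc $J_\sigma$ or block the complementary arc. This is handled by applying Teichm\"uller's lemma to the modified $Q_1$-polygons having $\pm z_0$ as vertices, combined with the $z \mapsto -z$ symmetry and a continuity comparison with the structures of Lemma \ref{lemma humps} on the $\boldsymbol{K}_1$ side and of Lemmas \ref{lemma structure 2}--\ref{lemma structure 3} on the $\boldsymbol{K}_2$ side.
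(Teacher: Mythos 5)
Your proposal is correct and follows essentially the same route as the paper: both rest on Theorem \ref{Lemma Who Hits Who} to place $\pm z_0(\sigma)$ on $\ell_2^{(\mp b_1)}$ rather than on $J_\sigma$ (giving condition (2) and explaining why only the larger set $\mathcal{O}_1$, not $\mathcal{O}^*_1$, is claimed), on the folding of $\ell_2^{(b_1)}$ at the double zero $-z_0$ into a short trajectory plus a ray at angle $-\pi/8$ together with the persistence of $\ell_3^{(b_1)}$ along the direction $\pi/8$ to produce the complementary arc to $+\infty$ (condition (3)), and on the exhaustion of the eight asymptotic directions to keep the single connection from $-b_1$ to $b_1$ (condition (1)). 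The paper resolves what you call the ``main obstacle'' exactly in the spirit you indicate, but via Lemma \ref{humps connect at criticality} (a count of solutions of $\Re[\eta_1]=0$ at infinity forces the hump legs to attach at $\pm z_0$) rather than a fresh Teichm\"uller computation, and it replaces your limiting-sequence argument for condition (1) with the same counting argument; these are minor variations within the same proof.
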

	\begin{proof}
		By Theorem \ref{Lemma Who Hits Who}, at $\sigma_{*} \in$ VI, we have $-z_0(\sigma_*) \in \ell_2^{(b_1)}$ and $z_0(\sigma_*) \in \ell_2^{(-b_1)}$. By Lemma \ref{humps connect at criticality}, there are no disconnected components for the critical graph and one has four critical trajectories incident at right angles at both $\pm z_0(\sigma_*)$. Among these four critical trajectories, two must come from the two legs $\mathcal{L}^{(\sigma_*)}_{2,l}$ and $\mathcal{L}^{(\sigma_*)}_{2,r}$ of $\mathcal{L}^{(\sigma_*)}_2$ which make a $\pi/2$ angle with each other at $-z_0(\sigma_*)$ and approach to infinity respectively along the directions $-5 \pi/8$ and $-3 \pi/8$, while the other two must come from $\ell_2^{(b_1)}$ folding at a $\pi/2$ angle into a \textit{short} critical trajectory $\hat{\ell}^{(b_1(\sigma_*))}_2$ (connecting $b_1(\sigma_*)$ to $-z_0(\sigma_*)$) and another component $\check{\ell}^{(b_1(\sigma_*))}_2$ connecting $z_0(\sigma_*)$ to infinity along the angle $-\pi/8$.	Notice that $\ell_3^{(b_1)}$ must still approach to infinity at the angle $\pi/8$ when $\sigma_* \in$ VI, due to continuity of deformations and that it has not been hit by $\pm z_0(\sigma_*)$. Thus, when $\sigma_* \in$ VI, one still has the region $\Om^{(\sigma_*)}_1$ which encompasses $(M(\sigma_*),\infty)$ for some $M(\sigma_*)>0$ (See Figure \ref{fig:stable barren critical}). 
		
		Notice that there is still a single connection from  $-b_1(\sigma_*)$ to $b_1(\sigma_*)$ to avoid having too many solutions at $\infty$. This proves that any $\sigma \in$ VI belongs to $\mathcal{O}_1$. An identical argument shows that  any $\sigma \in$ VIII belongs to $\mathcal{O}_1$.
	\end{proof}
	
	\begin{theorem}\label{THMK2K3}
		The regions $\boldsymbol{K}_2$ and $\boldsymbol{K}_3$ shown in Figure \ref{fig:Crit Traj 1to3 sigma plane: obvious stuff removed}  both belong to $\mathcal{O}^*_1$.
	\end{theorem}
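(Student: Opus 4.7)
The plan is to verify each of the three requirements of Definition \ref{Def one cut sigma minus fake transition} for every $\sigma \in \boldsymbol{K}_2 \cup \boldsymbol{K}_3$. Requirement (2) is immediate: by Lemma \ref{Lemma K1 K2 K3}, $\Re[\eta_1(\pm z_0(\sigma);\sigma)] < 0 \neq 0$ throughout both regions, hence $\pm z_0(\sigma)$ cannot lie on the level set $\mathscr{J}^{(1)}_\sigma$. All the work therefore lies in establishing requirements (1) and (3), and I propose to carry this out by a standard open/closed/connectedness argument within each of $\boldsymbol{K}_2$ and $\boldsymbol{K}_3$.

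First, I show that $\boldsymbol{K}_2 \cap \mathcal{O}^*_1$ is nonempty. Fix any $\sigma_0$ on the boundary curve VI, chosen away from the multi-critical points $\pm\ii\sqrt{12}$. By Theorem \ref{THM Fake Transition}, $\sigma_0 \in \mathcal{O}_1$, so $\mathscr{J}^{(1)}_{\sigma_0}$ contains a Jordan arc $J_{\sigma_0}$ connecting $\pm b_1(\sigma_0)$, together with a complementary arc in the unbounded stable land $\Om^{(\sigma_0)}_1$ identified in the proof of Theorem \ref{THM Fake Transition}. I then perturb $\sigma_0$ transversally into $\boldsymbol{K}_2$ to obtain a nearby $\sigma_1$. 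By Lemma \ref{continuous deformations of one cut critical graph} the critical graph deforms continuously, both the Jordan arc and the complementary arc persist, and by Lemma \ref{Lemma K1 K2 K3} we now have $\pm z_0(\sigma_1) \notin \mathscr{J}^{(1)}_{\sigma_1}$. Hence $\sigma_1 \in \mathcal{O}^*_1$.

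Next, I show that $\boldsymbol{K}_2 \cap \mathcal{O}^*_1$ is both open and closed relative to $\boldsymbol{K}_2$. Openness is immediate from Theorem \ref{O1* is open}. For closedness, let $\{\sigma_n\} \subset \boldsymbol{K}_2 \cap \mathcal{O}^*_1$ with $\sigma_n \to \sigma_* \in \boldsymbol{K}_2$. Requirement (2) at $\sigma_*$ is automatic from Lemma \ref{Lemma K1 K2 K3}. Applying Lemma \ref{continuous deformations of one cut critical graph} once more, the Jordan arcs $J_{\sigma_n}$ converge, as compact sets in the Hausdorff topology, to a Jordan arc in $\mathscr{J}^{(1)}_{\sigma_*}$ connecting $\pm b_1(\sigma_*)$, and the unbounded stable land deforms continuously, supplying a complementary arc at the limit. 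Since $\boldsymbol{K}_2$ is connected, we conclude $\boldsymbol{K}_2 \subset \mathcal{O}^*_1$. The argument for $\boldsymbol{K}_3$ is identical, with the boundary curve VI replaced by VIII and invoking Theorem \ref{Lemma Who Hits Who} for $\sigma \in \mbox{VIII}$ in the nonemptiness step.

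The main obstacle is the closedness step: one must exclude topological surgery at the limit, namely show that the Jordan arc between $\pm b_1(\sigma)$ cannot pinch off or merge with an extraneous component of $\mathscr{J}^{(1)}_{\sigma_*}$, and that the far-field stable land cannot detach from $b_1(\sigma_*)$. The strict inequality $\Re[\eta_1(\pm z_0(\sigma_*);\sigma_*)] < 0$ supplied by Lemma \ref{Lemma K1 K2 K3} is crucial here: it keeps $\pm z_0(\sigma_*)$ uniformly separated from any critical trajectory at the limit, which combined with Lemma \ref{Lemma one or two vetex polygons are ruled out}, ruling out degenerate one- or two-vertex geodesic polygons, prevents the only degenerations that could otherwise spoil the structure.
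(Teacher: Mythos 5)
Your overall skeleton --- requirement (2) from Lemma \ref{Lemma K1 K2 K3}, plus an open/closed/connectedness propagation inside $\boldsymbol{K}_2$ --- is a reasonable repackaging of what the paper does (the paper's Lemmas \ref{lemma delta}--\ref{lemma good region} are exactly the ``structure is locally constant as long as $z_0(\sigma)$ stays off $\mathscr{J}^{(1)}_{\sigma}$'' principle that powers your propagation step). But there is a genuine gap, and it sits precisely where all of the content of the paper's proof lives: the base case. You obtain nonemptiness of $\boldsymbol{K}_2\cap\mathcal{O}^*_1$ by taking $\sigma_0\in\mathrm{VI}$, citing Theorem \ref{THM Fake Transition} to get $\sigma_0\in\mathcal{O}_1$, and then asserting that under a small perturbation into $\boldsymbol{K}_2$ ``both the Jordan arc and the complementary arc persist'' by continuity of the critical graph. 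This begs the question. The curve VI is, by construction, exactly the locus where $-z_0(\sigma)$ collides with $\ell_2^{(b_1)}$ and the critical graph is degenerate: four trajectories meet at right angles at $\mp z_0(\sigma_0)$ ($\ell_2^{(b_1)}$ folded into $\hat{\ell}^{(b_1)}_2\cup\check{\ell}^{(b_1)}_2$ together with the two legs of the hump). As $\sigma$ leaves VI into $\boldsymbol{K}_2$ this degenerate vertex must resolve, and there are a priori two combinatorially distinct reconnections of those four local arcs. Continuity of the level set $\mathscr{J}^{(1)}_{\sigma}$ (Lemma \ref{continuous deformations of one cut critical graph}) does not select between them, and the two resolutions give different decompositions of the plane into stable and unstable lands --- so you cannot conclude from continuity alone that a complementary arc $\Ga_{\sigma_1}(b_1(\sigma_1),\infty)$ encompassing $(M(\sigma_1),\infty)$ survives. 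The paper's proof is devoted almost entirely to ruling out the wrong resolution via Teichm\"uller's lemma, which forces the configuration of Figure \ref{fig:stable barren 1 4} (with $\ell_2^{(b_1)}$ now running to $\infty$ at angle $-5\pi/8$ and $\ell_3^{(b_1)}$ still at $\pi/8$), from which requirement (3) is read off. Your proposal omits this analysis entirely.

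Two smaller remarks. First, in the closedness step you assert that Hausdorff limits of the arcs $J_{\sigma_n}$ are again Jordan arcs joining $\pm b_1(\sigma_*)$; this is not automatic for Hausdorff limits and must again be justified by the no-restructuring principle (as in Lemma \ref{lemma connectivity}), which you do gesture at, so this is acceptable in spirit but should be said explicitly. Second, a way to repair your base case without redoing the paper's Teichm\"uller analysis would be to verify requirements (1) and (3) directly at a single explicit interior point of $\boldsymbol{K}_2$, e.g.\ $\sigma=4\ii$, where the symmetry aligning $\pm b_1$ and $\pm z_0$ along the bisector of the second and fourth quadrants makes the critical graph computable by hand; your open/closed argument would then take over. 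As written, however, the proposal does not prove the theorem.
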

	\begin{proof}
		We only prove this for $ \boldsymbol{K}_2$ as the proof for $\boldsymbol{K}_3$ is identical.	As $\sigma$ moves from $\sigma_* \in$ VI to some  $\sigma_1 \in \boldsymbol{K}_2$, $\pm z_0(\sigma_1)$ must lie in $\sigma_1$-stable lands by Lemma \ref{Lemma K1 K2 K3}. Recalling Figure \ref{fig:stable barren critical} for $\sigma_* \in$ VI, this is only possible if at the onset of the entrance of $z_0$ into the stable lands, $\mathcal{L}^{(\sigma_*)}_{2,l}$ and $\hat{\ell}^{(b_1(\sigma_*))}_2$ form the \textit{new} $\ell^{(b_1)}_2$ and $\mathcal{L}^{(\sigma_*)}_{2,r}$ and $\check{\ell}^{(b_1(\sigma_*))}_2$ form the \textit{new} hump $\mathcal{L}^{(4)}_{\sigma}$ (these notations are introduced partially in the proof of Theorem \ref{THM Fake Transition} and in the statement of Theorem \ref{lemma structure 2}), as the other possibility where $\hat{\ell}^{(b_1(\sigma_*))}_2$ and $\check{\ell}^{(b_1(\sigma_*))}_2$ form the new  $\ell^{(b_1)}_2$ is not allowed by the Teichm\"uller's lemma regardless of which stable land $z_0$ enters. This means that one indeed has Figure \ref{fig:stable barren 1 4} once $\sigma$ moves from $\sigma_* \in$ VI to some  $\sigma_1 \in \boldsymbol{K}_2$. Since at $\sigma_1$, $\ell^{(b_1)}_3$ still approaches to $\infty$ along the $\pi/8$ direction, $\ell^{(b_1)}_2$ approaches to $\infty$ along the $-5\pi/8$ direction. At $\sigma_1$ one indeed has a contour $\Ga_{\sigma_1}(b_1(\sigma_1), \infty)$ entirely in the stable lands which encompasses $(M(\sigma_1),\infty)$ for some $M(\sigma_1)>0$ (See the orange dashed lines in Figure \ref{fig:stable barren 1 4}). One always has this connection to infinty as long as $z_0(\sigma)$ does not hit $\ell^{(b_1)}_3$ which could block this access to the positive real axis (See Figures \ref{fig:fertile barren 4} and \ref{fig:stable barren -1 4}). But for all $\sigma \in \boldsymbol{K}_2$ this does not happen which finishes the proof that $\boldsymbol{K}_2 \subset \mathcal{O}^*_1$.
	\end{proof}

	\begin{theorem}\label{THM VII IX}
		The lines VII and IX form part of the boundaries of the one-cut region. More precisely, for all $\sigma \in$ VII $\cup$ IX, and all $\sigma \in \boldsymbol{K}_4 \cup \boldsymbol{K}_5$ the one-cut definition does not hold.  
	\end{theorem}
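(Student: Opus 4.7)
My plan is to reduce by the symmetry $\sigma \mapsto \overline{\sigma}$ to the case $\sigma \in \text{VII} \cup \boldsymbol{K}_4$. For $\sigma \in \text{VII}$, Theorem \ref{Lemma Who Hits Who} yields $\pm z_0(\sigma) \in \ell_3^{(\mp b_1(\sigma))}$, and Lemma \ref{humps connect at criticality} forces the hump $\mathcal{L}^{(1)}_\sigma$ to connect to the remainder of $\mathscr{J}^{(1)}_\sigma$ at $-z_0(\sigma)$. The trajectory $\ell_3^{(b_1)}$ consequently decomposes into a short arc $\hat{\ell}_3^{(b_1)}$ from $b_1$ to $-z_0$ together with a continuation $\check{\ell}_3^{(b_1)}$ from $-z_0$ to $\infty$, and these two arcs together with the two legs $\mathcal{L}^{(1)}_{l}, \mathcal{L}^{(1)}_{r}$ of the hump all meet at $-z_0$ at right angles.

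I would first identify the directions of approach at $\infty$ of the composite curve $\hat{\ell}_3^{(b_1)} \cup \mathcal{L}^{(1)}_r$ by applying the Teichm\"uller lemma \eqref{Teich Lemma} to the geodesic polygons with vertices among $\{b_1, -z_0, \infty\}$, using continuity of the deformation from $\boldsymbol{K}_1$ (Lemma \ref{continuous deformations of one cut critical graph}) together with the persistence of the eight octants $\{\pi/8 + k\pi/4\}_{k=0}^{7}$ at $\infty$. I would then argue that this composite curve, together with its reflection from $-b_1$, separates the stable wedge immediately to the right of $b_1$ from every half-line $(M,\infty)$, verifying by a conformal-mapping analysis analogous to Figure \ref{fig:Conf Map} applied to the modified critical graph that the connected component of $\{z : \Re[\eta_1(z;\sigma)]<0\}$ that touches $b_1$ on the right does not intersect $(M,\infty)$ for any $M>0$. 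This establishes the failure of condition (3) of Definition \ref{Def one cut sigma} on $\text{VII}$.

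For $\sigma \in \boldsymbol{K}_4$ I would extend the conformal-mapping argument of Lemma \ref{Lemma K1 K2 K3}, using the map $\Psi$ in \eqref{Psi}, to verify that $\Re[\eta_1(\pm z_0(\sigma);\sigma)]$ does not change sign as $\sigma$ crosses $\text{VII}$ into $\boldsymbol{K}_4$, so that $\pm z_0$ remain in unstable lands there. Continuous deformation of $\mathscr{J}^{(1)}_\sigma$ across $\text{VII}$ then resolves the four-valent junction at $-z_0$ through the unique Teichm\"uller-consistent pairing---opposite to the one appearing in Theorem \ref{THMK2K3}---that pairs $\hat{\ell}_3^{(b_1)}$ with $\mathcal{L}^{(1)}_r$ as the new $\ell_3^{(b_1(\sigma))}$, and $\check{\ell}_3^{(b_1)}$ with $\mathcal{L}^{(1)}_l$ as the new hump. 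This inherits the barrier just constructed, so the failure of condition (3) carries over to $\boldsymbol{K}_4$. The main obstacle will be justifying which of the two non-trivial pairings at the singular vertex $-z_0$ actually realizes the critical graph on either side of $\text{VII}$: this requires comparing the Teichm\"uller counts for the competing candidate geodesic polygons and tracing the direction from which $-z_0$ approaches $\ell_3^{(b_1)}$ as $\sigma$ tends to $\text{VII}$ from within $\boldsymbol{K}_1$.
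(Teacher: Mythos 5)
Your overall strategy is the same as the paper's: use Theorem \ref{Lemma Who Hits Who} to place $\pm z_0(\sigma)$ on $\ell_3^{(\mp b_1)}$ for $\sigma\in$ VII, analyze the resulting four-valent folding of the critical graph at $\mp z_0$, determine the sign of $\Re[\eta_1(\pm z_0(\sigma);\sigma)]$ in $\boldsymbol{K}_4$ by the auxiliary conformal map, and use Teichm\"uller's lemma to pin down the unique re-pairing of the four arcs, which produces the barrier violating condition (3) of Definition \ref{Def one cut sigma}. However, you have mislocated the configuration on the one-cut side of VII, and this propagates into errors that would derail the execution. The line VII is approached from $\boldsymbol{K}_2$, not $\boldsymbol{K}_1$ (which is bounded by I, XII, VI, VIII and does not touch VII). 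In $\boldsymbol{K}_2$ Lemma \ref{Lemma K1 K2 K3} gives $\Re[\eta_1(\pm z_0(\sigma);\sigma)]<0$, so near VII the points $\pm z_0$ sit in \emph{stable} lands and the relevant humps are $\mathcal{L}^{(3)}_\sigma,\mathcal{L}^{(4)}_\sigma$ of Lemma \ref{lemma structure 2} (legs at $5\pi/8,7\pi/8$ and $-3\pi/8,-\pi/8$), not $\mathcal{L}^{(1)}_\sigma$ of Lemma \ref{lemma humps} (legs at $3\pi/8,5\pi/8$), which belongs to the unstable configuration of $\boldsymbol{K}_1$ and is the hump relevant to the fake transition across VI, not VII. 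Since the asymptotic directions of the hump legs determine which arcs can be re-paired consistently with Teichm\"uller's lemma, using $\mathcal{L}^{(1)}$ gives the wrong geometry.

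The more serious error is your claim that $\Re[\eta_1(\pm z_0(\sigma);\sigma)]$ \emph{does not} change sign across VII so that $\pm z_0$ ``remain in unstable lands'' in $\boldsymbol{K}_4$. It does change sign: it is negative throughout $\boldsymbol{K}_2$ and, as the paper shows by the same $\Psi$-based argument as Lemma \ref{Lemma K1 K2 K3}, positive throughout $\boldsymbol{K}_4$. This sign flip is not incidental --- it is precisely the input that forces the junction at $-z_0$ to resolve by gluing the short arc $\hat{\ell}_3^{(b_1)}$ to one leg of $\mathcal{L}^{(4)}$ (forming the new $\ell_3^{(b_1)}$) and $\check{\ell}_3^{(b_1)}$ to the other leg (forming the new hump), which is what traps the stable component adjacent to $b_1$ away from $(M,\infty)$. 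With your premise (``no sign change'') the selection of the pairing is unsupported, and the statement you would need to prove for $\boldsymbol{K}_4$ contradicts Lemma \ref{Lemma K1 K2 K3} on the other side of VII. Likewise, ``tracing the direction from which $-z_0$ approaches $\ell_3^{(b_1)}$ as $\sigma$ tends to VII from within $\boldsymbol{K}_1$'' is not possible; the correct limit, as in the proof of Theorem \ref{Lemma Who Hits Who}, is $\boldsymbol{K}_2\ni\sigma_0\to\sigma_*\in$ VII. Once these three points are corrected the argument coincides with the paper's proof.
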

	\begin{proof}
		We only provide the proof for $\sigma \in $ VII $\cup \boldsymbol{K}_4$, as the proof for $\sigma \in $ IX $\cup \boldsymbol{K}_5$ is exactly identical.
		Recall that by $\boldsymbol{K}_4$, we denote the infinite region in the $\sigma$-plane bounded by $L_+$ and VII (see Remark \ref{Remark branches} and Figure \ref{fig:Crit Traj 1to3 sigma plane: obvious stuff removed}). At $\sigma_* \in$ VII, by Theorem \ref{Lemma Who Hits Who}, we know that $\pm z_0(\sigma_*) \in \ell^{(\mp b_1(\sigma_*))}_3$, and by a similar reasoning to that provided in the proof of Theorem \ref{THM Fake Transition}, we know that the structure of the stable and unstable lands are as depicted in Figure \ref{fig:stable barren -1.15 4}. This shows that no $\sigma \in$ VII belongs to $\mathcal{O}_1$ since the third requirement of Definition \ref{Def one cut sigma} can not be met. 
		
		We denote the four critical trajectories incident with $-z_0(\sigma_*)$, as
		
		\begin{itemize}
			\item  $\mathcal{L}^{(\sigma_*)}_{4,l}$ and $\mathcal{L}^{(\sigma_*)}_{4,r}$ obtained from folding of $\mathcal{L}^{(\sigma_0)}_{4}$ in two perpendicular components at $-z_0(\sigma_*)$, in the limiting process $\boldsymbol{K}_2 \ni \sigma_0 \to \sigma_* \in$ VII (for the notation $\mathcal{L}_4$ recall Lemma \ref{lemma structure 2}),
			\item  the short critical trajectory $\hat{\ell}^{(b_1(\sigma_*))}_3$ (connecting $b_1(\sigma_*)$ to $-z_0(\sigma_*)$), and
			$\check{\ell}^{(b_1(\sigma_*))}_3$ connecting $z_0(\sigma_*)$ to infinity along the angle $\pi/8$, which are obtained from folding of $\ell^{(b_1(\sigma_0))}_3$ in two perpendicular components at $-z_0(\sigma_*)$, in the limiting process $\boldsymbol{K}_2 \ni \sigma_0 \to \sigma_* \in$ VII.
		\end{itemize}
		By a similar argument to that shown in the proof of Lemma \ref{Lemma K1 K2 K3}, we can show that $\Re[\eta_1(\pm z_0(\sigma);\sigma)]>0$ for all $\sigma \in \boldsymbol{K}_4$.  In other words, as $\sigma$ moves from $\sigma_* \in$ VII to some  $\sigma_1 \in \boldsymbol{K}_4$, $\pm z_0(\sigma_1)$ must lie in $\sigma_1$-unstable lands. This is only possible if at the onset of the entrance of $z_0$ into the unstable lands, $\mathcal{L}^{(\sigma_*)}_{4,l}$ and $\hat{\ell}^{(b_1(\sigma_*))}_3$ together form the \textit{new} $\ell^{(b_1)}_3$ and $\mathcal{L}^{(\sigma_*)}_{4,r}$ and $\check{\ell}^{(b_1(\sigma_*))}_3$ together form the \textit{new} hump which provides the necessary solutions at $\infty$. Notice that the other possibilities lead to contradiction with Teichm\"uller's lemma, regardless of which unstable land $z_0$ enters.  This means that one indeed has the Figure \ref{fig:stable barren -1.35 41}, which proves that all $\sigma \in \boldsymbol{K}_4$ can not belong to $\mathcal{O}_1$ as the third requirement of Definition \ref{Def one cut sigma} can not be met.

	\end{proof}

	\begin{theorem}\label{THM I XII}
		The lines I and XII form part of the boundaries of the one-cut region. More precisely, for all $\sigma \in$ I $\cup$ XII, and all $\sigma \in \boldsymbol{K}_6 \cup \boldsymbol{K}_7$ the one-cut definition does not hold.  
	\end{theorem}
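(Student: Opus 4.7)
The plan has two parts, mirroring the split in the statement. The first part, for $\sigma \in \mbox{\normalfont I} \cup \mbox{\normalfont XII}$, is essentially free from Theorem \ref{sigma in I}: that theorem already establishes $\pm z_0(\sigma) \in J_\sigma$ on these two arcs. Consequently the second condition in Definition \ref{Def one cut sigma} (and the stronger Definition \ref{Def one cut sigma minus fake transition}) is immediately violated, so $\sigma \notin \mathcal{O}_1$. I would also note, as in the comment preceding Theorem \ref{THMK2K3}, that this failure is the expected one: the cut is about to split at $\pm z_0(\sigma)$, which is the geometric content of Theorem \ref{thm 1.2}(Part III) at the boundary.

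For the second part (the open regions $\boldsymbol{K}_6 \cup \boldsymbol{K}_7$) the strategy is modeled on the proof of Theorem \ref{THM VII IX}: track the reorganization of the local critical picture at $\pm z_0(\sigma_*)$ as $\sigma$ crosses the critical line. Fix $\sigma_* \in$ I (the case of XII is symmetric), so $\pm z_0(\sigma_*) \in J_{\sigma_*}$. At such $\sigma_*$, the point $z_0(\sigma_*)$ sits on the short critical trajectory $J_{\sigma_*}$, and since $z_0$ is a double zero of $Q_1(z;\sigma_*)$, four critical trajectories emanate from it at right angles: two of them coincide with the two local pieces of $J_{\sigma_*}$ on either side of $z_0(\sigma_*)$, while the other two form a transverse pair $\mathcal L^{(\sigma_*)}_{l}, \mathcal L^{(\sigma_*)}_{r}$ (cf.\ the analogous situation in Theorem \ref{THM VII IX}, obtained as limits from $\boldsymbol{K}_1$). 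Next, I would determine the sign of $\Re[\eta_1(z_0(\sigma);\sigma)]$ in $\boldsymbol{K}_6$ using the same conformal-mapping device as in Lemma \ref{Lemma K1 K2 K3}: pull back via $\sigma \mapsto \be^{(+)}(\sigma)$ to a region adjacent to the arc $1$ of $\mathcal T$ in Figure \ref{fig:Crit Traj be plane}, and compare with the known positivity of $\Re[\Psi]$ in $K_1$.

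Once the sign in $\boldsymbol{K}_6$ is fixed, Lemma \ref{humps connect at criticality} together with the possibilities enumerated in Lemma \ref{Lemma who hits who 1} forces exactly one of the two bifurcation scenarios at each of $\pm z_0(\sigma_*)$: either the pair coming from $J_{\sigma_*}$ fuses with the transverse pair into two short trajectories linking $\pm b_1(\sigma)$ to the nearby $\mp z_0(\sigma)$, or it reorganizes in the dual way. The sign information selects the scenario in which the former single Jordan arc $J_{\sigma_*}$ fails to persist as a single arc for $\sigma \in \boldsymbol{K}_6$; instead it breaks into two short critical trajectories $\Ga[-b_1(\sigma),z_0(\sigma)^{\mathrm{loc}}]$ and $\Ga[-z_0(\sigma)^{\mathrm{loc}},b_1(\sigma)]$ (in the notation of the three-cut picture, these are the pieces adjacent to $\pm c_3,\pm b_3$), so the first requirement of Definition \ref{Def one cut sigma} is violated. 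Finally, Teichm\"uller's lemma and a count of the eight admissible directions at infinity, exactly as used in the proof of Theorem \ref{THM VII IX}, rule out any alternative global reconnection that could restore a single arc from $-b_1$ to $b_1$. This is the step I expect to be the main obstacle: although the local picture near $\pm z_0$ is clear, one must verify rigorously that no large-scale reconnection of the $\ell_2^{(\pm b_1)}, \ell_3^{(\pm b_1)}$ rays with the reorganized fragments of $J_{\sigma_*}$ can produce a fresh Jordan arc from $-b_1$ to $b_1$ that would keep $\sigma$ in $\mathcal{O}_1$; this is excluded by the symmetry relation \eqref{eta z and eta -z} and Lemma \ref{Lemma one or two vetex polygons are ruled out}, together with the fixed angles at infinity, in the same manner as in Theorems \ref{THM Fake Transition} and \ref{THM VII IX}.
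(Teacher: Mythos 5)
Your first part coincides with the paper's: on I $\cup$ XII the second requirement of Definition \ref{Def one cut sigma} fails because Theorem \ref{sigma in I} puts $\pm z_0(\sigma)$ on $J_\sigma$. For the regions $\boldsymbol{K}_6\cup\boldsymbol{K}_7$, however, you take a genuinely different route from the paper. The paper does not redo any critical-graph bifurcation analysis there; it argues indirectly: assuming some $\sigma_1\in\boldsymbol{K}_6$ belonged to $\mathcal{O}^*_1$, one deforms the branch cut $L_+$ so that $\sigma_1$ and the component VII lie on the same side, and then the openness of $\mathcal{O}^*_1$ (Theorem \ref{O1* is open}) propagates membership in $\mathcal{O}^*_1$ across the whole region up to VII, contradicting Theorem \ref{THM VII IX}. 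This costs two lines and reuses results already in hand; your direct approach, if completed, would additionally identify \emph{which} requirement of Definition \ref{Def one cut sigma} fails in $\boldsymbol{K}_6$, which the paper's argument does not reveal.

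That said, as written your second part has a genuine gap. The claim that ``Teichm\"uller's lemma and a count of the eight admissible directions at infinity \ldots rule out any alternative global reconnection that could restore a single arc from $-b_1$ to $b_1$'' cannot be right as stated: a critical graph consisting of one Jordan arc joining $\pm b_1$, the four rays $\ell_{2,3}^{(\pm b_1)}$ going to infinity, and two humps is exactly the generic one-cut configuration of Lemma \ref{lemma humps} and is perfectly consistent with the eight directions at infinity. So the counting does not decide between persistence and breaking of the connection; the decisive input must be the local reconnection at $\pm z_0$, and this case is strictly harder than the one treated in Theorem \ref{THM VII IX} because on I \emph{both} $\pm z_0$ sit on the support itself (Theorem \ref{sigma in I}), so $J_{\sigma_*}$ carries two degenerate points and splits into three sub-arcs with four transverse rays, and all of these must be tracked simultaneously under the perturbation. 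Moreover the sign determination of $\Re[\eta_1(z_0(\sigma);\sigma)]$ on $\boldsymbol{K}_6$ is only gestured at: one must verify which of $\be^{(\pm)}$ maps $\boldsymbol{K}_6$ into the relevant region of the $\be$-plane (the paper is careful that $\be^{(+)}$ does \emph{not} map every component of $\widehat{\Sigma}$ onto $\mathcal{T}$), and one must contend with the branch cut $L$ of $z_0(\sigma)$ emanating from $\sigma=-2$, which borders this region. None of these steps is implausible, but each is asserted rather than carried out, and together they constitute the entire content of the claim for $\boldsymbol{K}_6$. The paper's branch-cut-deformation argument is the cleaner way to close this.
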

	\begin{proof}
		We only provide the proof for $\sigma \in $ I $\cup \boldsymbol{K}_6$, as the proof for $\sigma \in $ XII $\cup \boldsymbol{K}_7$ is exactly identical.	First notice that on I the second requirement of Definition \ref{Def one cut sigma} can not be met due to Theorem \ref{sigma in I}. 
		
		Now we show that any point $\sigma \in \boldsymbol{K}_6$ does not belong to $\mathcal{O}^*_1$. For the sake of arriving at a contradiction, assume that there exists a  $\sigma_1 \in \boldsymbol{K}_6$ which belongs to $\mathcal{O}^*_1$. We can now deform the branch-cut $L_+$ (see Remark \ref{Remark branches}) so that the point $\sigma_1$ and the component VII lie on the same side of $L_+$. But since $\mathcal{O}^*_1$ is open, this means that all $\sigma$ bounded by the deformed branch cut $L_+$, and the component VII must belong to $\mathcal{O}^*_1$. This contradicts Theorem \ref{THM VII IX}.
	\end{proof}
	
	Theorems \ref{THM Fake Transition}, \ref{THMK2K3},  \ref{THM VII IX}, \ref{THM I XII} and the fact that $\boldsymbol{K}_1 \subset \mathcal{O}^*_1$ can be formulated as the following Theorem.
	
	\begin{theorem}\label{THM ONE CUT}
		The one-cut region is the region labeled so in Figure \ref{fig:Phase Diagram}.
	\end{theorem}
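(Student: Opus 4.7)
The plan is to assemble Theorem \ref{THM ONE CUT} directly from the results established immediately above it, by identifying the region labelled ``one-cut'' in Figure \ref{fig:Phase Diagram} with the disjoint union of the open pieces $\boldsymbol{K}_1, \boldsymbol{K}_2, \boldsymbol{K}_3$ together with the auxiliary ``fake-transition'' curves VI and VIII. First I would recall the correspondence (stated in the caption of Figure \ref{fig:Phase Diagram} and in the construction of Section \ref{subsec 1 to 3 aux QD}) between the critical curves $\ga_1, \ga_2, \ga_3, \ga_4$ separating the one-cut from the three-cut region and the components I, XII, VII, IX of $\widehat{\Sigma}$. Under this identification, the region to the right of $\ga_1, \ga_2, \ga_3, \ga_4$ in Figure \ref{fig:Phase Diagram} is precisely $\boldsymbol{K}_1 \cup \mbox{VI} \cup \boldsymbol{K}_2 \cup \mbox{VIII} \cup \boldsymbol{K}_3$, while the complementary ``outer'' strips bounded by VII/IX (and I/XII) and the branch cuts $L_\pm$ are exactly $\boldsymbol{K}_4, \boldsymbol{K}_5, \boldsymbol{K}_6, \boldsymbol{K}_7$.

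Next I would argue the inclusion $\mathcal{O}_1 \supseteq \boldsymbol{K}_1 \cup \mbox{VI} \cup \boldsymbol{K}_2 \cup \mbox{VIII} \cup \boldsymbol{K}_3$. The fact $\boldsymbol{K}_1 \subset \mathcal{O}^*_1$ follows from Lemma \ref{one cut is non-empty!} together with the openness of $\mathcal{O}^*_1$ (Theorem \ref{O1* is open}) and Lemma \ref{Lemma K1 K2 K3}, since $\boldsymbol{K}_1$ is an open connected set meeting the real half-line $\{\sigma>-2\}$ on which all three requirements of Definition \ref{Def one cut sigma minus fake transition} hold. The inclusions $\boldsymbol{K}_2, \boldsymbol{K}_3 \subset \mathcal{O}^*_1 \subseteq \mathcal{O}_1$ are exactly Theorem \ref{THMK2K3}, and $\mbox{VI} \cup \mbox{VIII} \subset \mathcal{O}_1$ is exactly Theorem \ref{THM Fake Transition}.

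For the reverse inclusion, one must show that every $\sigma$ lying on I, XII, VII, IX, or in $\boldsymbol{K}_4 \cup \boldsymbol{K}_5 \cup \boldsymbol{K}_6 \cup \boldsymbol{K}_7$, fails some requirement of Definition \ref{Def one cut sigma}. This is the content of Theorems \ref{THM VII IX} and \ref{THM I XII}: on VII and IX the complementary arc to infinity cannot be drawn in stable land; on I and XII the point $z_0(\sigma)$ is trapped inside $J_\sigma$ by Theorem \ref{sigma in I}, violating the second requirement of Definition \ref{Def one cut sigma}; and in $\boldsymbol{K}_4 \cup \boldsymbol{K}_5 \cup \boldsymbol{K}_6 \cup \boldsymbol{K}_7$ the openness/contradiction arguments of those two theorems rule out membership in $\mathcal{O}_1$. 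The only genuinely new observation needed is trivial: that the exterior strips $\boldsymbol{K}_6, \boldsymbol{K}_7$ together with $\boldsymbol{K}_4, \boldsymbol{K}_5$ and I, XII, VII, IX exhaust the complement of $\boldsymbol{K}_1 \cup \mbox{VI} \cup \boldsymbol{K}_2 \cup \mbox{VIII} \cup \boldsymbol{K}_3$ in the $\sigma$-plane minus the branch cuts; this is a direct consequence of the topological description of $\widehat{\Sigma}$ obtained via the Joukowski image of the critical graph $\mathcal T$ (Figures \ref{fig:Crit Traj be plane} and \ref{fig:Crit Traj 1to3 sigma plane}).

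The main (and essentially only) obstacle is bookkeeping: one has to verify that the labelling of curves I, XII, VII, IX in $\widehat{\Sigma}$ matches $\ga_1, \ga_2, \ga_3, \ga_4$ in Figure \ref{fig:Phase Diagram}, and that the ``fake'' curves VI, VIII do not appear in the phase diagram precisely because, although $z_0(\sigma)$ hits $\mathscr{J}^{(1)}_\sigma$ there, a valid one-cut configuration still exists by Theorem \ref{THM Fake Transition}. With this matching in place, Theorem \ref{THM ONE CUT} follows by taking unions. No new analysis is required beyond citing the four theorems and the inclusion $\boldsymbol{K}_1 \subset \mathcal{O}^*_1$.
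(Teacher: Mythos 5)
Your proposal is correct and follows essentially the same route as the paper: the paper's own proof is precisely the assembly of Theorems \ref{THM Fake Transition}, \ref{THMK2K3}, \ref{THM VII IX}, \ref{THM I XII} together with the inclusion $\boldsymbol{K}_1 \subset \mathcal{O}^*_1$, which is exactly the decomposition you describe. Your justification of $\boldsymbol{K}_1 \subset \mathcal{O}^*_1$ via Lemma \ref{one cut is non-empty!}, the openness of $\mathcal{O}^*_1$, and the sign information from Lemma \ref{Lemma K1 K2 K3} is, if anything, slightly more explicit than the paper, which simply asserts this inclusion as a known fact.
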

	This characterization, immediately implies the openness on the one-cut region.
	\begin{corollary}\label{O1 is open}
		The set $\mathcal{O}_1$ is open.
	\end{corollary}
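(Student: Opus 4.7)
The plan is to combine Theorem \ref{THM ONE CUT} with the openness of $\mathcal{O}^*_1$ established in Theorem \ref{O1* is open}. Direct comparison of Definitions \ref{Def one cut sigma} and \ref{Def one cut sigma minus fake transition} shows that the only $\sigma \in \mathcal{O}_1$ which can fail to lie in $\mathcal{O}^*_1$ are those for which $\pm z_0(\sigma) \in \mathscr{J}^{(1)}_{\sigma} \setminus J_\sigma$. By Theorem \ref{Lemma Who Hits Who} together with Theorems \ref{THM VII IX} and \ref{THM I XII}, such points lie precisely on the components $\mbox{\normalfont VI}$ and $\mbox{\normalfont VIII}$ of Figure \ref{fig:Crit Traj 1to3 sigma plane: obvious stuff removed}, so that $\mathcal{O}_1 = \mathcal{O}^*_1 \sqcup (\mbox{\normalfont VI} \cup \mbox{\normalfont VIII})$.

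For $\sigma_0 \in \mathcal{O}^*_1$, openness is immediate from Theorem \ref{O1* is open}, so the only case to address is $\sigma_0 \in \mbox{\normalfont VI} \cup \mbox{\normalfont VIII}$; without loss of generality take $\sigma_0 \in \mbox{\normalfont VI}$. I would choose $\de>0$ small enough that the disk $D_\de(\sigma_0)$ meets no other component of the auxiliary critical graph $\widehat{\Sigma}$ and that $D_\de(\sigma_0) \cap \mbox{\normalfont VI}$ is a single smooth Jordan arc. Since $\mbox{\normalfont VI}$ is smooth, $D_\de(\sigma_0) \setminus \mbox{\normalfont VI}$ decomposes into two open subregions, one lying in $\boldsymbol{K}_1$ and one in $\boldsymbol{K}_2$. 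Theorem \ref{THMK2K3} gives $\boldsymbol{K}_2 \subset \mathcal{O}^*_1$, and the inclusion $\boldsymbol{K}_1 \subset \mathcal{O}^*_1$ follows from the same conformal-map argument as in the proof of Theorem \ref{THMK2K3}, anchored by Lemma \ref{one cut is non-empty!} which places the real half-line $\{\sigma>-2\}$ inside $\mathcal{O}^*_1$. The arc $D_\de(\sigma_0) \cap \mbox{\normalfont VI}$ itself lies in $\mathcal{O}_1$ by Theorem \ref{THM Fake Transition}. Hence $D_\de(\sigma_0) \subset \mathcal{O}_1$, and the case $\sigma_0 \in \mbox{\normalfont VIII}$ is symmetric.

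The main obstacle is verifying that both pockets $\boldsymbol{K}_1$ and $\boldsymbol{K}_2$ flanking $\mbox{\normalfont VI}$ are contained in $\mathcal{O}^*_1$; once this is in hand the conclusion is purely topological. Conceptually, the point is that crossing $\mbox{\normalfont VI}$ or $\mbox{\normalfont VIII}$ corresponds to the \emph{fake} transition in which $\pm z_0(\sigma)$ slides across $\mathscr{J}^{(1)}_{\sigma}$ without ever touching the true support $J_\sigma$ and without destroying the complementary arc $\Ga_{\sigma}(b_1(\sigma),\infty)$, so all three conditions of Definition \ref{Def one cut sigma} persist throughout $D_\de(\sigma_0)$. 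Thus $\mathcal{O}_1$ is the union of the open set $\mathcal{O}^*_1$ with the smooth arcs $\mbox{\normalfont VI}, \mbox{\normalfont VIII}$ whose transverse neighborhoods are already shown to be in $\mathcal{O}_1$, and is therefore open.
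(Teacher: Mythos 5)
Your proposal is correct and follows essentially the same route as the paper, which deduces the corollary directly from the characterization $\mathcal{O}_1=\boldsymbol{K}_1\cup\mbox{\normalfont VI}\cup\boldsymbol{K}_2\cup\mbox{\normalfont VIII}\cup\boldsymbol{K}_3$ established in Theorem \ref{THM ONE CUT} (together with Theorems \ref{O1* is open}, \ref{THM Fake Transition}, \ref{THMK2K3} and the inclusion $\boldsymbol{K}_1\subset\mathcal{O}^*_1$). You simply make explicit the local disk argument around points of the arcs VI and VIII that the paper leaves as ``immediately implies.''
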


	\subsection{Two-cut to Three-cut Transition.}
	Due to the symmetry with respect to the origin, the transition from the two-cut regime to the three-cut regime could only occur through birth of a cut at the origin. Define \begin{equation}\label{Function : phase transition 2 to 3}
	\Phi(\sigma) := \eta_2(0;\sigma) = -\frac{\sigma \sqrt{\sigma^2-4}}{4}   + 
	\log\left[ \frac{ \sigma + \sqrt{\sigma^2-4}}{2}  \right].
	\end{equation} The values of $\sigma$ for which such a transition takes place are those at which the real part of \eqref{Function : phase transition 2 to 3} vanishes. A calculation shows
	\begin{equation}\label{2 to 3 QD}
	\Upsilon(\sigma):= \left[\frac{\dd \Phi}{\dd \sigma}\right]^2=\frac{1}{4}(\sigma^2-4).
	\end{equation}
	Thus the auxiliary quadratic differential associated to the transition from the two-cut regime to the three-cut regime is $\Upsilon(\sigma) \dd \sigma^2$, since we can express $\Phi(\sigma)$ as
	\begin{equation}\label{Phiii}
	\Phi(\sigma) = \int_{2}^{\sigma} \sqrt{\Upsilon(s)} \dd s,
	\end{equation}
	where we have chosen the lower bound of integration as such since at $\Phi(2)=0$. The auxiliary quadratic differential $\Upsilon(\sigma) \dd \sigma^2$ has two simple zeros at $\pm 2$ and a pole of order $6$ at infinity (recall \eqref{QD near infinity}). Therefore three critical trajectories emanate from $\sigma=2$ and $\sigma=-2$ each, and four critical trajectories are incident with infinity. At infinity $\Upsilon(\sigma) \dd \sigma^2 \sim  \frac{1}{4}\sigma^2 \dd \sigma^2 $, thus the integral of its square root behaves like $\frac{1}{4}\sigma^2$ and thus the four solutions to $\Re[\Phi(\sigma)]=0$ near infinity respectively have asymptotic angles $\pi/4$, $3\pi/4$, $5\pi/4$, and $7\pi/4$. Since $\Upsilon(\sigma) \dd \sigma^2<0$ at any point in $(-2,2)$, we can immediately determine the local structure of critical trajectories at $\pm2$. A calculation shows that $\Phi(\sigma)$ as defined in \eqref{Phiii} differs from $\Phi(-\sigma)$ and from $\Phi(\bar{\sigma})$ by additive purely imaginary quantities which imposes a symmetry with respect to the origin and with respect to the real axis in the critical graph of $\Upsilon(\sigma) \dd \sigma^2$, which also means that one has a symmetry with respect to the imaginary axis as well. This symmetry ensures that the geodesic polygon with vertices $\sigma=2$ and $\infty$ must \textit{entirely lie in the right half-plane}, because if the polygon were to hit the imaginary axis at say $\ii y_*$, it would make $\sigma= \ii y_*$ (and also $\sigma= -\ii y_*$) a non-regular point of the quadratic differential $\Upsilon(\sigma) \dd \sigma^2$, which is a contradiction (recall that through each regular point of a meromorphic quadratic differential passes exactly one $\theta$-arc). Based on what we discussed above and what we know about the asymptotic angles at infinity, the critical graph shown in Figure \ref{fig:Crit Traj 2to3 sigma plane} is indeed correct.
	\begin{figure}[ht]
		\centering
		\includegraphics[scale=0.35]{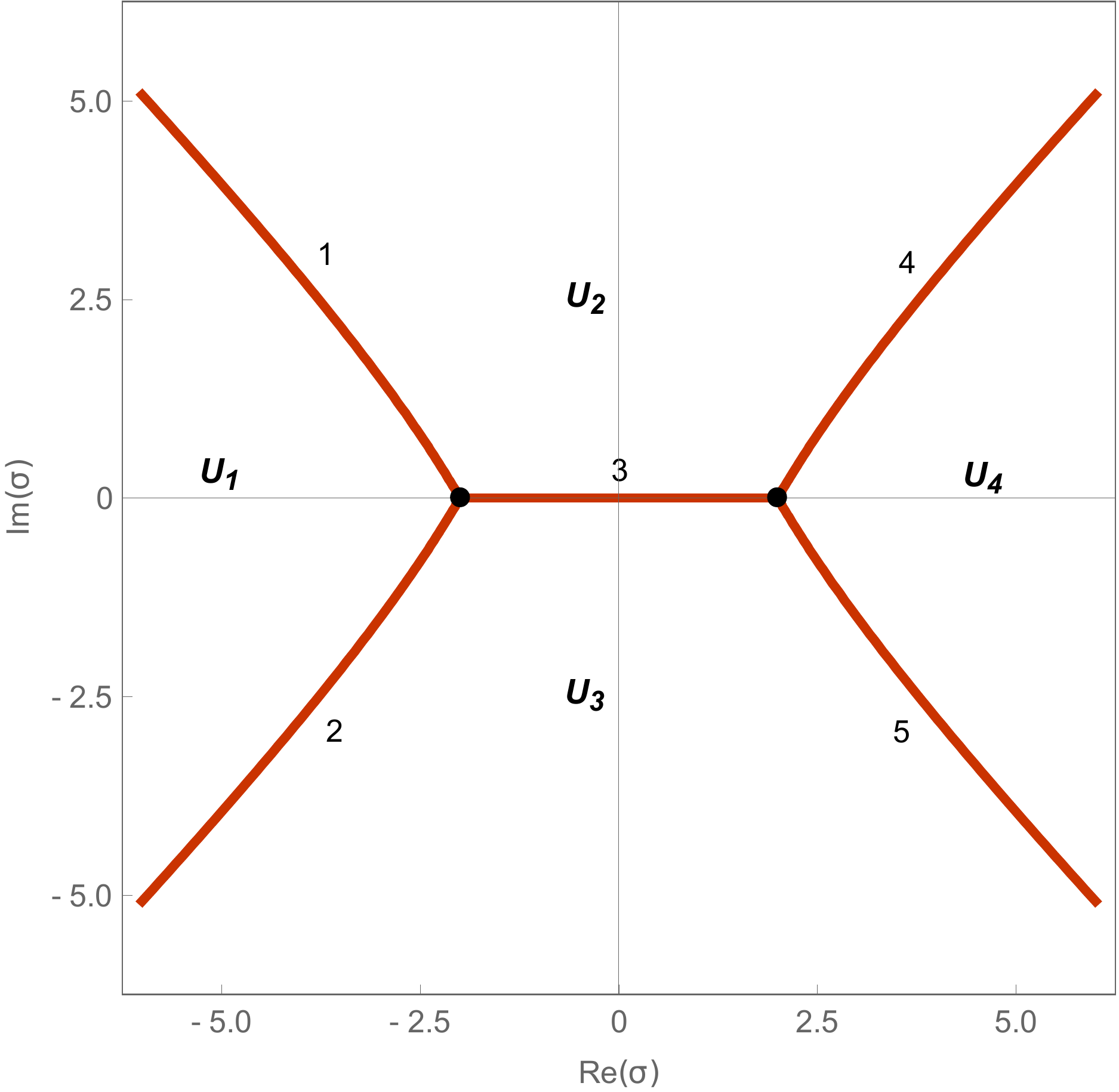}
		\caption{The critical graph $\widecheck{\Sigma}$ of the auxiliary quadratic differential $ \frac{1}{4}(\sigma^2-4) \dd \sigma^2 \equiv \Upsilon(\sigma) \dd \sigma^2$ whose components are the candidates for the two-cut to the three-cut transition. The black dots show the critical points of $\Upsilon(\sigma) \dd \sigma^2$ which are simple zeros at $\pm 2$.}
		\label{fig:Crit Traj 2to3 sigma plane}
	\end{figure}
	We can prove the following Lemma similarly as we proved Lemma \ref{Lemma K1 K2 K3}, thus we do not provide the details.
	\begin{lemma}\label{Lemma U1 U2 U3}
		In $\boldsymbol{U}_1$ we have $\Re[\eta_2(0;\sigma)]<0$ while in $\boldsymbol{U}_2 \cup \boldsymbol{U}_3$ we have $\Re[\eta_2(0;\sigma)]>0$.
	\end{lemma}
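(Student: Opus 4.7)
The argument is parallel to that of Lemma \ref{Lemma K1 K2 K3}. By construction, $\Phi(\sigma)=\int_{2}^{\sigma}\sqrt{\Upsilon(s)}\,\dd s$ is a (multivalued) conformal map wherever $\Upsilon(\sigma)\neq 0$, and its critical trajectories (the level set $\Re[\Phi(\sigma)]=0$) are precisely the curves forming the boundaries of the open sets $\boldsymbol{U}_1,\boldsymbol{U}_2,\boldsymbol{U}_3$ shown in Figure \ref{fig:Crit Traj 2to3 sigma plane}. Hence on each such open set the continuous function $\Re[\Phi(\sigma)]$ has a constant sign, and it suffices to determine that sign at a single convenient point of each region.

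For $\boldsymbol{U}_1$, recall from Lemma \ref{2cut is nonempty} that every $\sigma<-2$ lies in $\mathcal{O}_2$; by the third requirement of Definition \ref{Def two cut sigma minus fake transition} and Lemma \ref{where things go two cut}, for such $\sigma$ the origin sits in a $\sigma$-stable land, i.e.\ $\Re[\eta_{2}(0;\sigma)]<0$. Since the half-line $(-\infty,-2)$ is contained in $\boldsymbol{U}_1$, this forces $\Re[\Phi(\sigma)]=\Re[\eta_{2}(0;\sigma)]<0$ throughout $\boldsymbol{U}_1$.

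For $\boldsymbol{U}_2$ and $\boldsymbol{U}_3$, I would pass from $\boldsymbol{U}_1$ to either of these regions by crossing the critical trajectories labeled $1$ or $2$ transversally. Along such a crossing one uses that $\Phi$ is locally conformal at any regular point of the quadratic differential and that the two regions abutting a critical trajectory are mapped by $\Phi$ to the two half-planes $\Re w>0$ and $\Re w<0$ (this is the same half-plane/conformal-image argument deployed in the proof of Lemma \ref{Lemma K1 K2 K3}, applied here to the map $\Phi$ rather than $\Psi$). Combined with the symmetries $\Phi(-\sigma)=\Phi(\sigma)+c_1$ and $\Phi(\bar\sigma)=\overline{\Phi(\sigma)}+c_2$ with purely imaginary $c_1,c_2$ noted just before Figure \ref{fig:Crit Traj 2to3 sigma plane}, the sign of $\Re\Phi$ in $\boldsymbol{U}_2$ and $\boldsymbol{U}_3$ is the opposite of the sign in $\boldsymbol{U}_1$, yielding $\Re[\eta_{2}(0;\sigma)]>0$ in $\boldsymbol{U}_2\cup\boldsymbol{U}_3$.

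The only mildly delicate step is ensuring that the single-valued branch of $\Phi$ one uses to compare signs across a critical trajectory is consistent; this is handled by noting that on the critical trajectory itself $\Re\Phi$ vanishes identically, so continuity and non-vanishing of $\Re\Phi$ in each open component suffices to transfer the sign unambiguously from $\boldsymbol{U}_1$ to its neighbors.
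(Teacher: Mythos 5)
Your proof is correct and follows essentially the same route the paper intends: the paper omits the details, stating only that the lemma is proved similarly to Lemma \ref{Lemma K1 K2 K3}, and your argument---constant sign of $\Re[\Phi(\sigma)]$ on each complementary region of the critical graph of $\Upsilon(\sigma)\,\dd\sigma^2$, anchored on $(-\infty,-2)\subset\boldsymbol{U}_1$ via Lemmas \ref{2cut is nonempty} and \ref{where things go two cut}, with the sign flipping across the trajectories labeled $1$ and $2$---is precisely the transcription of that half-plane/conformal-image argument from $\Psi$ to $\Phi$. The details you supply (in particular the anchoring step and the remark on branch consistency) are exactly what the paper leaves implicit.
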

	
	\begin{theorem}\label{THM 1 2}
		The two-cut regime is the region labeled so in Figure \ref{fig:Phase Diagram}.
	\end{theorem}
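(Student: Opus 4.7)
The plan is to mirror the argument used for Theorem \ref{THM ONE CUT}, now built around the auxiliary quadratic differential $\Upsilon(\sigma)\,\dd\sigma^{2}$ from \eqref{2 to 3 QD} and the function $\Phi(\sigma)=\eta_{2}(0;\sigma)$ from \eqref{Function : phase transition 2 to 3}. By \eqref{Phiii}, the critical graph of $\Upsilon(\sigma)\,\dd\sigma^{2}$ is precisely the zero set of $\Re[\Phi(\sigma)]$, whose connected components are $\boldsymbol{U}_{1},\boldsymbol{U}_{2},\boldsymbol{U}_{3}$ of Figure \ref{fig:Crit Traj 2to3 sigma plane}. The goal is to show $\mathcal{O}_{2}=\boldsymbol{U}_{1}$ and that the boundary curves $\ga_{5},\ga_{6}$ (the two critical trajectories emanating from $\sigma=-2$ and escaping to infinity along the $3\pi/4$ and $-3\pi/4$ directions) are precisely the loci where the two-cut equilibrium measure develops a singularity at the origin.

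First I would dispose of the boundary. For any $\sigma\in\ga_{5}\cup\ga_{6}$ one has $\Re[\Phi(\sigma)]=\Re[\eta_{2}(0;\sigma)]=0$, hence $0\in\mathscr{J}^{(2)}_{\sigma}$, and Corollary \ref{Cor 0 on critical graph} immediately yields $\sigma\notin\mathcal{O}_{2}$. Next, to prove $\boldsymbol{U}_{1}\subseteq\mathcal{O}_{2}$, fix $\sigma_{1}\in\boldsymbol{U}_{1}$ and join it to some $\sigma_{0}<-2$ by a path inside $\boldsymbol{U}_{1}$ (possible since $\boldsymbol{U}_{1}$ is open and contains the ray $(-\infty,-2)$). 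By Lemma \ref{2cut is nonempty}, $\sigma_{0}\in\mathcal{O}_{2}$; by Lemma \ref{Lemma U1 U2 U3}, $\Re[\eta_{2}(0;\sigma)]<0$ along the whole path, so $0$ never touches $\mathscr{J}^{(2)}_{\sigma}$. Combining this with the continuous deformation of $\mathscr{J}^{(2)}_{\sigma}$ (Lemma \ref{continuous deformations of two cut critical graph}), the structural description of Lemma \ref{where things go two cut}, and the fact that the end-point collision $a_{2}(\sigma)=b_{2}(\sigma)$ occurs only at $\sigma=\pm 2$ (outside $\boldsymbol{U}_{1}$), no condition of Definition \ref{Def two cut sigma minus fake transition} can fail along the path. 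Openness of $\mathcal{O}_{2}$ (Theorem \ref{thm 2cut is open}) then propagates membership all the way to $\sigma_{1}$.

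The remaining containment is the exclusion of $\boldsymbol{U}_{2}\cup\boldsymbol{U}_{3}$ from $\mathcal{O}_{2}$. On these regions, $\Re[\eta_{2}(0;\sigma)]>0$ by Lemma \ref{Lemma U1 U2 U3}, so the origin lies in a $\sigma$-unstable land. Tracking the critical graph across $\ga_{5}$ or $\ga_{6}$ exactly as in the analysis of Theorem \ref{THM VII IX} in the one-cut setting, the critical rays incident with $\pm a_{2}(\sigma)$ rearrange so that a piece of $\mathscr{J}^{(2)}_{\sigma}$ forms a barrier separating $-a_{2}(\sigma)$ from $a_{2}(\sigma)$, as illustrated in Figures \ref{fig:stable barren -3 1.5} and \ref{fig:stable barren -3 1.6}. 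This precludes the complementary arc $\Ga_{\sigma}(-a_{2},a_{2})$ demanded by the third item of Definition \ref{Def two cut sigma minus fake transition}, forcing $\sigma\notin\mathcal{O}_{2}$.

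The hard part will be this last step: rigorously identifying the post-transition topology of $\mathscr{J}^{(2)}_{\sigma}$ on either side of $\ga_{5}\cup\ga_{6}$ requires the same careful Teichm\"uller bookkeeping carried out in Theorems \ref{Lemma Who Hits Who}--\ref{THM VII IX}, namely determining which pairs of rays among $\ell^{(\pm a_{2})}_{2,3}$ meet at the origin at criticality and how the short trajectories formed there reconnect as $\sigma$ crosses into $\boldsymbol{U}_{2}\cup\boldsymbol{U}_{3}$. Once that local model is in place, the symmetry $\eta_{2}(-z;\sigma)=\eta_{2}(z;\sigma)\pm 2\pi\ii$ and the conformal map $\eta_{2}$ near $0$ close the argument, completing the identification of $\mathcal{O}_{2}$ with the two-cut region of Figure \ref{fig:Phase Diagram}.
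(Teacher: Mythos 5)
Your proposal follows the paper's strategy closely for $\boldsymbol{U}_1$, for the critical lines through $\sigma=-2$, and for $\boldsymbol{U}_2\cup\boldsymbol{U}_3$, and those parts are sound (modulo the Teichm\"uller bookkeeping you correctly flag as the hard step, which the paper also handles by analogy with Theorem \ref{THM VII IX}). However, there is a genuine gap: the complement of the critical graph of $\Upsilon(\sigma)\,\dd\sigma^2$ has \emph{four} components, not three. Besides $\boldsymbol{U}_1,\boldsymbol{U}_2,\boldsymbol{U}_3$ there is the region $\boldsymbol{U}_4$ enclosed by the two critical trajectories emanating from the other simple zero $\sigma=+2$ (which lie entirely in the right half-plane and contain $(2,\infty)$), together with the critical lines incident with $\sigma=+2$. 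Your argument never addresses $\boldsymbol{U}_4$, so you have not shown $\mathcal{O}_2=\boldsymbol{U}_1$; you have only shown $\mathcal{O}_2\subseteq \boldsymbol{U}_1\cup\boldsymbol{U}_4\cup(\text{lines through }+2)$.

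Crucially, this gap cannot be closed by the sign machinery you rely on. A direct computation from \eqref{Function : phase transition 2 to 3} gives $\Phi'(\sigma)=-\tfrac12\sqrt{\sigma^2-4}<0$ for real $\sigma>2$, so $\Re[\Phi(\sigma)]=\Re[\eta_2(0;\sigma)]<0$ throughout $\boldsymbol{U}_4$ just as in $\boldsymbol{U}_1$: the origin sits in a $\sigma$-stable land there, and neither Lemma \ref{Lemma U1 U2 U3} nor Corollary \ref{Cor 0 on critical graph} rules $\boldsymbol{U}_4$ out. The paper excludes $\boldsymbol{U}_4$ by an entirely different mechanism: since the geodesic polygon bounding $\boldsymbol{U}_4$ lies in the right half-plane, $\boldsymbol{U}_4\subset\mathcal{O}_1$ by Theorem \ref{THM ONE CUT}, and the uniqueness of the equilibrium measure (Kuijlaars--Silva) then forbids $\sigma\in\boldsymbol{U}_4$ from also satisfying the two-cut definition. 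You need to add this step (and dispose of the critical lines through $\sigma=+2$, which the paper covers with Corollary \ref{Cor 0 on critical graph} applied to all five lines) before the identification of $\mathcal{O}_2$ with the labeled two-cut region is complete.
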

	
	\begin{proof}
		Firstly, recalling Lemma \ref{2cut is nonempty} and Theorem \ref{thm 2cut is open} we can show that $\boldsymbol{U}_1 \subset \mathcal{O}_2$.	
		By Corollary \ref{Cor 0 on critical graph}, none of the lines labeled by $\boldsymbol{1}$ to $\boldsymbol{5}$ in Figure \ref{fig:Crit Traj 2to3 sigma plane} belong to $\mathcal{O}_2$. Now, we show that no $\sigma \in \boldsymbol{U}_4$ can belong to $\mathcal{O}_2$. Indeed as described above, the region $\boldsymbol{U}_4$ must lie entirely in the right half plane which itself belongs to the one-cut region by Theorem \ref{THM ONE CUT}, so $\boldsymbol{U}_4 \subset \mathcal{O}_1$. Due to the uniqueness of the support of the equilibrium measure  no $\sigma \in \boldsymbol{U}_4$ can belong to $\mathcal{O}_2$.  Recalling Lemma \ref{legs connect at criticality 2cut}, we know that for $\sigma_0 \in \boldsymbol{1}$, either $\ell^{(a_2)}_2$ must connect to $\ell^{(-a_2)}_2$, or $\ell^{(a_2)}_3$ must connect to $\ell^{(-a_2)}_3$ at the origin (see, e.g. Figure \ref{fig:stable barren -3 1.5}).   Similar to the argument provided in Theorem \ref{THM VII IX} we can show that the only possible transition for the critical graph of $Q_2(z;\sigma) \dd z^2$ as $\boldsymbol{1} \ni \sigma_0 \to \sigma_1 \in \boldsymbol{U}_2$ is the one shown from Figure \ref{fig:stable barren -3 1.5} to Figure \ref{fig:stable barren -3 1.6}. This makes sure that no point in $\boldsymbol{U}_2$ could belong to $\mathcal{O}_2$ (recall Theorem \ref{thm 2cut is open}). Identically one can show that no point in $\boldsymbol{U}_3$ could belong to $\mathcal{O}_2$.   
	\end{proof}
	
	\begin{remark}
		\normalfont	The line labeled by VII (resp. IX) in Figure \ref{fig:Crit Traj 1to3 sigma plane: obvious stuff removed} has no intersection with the line labeled by $\boldsymbol{1}$ (resp. $\boldsymbol{2}$) in Figure \ref{fig:Crit Traj 2to3 sigma plane}. This is obvious due to the uniqueness of the support of the equilibrium measure. Indeed, if there was an intersection point $\sigma$, then it would simultaneously be a degenerate one-cut and a degenerate two-cut $\sigma$. Another contradiction would be that  if there was an intersection point $\sigma$, then we would have had a way to continuously make a transition from a two-cut $\sigma_2$ to a one-cut $\sigma_1$ through some $ \boldsymbol{1} \ni \sigma_0 \neq -2$. But we know that the only way for such a transition is through degeneration of the gap from $-a_2$ to $a_2$, that is $\pm a_2 \to 0$ which is only possible if $\sigma \to -2$ (recall \eqref{em46}). 
	\end{remark}
	
	\begin{theorem}
		The three-cut regime is the region labeled so in Figure \ref{fig:Phase Diagram}.
	\end{theorem}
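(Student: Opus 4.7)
The plan is to establish this by complementarity, using the already-proven characterizations of $\mathcal{O}_1$ and $\mathcal{O}_2$ together with existence and uniqueness of the equilibrium measure. By \cite{KS}, for every $\sigma \in \C$ there is a unique max-min equilibrium measure $\nu_{\rm eq}(\sigma)$, and for the quartic potential it has either one, two, or three cuts, since $Q(z;\sigma)$ has degree six (see \S\ref{SubSec EqMeas quartic}). Let $R_3$ denote the region labeled three-cut in Figure \ref{fig:Phase Diagram}. By Theorems \ref{THM ONE CUT} and \ref{THM 1 2} combined with the explicit description of $\gamma_1,\ldots,\gamma_6$ from \S\ref{Sec Aux QD}, we have $\overline{\mathcal{O}_1} \cup \overline{\mathcal{O}_2} = \C \setminus \mbox{int}(R_3)$. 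Any $\sigma \in \mbox{int}(R_3)$ therefore cannot admit a regular one- or two-cut equilibrium measure and so must admit a three-cut one; combined with openness of $\mathcal{O}_3$ (established just before this statement) and disjointness from $\mathcal{O}_1$, $\mathcal{O}_2$, this gives $\mathcal{O}_3 \subseteq \mbox{int}(R_3)$.

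The reverse inclusion requires verifying Definition \ref{Def three cut sigma} at every $\sigma \in \mbox{int}(R_3)$: the end-point system \eqref{em52}, \eqref{3cut gap condition}, \eqref{b3 on the same level set as c3 21} must admit a solution $(a_3,b_3,c_3)$ with three distinct symmetric cuts, and the critical graph $\mathscr{J}^{(3)}_\sigma$ must have the connectivity plus complementary-arc structure required. I would establish this by continuous deformation from the boundary: on each connected component $\mathcal{O}_{31}$, $\mathcal{O}_{32}$ one has access to a regular three-cut regime either from the splitting scenarios across $\gamma_1, \gamma_3$ (where a cut splits at $\pm z_0$) or from the birth scenario across $\gamma_5$ (where a cut is born at the origin), as illustrated in Figures \ref{fig:1-3splitSEQM}--\ref{fig:2-3birthSEQM}. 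Real-analyticity of $a_3, b_3, c_3$ in $(\Re\sigma, \Im\sigma)$ (the remark following Theorem \ref{thm 1.2}, together with \cite{BBGMT2}) and uniqueness of $\nu_{\rm eq}$ then propagate the three-cut structure throughout the component, while Teichm\"uller's lemma plus the auxiliary quadratic differential analysis of \S\ref{Sec Aux QD} rule out any internal degeneration.

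The main obstacle is the transcendental nature of the end-point equations, which, unlike in the one- and two-cut cases, prevents explicit algebraic solutions. Consequently the proof must rely on continuous continuation rather than direct construction, and the delicate technical point is showing that this continuation exhausts all of $\mbox{int}(R_3)$ without hitting a singular boundary curve internal to $R_3$---which is precisely ruled out by the fact that the critical curves $\gamma_1,\ldots,\gamma_6$ are exactly those described by the auxiliary quadratic differentials in Theorem \ref{main thm 1}.
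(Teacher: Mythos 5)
Your first paragraph is precisely the paper's proof: Theorems \ref{THM ONE CUT} and \ref{THM 1 2} show that the one-cut and two-cut requirements fail throughout the region labeled three-cut, and existence and uniqueness of the equilibrium measure (Kuijlaars--Silva) then force every such $\sigma$ to satisfy Definition \ref{Def three cut sigma}; note only that the conclusion this step yields is $\mathrm{int}(R_3)\subseteq\mathcal{O}_3$ rather than $\mathcal{O}_3\subseteq\mathrm{int}(R_3)$ as you wrote it. The deformation and real-analyticity machinery of your second paragraph is neither needed nor used by the paper --- the exclusion argument is the entire proof.
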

	\begin{proof}
		By Theorems \ref{THM ONE CUT} and \ref{THM 1 2} we have already proven that	in the region labeled as the "three-cut regime" in Figure \ref{fig:Phase Diagram} the one-cut and two-cut requirements are not satisfied. Since for all $\sigma$, the equilibrium measure and the Riemann-Hilbert contour exists and is unique (see \cite{KS}, where uniqueness of the contour outside of the support of the equilibrium measure is up to homotopy) we conclude that all sigma in that region must necessarily satisfy the requirements of Definition \ref{Def three cut sigma}.
	\end{proof}
	
	\section{The Riemann-Hilbert Problem in the One-cut Regime, String Equations and Topological Expansion of the Recurrence Coefficients}\label{sec rhp}
	
	In this section we follow \cite{BleherIts2005}. For simplicity of notation, let us use $b$ instead of $b_1$ in this section. Assume that $\sigma$ belongs to the one-cut regime and let us consider the set of monic orthogonal polynomials $P_n(z;N)$, $\deg P_n(z;N) = n$, satisfying
	\begin{equation}\label{orthogonality}
	\int_{\Ga_{\sigma}} P_n(s;N)P_k(s;N)e^{-NV(s)}\dd s=h_{n}(N)\de_{n,k}, \qquad k=0,1,\cdots,n,
	\end{equation}
	where we suppress the dependence on $\sigma$ in all of the quantities and functions. Since the potential $V$ is even, these polynomials satisfy the following recurrence relation
	\begin{equation}
	zP_n(z;N)=P_{n+1}(z;N)+\ga^2_{n}(N)P_{n-1}(z;N).
	\end{equation}
	Corresponding to this system of orthogonal polynomials one has the following Riemann-Hilbert problem \cite{FIK}
	\begin{itemize}
		\item \textbf{RH-$\boldsymbol Y$1} $\qquad Y(z;n,N)$ is holomorphic in $\C \setminus \Ga_{\sigma}$.
		\item \textbf{RH-$\boldsymbol Y$2} $\qquad Y_+(z;n,N)=Y_-(z;n,N)J_Y(z;N), \quad z \in \Ga_{\sigma}, \qquad$ 
		where 
		\begin{equation}
		J_Y(z;N) = \begin{pmatrix}1 & w(z;N) \\ 0 & 1 \end{pmatrix}, \qquad \qquad w(z;N):=e^{-NV(z)}.
		\end{equation}
		\item \textbf{RH-$\boldsymbol Y$3} $\qquad Y(z;n,N) = (I + O(z^{-1}))\begin{pmatrix} z^n & 0 \\ 0 & z^{-n} \end{pmatrix}, \qasq z \to \infty$.
	\end{itemize}
	The representation of the solution of this Riemann-Hilbert problem in terms of OPs is due to Fokas, Its and Kitaev \cite{FIK} and is given by
	\begin{equation}
	\di \begin{pmatrix}
	P_n(z;N) & \mathscr{C}[P_nw](z;N) \\
	-\di \frac{2\pi \ii}{h_{n-1}(N)} P_{n-1}(z;N) & -\di \frac{2\pi \ii}{h_{n-1}(N)}\mathscr{C}[P_{n-1}w](z;N)
	\end{pmatrix},
	\end{equation}
	where $\mathscr{C}[f]$ is the Cauchy transform of the function $f$ with respect to the contour $\Ga_{\sigma}$. Using the three-term recurrence relation and the orthogonality conditions one can easily observe that 
	\begin{equation}\label{ga to h}
	\ga_{n}^2(N) = \frac{h_{n}(N)}{h_{n-1}(N)}.
	\end{equation}
	For a parameter $\varkappa > 0$, define
	\begin{equation}
	V_{\varkappa}(z;\sigma)\equiv V_{\varkappa}:= V/\varkappa.
	\end{equation} Doing the one-cut endpoint calculations for $V_{\varkappa}$, similar to those done in \S \ref{subsec 2.4.1}, we find \begin{equation}
	\rho_V(z;\sigma,\varkappa) = \frac{1}{2\pi \ii \varkappa}(z^2-z^2_0(\varkappa;\sigma))\left(\sqrt{z^2-b^2(\varkappa;\sigma)}\right)_+,
	\end{equation}
	and
	\begin{equation}\label{b and z_0 one cut s}
	b(\varkappa;\sigma)=\sqrt{\frac{2}{3}\left( -\sigma + \sqrt{12\varkappa+\sigma^2} \right)} \qandq z_0(\varkappa;\sigma)=\sqrt{\frac{1}{3}\left( -2\sigma -  \sqrt{12\varkappa+\sigma^2} \right)}.
	\end{equation}
	The multi-critical points are obtained when $z_0=b$ and when $z_0=0$, these possibilities respectively correspond to \begin{equation}\label{multicritical kappa}
	12\varkappa+\sigma^2 = 0 , \qandq -2\sigma -  \sqrt{12\varkappa+\sigma^2} =0.
	\end{equation} 
	Therefore, the multi-critical points are \begin{equation}\label{multicritical kappa 1}
	\sigma = \pm \ii \sqrt{12\varkappa}, \qandq \sigma = -2\sqrt{\varkappa},
	\end{equation}
	and we have a similar picture as Figure \ref{fig:Phase Diagram} for the critical lines corresponding to transition from the one-cut to the three-cut regime for $V_{\varkappa}$. We also have the following formulae for the corresponding $g$-function
	\begin{equation}
	g(z;\sigma,\varkappa)=\frac{1}{\varkappa}g(z;\sigma,1), 
	\end{equation}and the Euler-Lagrange constant
	\begin{equation}
	\ell^{(1)}_*(\sigma,\varkappa)=\frac{1}{\varkappa}\ell^{(1)}_*(\sigma,1),
	\end{equation}
	where $g(z;\sigma,1)$, and $\ell^{(1)}_*(\sigma,1)$ are respectively given by \eqref{g-1-cut-explicit} and \eqref{ell* 1 cut}. 
	
	\smallskip
	
	Note that for $\sigma$ in the one-cut regime the expressions in \eqref{b and z_0 one cut s} are locally analytic in $\sigma$ and $\varkappa$. In particular, for any fixed $\sigma$, there exists $\ep(\sigma)>0$ such that $b(\varkappa;\sigma)$ and $z_0(\varkappa;\sigma)$ are analytic in $\varkappa$ in the interval \begin{equation}\label{1 pm ep}
	1-\ep(\sigma)<\varkappa<1+\ep(\sigma), \qquad 0<\ep(\sigma)<1.
	\end{equation}
	In particular, let \begin{equation}
	\varkappa=\frac{n}{N}.
	\end{equation}For this choice of $\varkappa$, we have
	\begin{equation}
	NV(z)=nV_\varkappa(z).
	\end{equation}
	
	Note that the orthogonal polynomials (and hence their norms) with respect to $e^{-NV(z)}$ and $e^{-nV_{\varkappa}(z)}$ are identical as they are built by bordered Hankel determinants out of moments of the identical weight functions.
	The Riemann-Hilbert problem corresponding to $e^{-nV_{\varkappa}(z)}$ would be exactly similar to \textbf{RH-$\boldsymbol Y$1} through \textbf{RH-$\boldsymbol Y$3}, except that one should make replacements $V \mapsto V_\varkappa$ and $N \mapsto n$ in \textbf{RH-$\boldsymbol Y$2}.
	As a result of the Riemann-Hilbert analysis for orthogonal polynomials on the line with respect to the weight $e^{-nV_{\varkappa}(z)}$ we obtain a $1/N$ asymptotic expansion for $\ga^2_n$:
	\begin{equation}\label{ga^2 large N}
	R_n(\varkappa;\sigma):=\ga^2_n(\varkappa;\sigma) \sim \sum^{\infty}_{j=0} \frac{r_j(\varkappa;\sigma)}{N^j}, \qasq N \to \infty,
	\end{equation}
	where
	\begin{equation}\label{f0}
	r_0(\varkappa;\sigma)   \equiv  \frac{b^2(\varkappa;\sigma)}{4}.
	\end{equation}
	The Riemann-Hilbert analysis is standard and we do not provide the details here. For one-cut real potentials see e.g. \cite{Charlier, CharlierGharakhloo}, and for complex one-cut potentials see e.g. \cite{BDY, BertolaTovbis2015,KuijlaarsMcLaughlin2001,KuijlaarsMcLaughlin2004}.
	
	Now, recall (see e.g. \cite{BL}) the string equations 
	
	\begin{equation}
	\ga_n[V'(\mathcal{Q})]_{n,n-1}=\frac{n}{N}, \qandq [V'(\mathcal{Q})]_{n,n}=0,
	\end{equation}
	where
	\begin{equation}
	\mathcal{Q}= \begin{pmatrix}
	0 & \ga_1 & 0 & 0 & \cdots \\
	\ga_1 & 0 & \ga_2 & 0 & \cdots \\
	0 & \ga_2 & 0 & \ga_3 & \cdots \\
	\vdots & \ddots & \ddots & \ddots & \cdots\\
	\end{pmatrix}.
	\end{equation}
	The relevant quantities reduce to:
	
	\begin{equation}
	[\mathcal{Q}]_{n,n-1}=\ga_n, \qquad  [\mathcal{Q}]_{n,n}=0, \qquad     [\mathcal{Q}^3]_{n,n-1}=\ga_n\ga^2_{n-1}+\ga_n^3+\ga_n\ga^2_{n+1}, \qquad  [\mathcal{Q}^3]_{n,n}=0.
	\end{equation}
	Therefore the second string equation is automatically satisfied and the first one can be written as 
	\begin{equation}\label{string}
	\ga_n^2(\varkappa;\sigma)\left(\sigma+\ga_n^2(\varkappa;\sigma)+\ga_{n-1}^2(\varkappa;\sigma)+\ga_{n+1}^2(\varkappa;\sigma)\right) = \varkappa,
	\end{equation}or
	\begin{equation}
	R_n(\varkappa;\sigma)\left(\sigma+R_{n-1}(\varkappa;\sigma)+R_n(\varkappa;\sigma)+R_{n+1}(\varkappa;\sigma)\right) = \varkappa.
	\end{equation}
	Note that 
	\begin{equation}\label{ga^2 n pm 1 large N}
	R_{n\pm1}(\varkappa;\sigma)  \sim \sum^{\infty}_{j=0} \frac{r_j(\varkappa \pm N^{-1},\sigma)}{N^j}, \qasq N \to \infty.
	\end{equation}
	Evaluation of Taylor expansions of $r_j$, centered at $\varkappa=n/N$, at $\varkappa \pm N^{-1}$  yields
	\begin{equation}\label{ga^2 n p 1 large N}
	R_{n+1}(\varkappa;\sigma) \sim \sum^{\infty}_{j=0} \frac{1}{N^j} \sum^{\infty}_{\ell=0} \frac{r^{(\ell)}_j(\varkappa;\sigma)}{\ell ! N^{\ell}}, \qasq N \to \infty.
	\end{equation} and 
	\begin{equation}\label{ga^2 n m 1 large N}
	R_{n-1}(\varkappa;\sigma) \sim \sum^{\infty}_{j=0} \frac{1}{N^j} \sum^{\infty}_{\ell=0} \frac{(-1)^{\ell}r^{(\ell)}_j(\varkappa;\sigma)}{\ell ! N^{\ell}}, \qasq N \to \infty.
	\end{equation}
	Direct computation gives the following asymptotic expansion for the left hand side of \eqref{string} in inverse powers of $N$:
	\begin{equation}\label{string hat}
	R_n(\varkappa;\sigma)\left(\sigma+R_{n-1}(\varkappa;\sigma)+R_n(\varkappa;\sigma)+R_{n+1}(\varkappa;\sigma)\right) \sim \sum^{\infty}_{j=0} \frac{\hat{r}_j(\varkappa;\sigma)}{N^j},
	\end{equation}
	where
	\begin{equation}\label{hat f}
	\hat{r}_j(\varkappa;\sigma) = \sigma r_j(\varkappa;\sigma)+\sum^j_{\ell=0} r_{\ell}(\varkappa;\sigma) \left[ 3 r_{j-\ell}(\varkappa;\sigma)+\sum^{j-\ell-1}_{m=0} (1+(-1)^{j-\ell-m}) \frac{r^{(j-\ell-m)}_m(\varkappa;\sigma)}{(j-\ell-m)!} \right]
	\end{equation}
	So the string equation \eqref{string} can be written as \begin{equation}\label{string list}
	\hat{r}_0(\varkappa;\sigma)=\varkappa, \qandq \hat{r}_j(\varkappa;\sigma)=0, \qquad j \in \N.
	\end{equation}
	\begin{lemma}
		$\ga^2_n(\sigma)$ has a power series expansion in in inverse powers of $N^2$.
	\end{lemma}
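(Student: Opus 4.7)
The plan is to show by induction on $k \in \N \cup \{0\}$ that the odd coefficient $r_{2k+1}(\varkappa;\sigma)$ in the asymptotic expansion \eqref{ga^2 large N} vanishes identically, using the string equations \eqref{string list} together with the parity structure built into the formula \eqref{hat f}. The evenness of the potential $V(z;\sigma) = \frac{z^4}{4} + \frac{\sigma z^2}{2}$ enters precisely through the factor $(1+(-1)^{j-\ell-m})$ appearing in \eqref{hat f}, which kills any contribution whose ``derivative gap'' $j-\ell-m$ is odd.

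First I would isolate the coefficient of $r_j$ in $\hat{r}_j$ to set up the inductive step. Inspection of \eqref{hat f} shows that the top-order contribution comes only from the $\ell = 0$ and $\ell = j$ terms of the $3 r_\ell r_{j-\ell}$ sum (which together contribute $6 r_0 r_j$) plus the bare $\sigma r_j$; the derivative-containing terms involve only $r_\ell$ and $r_m^{(\cdot)}$ with $\ell, m < j$. Thus
\begin{equation}
\hat{r}_j(\varkappa;\sigma) = \bigl(\sigma + 6 r_0(\varkappa;\sigma)\bigr)\, r_j(\varkappa;\sigma) + \Phi_j\bigl(r_0, r_1, \ldots, r_{j-1}\bigr),
\end{equation}
where $\Phi_j$ is a polynomial in $r_0,\ldots,r_{j-1}$ and their $\varkappa$-derivatives. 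Using \eqref{f0} and \eqref{em37}, a direct computation gives $\sigma + 6 r_0 = \sigma + \tfrac{3}{2} b^2 = \sqrt{12\varkappa + \sigma^2}$, which is nonzero throughout $\mathcal{O}_1$ (away from the multi-critical points $\pm \ii \sqrt{12\varkappa}$). Hence for every $j \geq 1$ the equation $\hat{r}_j = 0$ solves uniquely for $r_j$ in terms of $r_0,\ldots,r_{j-1}$.

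Next I would run the induction. The inductive hypothesis is that $r_1 = r_3 = \cdots = r_{2k-1} \equiv 0$. I claim $\Phi_{2k+1} \equiv 0$ under this hypothesis. For the quadratic term $\sum_{\ell=1}^{2k} 3 r_\ell r_{2k+1-\ell}$, since $2k+1$ is odd one of $\ell, 2k+1-\ell$ is odd, so by the inductive hypothesis every summand vanishes. For the derivative term
\begin{equation}
\sum_{\ell=0}^{2k+1} r_\ell \sum_{m=0}^{2k-\ell} \bigl(1 + (-1)^{2k+1-\ell-m}\bigr) \frac{r_m^{(2k+1-\ell-m)}}{(2k+1-\ell-m)!},
\end{equation}
the binary factor forces $2k+1-\ell-m$ to be even, i.e.\ $\ell + m$ must be odd; therefore exactly one of $\ell, m$ is odd, and either $r_\ell = 0$ or $r_m^{(\cdot)} = 0$ by the inductive hypothesis. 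Thus $\Phi_{2k+1} \equiv 0$, whence $\bigl(\sigma + 6 r_0\bigr) r_{2k+1} = 0$ and $r_{2k+1} \equiv 0$. The base case $k = 0$ reduces to $\hat{r}_1 = (\sigma + 6 r_0) r_1 = 0$, since the only candidate term $(1 + (-1)^{1-0-0}) r_0' = 0$ in $\Phi_1$ vanishes automatically.

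The main obstacle is not in the induction itself, which is essentially an accounting exercise in the string-equation identity \eqref{hat f}, but rather in verifying that $\sigma + 6 r_0$ never vanishes in $\mathcal{O}_1$, so that $\hat r_j = 0$ genuinely forces $r_j = 0$ rather than leaving it undetermined. This is where the explicit one-cut formula \eqref{em37} for $b(\varkappa;\sigma)$ and the characterization of $\mathcal{O}_1$ from Theorem \ref{main thm 1} and Theorem \ref{thm 1.2} are used: the multi-critical points $\sigma = \pm \ii \sqrt{12\varkappa}$ where the leading coefficient degenerates are precisely excluded from $\mathcal{O}_1$. Given the vanishing of all odd coefficients, the formal series \eqref{ga^2 large N} rearranges into powers of $N^{-2}$, which is the assertion of the lemma.
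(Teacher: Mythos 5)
Your proposal is correct and follows essentially the same route as the paper: induction on the odd index, using the string equation in the form $\hat{r}_j=0$, the isolation of the leading term $(\sigma+6r_0)r_j$, and the nonvanishing of $\sigma+6r_0=\sqrt{12\varkappa+\sigma^2}$ away from the multicritical points. The only difference is presentational — you make the parity bookkeeping in \eqref{hat f} (the factor $1+(-1)^{j-\ell-m}$ forcing $\ell+m$ odd, hence one vanishing factor by the inductive hypothesis) fully explicit, where the paper phrases the same cancellation as "disregarding" the odd coefficients after updating the expansion.
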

	\begin{proof}
		We prove that $r_{2k-1}(\varkappa;\sigma)=0$, $k \in \N$, by induction. For $k=1$ it can be seen as follows. Using \eqref{hat f} we write \eqref{string list} for $j=1$ to get
		\begin{equation}
		\hat{r}_{1}(\varkappa;\sigma) = r_1(\varkappa;\sigma) (\sigma + 6r_0(\varkappa;\sigma)) = 0.
		\end{equation}
		Using \eqref{b and z_0 one cut s} and \eqref{f0} we can show
		\begin{equation}
		\sigma+ 6r_0(\varkappa;\sigma) = \sqrt{12\varkappa+\sigma^2},
		\end{equation}
		which is nonzero as we are away from the multicritical points $\pm \ii \sqrt{12\varkappa}$ (see \eqref{multicritical kappa}, \eqref{multicritical kappa 1} and above). Therefore \begin{equation}
		r_1(\varkappa;\sigma) = 0.
		\end{equation}
		Assume that all $r_{2k-1}(\varkappa;\sigma)=0$, for $k \in 2,3,\cdots, k_0$ (induction hypothesis). So we can update the definition \eqref{ga^2 large N} of $\ga_n$ to be
		\begin{equation}\label{ga^2 large N 1}
		R_n(\varkappa;\sigma) \sim \sum^{k_0-1}_{j=0} \frac{r_{2j}(\varkappa;\sigma)}{N^{2j}} + \sum^{\infty}_{j=2k_0} \frac{r_j(\varkappa;\sigma)}{N^j}, \qasq N \to \infty,
		\end{equation}and thus we have analogues of equations \eqref{ga^2 n pm 1 large N}, \eqref{ga^2 n p 1 large N}, \eqref{ga^2 n m 1 large N}, \eqref{string hat} and \eqref{hat f}. Now we show that $r_{2k_0+1}(\varkappa;\sigma)=0$.  Using \eqref{hat f} we write \eqref{string list} for $j=2k_0+1$. Note that the functions $r_{2k-1}$, $ 1 \leq k \leq k_0 $ and their derivatives should be disregarded as being zero, due to the update in the definition of $\ga_n$ by the induction hypothesis. Therefore we get 
		\begin{equation}
		\hat{r}_{2k_0+1}(\varkappa;\sigma) =  r_{2k_0+1}(\varkappa;\sigma) (\sigma + 6r_0(\varkappa;\sigma)) = 0,
		\end{equation}
		which implies that $r_{2k_0+1}(\varkappa;\sigma) = 0$, and thus
		\begin{equation}\label{Asymp ga_n^2}
		R_n(\varkappa;\sigma) \sim \sum^{\infty}_{j=0} \frac{r_{2j}(\varkappa;\sigma)}{N^{2j}}.
		\end{equation}
	\end{proof}
	Here it is worthwhile to provide explicit formulae for some $r_{2k}$. We recall that $r_0$ is given by \eqref{f0}.  Using \eqref{string}, \eqref{hat f} and \eqref{string list} we can show
	\begin{equation}
	r_2(\varkappa;\sigma) = - \frac{r_0(\varkappa;\sigma)r^{''}_0(\varkappa;\sigma)}{\sigma+6r_{0}(\varkappa;\sigma)},
	\end{equation}
	and
	\begin{equation}
	r_4(\varkappa;\sigma) = - \frac{\frac{1}{12}r_0(\varkappa;\sigma)r^{(4)}_0(\varkappa;\sigma)+3r^2_2(\varkappa;\sigma)+r_2(\varkappa;\sigma)r^{''}_0(\varkappa;\sigma)+r_0(\varkappa;\sigma)r^{''}_2(\varkappa;\sigma)}{\sigma+6r_{0}(\varkappa;\sigma)}.
	\end{equation}
	We can actually find the following recursive formula for all $r_{2j}$. Indeed,
	\begin{equation}\label{f2j}
	r_{2j}(\varkappa;\sigma) = - \frac{\La_{2j}(\varkappa;\sigma)}{\sigma+6r_{0}(\varkappa;\sigma)}, \qquad j \in \N,
	\end{equation}
	where
	\begin{equation}\label{LA2j}
	\Lambda_{2j}(\varkappa;\sigma):=3\sum^{j-1}_{\ell=1}r_{2\ell}(\varkappa;\sigma)r_{2j-2\ell}(\varkappa;\sigma)+2\sum^{j-1}_{\ell=0}r_{2\ell}(\varkappa;\sigma)\sum^{j-\ell-1}_{m=0}\frac{r^{(2j-2\ell-2m)}_{2m}(\varkappa;\sigma)}{(2j-2\ell-2m)!}.
	\end{equation}
	
	\section{Topological Expansion of the Free Energy}\label{Sec Free Energy}
	
	\begin{proposition}
		For $\sigma$ in the one-cut regime we have
		\begin{equation}\label{F''}
		\frac{\partial^2 F}{\partial \sigma^2} = \frac{N^2}{4n^2}R_n(R_{n-1}+R_{n+1}),
		\end{equation}and
		\begin{equation}\label{F'}
		\frac{\partial F}{\partial \sigma} = - \frac{N^2}{2n^2}R_n(\frac{n}{N}+R_{n-1}R_{n+1}).
		\end{equation}
	\end{proposition}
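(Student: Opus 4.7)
The plan is to reduce both identities to computations involving the squared norms $h_k$ of the monic orthogonal polynomials \eqref{orthogonality} and the recurrence coefficients $R_k=\ga_k^2=h_k/h_{k-1}$, using the Heine/Andr\'eief factorization $Z_{nN}(\sigma)=n!\prod_{k=0}^{n-1}h_k(\sigma)$ and then collapsing the resulting sums by means of the string equation \eqref{string}.

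First I would differentiate $\ln Z_{nN}=\ln n!+\sum_{k=0}^{n-1}\ln h_k$ with respect to $\sigma$. Differentiating under the integral sign gives $\partial_\sigma h_k=-\tfrac{N}{2}\int z^2 P_k^2 e^{-NV}\dd z$, and the three-term recurrence $zP_k=P_{k+1}+R_k P_{k-1}$ together with orthogonality yields
\begin{equation*}
\int z^2 P_k^2 e^{-NV}\dd z=h_{k+1}+R_k^2 h_{k-1}=(R_k+R_{k+1})h_k.
\end{equation*}
Since ${\mathcal Z}_{nN}^{\rm GUE}$ does not depend on $\sigma$, the definition of $F_{nN}$ produces
\begin{equation*}
\partial_\sigma F_{nN}=-\frac{N}{2n^2}\sum_{k=0}^{n-1}(R_k+R_{k+1}).
\end{equation*}

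Second, I would collapse this sum into the closed form demanded by \eqref{F'}. Multiplying the string equation \eqref{string} at indices $n$ and $n-1$ respectively by $R_{n-1}$ and $R_n$ and subtracting produces the one-line identity $NR_{n-1}R_n(R_{n+1}-R_{n-2})=nR_{n-1}-(n-1)R_n$. Setting $A_n:=nR_n+NR_nR_{n-1}R_{n+1}$, this identity rearranges cleanly into the telescoping relation $A_n-A_{n-1}=R_n+R_{n-1}$. With the standard convention $R_0=0$ (compatible with the string equation at $n=0$) one has $A_0=0$, and telescoping from $1$ up to $n$ gives $\sum_{k=0}^{n-1}(R_k+R_{k+1})=A_n$, delivering \eqref{F'} at once.

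For \eqref{F''} I would work directly with the variance. Differentiating once more yields $\partial_\sigma^2\ln Z_{nN}=\tfrac{N^2}{4}\textrm{Var}(\sum_j\zeta_j^2)$. Since the eigenvalue process is determinantal with correlation kernel $K_n(z,w)=\sum_{k=0}^{n-1}\varphi_k(z)\varphi_k(w)$ built from $\varphi_k=P_k e^{-NV/2}/\sqrt{h_k}$, a standard manipulation gives
\begin{equation*}
\textrm{Var}\Big(\textstyle\sum_j\zeta_j^2\Big)=\sum_{i=0}^{n-1}\sum_{j\ge n}\big[(\mathcal{Q}^2)_{ij}\big]^2,
\end{equation*}
where $\mathcal{Q}$ is the Jacobi matrix of \S\ref{sec rhp} representing multiplication by $z$ in the basis $\{\varphi_k\}$, so that $\mathcal{Q}_{k-1,k}=\mathcal{Q}_{k,k-1}=\ga_k$. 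Since $\mathcal{Q}$ has only off-diagonal entries, $\mathcal{Q}^2$ is pentadiagonal with zeros on the $\pm 1$ diagonals, so only $(\mathcal{Q}^2)_{n-2,n}=\ga_{n-1}\ga_n$ and $(\mathcal{Q}^2)_{n-1,n+1}=\ga_n\ga_{n+1}$ contribute. Their squares sum to $R_n(R_{n-1}+R_{n+1})$, producing \eqref{F''}.

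The main obstacle is spotting the particular combination of adjacent string equations that trivializes the sum $\sum_{k=0}^{n-1}(R_k+R_{k+1})$; once the identity $A_n-A_{n-1}=R_n+R_{n-1}$ is found, the $\tfrac{n}{N}$-plus-triple-product structure inside the parentheses on the right-hand side of \eqref{F'} emerges transparently. The determinantal variance computation for \eqref{F''} is then routine Christoffel--Darboux and Jacobi-matrix bookkeeping.
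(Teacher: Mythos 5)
Your argument is correct, and both halves take genuinely different routes from the paper's. You agree with the paper up to the sum formula $\partial_\sigma F=-\frac{N}{2n^2}\sum_{k=0}^{n-1}(R_k+R_{k+1})$ (the paper's \eqref{e5}), but from there you diverge. For \eqref{F'}, the paper computes $\partial_\sigma F$ as a moment of the one-point density, invokes the Christoffel--Darboux formula \eqref{dt9} and the Lax ODE \eqref{dt10}, grinds through the three integrals $A$, $B$, $C$ in \eqref{dt15}, and only applies the string equation at the very last step to simplify $A-B+C$. Your telescoping identity $A_n-A_{n-1}=R_n+R_{n-1}$ with $A_n:=nR_n+NR_nR_{n-1}R_{n+1}$, obtained by cross-multiplying consecutive instances of \eqref{string}, collapses the sum in three lines; I verified the algebra, including the boundary case $k=1$ where $R_0=0$ makes both sides of the cross-multiplied identity vanish. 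For \eqref{F''}, the paper differentiates \eqref{e5} again using the Toda-type relation \eqref{e19} and telescopes $I_k=\gamma_k^2(\gamma_{k-1}^2+\gamma_{k+1}^2)$; you instead use the projection-kernel variance formula $\mathrm{Var}(\sum_j\zeta_j^2)=\sum_{i<n\le j}[(\mathcal{Q}^2)_{ij}]^2$, which reduces the claim to reading off two pentadiagonal entries. This is also correct, with the caveat that on the complex contour $\Gamma_\sigma$ the word ``variance'' is purely formal --- the identity is an algebraic consequence of the non-Hermitian orthogonality $\int\varphi_j\varphi_k\,\dd z=\delta_{jk}$ and the symmetry of $M_f$, so the squares (not moduli squared) are the right objects, as you wrote. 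The trade-off: your route is considerably shorter and makes the role of the string equation structurally transparent, whereas the paper's longer computation for \eqref{F'} produces along the way the explicit one-point density \eqref{dt11}, which has independent interest. One presentational suggestion: state explicitly that $\int(\partial_\sigma P_k)P_ke^{-NV}\dd z=0$ because $\partial_\sigma P_k$ has degree at most $k-1$, which justifies differentiating $h_k$ only through the weight.
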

	\begin{proof}
		By the Heine's identity for Hankel determinants and noting that $h_k=D_{k+1}/D_k$, where $D_k$ is the $k\times k$ Hankel determinant generated by the weight $e^{-NV(z;\sigma)}$, we have
		\begin{equation}\label{e1}
		Z_{nN}(\sigma)=n!\prod_{k=0}^{n-1} h_k,
		\end{equation}
		hence
		\begin{equation}\label{e2}
		F\equiv F_{nN}(\sigma) =\frac{\ln n!}{n^2} +\frac{1}{n^2}\sum_{k=0}^{n-1} \ln h_k.
		\end{equation}
		Differentiating equation \eqref{orthogonality} with $j=k$, we obtain that
		\begin{equation}\label{e3}
		\begin{aligned}
		\frac{\partial h_k}{\partial \sg}&=-\frac{N}{2} \int_{\Ga} z^2 P_k^2(z) e^{-NV(z)} \,\dd z\\
		&=-\frac{N}{2} \int_{\Ga} [P_{k+1}(z)+\ga_k^2P_{k-1}(z)]^2 e^{-NV(z)} \,\dd z\\
		&=-\frac{N}{2}\,\left(h_{k+1}+\ga_k^4h_{k-1}\right)
		=-\frac{Nh_k}{2}\,\left(\ga^2_{k+1} +\ga_k^2\right),
		\end{aligned}
		\end{equation}
		hence
		\begin{equation}\label{e4}
		\frac{\partial \ln h_k}{\partial \sg}=-\frac{N}{2}\,\left(\ga^2_{k+1} +\ga_k^2\right),
		\end{equation}
		and by \eqref{e2},
		\begin{equation}\label{e5}
		\frac{\partial F}{\partial \sg}=-\frac{N}{2n^2}\sum_{k=0}^{n-1} \left(\ga^2_{k+1}+\ga_k^2\right).
		\end{equation}
		It is convenient to introduce also the $\psi$-functions,
		\begin{equation}\label{d9}
		\psi_k(z):=\frac{1}{\sqrt{h_k}}P_k(z)e^{-NV(z)/2}.
		\end{equation}
		They satisfy the orthogonality conditions,
		\begin{equation}\label{d10}
		\int_{\Ga} \psi_j(z)\psi_k(z) \,dz=\de_{jk},
		\end{equation}
		and the recurrence relation,
		\begin{equation}\label{d11}
		z\psi_k(z)=\ga_{k+1}\psi_{k+1}(z)+\ga_k\psi_{k-1}(z).
		\end{equation}
		Define the vector function
		\begin{equation}\label{e10}
		\vec\Psi_k(z)=
		\begin{pmatrix}
		\psi_k(z)\\
		\psi_{k-1}(z)
		\end{pmatrix},\quad k\ge 1.
		\end{equation}
		Then $\vec\Psi_k(z)$ satisfies the ODE
		\begin{equation}\label{e11}
		\vec\Psi_k'(z)=NA_k(z)\vec\Psi_k(z),
		\end{equation}
		where
		\begin{equation}\label{e12}
		A_k(z)=
		\begin{pmatrix}\di
		-\frac{V'(z)}{2} -\ga_k u_k(z) & \ga_k v_k(z) \\ 
		-\ga_k v_{k-1}(z) & \di \frac{V'(z)}{2} +\ga_k u_k(z)
		\end{pmatrix},
		\end{equation}
		and
		\begin{equation}\label{e13}
		u_k(z)=[W(\mathcal{Q},z)]_{k,k-1},\quad v_k(z)=[W(\mathcal{Q},z)]_{kk},
		\end{equation}
		where
		\begin{equation}\label{e14}
		W(\mathcal{Q},z)=\frac{V'(\mathcal{Q})-V'(z)}{\mathcal{Q}-z}\,.
		\end{equation}
		(see equation (1.5.2) in \cite{BL})
		For the quartic polynomial \eqref{int13}, we obtain from \eqref{e14} that
		\begin{equation}\label{e15}
		W(\mathcal{Q},z)=\mathcal{Q}^2+ \mathcal{Q}z+z^2+\sg,
		\end{equation}
		hence
		\begin{equation}\label{e16}
		u_k(z)=\ga_k z,\quad v_k(z)=\ga_k^2+\ga_{k+1}^2+z^2+\sg.
		\end{equation}
		Substituting these formulae in \eqref{e12}, we obtain that
		\begin{equation}\label{e17}
		\begin{aligned}
		&\frac{1}{N}\,\psi_k'(z)=-\left(\frac{z^3+\sg z}{2}+\ga_k^2z\right)\psi_k(z)+\ga_k(\ga_k^2+\ga_{k+1}^2+z^2+\sg)\psi_{k-1}(z),\\
		&\frac{1}{N}\,\psi_{k-1}'(z)=-\ga_k(\ga_{k-1}^2+\ga_{k}^2+z^2+\sg)\psi_k(z) +\left(\frac{z^3+\sg z}{2}+\ga_k^2z\right)\psi_{k-1}(z).
		\end{aligned}
		\end{equation}
		Differentiating \eqref{ga to h} with respect to $\sigma$, and using \eqref{e4} yields
		\begin{equation}
		\frac{\partial \ga_k}{\partial \sg}
		=\frac{N\ga_k}{4}\,(\ga_{k-1}^2-\ga_{k+1}^2),
		\end{equation}
		hence
		\begin{equation}\label{e19}
		\frac{\partial \ga_k^2}{\partial \sg}
		=\frac{N\ga_k^2}{2}\,(\ga_{k-1}^2-\ga_{k+1}^2).
		\end{equation}
		Differentiating equation \eqref{e5}, we obtain that
		\begin{equation}\label{e20}
		\begin{aligned}
		\frac{\partial^2 F}{\partial \sg^2}
		&=-\frac{N^2}{2n^2}\sum_{k=0}^{n-1}
		\left(\frac{\partial\ga_{k+1}^2}{\partial\sg}
		+\frac{\partial\ga_{k}^2}{\partial\sg}\right)\\
		&=-\frac{N^2}{4n^2}\sum_{k=0}^{n-1}
		\left[\ga_{k+1}^2(\ga_{k}^2-\ga_{k+2}^2)
		+\ga_{k}^2(\ga_{k-1}^2-\ga_{k+1}^2)\right]\\
		&=\frac{N^2}{4n^2}\sum_{k=0}^{n-1} (I_{k+1}-I_k),
		\end{aligned}
		\end{equation}
		where
		\begin{equation}\label{e21}
		I_k=\ga_{k}^2(\ga_{k-1}^2+\ga_{k+1}^2),\quad I_0=0.
		\end{equation}
		Observe that the last sum in \eqref{e20} is telescopic and $I_0=0$, hence
		\begin{equation}\label{e22}
		\begin{aligned}
		\frac{\partial^2 F}{\partial \sg^2}
		=\frac{N^2}{4n^2}\,\ga_n^2(\ga_{n-1}^2+\ga_{n+1}^2),
		\end{aligned}
		\end{equation}
		(cf. equation (1.4.21) in \cite{BL}) which is exactly \eqref{F''} recalling the notation introduced in \eqref{ga^2 large N}.
		
		\smallskip
		
		From equation \eqref{int15a} we have that
		\begin{equation}\label{dt1}
		\frac{\partial F}{\partial \sg}=\frac{1}{n^2Z}\,\frac{\partial Z}{\partial \sg}\,,
		\end{equation}
		and from \eqref{int14} we have
		\begin{equation}\label{dt2}
		\frac{\partial Z}{\partial \sg}=-\frac{N}{2} \int_\Gamma\cdots\int_\Gamma
		\left(\sum_{k=1}^n z_k^2\right)\prod_{1\leq j<k\leq n}(z_j-z_k)^2
		\prod_{k=1}^n e^{-N V(z_k)}\mathrm dz_1\cdots \mathrm dz_n,
		\end{equation}
		hence
		\begin{equation}\label{dt3}
		\frac{\partial F}{\partial \sg}=-\frac{N}{2n^2}\,
		\big\langle \sum_{k=1}^n z_k^2\big\rangle\,,
		\end{equation}
		where
		\begin{equation}\label{dt4}
		\begin{aligned}
		&\big\langle f(z_1,\ldots,z_n)\big\rangle =\frac{\di\int_\Gamma\cdots\int_\Gamma
			f(z_1,\ldots,z_n)\prod_{1\leq j<k\leq n}(z_j-z_k)^2
			\prod_{k=1}^n e^{-N V(z_k)}\mathrm dz_1\cdots \mathrm dz_n,}
		{\di\int_\Gamma\cdots\int_\Gamma
			\prod_{1\leq j<k\leq n}(z_j-z_k)^2
			\prod_{k=1}^n e^{-N V(z_k)}\mathrm dz_1\cdots \mathrm dz_n,}\,.
		\end{aligned}
		\end{equation}
		By the permutation symmetry,
		\begin{equation}\label{dt5}
		\big\langle \sum_{k=1}^n z_k^2\big\rangle=n\big\langle z_1^2\big\rangle
		=n\int_\Gamma z^2 p(z)dz,
		\end{equation}
		where
		\begin{equation}\label{dt6}
		p(z)=\frac{1}{Z}\,
		\int_\Gamma\cdots\int_\Gamma
		\prod_{1\leq j<k\leq n}(z_j-z_k)^2
		\prod_{k=1}^n e^{-N V(z_k)}\mathrm dz_2\cdots \mathrm dz_n\bigg|_{z_1=z},
		\end{equation}
		is the one-point density function. Thus, from \eqref{dt3} we obtain that
		\begin{equation}\label{dt7}
		\frac{\partial F}{\partial \sg}=-\frac{N}{2n}\,
		\int_\Gamma z^2p(z)dz\,.
		\end{equation}The one-point density function $p(z)$ can be expressed in terms of
		the orthogonal polynomials $P_k(z)$ as
		\begin{equation}\label{dt8}
		p(z)=\frac{1}{n}\, \sum_{k=0}^{n-1}\psi_k(z)^2,
		\end{equation}
		where $\psi_k(z)$ are defined in \eqref{d9} 
		(see, e.g., formula (1.2.24) in \cite{BL}). By the Christoffel-Darboux formula, equation \eqref{dt8} can be reduced to
		\begin{equation}\label{dt9}
		p(z)=\frac{\ga_n}{n}\, \left[\psi_n'(z)\psi_{n-1}(z)-\psi_{n-1}'(z)\psi_n(z)\right].
		\end{equation}
		(cf. formula (1.3.5) in \cite{BL}). By equations \eqref{e17},
		\begin{equation}\label{dt10}
		\begin{aligned}
		&\frac{1}{N}\,\psi_n'(z)=
		-\left(\frac{z^3+\sg z}{2}+\ga_n^2z\right)\psi_n(z)+\ga_n(\ga_n^2+\ga_{n+1}^2+z^2+\sg)\psi_{n-1}(z),\\
		&\frac{1}{N}\,\psi_{n-1}'(z)=-\ga_n(\ga_{n-1}^2+\ga_{n}^2+z^2
		+\sg)\psi_n(z) +\left(\frac{z^3+\sg z}{2}+\ga_n^2z\right)\psi_{n-1}(z).
		\end{aligned}
		\end{equation}
		Substituting these formulae into \eqref{dt9}, we obtain that
		\begin{equation}\label{dt11}
		\begin{aligned}
		p(z)&=\frac{ N}{n}\, \Big[\ga_n^2(\ga_{n-1}^2+\ga_{n}^2+z^2+\sg)\psi_n^2(z)\\
		&-\ga_n(z^3+\sg z+2\ga_n^2z)\psi_n(z)\psi_{n-1}(z)\\
		&+\ga_n^2(\ga_n^2+\ga_{n+1}^2+z^2+\sg)\psi^2_{n-1}(z)\Big].
		\end{aligned}
		\end{equation}
		Substituting this furthermore in \eqref{dt7}, we obtain that
		\begin{equation}\label{dt12}
		\begin{aligned}
		\frac{\partial F}{\partial \sg}=-\frac{ N^2}{2n^2}\,
		&\bigg[\int_\Gamma \ga_n^2(\ga_{n-1}^2+\ga_{n}^2+z^2+\sg)z^2\psi_n^2(z)\,dz\\
		&-\int_\Gamma  \ga_n(z^3
		+\sg z+2\ga_n^2z)z^2\psi_{n-1}(z)\psi_{n}(z)\,dz\\
		&+\int_\Gamma \ga_n^2(\ga_{n}^2+\ga_{n+1}^2+z^2
		+\sg)z^2\psi^2_{n-1}(z),dz
		\bigg].
		\end{aligned}
		\end{equation}
		Iterating three term recurrence relation \eqref{d11}, we obtain that
		\begin{equation}\label{dt13}
		\begin{aligned}
		z^2\psi_n(z)&=\ga_{n+1}\ga_{n+2}\psi_{n+2}(z)
		+(\ga_{n}^2+\ga_{n+1}^2)\psi_n(z)
		+\ga_{n-1}\ga_{n}\psi_{n-2}(z),\\
		z^2\psi_{n-1}(z)&=\ga_{n}\ga_{n+1}\psi_{n+1}(z)
		+(\ga_{n-1}^2+\ga_{n}^2)\psi_{n-1}(z)
		+\ga_{n-2}\ga_{n-1}\psi_{n-3}(z),\\
		z^3\psi_n(z)&=\ga_{n+1}\ga_{n+2}\ga_{n+3}\psi_{n+3}(z)
		+\ga_{n+1}(\ga_{n}^2+\ga_{n+1}^2+\ga_{n+2}^2)\psi_{n+1}(z)\\
		&+\ga_{n}(\ga_{n-1}^2+\ga_{n}^2+\ga_{n+1}^2)\psi_{n-1}(z)
		+\ga_{n-2}\ga_{n-1}\ga_{n}\psi_{n-3}(z),
		\end{aligned}
		\end{equation}
		hence
		\begin{equation}\label{dt14}
		\begin{aligned}
		&\int_\Gamma z^2\,\psi_n^2(z)\,dz=
		\ga_{n}^2+\ga_{n+1}^2,\\
		&\int_\Gamma z^4\,\psi_n^2(z)\,dz=
		\ga_{n+1}^2\ga_{n+2}^2+(\ga_{n}^2+\ga_{n+1}^2)^2
		+\ga_{n-1}^2\ga_{n}^2,\\
		&\int_\Gamma z^3\psi_{n-1}(z)\psi_{n}(z)\,dz
		=\ga_{n}(\ga_{n-1}^2+\ga_{n}^2+\ga_{n+1}^2)\\
		&\int_\Gamma z^5\psi_{n-1}(z)\psi_{n}(z)\,dz
		=\ga_n\ga_{n+1}^2(\ga_{n}^2+\ga_{n+1}^2+\ga_{n+2}^2)\\
		&+\ga_{n}(\ga_{n-1}^2+\ga_{n}^2+\ga_{n+1}^2)
		(\ga_{n-1}^2+\ga_{n}^2)
		+\ga_{n-2}^2\ga_{n-1}^2\ga_n\,.
		\end{aligned}
		\end{equation}
		This gives that
		\begin{equation}\label{dt15}
		\begin{aligned}
		A&:=\int_\Gamma \ga_n^2(\ga_{n-1}^2
		+\ga_{n}^2+z^2+\sg)z^2\psi_n^2(z)\,dz\\
		&=\ga_n^2(\ga_{n-1}^2+\ga_n^2+\sg)(\ga_{n}^2+\ga_{n+1}^2)\\
		&+\ga_n^2\big[\ga_{n+1}^2\ga_{n+2}^2+(\ga_{n}^2+\ga_{n+1}^2)^2
		+\ga_{n-1}^2\ga_{n}^2\big],\\
		B&:=\int_\Gamma  \ga_n(z^3+\sg z+2\ga_n^2z)z^2\psi_n(z)\psi_{n-1}(z)\,dz\\
		&=\ga_n^2\ga_{n+1}^2(\ga_{n}^2+\ga_{n+1}^2+\ga_{n+2}^2)\\
		&+\ga_{n}^2(\ga_{n-1}^2+\ga_{n}^2+\ga_{n+1}^2)
		(\ga_{n-1}^2+\ga_{n}^2)
		+\ga_{n-2}^2\ga_{n-1}^2\ga_n^2\\
		&+(\sg+2\ga_n^2)\ga_{n}^2(\ga_{n-1}^2+\ga_{n}^2+\ga_{n+1}^2),\\
		C&:=\int_\Gamma \ga_n^2(\ga_{n}^2+\ga_{n+1}^2+z^2+\sg)z^2\psi^2_{n-1}(z)\,dz\\
		&=\ga_n^2(\ga_{n}^2+\ga_{n+1}^2+\sg)(\ga_{n-1}^2+\ga_{n}^2)\\
		&+\ga_n^2\big[\ga_{n}^2\ga_{n+1}^2+(\ga_{n-1}^2+\ga_{n}^2)^2
		+\ga_{n-2}^2\ga_{n-1}^2\big],
		\end{aligned}
		\end{equation}
		and by \eqref{dt12},
		\begin{equation}\label{dt16}
		\frac{\partial F}{\partial \sg}=
		-\frac{ N^2}{2n^2}\,(A-B+C).
		\end{equation}
		The expression $(A-B+C)$ turns out to be remarkably simple:
		\begin{equation}\label{dt17}
		A-B+C=\ga_n^2\big[\ga_n^2(\sg+\ga_{n-1}^2+\ga_n^2+\ga_{n+1}^2)
		+\ga_{n-1}^2\ga_{n+1}^2\big].
		\end{equation}
		By string equation \eqref{string},
		\begin{equation}\label{dt18}
		\ga_n^2(\sg+\ga_{n-1}^2+\ga_n^2+\ga_{n+1}^2)=\frac{n}{N}\,,
		\end{equation}
		hence
		\begin{equation}\label{dt19}
		A-B+C=\ga_n^2\left( \frac{n}{N}+ \ga_{n-1}^2\ga_{n+1}^2            \right)
		\end{equation}
		and
		\begin{equation}\label{dt20}
		\frac{\partial F}{\partial \sg}=
		-\frac{ N^2}{2n^2}\,\ga_n^2\left( \frac{n}{N}+ \ga_{n-1}^2\ga_{n+1}^2            \right).
		\end{equation}
		Recalling the notation introduced in \eqref{ga^2 large N} we have arrived at \eqref{F'}.
	\end{proof}
	A straightforward calculation shows that $R_{n-1}R_{n+1}$ (and thus $\di \frac{\partial F}{\partial \sigma}$ according to \eqref{F'} and \eqref{Asymp ga_n^2}) have power series expansion in inverse powers of $N^2$. Indeed, from \eqref{Asymp ga_n^2} we have
	\begin{equation}\label{ga^2 n p 1 large N 1}
	R_{n+1}(\varkappa;\sigma) \sim \sum^{\infty}_{j=0} \frac{1}{N^{2j}} \sum^{\infty}_{\ell=0} \frac{r^{(\ell)}_{2j}(\varkappa;\sigma)}{\ell ! N^{\ell}}, \qasq N \to \infty.
	\end{equation} and 
	\begin{equation}\label{ga^2 n m 1 large N 1}
	R_{n-1}(\varkappa;\sigma) \sim \sum^{\infty}_{j=0} \frac{1}{N^{2j}} \sum^{\infty}_{\ell=0} \frac{(-1)^{\ell}r^{(\ell)}_{2j}(\varkappa;\sigma)}{\ell ! N^{\ell}}, \qasq N \to \infty.
	\end{equation}
	Note that \eqref{ga^2 n p 1 large N 1} and \eqref{ga^2 n m 1 large N 1} can be written as 
	
	\begin{equation}
	R_{n+1}(\varkappa;\sigma) \sim \sum^{\infty}_{m=0} \widehat{A}_m(\varkappa;\sigma) N^{-m}, \qandq R_{n-1}(\varkappa;\sigma) \sim \sum^{\infty}_{m=0} \widehat{B}_m(\varkappa;\sigma) N^{-m},
	\end{equation}
	where
	\begin{equation}\label{A B hat}
	\widehat{A}_m(\varkappa;\sigma)= \di \sum_{\substack{2j+\ell=m \\ j,\ell \in \N\cup\{0\} }} \frac{r^{(\ell)}_{2j}(\varkappa;\sigma)}{\ell !}, \qandq     \widehat{B}_m(\varkappa;\sigma)= \di \sum_{\substack{2j+\ell=m \\ j,\ell \in \N\cup\{0\} }} \frac{(-1)^{\ell}r^{(\ell)}_{2j}(\varkappa;\sigma)}{\ell !}
	\end{equation}
	In particular, note that 
	\begin{equation}\label{ABAB hat}
	\widehat{A}_{2k}(\varkappa;\sigma)=\widehat{B}_{2k}(\varkappa;\sigma), \qandq   \widehat{A}_{2k-1}(\varkappa;\sigma)=-\widehat{B}_{2k-1}(\varkappa;\sigma), \qquad k \in \N.
	\end{equation}
	Therefore
	\begin{equation}
	R_{n-1}(\varkappa;\sigma)R_{n+1}(\varkappa;\sigma) \sim \sum^{\infty}_{j=0}C_j(\varkappa;\sigma)N^{-j},
	\end{equation}
	where
	\begin{equation}\label{Cj}
	\begin{split}
	C_j(\varkappa;\sigma) & = \sum_{\substack{m'+k=j \\ m',k \in \N\cup\{0\} }} \widehat{A}_{m'}(\varkappa;\sigma)\widehat{B}_{k}(\varkappa;\sigma) \\ &
	= \sum_{\substack{2m+k=j \\ m,k \in \N\cup\{0\} }} \widehat{A}_{2m}(\varkappa;\sigma)\widehat{B}_{k}(\varkappa;\sigma) + \sum_{\substack{2m+1+k=j \\ m,k \in \N\cup\{0\} }} \widehat{A}_{2m+1}(\varkappa;\sigma)\widehat{B}_{k}(\varkappa;\sigma)  \\ &
	= \sum_{\substack{2m+k=j \\ m,k \in \N\cup\{0\} }} \widehat{B}_{2m}(\varkappa;\sigma)\widehat{B}_{k}(\varkappa;\sigma) - \sum_{\substack{2m+1+k=j \\ m,k \in \N\cup\{0\} }} \widehat{B}_{2m+1}(\varkappa;\sigma)\widehat{B}_{k}(\varkappa;\sigma),
	\end{split}
	\end{equation}
	where we have used \eqref{ABAB hat}. Now we show that $C_j=0$ for odd $j$. Let $j=2M+1$, thus $m$ runs from $0$ to $M$. Then \eqref{Cj} can be written as 
	\begin{equation}
	\begin{split}
	C_{2M+1}(\varkappa;\sigma) & = \sum^M_{m=0}\widehat{B}_{2m}(\varkappa;\sigma)\widehat{B}_{2(M-m)+1}(\varkappa;\sigma)-\sum^M_{m=0}\widehat{B}_{2m+1}(\varkappa;\sigma)\widehat{B}_{2(M-m)}(\varkappa;\sigma) \\ & = \sum^M_{m=0}\widehat{B}_{2m}(\varkappa;\sigma)\widehat{B}_{2(M-m)+1}(\varkappa;\sigma)- \sum^M_{\ell=0}\widehat{B}_{2\ell}(\varkappa;\sigma)\widehat{B}_{2(M-\ell)+1}(\varkappa;\sigma) = 0,
	\end{split}
	\end{equation}
	where in the second summation we have used $\ell \equiv M-m$. Therefore we have
	\begin{equation}
	R_{n-1}(\varkappa;\sigma)R_{n+1}(\varkappa;\sigma) \sim \sum^{\infty}_{j=0}\frac{C_{2j}(\varkappa;\sigma)}{N^{2j}}, 
	\end{equation}
	where\begin{equation}\label{C2M}
	C_{2M}(\varkappa;\sigma) = \sum^M_{m=0}\widehat{A}_{2m}(\varkappa;\sigma)\widehat{A}_{2(M-m)}(\varkappa;\sigma)-\sum^M_{m=1}\widehat{A}_{2m-1}(\varkappa;\sigma)\widehat{A}_{2(M-m)+1}(\varkappa;\sigma).
	\end{equation}
	Put
	\begin{equation}\label{D to C}
	D_0(\varkappa;\sigma)\equiv C_0(\varkappa;\sigma) + \varkappa, \qandq D_{2j}(\varkappa;\sigma)\equiv C_{2j}(\varkappa;\sigma), \qquad j \in \N.
	\end{equation}
	So we can write 
	\begin{equation}\label{F' series}
	\frac{\partial F}{\partial \sigma} = - \frac{1}{2\varkappa^2}R_n(\varkappa;\sigma)(\varkappa+R_{n-1}(\varkappa;\sigma)R_{n+1}(\varkappa;\sigma)) \sim  \sum^{\infty}_{g=0} \frac{E_{2g}(\varkappa;\sigma)}{N^{2g}},
	\end{equation}
	where
	\begin{equation}\label{E2j}
	E_{2g}(\varkappa;\sigma) = - \frac{1}{2\varkappa^2} \sum_{k=0}^{g}r_{2k}(\varkappa;\sigma)D_{2g-2k}(\varkappa;\sigma).
	\end{equation}
	Integrating \eqref{F' series} we get \begin{equation}\label{F}
	F(\varkappa;\sigma) \sim \sum^{\infty}_{g=0} \frac{f_{2g}(\varkappa;\sigma)}{N^{2g}}. 
	\end{equation}
	
	From recurrence equations \eqref{f2j}, \eqref{LA2j} with initial data \eqref{f0}, \eqref{b and z_0 one cut s} we obtain the analyticity of the coefficients $ r_{2j}(\sigma)\equiv r_{2j}(1;\sigma)$ with respect to $ \sigma\in \mathcal{O}_1$. Then from equations \eqref{A B hat}, \eqref{C2M} , and \eqref{D to C} we obtain the analyticity of the coefficients $D_{2j}(\sigma)\equiv D_{2j}(1;\sigma), \sigma\in O_1$. Finally, from equation \eqref{E2j} we get the analyticity of the coefficients $E_{2g}(\sigma)\equiv E_{2g}(1;\sigma)$. Now, in view of \eqref{F' series}, \eqref{F}, Remark \ref{Remark branches}, and Theorem \ref{THM ONE CUT}, this implies the analyticity of the coefficients $f_{2g}(\sigma)\equiv f_{2g}(1;\sigma)$ with respect to $\sigma\in \mathcal{O}_1$. We have thus proven Theorem \ref{thm topexp}.
	
	
	Because of \eqref{F} and \eqref{int15b}, $\mathscr{F}(\varkappa;u)$ also has an asymptotic expansion in inverse powers of $N^2$:
	
	\begin{equation}\label{Fu N^-2 expansion}
	\mathscr{F}(\varkappa;u) \sim \sum^{\infty}_{g=0} \frac{\mathcal{f}_{2g}(\varkappa;u)}{N^{2g}}.
	\end{equation}

	\subsection{Derivation of $\mathscr{F}'(u)$} In this subsection we suppress the dependence of objects on $n$ and $N$ as these parameters are fixed. Let us rewrite the equation \eqref{int13}, \eqref{int15}, and \eqref{int15b} as 
	\begin{equation}
	V(u^{\frac{1}{4}}\ze;u^{-\frac{1}{2}})=\mathscr{V}(\ze,u),
	\end{equation}
	
	\begin{equation}
	Z(u^{-\frac{1}{2}})=u^{\frac{n^2}{4}}\mathcal{Z}(u),
	\end{equation}
	and
	\begin{equation}\label{FFu}
	F(u^{-\frac{1}{2}})=\frac{\ln u}{4}+\mathscr{F}(u).
	\end{equation}
	We consider also monic orthogonal polynomials $\mathcal{P}_k(\ze;u)=\ze^k+\cdots$ such that
	\begin{equation}\label{Pu}
	\int_{\R} \mathcal{P}_j(\ze;u)\mathcal{P}_k(\ze;u)e^{-N\mathscr{V}(\ze;u)}\dd \ze = \mathcal{h}_k(u)\de_{jk}.
	\end{equation}
	In \eqref{orthogonality}, make the change of variables $z = u^{\frac{1}{4}}\ze$, and recalling that $\sigma=u^{-\frac{1}{2}}$ we get
	\begin{equation}\label{PP}
	\int_{\R} P_j(u^{\frac{1}{4}}\ze;u^{-\frac{1}{2}})P_k(u^{\frac{1}{4}}\ze;u^{-\frac{1}{2}})e^{-NV(u^{\frac{1}{4}}\ze;u^{-\frac{1}{2}})}u^{\frac{1}{4}} \dd \ze =h_k(u^{-\frac{1}{2}})\de_{jk}.
	\end{equation}
	Note that deformation of the integration contour back to the real line is possible by the Cauchy theorem. We can write \eqref{PP} as 
	\begin{equation}\label{PPP}
	\int_{\R} \left[u^{-\frac{j}{4}}P_j(u^{\frac{1}{4}}\ze;u^{-\frac{1}{2}})\right] \left[u^{-\frac{k}{4}}P_k(u^{\frac{1}{4}}\ze;u^{-\frac{1}{2}})\right]e^{-N\mathscr{V}(\ze,u)} \dd \ze =u^{-\frac{k}{2}-\frac{1}{4}}h_k(u^{-\frac{1}{2}})\de_{jk}.
	\end{equation}
	Comparing with \eqref{Pu} yields
	\begin{equation}
	\mathcal{P}_k(\ze;u) = u^{-\frac{k}{4}}P_k(u^{\frac{1}{4}}\ze;u^{-\frac{1}{2}}), 
	\end{equation}
	and
	\begin{equation}
	\mathcal{h}_k(u) = u^{-\frac{k}{2}-\frac{1}{4}}h_k(u^{-\frac{1}{2}}).
	\end{equation}
	We have the three-term recurrence relation
	\begin{equation}
	\ze \mathcal{P}_k(\ze;u)=\mathcal{P}_{k+1}(\ze;u)+\mathscr{R}_k(u)\mathcal{P}_{k-1}(\ze;u).
	\end{equation}
	Like \eqref{ga to h} we have
	\begin{equation}\label{Ru to Rsigma}
	\mathscr{R}_k(u) = \frac{\mathcal{h}_k(u)}{\mathcal{h}_{k-1}(u)} = u^{-\frac{1}{2}}R_k(u^{-\frac{1}{2}}),
	\end{equation}
	where $R_k(\sigma)\equiv \ga_k^2(\sigma)$. The string equation \eqref{string} can be written as 
	\begin{equation}
	R_n(u^{-\frac{1}{2}}) \left[ u^{-\frac{1}{2}} + R_{n-1}(u^{-\frac{1}{2}}) + R_n(u^{-\frac{1}{2}})+R_{n+1}(u^{-\frac{1}{2}}) \right] = \varkappa.
	\end{equation}
	Therefore
	\begin{equation}
	u^{\frac{1}{2}}\mathscr{R}_n(u) \left[ u^{-\frac{1}{2}} + u^{\frac{1}{2}}\mathscr{R}_{n-1}(u)+u^{\frac{1}{2}}\mathscr{R}_n(u)+u^{\frac{1}{2}}\mathscr{R}_{n+1}(u)  \right]= \varkappa,
	\end{equation} or
	\begin{equation}\label{string eq u}
	\mathscr{R}_n(u) \left[ 1 + u\mathscr{R}_{n-1}(u)+u\mathscr{R}_n(u)+u\mathscr{R}_{n+1}(u)  \right]= \varkappa.
	\end{equation}
	This is the string equation for $\mathscr{R}_n(u)$ (which is the analogue of equation 4.33 in \cite{BIZ}). 
	
	\begin{remark}\normalfont
		Note that the orthogonal polynomial objects, like $P$, $\mathcal{P}$, $h$, $\mathscr{h},R $ and $\mathscr{R}$ are functions of $n$ and $N$, or equivalently $n$ and $\varkappa \equiv n/N$.This explains the notations used below. 
	\end{remark}
	
	We have the topological expansion of the recurrence coefficient $\mathscr{R}_n(\varkappa;u)$,
	\begin{equation}\label{R n u}
	\mathscr{R}_n(\varkappa;u) \sim \sum^{\infty}_{g=0} \frac{\mathcal{r}_{2g}(\varkappa;u)}{N^{2g}},
	\end{equation}
	where the coefficients $\mathcal{r}_{2g}$ are analytic functions of $u$ and $\varkappa$ at the point $u=0$, $\varkappa=1$. Note that
	
	\begin{equation}\label{Rnu n pm 1 large N}
	\mathscr{R}_{n\pm1}(\varkappa;u) \sim \sum^{\infty}_{g=0} \frac{\mathcal{r}_{2g}(\varkappa \pm N^{-1};u)}{N^{2g}}, \qasq N \to \infty.
	\end{equation}
	Evaluation of Taylor expansions of $\mathcal{r}_{2g}$, centered at $\varkappa=n/N$, at $\varkappa \pm N^{-1}$  yields
	\begin{equation}\label{R u n p 1 large N}
	\mathscr{R}_{n+1}(\varkappa;u) \sim \sum^{\infty}_{g=0} \frac{1}{N^{2g}} \sum^{\infty}_{\ell=0} \frac{\mathcal{r}^{(\ell)}_{2g}(\varkappa;u)}{\ell ! N^{\ell}}, \qasq N \to \infty,
	\end{equation} and 
	\begin{equation}\label{R u n m 1 large N}
	\mathscr{R}_{n-1}(\varkappa;u) \sim \sum^{\infty}_{g=0} \frac{1}{N^{2g}} \sum^{\infty}_{\ell=0} \frac{(-1)^{\ell}\mathcal{r}^{(\ell)}_{2g}(\varkappa;u)}{\ell ! N^{\ell}}, \qasq N \to \infty.
	\end{equation}
	Now we can write the large-$N$ series expansion for the left-hand side of the string equation, indeed
	\begin{equation}
	\mathscr{R}_n(u) \left[ 1 + u\mathscr{R}_{n-1}(u)+u\mathscr{R}_n(u)+u\mathscr{R}_{n+1}(u)  \right] \sim \sum^{\infty}_{g=0} \frac{\hat{\mathcal{r}}_{2g}(\varkappa;u)}{N^{2g}}
	\end{equation}
	where
	\begin{equation}
	\hat{\mathcal{r}}_{0}(\varkappa;u) = \mathcal{r}_{0}(\varkappa;u)\left( 1+3u\mathcal{r}_{0}(\varkappa;u) \right) 
	\end{equation}
	and
	\begin{equation}
	\hat{\mathcal{r}}_{2g}(\varkappa;u) = \sum^g_{\ell=0}  \mathcal{r}_{2g-2\ell}(\varkappa;u) \left( 3u\mathcal{r}_{2\ell}(\varkappa;u)+2u \sum^{\ell-1}_{k=0} \frac{\mathcal{r}^{(2\ell-2k)}_{2k}(\varkappa;u)}{(2\ell-2k)!} \right), \qquad g \in \N.
	\end{equation}
	So, the string equations \eqref{string eq u} can be written as
	\begin{equation}
	\mathcal{r}_{0}(\varkappa;u)\left( 1+3u\mathcal{r}_{0}(\varkappa;u) \right) = \varkappa,
	\end{equation}
	and for $g \in \N$ 
	\begin{equation}
	(1+3u\mathcal{r}_{0}(\varkappa;u))\mathcal{r}_{2g}(\varkappa;u)+ \sum^g_{\ell=1}  \mathcal{r}_{2g-2\ell}(\varkappa;u) \left( 3u\mathcal{r}_{2\ell}(\varkappa;u)+2u \sum^{\ell-1}_{k=0} \frac{\mathcal{r}^{(2\ell-2k)}_{2k}(\varkappa;u)}{(2\ell-2k)!} \right) = 0.
	\end{equation}
	
	\begin{equation}
	\begin{split}
	(1+3u\mathcal{r}_{0}(\varkappa;u))\mathcal{r}_{2g}(\varkappa;u) & + 3u\sum^g_{\ell=1}  \mathcal{r}_{2g-2\ell}(\varkappa;u) \mathcal{r}_{2\ell}(\varkappa;u) \\ & + 2u\sum^g_{\ell=1}  \mathcal{r}_{2g-2\ell}(\varkappa;u)\sum^{\ell-1}_{k=0} \frac{\mathcal{r}^{(2\ell-2k)}_{2k}(\varkappa;u)}{(2\ell-2k)!} = 0. 
	\end{split}
	\end{equation}

	\begin{equation}
	\begin{split}
	(1+6u\mathcal{r}_{0}(\varkappa;u))\mathcal{r}_{2g}(\varkappa;u) & + 3u\sum^{g-1}_{\ell=1}  \mathcal{r}_{2g-2\ell}(\varkappa;u) \mathcal{r}_{2\ell}(\varkappa;u) \\ & + 2u\sum^g_{\ell=1}  \mathcal{r}_{2g-2\ell}(\varkappa;u)\sum^{\ell-1}_{k=0} \frac{\mathcal{r}^{(2\ell-2k)}_{2k}(\varkappa;u)}{(2\ell-2k)!} = 0.
	\end{split}
	\end{equation}
	Therefore we can explicitly find $\mathcal{r}_{2g}$ recursively from
	\begin{equation}\label{r0}
	\mathcal{r}_{0}(\varkappa;u) = \frac{-1 + \sqrt{1+12\varkappa u}}{6u},
	\end{equation}
	and
	\begin{equation}\label{r2g}
	\mathcal{r}_{2g}(\varkappa;u) =-u \frac{\mathscr{A}_{2g}(\varkappa;u)}{\sqrt{1+12\varkappa u}}, \qquad g \in \N,
	\end{equation}
	where
	\begin{equation}\label{AAA}
	\mathscr{A}_{2g}(\varkappa;u):=  3\sum^{g-1}_{\ell=1}  \mathcal{r}_{2g-2\ell}(\varkappa;u) \mathcal{r}_{2\ell}(\varkappa;u) + 2\sum^g_{\ell=1}  \mathcal{r}_{2g-2\ell}(\varkappa;u)\sum^{\ell-1}_{k=0} \frac{\mathcal{r}^{(2\ell-2k)}_{2k}(\varkappa;u)}{(2\ell-2k)!}.
	\end{equation}
	Indeed, the first few $\mathcal{r}_{2g}$'s  are given by 
	\begin{equation}
	\mathcal{r}_{2}(\varkappa;u) = \frac{u\left(-1+\sqrt{1+12 \varkappa u}\right)}{(1+12 \varkappa u)^2},
	\end{equation}
	\begin{equation}
	\mathcal{r}_{4}(\varkappa;u) = \frac{63u^3\left(-3-8\varkappa u+3\sqrt{1+12 \varkappa u}\right)}{(1+12 \varkappa u)^{9/2}},
	\end{equation}
	
	\begin{equation}
	\mathcal{r}_{6}(\varkappa;u) = \frac{54u^5\left(-2699-12788\varkappa u+(2699+444\varkappa u)\sqrt{1+12\varkappa u}\right)}{(1+12 \varkappa u)^{7}},
	\end{equation}
	
	\begin{equation} \begin{split}
	& \mathcal{r}_{8}(\varkappa;u) = \frac{27u^7}{(1+12 \varkappa u)^{19/2}} \\ & \times \left(13386672\varkappa^2u^2-58115796\varkappa u - 9348347+(7280964\varkappa u + 9348347)\sqrt{1+12\varkappa u}\right) .
	\end{split}
	\end{equation}
	
	In fact we can prove the following lemma for any $g \in \N$.
	
	\begin{lemma}\label{main lemma}
		For any $g \in \N$, we can write
		
		\begin{equation}\label{r2g C2g}
		\mathcal{r}_{2g}(\varkappa;u) = \mathcal{C}_{2g}(\varkappa)\left(u+\frac{1}{12\varkappa}\right)^{\frac{1-5g}{2}}  \left(\mathcal{q}_{g}(\varkappa;u)+\sqrt{u+\frac{1}{12\varkappa}}\widehat{\mathcal{q}}_{g}(\varkappa;u)\right),
		\end{equation}
		where $\mathcal{q}_{g}$, and $\widehat{\mathcal{q}}_{g}$ are polynomials with $\mathcal{q}_{g}(\varkappa;- \frac{1}{12\varkappa})=1$, and the constants $\mathcal{C}_{2g}(\varkappa)$ can be found recursively from the following relations
		\begin{equation}\label{C2g recurrence}
		\mathcal{C}_{2g}(\varkappa) = \frac{1}{2^33^{\frac{1}{2}}\varkappa^{\frac{1}{2}}} \sum^{g-1}_{\ell=1}  \mathcal{C}_{2g-2\ell}(\varkappa) \mathcal{C}_{2\ell}(\varkappa) + \frac{(5g-6)(5g-4)}{2^83^{\frac{7}{2}} \varkappa^{\frac{9}{2}} }\mathcal{C}_{2g-2}(\varkappa), \qquad g \in \N ,
		\end{equation}
		with 
		\begin{equation}\label{C0}
		\mathcal{C}_0(\varkappa) = -2^23^{\frac{1}{2}}\varkappa^{\frac{3}{2}},
		\end{equation}
		appearing in the series expansion of $\mathcal{r}_0$ near $-\frac{1}{12\varkappa}$:
		\[ \mathcal{r}_0(\varkappa;u)= \left(2\varkappa+\mathcal{C}_0(\varkappa)\sqrt{u+\frac{1}{12\varkappa}} \right) \left( 1 +  \sum^{\infty}_{j=1} (12\varkappa)^j \left( u + \frac{1}{12\varkappa} \right)^j\right). \]
	\end{lemma}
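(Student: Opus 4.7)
The plan is to prove the lemma by strong induction on $g \geq 1$, after first making the substitution $w := u + \tfrac{1}{12\varkappa}$, which converts the string-equation discriminant into $1+12\varkappa u = 12\varkappa w$ and turns every $\sqrt{1+12\varkappa u}$ appearing in \eqref{r0}--\eqref{AAA} into $\sqrt{12\varkappa}\sqrt{w}$. Rewriting \eqref{r0} as $\mathcal{r}_0 = 2\varkappa/(1+\sqrt{12\varkappa w})$ and expanding as a geometric series in $\sqrt{w}$ immediately produces the form stated at the end of the lemma with $\mathcal{C}_0(\varkappa) = -2^2 3^{1/2}\varkappa^{3/2}$, together with the crucial observation that $\mathcal{r}_0$ is \emph{regular} at $w=0$, in contrast to every higher $\mathcal{r}_{2k}$ (whose leading order $w^{(1-5k)/2}$ is singular for $k\geq 1$).

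I would verify the base case $g=1$ by direct computation: only the $(\ell,k)=(1,0)$ term in \eqref{AAA} survives, so $\mathscr{A}_2 = \mathcal{r}_0\mathcal{r}_0''$, where $\partial_\varkappa \mathcal{r}_0 = (1+12\varkappa u)^{-1/2}$ and $\partial_\varkappa^2 \mathcal{r}_0 = -6u(1+12\varkappa u)^{-3/2}$. Substituting into \eqref{r2g} and expressing in terms of $w$ recovers $\mathcal{r}_2 = \mathcal{C}_2(\varkappa) w^{-2}\bigl(\mathcal{q}_1+\sqrt{w}\,\widehat{\mathcal{q}}_1\bigr)$ with $\mathcal{q}_1(\varkappa;u) = -12\varkappa u$ (so $\mathcal{q}_1(\varkappa;-\tfrac{1}{12\varkappa})=1$) and $\mathcal{C}_2(\varkappa) = 1/(12\varkappa)^3$, matching \eqref{C2g recurrence} for $g=1$, where the first sum is empty and only the $\mathcal{C}_0$ term contributes.

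For the inductive step I would assume \eqref{r2g C2g} for all indices less than $g$ and extract leading $w$-orders in \eqref{AAA}--\eqref{r2g}. The central analytic observation is that each application of $\partial_\varkappa$ to $\mathcal{r}_{2k}$ lowers its leading $w$-exponent by exactly one, because $\partial_\varkappa w = -1/(12\varkappa^2)$ and hence $\partial_\varkappa w^\alpha$ is of leading order $w^{\alpha-1}$. With this in hand, every product $\mathcal{r}_{2g-2\ell}\mathcal{r}_{2\ell}$ in the first sum of \eqref{AAA} contributes uniformly at order $w^{(2-5g)/2}$ with leading coefficient $3\,\mathcal{C}_{2g-2\ell}\mathcal{C}_{2\ell}$ (using $\mathcal{q}_j(\varkappa;-\tfrac{1}{12\varkappa})=1$), while every term of the second sum with $1 \leq \ell \leq g-1$ is strictly subleading. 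The unique enhancement occurs at $(\ell,k)=(g,g-1)$: because $\mathcal{r}_{2g-2\ell}=\mathcal{r}_0$ is regular rather than singular at $w=0$, the term $\mathcal{r}_0\,\partial_\varkappa^2\mathcal{r}_{2g-2}/2$ is lifted back to order $w^{(2-5g)/2}$, and differentiating $\mathcal{C}_{2g-2}w^{(6-5g)/2}$ twice in $\varkappa$ gives its leading coefficient as $\tfrac{(5g-6)(5g-4)}{288\varkappa^3}\mathcal{C}_{2g-2}$. Multiplying the sum by the leading expansion $-u/\sqrt{1+12\varkappa u} = (12\varkappa)^{-3/2}w^{-1/2}+O(w^{1/2})$ from \eqref{r2g} then assembles the recursion \eqref{C2g recurrence}. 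The polynomiality of $\mathcal{q}_g$ and $\widehat{\mathcal{q}}_g$ in $u$ and the normalization $\mathcal{q}_g(\varkappa;-\tfrac{1}{12\varkappa})=1$ follow in parallel, via closure of the class ``Laurent polynomial in $\sqrt{w}$ with polynomial-in-$u$ coefficients'' under the operations in \eqref{AAA}--\eqref{r2g}.

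The main technical obstacle is precisely this bookkeeping of dominant contributions in the inductive step: all but one of the terms in the second sum fall at the strictly subleading order $w^{(3-5g)/2}$, but the boundary term $(\ell,k)=(g,g-1)$ is promoted back to the leading order $w^{(2-5g)/2}$ because $\mathcal{r}_0$, alone among the $\mathcal{r}_{2k}$, does not blow up at $w=0$. Identifying this single exceptional enhancement and verifying that it combines correctly with the first-sum contribution to reproduce the exact numerical prefactors in \eqref{C2g recurrence} is where the proof demands the greatest care.
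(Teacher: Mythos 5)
Your argument is correct and is essentially the paper's own proof written out in full: the published proof consists of the single observation that only the first sum in \eqref{AAA} and the $(\ell,k)=(g,g-1)$ term of the second sum contribute to the leading singular behavior, which is precisely the dominant balance you identify and justify via the regularity of $\mathcal{r}_0$ at $u=-\tfrac{1}{12\varkappa}$ (in contrast to the singular $\mathcal{r}_{2k}$, $k\geq 1$) together with the fact that each $\partial_\varkappa$ lowers the leading exponent of $w=u+\tfrac{1}{12\varkappa}$ by exactly one. One remark: carrying your coefficient bookkeeping to completion actually produces $\varkappa^{-3/2}$, not $\varkappa^{-1/2}$, in the first term of \eqref{C2g recurrence} --- as required by homogeneity in $\varkappa$ and confirmed by computing $\mathcal{C}_4(\varkappa)=7^2\,2^{-15}3^{-13/2}\varkappa^{-15/2}$ directly from the explicit formula for $\mathcal{r}_4$ --- so the displayed recursion contains a typo that is invisible at $\varkappa=1$, the only value used in the sequel.
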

	\begin{proof}
		The only contributing terms to the leading singular behavior of $\mathcal{r}_{2g}$ are the first sum in \eqref{AAA} and the single term corresponding to $\ell=g$ and $k=g-1$ in the second sum in \eqref{AAA}.
	\end{proof}

	\begin{lemma}\label{main lemma'}
		For any $g,\ell \in \N$, we have
		\begin{equation}
		\begin{split}
		\mathcal{r}^{(\ell)}_{2g}(\varkappa;u) & \equiv \frac{\partial^{\ell}}{\partial \varkappa^{\ell}} \mathcal{r}_{2g}(\varkappa;u) = \frac{\mathcal{C}_{2g}(\varkappa)}{24^{\ell}\varkappa^{2\ell}}\prod_{j=1}^{\ell}(5g+2j-3) \left(u+\frac{1}{12\varkappa}\right)^{\frac{1-5g-2\ell}{2}} \\ & \left(\mathcal{q}_{g,\ell}(\varkappa;u)+\sqrt{u+\frac{1}{12\varkappa}}\widehat{\mathcal{q}}_{g,\ell}(\varkappa;u)\right),		
		\end{split}
		\end{equation}
		where $\mathcal{q}_{g,\ell}$, and $\widehat{\mathcal{q}}_{g,\ell}$ are polynomials, with $\mathcal{q}_{g,\ell}(\varkappa;-\frac{1}{12\varkappa})=1$.
	\end{lemma}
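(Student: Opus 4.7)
The plan is to prove this by straightforward induction on $\ell$, taking Lemma \ref{main lemma} as the base case $\ell=0$ (note $\prod_{j=1}^{0}(5g+2j-3)=1$ and $24^{0}\varkappa^{0}=1$). Throughout, I will use the convenient local variable $s:=u+\tfrac{1}{12\varkappa}$, for which $\partial_{\varkappa}s=-\tfrac{1}{12\varkappa^{2}}$; this is the key identity that makes each $\varkappa$-differentiation effectively act as a fractional-power derivative in $s$.

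For the inductive step, I write the hypothesis as
\[
\mathcal{r}^{(\ell)}_{2g}(\varkappa;u)=A_{g,\ell}(\varkappa)\,s^{\alpha_{\ell}}\bigl(\mathcal{q}_{g,\ell}(\varkappa;u)+\sqrt{s}\,\widehat{\mathcal{q}}_{g,\ell}(\varkappa;u)\bigr),
\]
with $\alpha_{\ell}:=\tfrac{1-5g-2\ell}{2}$ and $A_{g,\ell}(\varkappa):=\tfrac{\mathcal{C}_{2g}(\varkappa)}{24^{\ell}\varkappa^{2\ell}}\prod_{j=1}^{\ell}(5g+2j-3)$, and apply $\partial_{\varkappa}$. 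The most singular contribution comes from differentiating $s^{\alpha_{\ell}}$, producing $\alpha_{\ell}(-\tfrac{1}{12\varkappa^{2}})\,s^{\alpha_{\ell}-1}=\tfrac{5g+2\ell-1}{24\varkappa^{2}}\,s^{\alpha_{\ell+1}}$. Multiplying by $A_{g,\ell}(\varkappa)$ and observing the identity $5g+2\ell-1=5g+2(\ell+1)-3$, this factor precisely extends the product in $A_{g,\ell}(\varkappa)$ to one over $j=1,\dots,\ell+1$ and supplies the additional $24\varkappa^{2}$ in the denominator, so the prefactor turns into $A_{g,\ell+1}(\varkappa)$ and the power of $s$ becomes $\alpha_{\ell+1}$.

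The remaining contributions fall into three groups: (i) $A'_{g,\ell}(\varkappa)\,s^{\alpha_{\ell}}(\mathcal{q}_{g,\ell}+\sqrt{s}\widehat{\mathcal{q}}_{g,\ell})$, (ii) $A_{g,\ell}(\varkappa)s^{\alpha_{\ell}}(\partial_{\varkappa}\mathcal{q}_{g,\ell}+\sqrt{s}\,\partial_{\varkappa}\widehat{\mathcal{q}}_{g,\ell})$, and (iii) $A_{g,\ell}(\varkappa)s^{\alpha_{\ell}}\cdot\tfrac{1}{2\sqrt{s}}(-\tfrac{1}{12\varkappa^{2}})\widehat{\mathcal{q}}_{g,\ell}$. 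After factoring out $s^{\alpha_{\ell+1}}=s^{\alpha_{\ell}-1}$, each of these is either $s\cdot(\text{polynomial})$ or $\sqrt{s}\cdot(\text{polynomial})$ times the extracted common factor; grouping the non-$\sqrt{s}$ pieces into a new polynomial $\mathcal{q}_{g,\ell+1}(\varkappa;u)$ and the $\sqrt{s}$-multiplied pieces into $\widehat{\mathcal{q}}_{g,\ell+1}(\varkappa;u)$ gives exactly the asserted form. Polynomiality of these new objects follows from the polynomial structure in the inductive hypothesis (any $\varkappa^{-1}$ that arises is absorbed into $A_{g,\ell+1}(\varkappa)$ or matches the negative powers already present in $\mathcal{C}_{2g}(\varkappa)$, which by Lemma \ref{main lemma} was consistent with a polynomial $\mathcal{q}_{g}$). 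For the normalization, notice that all three leftover contributions carry an explicit $s^{1}$ or $s^{1/2}$ after the rescaling, hence vanish at $s=0$; thus $\mathcal{q}_{g,\ell+1}(\varkappa;-\tfrac{1}{12\varkappa})=\mathcal{q}_{g,\ell}(\varkappa;-\tfrac{1}{12\varkappa})=1$ by the inductive hypothesis, closing the induction.

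The hard part is not the algebra — the leading-singularity calculation is forced — but the bookkeeping needed to verify that each of the remaining terms genuinely carries at least one extra factor of $s$ or $\sqrt{s}$ (so that the normalization at $s=0$ survives untouched) and that no spurious negative powers of $\varkappa$ sneak into $\mathcal{q}_{g,\ell+1}$ or $\widehat{\mathcal{q}}_{g,\ell+1}$. This is handled by tracking exactly which $s$-power each differentiation rule produces and using the identity $\partial_{\varkappa}s=-\tfrac{1}{12\varkappa^{2}}$ to convert $\varkappa$-differentiations into $s$-differentiations up to explicit $\varkappa^{-2}$ factors that combine cleanly with $A_{g,\ell}(\varkappa)$.
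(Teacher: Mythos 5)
Your proof is correct, and it supplies an argument the paper itself omits: Lemma \ref{main lemma'} is stated there with no proof, being treated as an immediate consequence of Lemma \ref{main lemma}. Your induction on $\ell$ is exactly the natural route — the key check, that differentiating $s^{\alpha_\ell}$ with $s=u+\tfrac{1}{12\varkappa}$ produces the factor $\tfrac{5g+2\ell-1}{24\varkappa^2}=\tfrac{5g+2(\ell+1)-3}{24\varkappa^2}$ while every other term carries an extra $s$ or $\sqrt{s}$ and hence cannot disturb the normalization $\mathcal{q}_{g,\ell+1}(\varkappa;-\tfrac{1}{12\varkappa})=1$ — is carried out correctly. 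The only caution worth recording is that ``polynomial'' here must be read as polynomial in $u$ with $\varkappa$-dependent coefficients (as the paper's own usage in Lemma \ref{lemma 6.5.} confirms), since $\mathcal{C}_{2g}(\varkappa)$ and hence the ratios $A'_{g,\ell}/A_{g,\ell+1}$ involve fractional powers of $\varkappa$; with that reading your bookkeeping of where negative powers of $\varkappa$ land is fine.
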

	
	As shown in \cite{BD}, the generating function for the constants $\mathcal{C}_{2g} \equiv \mathcal{C}_{2g}(1)$ satisfies the Painlev\'{e} I differential equation. Indeed, if one defines
	$$\mathcal{u}(\tau):=-2^{-\frac{8}{5}}3^{-\frac{2}{5}} \mathcal{y}(-2^{-\frac{9}{5}}3^{-\frac{6}{5}}\tau),$$
	where
	\begin{equation}
	\mathcal{y}(t) := \sum^{\infty}_{g=0}  \mathcal{C}_{2g} t^{\frac{1-5g}{2}},
	\end{equation}
	then in view of \eqref{C2g recurrence}, one can check that $\mathcal{u}$ satisfies the standard form of Painlv\'{e} I:
	\begin{equation}
	\mathcal{u}^{\prime \prime}(\tau) = 6 \mathcal{u}^2(\tau)+\tau.  
	\end{equation}
	
	\subsection{Large $N$ Expansion of $\mathscr{F}'(u)$}
	
	From \eqref{FFu} we have
	\begin{equation}\label{Fu'}
	\mathscr{F}^{'}(u) = \left( -\frac{1}{2} u^{-\frac{3}{2}} \right)F^{'}(\sigma) \bigg|_{\sigma=u^{-\frac{1}{2}}} - \frac{1}{4u}.
	\end{equation}
	Now, from \eqref{F'} we have
	\begin{equation}
	F^{'}(\sigma) \bigg|_{\sigma=u^{-\frac{1}{2}}}  = - \frac{1}{2\varkappa^2}R_n(u^{-\frac{1}{2}})\left(\varkappa+R_{n-1}(u^{-\frac{1}{2}})R_{n+1}(u^{-\frac{1}{2}})\right).
	\end{equation}
	Therefore
	\begin{equation}\label{F' R n u}
	\mathscr{F}^{'}(u) = \frac{1}{4\varkappa^2}\mathscr{R}_n(u) \left(\frac{\varkappa}{u}+\mathscr{R}_{n-1}(u)\mathscr{R}_{n+1}(u)\right)- \frac{1}{4u}.
	\end{equation}
	From \eqref{Fu'} and \eqref{F' series} we have the following asymptotic expansion for $\mathscr{F}^{'}(u)$ in inverse powers of $N^2$:
	
	\begin{equation}\label{F' u asymp}
	\mathscr{F}^{'}(u) \sim \sum^{\infty}_{g=0} \frac{\mathscr{E}_{2g}(\varkappa;u)}{N^{2g}}.
	\end{equation}
	We can find $\mathscr{E}_{2g}$ by substituting \eqref{R n u} into \eqref{F' R n u}. Indeed, we find
	
	\begin{equation}
	\mathscr{R}_{n+1}(\varkappa;u) \sim \sum^{\infty}_{m=0} A_m(\varkappa;u) N^{-m}, \qandq \mathscr{R}_{n-1}(\varkappa;u) \sim \sum^{\infty}_{m=0} B_m(\varkappa;u) N^{-m},
	\end{equation}
	where
	\begin{equation}\label{A B}
	A_m(\varkappa;u)= \di \sum_{\substack{2j+\ell=m \\ j,\ell \in \N\cup\{0\} }} \frac{\mathcal{r}^{(\ell)}_{2j}(\varkappa;u)}{\ell !}, \qandq     B_m(\varkappa;u)= \di \sum_{\substack{2j+\ell=m \\ j,\ell \in \N\cup\{0\} }} \frac{(-1)^{\ell}\mathcal{r}^{(\ell)}_{2j}(\varkappa;u)}{\ell !}
	\end{equation}
	In particular, note that 
	\begin{equation}\label{ABAB}
	A_{2k}(\varkappa;u)=B_{2k}(\varkappa;u), \qandq   A_{2k-1}(\varkappa;u)=-B_{2k-1}(\varkappa;u), \qquad k \in \N.
	\end{equation}
	Therefore
	\begin{equation}
	\mathscr{R}_{n-1}(\varkappa;u)\mathscr{R}_{n+1}(\varkappa;u) \sim \sum^{\infty}_{j=0}C_j(\varkappa;u)N^{-j}, 
	\end{equation}
	where
	\begin{equation}\label{C g}
	\begin{split}
	C_j(\varkappa;u) & = \sum_{\substack{m'+k=j \\ m',k \in \N\cup\{0\} }} A_{m'}(\varkappa;u)B_{k}(\varkappa;u) \\ &
	= \sum_{\substack{2m+k=j \\ m,k \in \N\cup\{0\} }} A_{2m}(\varkappa;u)B_{k}(\varkappa;u) + \sum_{\substack{2m+1+k=j \\ m,k \in \N\cup\{0\} }} A_{2m+1}(\varkappa;u)B_{k}(\varkappa;u)  \\ &
	= \sum_{\substack{2m+k=j \\ m,k \in \N\cup\{0\} }} B_{2m}(\varkappa;u)B_{k}(\varkappa;u) - \sum_{\substack{2m+1+k=j \\ m,k \in \N\cup\{0\} }} B_{2m+1}(\varkappa;u)B_{k}(\varkappa;u),
	\end{split}
	\end{equation}
	where we have used \eqref{ABAB}. Now we show that $C_j=0$ for odd $j$. Let $j=2M+1$, thus $m$ runs from $0$ to $M$. Then \eqref{C g} can be written as 
	\begin{equation}
	\begin{split}
	C_{2M+1}(\varkappa;u) & = \sum^M_{m=0}B_{2m}(\varkappa;u)B_{2(M-m)+1}(\varkappa;u)-\sum^M_{m=0}B_{2m+1}(\varkappa;u)B_{2(M-m)}(\varkappa;u) \\ & = \sum^M_{m=0}B_{2m}(\varkappa;u)B_{2(M-m)+1}(\varkappa;u)- \sum^M_{\ell=0}B_{2\ell}(\varkappa;u)B_{2(M-\ell)+1}(\varkappa;u) = 0,
	\end{split}
	\end{equation}
	where in the second summation we have used $\ell \equiv M-m$. Therefore we have
	\begin{equation}
	\mathscr{R}_{n-1}(\varkappa;u)\mathscr{R}_{n+1}(\varkappa;u) \sim \sum^{\infty}_{k=0}\frac{C_{2k}(\varkappa;u)}{N^{2k}}, 
	\end{equation}
	where\begin{equation}\label{C2g}
	C_{2k}(\varkappa;u) = \sum^k_{m=0}A_{2m}(\varkappa;u)A_{2(k-m)}(\varkappa;u)-\sum^k_{m=1}A_{2m-1}(\varkappa;u)A_{2(k-m)+1}(\varkappa;u).
	\end{equation}
	From \eqref{R n u}, \eqref{F' R n u}, \eqref{F' u asymp}, and \eqref{C2g} we find
	
	\begin{equation}\label{EEEEE0}
	\mathscr{E}_{0}(\varkappa;u)=\frac{1}{4\varkappa^2} \left( \mathcal{r}^2_{0}(\varkappa;u) + \frac{\varkappa}{u} \right)\mathcal{r}_{0}(\varkappa;u) - \frac{1}{4u},
	\end{equation}
	and
	\begin{equation}\label{E2gu}
	\mathscr{E}_{2g}(\varkappa;u)= \frac{1}{4\varkappa^2} \left[ \left( \mathcal{r}^2_{0}(\varkappa;u) + \frac{\varkappa}{u} \right)\mathcal{r}_{2g}(\varkappa;u) +\sum_{k=1}^{g}C_{2k}(\varkappa;u)\mathcal{r}_{2g-2k}(\varkappa;u) \right]
	\end{equation}
	
	\begin{lemma}\label{lemma 6.5.}
		For any $g \in \N$, we can write
		
		\begin{equation}\label{E2g C2g}
		\mathscr{E}_{2g}(\varkappa;u) = \frac{2^43^2 \varkappa}{3-5g}\mathcal{C}_{2g}(\varkappa)\left(u+\frac{1}{12\varkappa}\right)^{\frac{3-5g}{2}}  \left(\mathscr{P}_{g}(\varkappa;u)+\sqrt{u+\frac{1}{12\varkappa}}\widehat{\mathscr{P}}_{g}(\varkappa;u)\right),
		\end{equation}
		where $\mathscr{P}_{1}$ and $\widehat{\mathscr{P}}_{1}$ are Taylor series centered at zero with radius of convergence $1/12\varkappa$, while $\mathscr{P}_{g}$, and $\widehat{\mathscr{P}}_{g}$ are polynomials for $g \in \N \setminus \{1\}$. For all $g\in \N$ we have $\mathscr{P}_{g}(\varkappa;- \frac{1}{12\varkappa})=1$.
	\end{lemma}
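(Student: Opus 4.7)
My plan is to prove the lemma by directly substituting the formulas from Lemmas \ref{main lemma} and \ref{main lemma'} for $\mathcal{r}_{2j}$ and its derivatives into the explicit expression \eqref{E2gu} for $\mathscr{E}_{2g}$, then analyzing the resulting Puiseux expansion in $\ep := u + \frac{1}{12\varkappa}$. The key feature is that two successive orders of singularity, $\ep^{(1-5g)/2}$ and $\ep^{(2-5g)/2}$, both cancel, leaving the leading behavior at the claimed order $\ep^{(3-5g)/2}$.

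First I would expand the prefactor $\mathcal{r}_0^2(\varkappa;u) + \varkappa/u$ around $\ep = 0$. Using the closed form $\mathcal{r}_0(\varkappa;u) = (\sqrt{1+12\varkappa u}-1)/(6u)$ and writing $s := \sqrt{1+12\varkappa u} = 2\sqrt{3\varkappa \ep}$, a direct algebraic simplification (rewriting $12\varkappa u = s^2 - 1$) yields $\mathcal{r}_0^2 + \varkappa/u = -8\varkappa^2 + 4\varkappa \mathcal{C}_0(\varkappa)\sqrt{\ep} + O(\ep)$. Combined with the leading singular behavior $\mathcal{r}_{2g}(\varkappa;u) = \mathcal{C}_{2g}(\varkappa)\ep^{(1-5g)/2}(1+O(\sqrt{\ep}))$ from Lemma \ref{main lemma}, the first term in \eqref{E2gu} contributes $-2\mathcal{C}_{2g}(\varkappa)\ep^{(1-5g)/2}(1+O(\sqrt{\ep}))$ to $\mathscr{E}_{2g}$.

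Second, I would extract the leading singular behavior of the $C_{2k}$'s. From \eqref{A B} and Lemma \ref{main lemma'}, the dominant contribution to $A_{2m}$, of order $\ep^{(1-5m)/2}$, comes from the pair $(j,\ell) = (m,0)$, namely $\mathcal{r}_{2m}$ itself; while $A_{2m-1}$ is strictly less singular, of order $\ep^{(4-5m)/2}$, arising from $(j,\ell)=(m-1,1)$. Substituting into \eqref{C2g}, the dominant contributions to $C_{2k}$ come from the $m=0$ and $m=k$ boundary terms of the first sum, giving $C_{2k} = 4\varkappa \mathcal{C}_{2k}(\varkappa)\ep^{(1-5k)/2}(1+O(\sqrt{\ep}))$. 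In particular, $C_{2g}\mathcal{r}_0 = 8\varkappa^2\mathcal{C}_{2g}(\varkappa)\ep^{(1-5g)/2}(1+O(\sqrt{\ep}))$, which exactly cancels the most singular contribution of the first term.

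The main obstacle will be the cancellation at the next order, $\ep^{(2-5g)/2}$. Contributions at this order arise from three sources: (i) the $\sqrt{\ep}$-corrections to $\mathcal{r}_0^2 + \varkappa/u$ and to $\mathcal{r}_{2g}$ multiplying their respective leading partners; (ii) the analogous $\sqrt{\ep}$-corrections within $C_{2g}\mathcal{r}_0$; and (iii) the leading-order contributions $4\varkappa \mathcal{C}_{2k}(\varkappa)\mathcal{C}_{2(g-k)}(\varkappa)\ep^{(2-5g)/2}$ from each $C_{2k}\mathcal{r}_{2g-2k}$ with $0<k<g$. The sum in (iii) produces $\sum_{k=1}^{g-1}\mathcal{C}_{2k}\mathcal{C}_{2(g-k)}$, which by the recursion \eqref{C2g recurrence} equals an explicit combination of $\mathcal{C}_{2g}(\varkappa)$ and $\mathcal{C}_{2g-2}(\varkappa)$; matching this expression against the corrections from (i) and (ii) realizes the cancellation, with the recursion playing the central role. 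Once this is established, extracting the surviving coefficient at order $\ep^{(3-5g)/2}$ gives the claimed form; the normalization $\mathscr{P}_g(\varkappa;-\frac{1}{12\varkappa})=1$ pins down the overall constant $\frac{2^4 3^2 \varkappa}{3-5g}$. The polynomial character of $\mathscr{P}_g,\widehat{\mathscr{P}}_g$ for $g\geq 2$ follows from the polynomial character of $\mathcal{q}_g,\widehat{\mathcal{q}}_g,\mathcal{q}_{g,\ell},\widehat{\mathcal{q}}_{g,\ell}$ in Lemmas \ref{main lemma} and \ref{main lemma'}, together with the observation that the $1/u$ singularity from $\varkappa/u$ in \eqref{E2gu} is absorbed by the $u$-powers carried by $\mathcal{r}_{2j}$ for $g\geq 2$, but persists in a controlled way for $g=1$, producing a Taylor series in $u$ whose radius of convergence is limited only by the branch point at $u = -1/(12\varkappa)$.
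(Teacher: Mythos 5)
Your route is genuinely different from the paper's. The paper does not attack the first-derivative formula \eqref{E2gu} head-on; instead it differentiates \eqref{int15b} to convert the Toda-type identity \eqref{F''} into the ODE $4u^2\mathcal{f}_{2g}''+6u\mathcal{f}_{2g}'=\frac{1}{2\varkappa^2}\mathcal{A}_{2g}$, with $\mathcal{A}_{2g}=\sum_k\mathcal{r}_{2g-2k}\sum_\ell\mathcal{r}_{2k-2\ell}^{(2\ell)}/(2\ell)!$. There the most singular term of the right-hand side is unambiguously $2\mathcal{r}_{2g}\mathcal{r}_0\sim 4\varkappa\,\mathcal{C}_{2g}\,\ep^{(1-5g)/2}$ (no cancellation to verify), and the operator $4u^2\partial_u+6u$ acting on the ansatz $\mathscr{E}_{2g}=\widehat{\mathcal{C}}_{2g}\ep^{\al(g)}(\cdots)$ produces $\frac{\al(g)}{36\varkappa^2}\widehat{\mathcal{C}}_{2g}\ep^{\al(g)-1}$ at leading order; matching exponents and coefficients yields $\al(g)=\frac{3-5g}{2}$ and $\widehat{\mathcal{C}}_{2g}=\frac{2^43^2\varkappa}{3-5g}\mathcal{C}_{2g}$ in two lines. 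Your approach instead requires two successive orders of cancellation in \eqref{E2gu}, and this is where all the work is concentrated. I checked that your plan does go through: the $\ep^{(1-5g)/2}$ cancellation is exactly as you say, and at order $\ep^{(2-5g)/2}$ the unknown subleading data of $\mathcal{r}_{2g}$ (the coefficient coming from $\widehat{\mathcal{q}}_g$) cancels between $(\mathcal{r}_0^2+\varkappa/u)\mathcal{r}_{2g}$ and $C_{2g}\mathcal{r}_0$, after which the surviving coefficient is $8\varkappa^2 e_g+12\varkappa\,\mathcal{C}_0\mathcal{C}_{2g}+6\varkappa\,\Sigma_g$ with $e_g=\frac{(5g-6)(5g-4)}{1152\varkappa^4}\mathcal{C}_{2g-2}$ (from the $\mathcal{r}_{2g-2}''/2$ term inside $A_{2g}$, via Lemma \ref{main lemma'}) and $\Sigma_g=\sum_{m=1}^{g-1}\mathcal{C}_{2m}\mathcal{C}_{2(g-m)}$; this vanishes identically by the recursion for $\mathcal{C}_{2g}(\varkappa)$. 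Two cautions. First, your itemized source (ii) must include the cross terms $A_{2m}A_{2(g-m)}$, $0<m<g$, inside $C_{2g}$ itself — these contribute $2\varkappa\Sigma_g$ on top of the $4\varkappa\Sigma_g$ from source (iii), and both copies are needed for the cancellation; as written your accounting reads as if only the corrections to the $m=0,g$ boundary products of $C_{2g}$ enter. Second, when you verify the cancellation for general $\varkappa$ you will find that the first term of \eqref{C2g recurrence} must carry $\varkappa^{3/2}$ rather than $\varkappa^{1/2}$ (one can confirm this by computing $\mathcal{C}_4(\varkappa)$ directly from the displayed $\mathcal{r}_4$); at $\varkappa=1$ this is immaterial, but your argument is the one place where the exact $\varkappa$-weights of the recursion matter. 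In short: your approach is correct and self-contained given Lemmas \ref{main lemma} and \ref{main lemma'}, but it buys nothing over the paper's ODE argument and costs a delicate double cancellation that you have asserted rather than executed — that computation is the entire content of the proof on your route and must be written out.
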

	\begin{proof}
		Differentiating \eqref{int15b}  gives
		\begin{equation}
		\frac{\partial^2 F}{\partial \sigma^2} =4u^3\frac{\partial^2 \mathscr{F}}{\partial u^2}  + 6u^2\frac{\partial \mathscr{F}}{\partial u}+ \frac{u}{2},
		\end{equation}
		where we recall that $\sigma=u^{-1/2}$. Now recalling \eqref{F''} and \eqref{Ru to Rsigma} we obtain
		\begin{equation}
		4u^2\frac{\partial^2 \mathscr{F}}{\partial u^2} + 6u\frac{\partial \mathscr{F}}{\partial u}+ \frac{1}{2}= \frac{1}{4\varkappa^2} \mathscr{R}_{n}(\varkappa;u)\left(\mathscr{R}_{n-1}(\varkappa;u)+\mathscr{R}_{n+1}(\varkappa;u)\right)
		\end{equation}
		Using \eqref{R n u}, \eqref{R u n p 1 large N} and \eqref{R u n m 1 large N} we find  
		\begin{equation}
		4u^2\frac{\partial^2 \mathscr{F}}{\partial u^2} + 6u\frac{\partial \mathscr{F}}{\partial u}+ \frac{1}{2}\sim \frac{1}{2\varkappa^2} \sum^{\infty}_{g=0} \frac{\mathcal{A}_{2g}(\varkappa;u)}{N^{2g}},
		\end{equation}
		where 
		\begin{equation}
		\mathcal{A}_{2g}(\varkappa;u) = \sum^g_{k=0} \mathcal{r}_{2g-2k}(\varkappa;u)\sum^k_{\ell=0} \frac{\mathcal{r}_{2k-2\ell}^{(2\ell)}(\varkappa;u)}{(2\ell)!}
		\end{equation}
		So we have
		\begin{equation}\label{Diff Eq F0}
		4u^2 \mathcal{f}^{\prime\prime}_0+ 6u\mathcal{f}^{\prime}_0+ \frac{1}{2} = \frac{1}{2\varkappa^2}\mathcal{r}^2_{0}(\varkappa;u),
		\end{equation}
		and
		\begin{equation}\label{Diff Eq F2g}
		4u^2 \mathcal{f}^{\prime\prime}_{2g}+ 6u\mathcal{f}^{\prime}_{2g} = \frac{1}{2\varkappa^2} \mathcal{A}_{2g}(\varkappa;u).
		\end{equation}
		Note that solving the differential equation \eqref{Diff Eq F0} for $\mathcal{f}^{\prime}_0 \equiv \mathscr{E}_0$ yields the expected expression \eqref{EEEE0} which was directly found from \eqref{EEEEE0}. From Lemma \ref{main lemma} and \eqref{E2gu} we know that
		
		\begin{equation}
		\mathcal{f}^{\prime}_{2g}(\varkappa;u) \equiv \mathscr{E}_{2g}(\varkappa;u)=\widehat{\mathcal{C}}_{2g}(\varkappa)\left(u+\frac{1}{12\varkappa}\right)^{\al(g)} \left( \mathcal{p}(\varkappa;u) + \sqrt{u+\frac{1}{12\varkappa}} \widehat{\mathcal{p}}(\varkappa;u) \right),
		\end{equation}
		for some polynomials $\mathcal{p}$ and $\widehat{\mathcal{p}}$, with $\mathcal{p}(\varkappa;-\frac{1}{12\varkappa})=1$. Our goal is to determine $\widehat{\mathcal{C}}_{2g}(\varkappa)$ and $\al(g)$ through the equality \eqref{Diff Eq F2g}. We obviously have 
		\begin{equation}
		\mathcal{f}^{\prime \prime}_{2g}(\varkappa;u) =\al(g)\widehat{\mathcal{C}}_{2g}(\varkappa)\left(u+\frac{1}{12\varkappa}\right)^{\al(g)-1} \left( \mathcal{t}(\varkappa;u) + \sqrt{u+\frac{1}{12\varkappa}} \widehat{\mathcal{t}}(\varkappa;u) \right),
		\end{equation}
		for some polynomials $\mathcal{t}$ and $\widehat{\mathcal{t}}$, with $\mathcal{t}(\varkappa;-\frac{1}{12\varkappa})=1$. Therefore \begin{equation}\label{AA2g 1}
		\frac{1}{2\varkappa^2} \mathcal{A}_{2g}(\varkappa;u) = \frac{1}{36\varkappa^2} \al(g)\widehat{\mathcal{C}}_{2g}(\varkappa)\left(u+\frac{1}{12\varkappa}\right)^{\al(g)-1} \left( \mathscr{P}(\varkappa;u) + \sqrt{u+\frac{1}{12\varkappa}} \widehat{\mathscr{P}}(\varkappa;u) \right),
		\end{equation}
		where $\mathscr{P}$ and $\widehat{\mathscr{P}}$ are polynomials, with $\mathscr{P}(\varkappa;-\frac{1}{12\varkappa})=1$. Let us rewrite $\mathcal{A}_{2g}$ as 
		\begin{equation}
		\begin{split}
		\mathcal{A}_{2g}(\varkappa;u) & = 2\mathcal{r}_{2g}(\varkappa;u)\mathcal{r}_{0}(\varkappa;u) + \mathcal{r}_{0}(\varkappa;u) \sum^g_{\ell=1} \frac{\mathcal{r}_{2g-2\ell}^{(2\ell)}(\varkappa;u)}{(2\ell)!}  \\ & \sum^{g-1}_{k=1} \mathcal{r}_{2g-2k}(\varkappa;u)\mathcal{r}_{2k}(\varkappa;u) + \sum^{g-1}_{k=1} \mathcal{r}_{2g-2k}(\varkappa;u)\sum^k_{\ell=1} \frac{\mathcal{r}_{2k-2\ell}^{(2\ell)}(\varkappa;u)}{(2\ell)!}
		\end{split}
		\end{equation}
		From Lemma \ref{main lemma'} we have \begin{equation}
		\mathcal{r}_{2k-2\ell}^{(2\ell)}(\varkappa;u) = c_{k,\ell}(\varkappa) \left(u+\frac{1}{12\varkappa}\right)^{\frac{1-5k+\ell}{2}}\left(1+O\left(\sqrt{u+\frac{1}{12\varkappa}}\right)\right), \qquad k,\ell \in \N.
		\end{equation}
		Therefore
		\begin{equation}
		\mathcal{r}_{2g-2k}(\varkappa;u)\mathcal{r}_{2k-2\ell}^{(2\ell)}(\varkappa;u) = c_{k,\ell}(\varkappa) \mathcal{C}_{2g}(\varkappa) \left(u+\frac{1}{12\varkappa}\right)^{\frac{2-5g+\ell}{2}}\left(1+O\left(\sqrt{u+\frac{1}{12\varkappa}}\right)\right),
		\end{equation} for $k,\ell \in \N$.
		Also from Lemma \ref{main lemma} we have
		\begin{equation}
		\mathcal{r}_{2g-2k}(\varkappa;u)\mathcal{r}_{2k}(\varkappa;u) = \mathcal{C}_{2g}(\varkappa)\mathcal{C}_{2g-2k}(\varkappa)\left(u+\frac{1}{12\varkappa}\right)^{\frac{2-5g}{2}}\left(1+O\left(\sqrt{u+\frac{1}{12\varkappa}}\right)\right),
		\end{equation}
		for $k\in\{1,\cdots,g-1\}$. In view of the equations above, \eqref{r0} and Lemma \ref{main lemma} we conclude that the most singular term in $\mathcal{A}_{2g}$ is in fact $2\mathcal{r}_{2g}(\varkappa;u)\mathcal{r}_{0}(\varkappa;u)$, Therefore
		\begin{equation}\label{AA2g 2}
		\mathcal{A}_{2g}(\varkappa;u) = 4\varkappa \mathcal{C}_{2g}(\varkappa)\left(u+\frac{1}{12\varkappa}\right)^{\frac{1-5g}{2}}  \left(\mathscr{Q}_{g}(\varkappa;u)+\sqrt{u+\frac{1}{12\varkappa}}\widehat{\mathscr{Q}}_{g}(\varkappa;u)\right),
		\end{equation}
		where $\mathscr{Q}$ and $\widehat{\mathscr{Q}}$ are polynomials, with $\mathscr{Q}(\varkappa;-\frac{1}{12\varkappa})=1$. Comparing \eqref{AA2g 1} and \eqref{AA2g 2} we find that $\mathscr{Q} = \mathscr{P}$, $\widehat{\mathscr{Q}}=\widehat{\mathscr{P}}$ and moreover \begin{equation}
		\al(g) = \frac{3-5g}{2},
		\end{equation}
		and thus
		\begin{equation}
		\widehat{\mathcal{C}}_{2g}(\varkappa) = \frac{2^43^2 \varkappa}{3-5g}\mathcal{C}_{2g}(\varkappa).
		\end{equation}
	\end{proof}
	We would like to point out that the branching singularity described by Lemma \ref{lemma 6.5.} has been observed in the physical literature, e.g. see \cite{DiFGZJ}.

	\subsection{Number of Four-valent Graphs on a Compact Riemann Surface of Genus $g$}
	In what follows we will use $\varkappa=1$, and simpler notations $\mathscr{E}_{2g}(1;u) \equiv \mathscr{E}_{2g}(u)$, $g \in \N \cup \{0\}$. From \eqref{E2gu} the first few $\mathscr{E}_{2g}$'s are explicitly by 
	\begin{equation}\label{EEEE0}
	\mathscr{E}_{0}(u) = \frac{1}{216u^3} \left[-1-18u+(1+12u)^{3/2} \right]-\frac{1}{4u}, 
	\end{equation}
	
	\begin{equation}\label{EEEE2}
	\mathscr{E}_{2}(u) = \frac{(1+12u)^{-1}}{24u} \left[ 1-\sqrt{1+12u} \right],
	\end{equation}
	
	\begin{equation}
	\mathscr{E}_{4}(u) = \frac{7u(1+12u)^{-7/2}}{4} \left[ 1 - \frac{1}{14} (1+12u) -\frac{13}{14}\sqrt{1+12u}
	\right], 
	\end{equation}
	
	\begin{equation}
	\begin{split}
	\mathscr{E}_{6}(u) & = \frac{2450u^3(1+12u)^{-6}}{4} \times \\ & \left[ 1+ \frac{291}{2450}(1+12u) + \left(-\frac{3033}{2450}+\frac{146}{1225}(1+12u)\right)\sqrt{1+12u} \right],
	\end{split}
	\end{equation}

	We can write power series expansions for $\mathscr{E}_{0}(u)$, $\mathscr{E}_{2}(u)$, $\mathscr{E}_{4}(u)$, and $\mathscr{E}_{6}(u)$ we will find:
	
	\begin{equation}
	\mathscr{E}_{0}(u) =  \sum^{\infty}_{j=0} \frac{(-1)^{j+1} 3^{j+1}(2j+1)!}{j!(j+3)!}u^j,
	\end{equation}
	and
	\begin{equation}
	\mathscr{E}_{2}(u) = \frac{1}{2} \sum^{\infty}_{j=0} (-1)^{j+1}12^{j} \left[ 1 - \frac{ (2j+2)!}{4^{j+1}((j+1)!)^2} \right]u^j,
	\end{equation}
	
	\begin{equation}
	\mathscr{E}_{4}(u) = \frac{1}{16} \sum^{\infty}_{j=0} (-1)^{j}12^{j} \left[ \frac{ (2j+5)!(28j+65)}{30 \cdot 4^{j}j!(j+2)!(2j+5)} - 13(j+2)(j+1) \right]u^{j+1},
	\end{equation}
	and
	\begin{equation}
	\begin{split}
	& \mathscr{E}_{6}(u) =  \frac{1}{4} \sum^{\infty}_{j=0}  \frac{(-1)^{j}12^{j}}{j!} \times \\ & \left\{ \frac{32892}{j+1} \left[ \frac{(2j+11)!}{60480 \cdot 4^j(j+5)!} -\frac{(j+6)!}{5!} \right] + \frac{ 291 (j+5)!}{10} +  \frac{ 73 (2j+9)!}{315 \cdot 4^j (j+4)!}  \right\}u^{j+4}. 
	\end{split}
	\end{equation}
	Integrating these from $0$ to $u$, and comparing to \eqref{Fu N^-2 expansion} we obtain

	\begin{equation}\label{Fu0}
	\mathcal{f}_{0}(u) =  \sum^{\infty}_{j=1} \frac{(-1)^{j} 3^{j}(2j-1)!}{(j)!(j+2)!}u^{j},
	\end{equation}
	and
	\begin{equation}\label{Fu2}
	\mathcal{f}_{2}(u) = \frac{1}{24} \sum^{\infty}_{j=1}  \frac{(-1)^{j}12^{j}}{j} \left[ 1 - \frac{ (2j)!}{4^{j}(j!)^2} \right]u^{j},
	\end{equation}
	
	\begin{equation}\label{Fu4}
	\mathcal{f}_{4}(u) =  \sum^{\infty}_{j=3}  \frac{(-1)^{j}12^{j}}{2304j} \left[\frac{8 (2j)!(28j+9)}{15 \cdot 4^{j}(j-2)! j!} -13j(j-1) \right]u^{j},
	\end{equation}
	and
	\[     \mathcal{f}_{6}(u) = \sum^{\infty}_{j=1} \mathcal{f}^{(j+4)}_{6} u^{j+4},\]
	where
	\begin{equation}\label{Fu6}
	\begin{split}
	& \mathcal{f}^{(j+4)}_{6}  = \frac{1}{48}  \frac{(-1)^{j}12^{j}}{(j-1)!(j+4)}\times \\ &  \Bigg\{ \frac{32892}{j} \left[\frac{(j+5)!}{5!} -\frac{(2j+9)!}{15120 \cdot 4^{j}(j+4)!}  \right]  - \frac{ 291 (j+4)!}{10} -  \frac{ 292 (2j+7)!}{315 \cdot 4^{j} (j+3)!} \Bigg\} .
	\end{split}
	\end{equation}
	
	\begin{remark}\normalfont
		The formulae \eqref{Fu0}, \eqref{Fu2}, and \eqref{Fu4} respectively reconfirm the formulae $(5.15)$, $(7.21)$, and $(7.33)$ of \cite{BIZ}. 
	\end{remark}

	The Wick's theorem and the Feynman graph representation (see, e.g. \cite{Zvonkin}) give that \begin{equation}\label{Fu to 4-valent graphs}
	\mathcal{f}_{2g}(u) = \sum^{\infty}_{j=1} \frac{(-1)^j\mathscr{N}_j(g)u^j}{j! 4^j},
	\end{equation}
	where $\mathscr{N}_j(g)$ is the number of connected labeled 4-valent graphs with $j$ vertices which are realizable on a closed Riemann surface of genus $g$ which are not realizable on Riemann surfaces of lower genus. Comparing \eqref{Fu to 4-valent graphs} with the formulae \eqref{Fu0} through \eqref{Fu6} we can compute $\mathscr{N}_j(g)$. For the sphere we find
	
	\begin{equation}\label{number sphere}
	\mathscr{N}_j(0) = \frac{12^j\left( 2j-1 \right)!}{(j+2)!}, \qquad j \in \N.
	\end{equation}
	For the torus we obtain
	\begin{equation}\label{number torus}
	\mathscr{N}_j(1) = \frac{12^j\left( 4^j(j!)^2-(2j)! \right)}{24j(j!)}, \qquad j \in \N.
	\end{equation}
	For the Riemann surface of genus two we get
	\begin{equation}\label{N j+1 2}
	\mathscr{N}_{j+1}(2) = \frac{12^j\left( 2j+2 \right)!(28j+37)}{360(j+1)(j-1)!}-13j(j+1)j!48^{j-1}, \qquad j \in \N,
	\end{equation}
	where $\mathscr{N}_{1}(2)=0$, which is clear as all three labeled 4-valent graphs with one vertex are realizable on the sphere and the torus. Also notice that the above formula yields $\mathscr{N}_{2}(2)=0$, which is consistent with the fact that all 96 labeled 4-valent graphs with two vertices are already realizable on the sphere and the torus (see Appendix \ref{Appendix Number of graphs}). Finally for the Riemann surface of genus three we arrive at
	\begin{equation}\label{N j+4 3 1}
	\begin{split}
	\mathscr{N}_{j+4}(3) & = \frac{16 \cdot 48^j\left(j+3\right)!}{3(j)!} \\ & \left( \frac{2741}{10}(j+5)!   - \frac{291}{10}j (j+4)! - \frac{2741}{1260}  \frac{ (2j+9)!}{ 4^j (j+4)!} - \frac{292j(2j+7)!}{315 \cdot 4^j (j+3)!} \right),
	\end{split}
	\end{equation}
	for $j \in \N,$ where $\mathscr{N}_{1}(3)=\mathscr{N}_{2}(3)=\mathscr{N}_{3}(3)=\mathscr{N}_{4}(3)=0$. 
	\begin{remark}\normalfont
		We emphasize that, with increasing effort, similar analysis allows for computing any $\mathscr{N}_{j}(g)$, $j,g \in \N$.
	\end{remark}
	
	We have the following asymptotic formulae for $\mathscr{N}_{j}(g)$, as $j \to \infty$ for $g=0,1,2,3$:
	
	\begin{equation}\label{Nj0 asymp}
	\mathscr{N}_{j}(0) =\frac{1}{\sqrt{2}j^3} \left(\frac{48j}{e}\right)^j \left( 1 - \frac{73}{24j} + \frac{8209}{1152j^2}-  \frac{6341837}{414720j^3} + O(j^{-4}) \right),
	\end{equation}
	
	\begin{equation}\label{Nj1 asymp}
	\mathscr{N}_{j}(1) =\frac{\sqrt{2\pi}}{24\sqrt{j
	}} \left(\frac{48j}{e}\right)^j \left( 1 - \frac{1}{\sqrt{\pi}\sqrt{j}} + \frac{1}{12j}+  \frac{1}{24\sqrt{\pi}j^{3/2}} + O(j^{-2}) \right),
	\end{equation}
	
	\begin{equation}\label{Nj2 asymp}
	\mathscr{N}_{j}(2) =\frac{7\sqrt{2} j^2}{1080} \left(\frac{48j}{e}\right)^j \left( 1 - \frac{195\sqrt{\pi}}{224\sqrt{j}} - \frac{121}{168j}+  \frac{715\sqrt{\pi}}{896j^{3/2}} + O(j^{-2}) \right),
	\end{equation}
	
	\begin{equation}\label{Nj3 asymp}
	\mathscr{N}_{j}(3) =\frac{245\sqrt{2\pi} j^{9/2}}{995328} \left(\frac{48j}{e}\right)^j \left( 1 - \frac{43136}{8575\sqrt{\pi}\sqrt{j}} - \frac{12709}{2940j}+  \frac{30928}{1029\sqrt{\pi}j^{3/2}} + O(j^{-2}) \right),
	\end{equation}
	Denote
	\begin{equation}\label{F2g j and N g j}
	\mathcal{f}^{(j)}_{2g} := \frac{(-1)^j\mathscr{N}_j(g)}{j! 4^j}, \qquad \mbox{thus} \qquad 
	\mathcal{f}_{2g}(u) = \sum^{\infty}_{j=1} \mathcal{f}^{(j)}_{2g} u^j.
	\end{equation}

	\begin{theorem}\label{thm F 2g j}
		For all $g \in \N \cup \{0\}$, as $j \to \infty$ we have
		\begin{equation}\label{F2gj large j}
		\mathcal{f}^{(j)}_{2g} = \frac{\mathcal{K}_g}{u^{j}_c}j^{\frac{5g-7}{2}} \left( 1 + O(j^{-1/2}) \right),
		\end{equation}
		where $u_c=-\di \frac{1}{12}$. The constants $\mathcal{K}_g$ are explicitly given in terms of the constants $\mathcal{C}_{2g}$ by 
		\begin{equation}\label{K g in terms of Cg}
		\mathcal{K}_g = \begin{cases}
		\di \frac{12^{\frac{5g-1}{2}}}{\left(\frac{5g-5}{2}\right)!}\frac{1}{5g-3} \mathcal{C}_{2g}, & g =2k+1, \\[10pt]
		\di \frac{ 12^{\frac{5g-1}{2}} 2^{5g-4}}{\sqrt{\pi}}\frac{\left(\frac{5g-4}{2}\right)!}{\left(5g-3\right)!}\mathcal{C}_{2g}, & g =2k,
		\end{cases} \qquad k \in \N,
		\end{equation}
		while $\mathcal{K}_0=2^{-1}\pi^{-1/2}$, and $\mathcal{K}_1=24^{-1}$.
	\end{theorem}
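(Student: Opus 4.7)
The plan is to derive the large-$j$ asymptotics of $\mathcal{f}^{(j)}_{2g}$ by applying singularity analysis to the generating function $\mathcal{f}_{2g}(u)$, exploiting the explicit singular structure of $\mathscr{E}_{2g}=\mathcal{f}^{\,\prime}_{2g}$ at the dominant singularity $u_c=-1/12$ established in Lemma \ref{lemma 6.5.}. First I would use the identity $j\,\mathcal{f}^{(j)}_{2g}=[u^{j-1}]\mathscr{E}_{2g}(u)$ coming from termwise differentiation of the series $\mathcal{f}_{2g}(u)=\sum_{j\ge 1}\mathcal{f}^{(j)}_{2g}u^j$. Then, from Lemmas \ref{main lemma}, \ref{main lemma'}, and \ref{lemma 6.5.}, each $\mathscr{E}_{2g}$ is a rational function of $u$ and $\sqrt{1+12u}$, so its only singularity in the closed disk $|u|\le 1/12$ is $u_c=-1/12$, with the local expansion
\[
\mathscr{E}_{2g}(u)=\frac{2^{4}3^{2}\,\mathcal{C}_{2g}}{3-5g}\,(u-u_c)^{(3-5g)/2}\bigl[\mathscr{P}_g(u)+\sqrt{u-u_c}\,\widehat{\mathscr{P}}_g(u)\bigr],\qquad \mathscr{P}_g(u_c)=1,
\]
valid for $g\in\N$.

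The central step is a transfer theorem (Darboux, or Flajolet--Sedgewick). Writing $u-u_c=(1+12u)/12$ and using the standard estimate $\binom{\beta}{k}=\tfrac{(-1)^{k}}{\Gamma(-\beta)}k^{-\beta-1}(1+O(k^{-1}))$ for non-integer $\beta$, I would obtain
\[
[u^{j-1}](u-u_c)^{(3-5g)/2}=12^{(5g-3)/2}\,(-12)^{j-1}\,\frac{j^{(5g-5)/2}}{\Gamma((5g-3)/2)}\bigl(1+O(j^{-1})\bigr),
\]
while the half-integer subdominant piece $\sqrt{u-u_c}\,\widehat{\mathscr{P}}_g(u)$ has exponent larger by $1/2$ and thus contributes at relative order $j^{-1/2}$, which furnishes the remainder announced in \eqref{F2gj large j}. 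Collecting constants by means of $(-12)^{j-1}=-u_c^{-j}/12$ and $2^{4}3^{2}\cdot 12^{(5g-5)/2}=12^{(5g-1)/2}$ produces
\[
\mathcal{f}^{(j)}_{2g}=\frac{12^{(5g-1)/2}\,\mathcal{C}_{2g}}{(5g-3)\,\Gamma((5g-3)/2)}\,u_c^{-j}\,j^{(5g-7)/2}\bigl(1+O(j^{-1/2})\bigr).
\]

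To match the piecewise formula \eqref{K g in terms of Cg} I would split on the parity of $g$. For odd $g=2k+1$, $(5g-3)/2=5k+1$ is a positive integer and $\Gamma((5g-3)/2)=((5g-5)/2)!$, giving the first case directly. For even $g=2k$, Legendre's half-integer identity $\Gamma(n+1/2)=\sqrt{\pi}\,(2n)!/(4^{n}n!)$ applied at $n=(5g-4)/2$, together with $(5g-3)\cdot(5g-4)!=(5g-3)!$ and $4^{(5g-4)/2}=2^{5g-4}$, delivers the second case. The cases $g=0$ and $g=1$ fall outside the scope of Lemma \ref{lemma 6.5.}, so I would prove them directly from the closed formulae \eqref{number sphere} and \eqref{number torus} via Stirling: the asymptotic $\binom{2j}{j}\sim 4^{j}/\sqrt{\pi j}$ applied to $\mathscr{N}_j(0)=12^{j}(2j-1)!/(j+2)!$ yields $\mathcal{K}_0=1/(2\sqrt{\pi})$, and applied to $\mathscr{N}_j(1)=12^{j}[4^{j}(j!)^{2}-(2j)!]/(24j\cdot j!)$ yields $\mathcal{K}_1=1/24$.

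The main obstacle is a clean justification of the transfer step, which requires verifying that Lemma \ref{lemma 6.5.} provides a singular expansion with a uniformly controlled analytic remainder on an appropriate $\Delta$-domain anchored at $u_c$. This follows from the algebraicity of $\mathscr{E}_{2g}$ in $u$ and $\sqrt{1+12u}$: $u_c=-1/12$ is the unique finite branch point on or inside $|u|=1/12$, the function is analytically continuable past any other point of that circle, and therefore the Flajolet--Sedgewick transfer theorem applies with the $O(j^{-1/2})$ remainder arising precisely from the $\sqrt{u-u_c}\,\widehat{\mathscr{P}}_g(u)$ contribution identified above.
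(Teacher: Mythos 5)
Your proposal is correct and follows essentially the same strategy as the paper: isolate the dominant singularity of $\mathscr{E}_{2g}=\mathcal{f}_{2g}'$ at $u_c=-1/12$ using Lemma \ref{lemma 6.5.}, extract the leading coefficient asymptotics with a Gamma-function normalization, split on the parity of $g$ via the half-integer factorial identity, and treat $g=0,1$ separately from the closed formulae \eqref{number sphere}--\eqref{number torus} by Stirling. The one genuine difference is technical: the paper never invokes a transfer theorem. Since for $g\ge 2$ the functions $\mathscr{P}_g,\widehat{\mathscr{P}}_g$ are polynomials, $\mathscr{E}_{2g}$ is \emph{exactly} a finite linear combination $\sum_{\ell}\mathscr{C}_{\ell,g}(u+\tfrac{1}{12})^{(3-5g+\ell)/2}$, so the paper reads off the exact Taylor coefficients from the generalized binomial theorem and then compares $\binom{(3-5g+\ell)/2}{j-1}$ across $\ell$, the $\ell\ge 1$ terms being smaller by $j^{-\ell/2}$; this makes your worry about uniform control on a $\Delta$-domain moot, as no analytic remainder ever appears. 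Your transfer-theorem framing is valid too (the function is analytic off the ray $(-\infty,-1/12]$), and is slightly more robust since it would also cover the $g=1$ case where $\mathscr{P}_1,\widehat{\mathscr{P}}_1$ are only convergent Taylor series rather than polynomials. One small imprecision: for odd $g$ the exponent $(3-5g)/2$ is a \emph{negative integer}, so the binomial estimate you quote ``for non-integer $\beta$'' is not literally applicable as stated; the asymptotic $\binom{\beta}{k}=\frac{(-1)^k}{\Gamma(-\beta)}k^{-\beta-1}(1+O(k^{-1}))$ does still hold for negative integer $\beta$ (it fails only for non-negative integers), so the conclusion is unaffected, but the hypothesis should be stated as $\beta\notin\N\cup\{0\}$.
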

	
	\begin{proof}
		For $g=0$ and $g=1$ the expression \eqref{F2gj large j} with the quantities $\mathcal{K}_0=2^{-1}\pi^{-1/2}$ and $\mathcal{K}_1=24^{-1}$ can be immediately found from \eqref{Nj0 asymp}, \eqref{Nj1 asymp}, \eqref{F2g j and N g j}, and the Stirling formula. For $g \in \N \setminus \{1\}$, by \eqref{E2g C2g} we have that \begin{equation}\label{E2g series}
		\mathscr{E}_{2g}(u) =  \sum^{\mathscr{m}_g}_{\ell=0} \mathscr{C}_{\ell,g} \left(u+\frac{1}{12}\right)^{\frac{3-5g+\ell}{2}},  \qquad \mathscr{C}_{0,g} = \frac{2^43^2}{3-5g}\mathcal{C}_{2g}, 
		\end{equation}
		for some $\mathscr{m}_g \in \N$ and some constants $\mathscr{C}_{\ell,g}$. For these finite sum of powers of $u-u_c$, we can use the generalized binomial theorem to write their Taylor series, and thus the Taylor series of $\mathscr{E}_{2g}$, centered at zero with radius of convergence $\frac{1}{12}$.  Therefore
		\begin{equation}\label{E2gu Taylor}
		\mathscr{E}_{2g}(u) = \sum^{\infty}_{j=0} \mathscr{E}^{(j)}_{2g} u^j, \qquad |u| \leq \frac{1}{12},
		\end{equation}
		where
		\begin{equation}
		\mathscr{E}^{(j)}_{2g} =12^j \sum^{\mathscr{m}_g}_{\ell=0}\mathscr{C}_{\ell,g} 12^{\frac{5g-3-\ell}{2}} \binom{\frac{3-5g+\ell}{2}}{j}.
		\end{equation}
		By integrating \eqref{E2gu Taylor} we find
		\begin{equation}
		\mathcal{f}_{2g}(u) = \sum^{\infty}_{j=1} \mathcal{f}^{(j)}_{2g} u^j,
		\end{equation}where 
		\begin{equation}\label{F 2g j}
		\mathcal{f}^{(j)}_{2g} =\frac{12^{j-1}}{j} \sum^{\mathscr{m}_g}_{\ell=0}\mathscr{C}_{\ell,g} 12^{\frac{5g-3-\ell}{2}} \binom{\frac{3-5g+\ell}{2}}{j-1}.
		\end{equation}
		Notice that for $\ell \in \R$ and as $j \to \infty$, one has $\binom{r+\ell}{j}/\binom{r}{j} = c(r,\ell) j^{-\ell}\left(1+O(
		j^{-1})\right)$. Therefore, considering the large $j$ asymptotics, the main contribution in \eqref{F 2g j} comes from $\ell=0$ and we thus have
		\begin{equation}\label{F 2g j main contribution}
		\mathcal{f}^{(j)}_{2g} =  \frac{12^{j+\frac{5g-5}{2}}}{j} \mathscr{C}_{0,g} \binom{\frac{3-5g}{2}}{j-1}\left( 1 + O\left(\frac{1}{\sqrt{j}}\right) \right).
		\end{equation} 
		By a straightforward calculation we find
		\begin{equation}
		\di \binom{\frac{3-5g}{2}}{j-1} = \begin{cases}
		\di \frac{(-1)^{j-1} \left(j+\frac{5g-7}{2} \right)!}{\left(\frac{5g-5}{2} \right)!(j-1)!}, & g=2k+1, \\[20pt]
		\di \frac{(-1)^{j-1} \left(\frac{5g-4}{2} \right)!\left(2j+5g-7 \right)!}{2^{2j-3}(5g-4)!(j-1)!\left(j+\frac{5g-8}{2} \right)!}, & g=2k,
		\end{cases} \qquad k \in \N.
		\end{equation}
		Therefore by \eqref{F 2g j main contribution} we have

		\begin{equation}
		\mathcal{f}^{(j)}_{2g} =  12^{j+\frac{5g-5}{2}} \mathscr{C}_{0,g}\begin{cases}
		\di \frac{(-1)^{j-1} \left(j+\frac{5g-7}{2} \right)!}{\left(\frac{5g-5}{2} \right)!j!}\left( 1 + O\left(\frac{1}{\sqrt{j}}\right) \right), & g=2k+1, \\[20pt]
		\di \frac{(-1)^{j-1} \left(\frac{5g-4}{2} \right)!\left(2j+5g-7 \right)!}{2^{2j-3}(5g-4)!j!\left(j+\frac{5g-8}{2} \right)!}\left( 1 + O\left(\frac{1}{\sqrt{j}}\right) \right), & g=2k,
		\end{cases}
		\end{equation} 
		for $ k \in \N.$ Now, the formulae \eqref{F2gj large j} and \eqref{K g in terms of Cg} immediately follow by applying the Stirling's formula and using the second member of \eqref{E2g series}.
	\end{proof}
	
	\begin{corollary}
		The asymptotics of the number of connected labeled $4$-valent graphs on a Riemann surface of genus $g$, as the number of vertices tends to infinity, is given by \begin{equation}\label{N j g large j asymptotics}
		\mathscr{N}_j(g) = \mathcal{K}_g 48^j j! j^{\frac{5g-7}{2}}\left(1+O(j^{-1/2}) \right), \qquad j \to \infty, 
		\end{equation}
		where the constants $\mathcal{K}_g$ are the same as the ones in Theorem \ref{thm F 2g j}.
	\end{corollary}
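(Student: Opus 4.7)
The plan is to deduce the corollary directly from Theorem \ref{thm F 2g j} together with the combinatorial identification \eqref{Fu to 4-valent graphs}. Solving that identification for $\mathscr{N}_j(g)$ yields
\begin{equation*}
\mathscr{N}_j(g) = (-1)^j\, j!\, 4^j\, \mathcal{f}^{(j)}_{2g},
\end{equation*}
so the only thing to do is to substitute the asymptotic formula \eqref{F2gj large j} for $\mathcal{f}^{(j)}_{2g}$ and simplify.

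Concretely, first I would observe that with $u_c=-\tfrac{1}{12}$ one has $u_c^{-j}=(-12)^j=(-1)^j 12^j$, so the sign $(-1)^j$ coming from the combinatorial identification cancels precisely the sign coming from $u_c^{-j}$. Next, I would combine the factors $4^j$ (from the normalization in \eqref{F2g j and N g j}) and $12^j$ (from $u_c^{-j}$) to produce the $48^j$ appearing in \eqref{N j g large j asymptotics}. The remaining factors $j!$ and $j^{(5g-7)/2}$ and the constant $\mathcal{K}_g$ are then unchanged, and the error term $1+O(j^{-1/2})$ passes through verbatim.

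This is essentially a one-line manipulation once Theorem \ref{thm F 2g j} is in hand, so there is no real obstacle; the substantive work has already been done in establishing \eqref{F2gj large j} via the branching structure of $\mathscr{E}_{2g}(u)$ at $u_c$ and the generalized binomial theorem. The only point worth emphasizing is that the constants $\mathcal{K}_g$ appearing here are literally those of Theorem \ref{thm F 2g j}, so the explicit expression \eqref{K g in terms of Cg1} in terms of $\mathcal{C}_{2g}$ (together with the recursion \eqref{C2g recurrence1}) is inherited without further computation.
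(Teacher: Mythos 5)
Your derivation is correct and is exactly the computation the paper intends (the corollary is stated without proof precisely because it follows from Theorem \ref{thm F 2g j} by the substitution $\mathscr{N}_j(g) = (-1)^j j!\,4^j \mathcal{f}^{(j)}_{2g}$ and the cancellation $(-1)^j u_c^{-j} = 12^j$, giving $4^j\cdot 12^j = 48^j$). Nothing is missing.
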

	\begin{remark}\normalfont
		One shall check that the description \eqref{K g in terms of Cg}-\eqref{N j g large j asymptotics} is in agreement with the asymptotic expressions \eqref{Nj2 asymp} and \eqref{Nj3 asymp} obtained from the explicit formulae \eqref{N j+1 2} and \eqref{N j+4 3 1} for $\mathscr{N}_j(2)$ and $\mathscr{N}_j(3)$. Checking this agreement requires knowing the values of $ \mathcal{C}_4$ and $\mathcal{C}_6$, which can be recursively found from \eqref{C2g recurrence} and \eqref{C0}. We find
		\begin{equation}
		\mathcal{C}_4=\frac{7^2}{2^{15}3^{13/2}}, \qandq \mathcal{C}_6=\frac{5^27^2}{2^{21}3^{10}}.
		\end{equation}
		Using these, from \eqref{K g in terms of Cg} we obtain \begin{equation}
		\mathcal{K_2}= \frac{7}{1080\sqrt{\pi}} \qandq \mathcal{K_3}= \frac{245}{995328},
		\end{equation}
		which together with \eqref{N j g large j asymptotics} are in agreement with \eqref{Nj2 asymp} and \eqref{Nj3 asymp} respectively.
	\end{remark}

\section{Appendix: Number of Labeled Connected Four-valent Graphs \\ With One or Two Vertices on the Sphere and the Torus}\label{Appendix Number of graphs}

In this appendix we would like show some illustrations for a deeper understanding of \eqref{number sphere} and \eqref{number torus}. We specifically do this for graphs with one and two vertices where the number of graphs are not too large, which allows for a complete discussion here. To this end, notice that \eqref{number sphere} and \eqref{number torus} give \begin{equation}\label{numbers j=1 and j=2}
\mathscr{N}_1(0)=2, \qquad 	\mathscr{N}_1(1)=1, \qquad	\mathscr{N}_2(0)=36, \qquad	\mathscr{N}_2(1)=60.
\end{equation}
It is easy to verify the first two members of \eqref{numbers j=1 and j=2}. Consider a labeled 4-valent graph with a single vertex $v$. label the edges emanating from $v$ in a counterclockwise way by $e_1, e_2, e_3,$ and $e_4$. For the simplicity of the Figures \ref{fig: g=0,1 j=2, e1->e_2}, \ref{fig: g=0,1 j=2, e1->e_6}, \ref{fig: g=1 j=2, e1->e_2}, \ref{fig: g=1 j=2, e1->e_3}, and \ref{fig: g=1 j=2, e1->e_6} an edge $e_k$ will be simply denoted by $k$ on the graphs. It is clear that there are two ways to make a desired graph on the sphere in which one connects the adjacent edges ($\mathscr{N}_1(0)=2$). The graph in which the opposite edges are connected can not be realized on the sphere, but can be realized on the torus($\mathscr{N}_1(1)=1$). 

Now we justify the third member of \eqref{numbers j=1 and j=2}. Label the vertices by $v_1$ and $v_2$. Label the edges emanating from $v_1$ in a counterclockwise way by $e_1, e_2, e_3, e_4$, and label the edges emanating from $v_2$ in a counterclockwise way by $e_5,e_6,e_7,e_8$. When $e_j$ connects to $e_k$ we use the notation $e_j \leftrightarrow e_k$.  

We exhaust all possibilities for which 4-valent  connected labeled graphs with two vertices can be realized on the sphere. Let us start with $e_1$. This edge can be connected to any other labeled edge except for $e_3$, because obviously if these edges are connected then either $e_2$ or $e_4$ would have no destination. Now we show that $e_1$ can be connected to $e_2$ or $e_4$ in eight distinct graphs, while it can be connected to either $e_5, e_6, e_7,$ or $e_8$ in five distinct graphs, which confirms that $\mathscr{N}_2(0)=2\cdot8+4\cdot5=36$. Figure \ref{fig: g=0,1 j=2, e1->e_2} shows all eight connected labeled 4-valent graphs with two vertices on the sphere with a connection between $e_1$ and $e_2$, while Figure \ref{fig: g=0,1 j=2, e1->e_6} shows all five connected labeled 4-valent graphs with two vertices on the sphere with a connection between $e_1$ and $e_6$.

\begin{figure}[h]
	\centering
	\begin{subfigure}{0.2\textwidth}
		\centering
		\includegraphics[width=\textwidth]{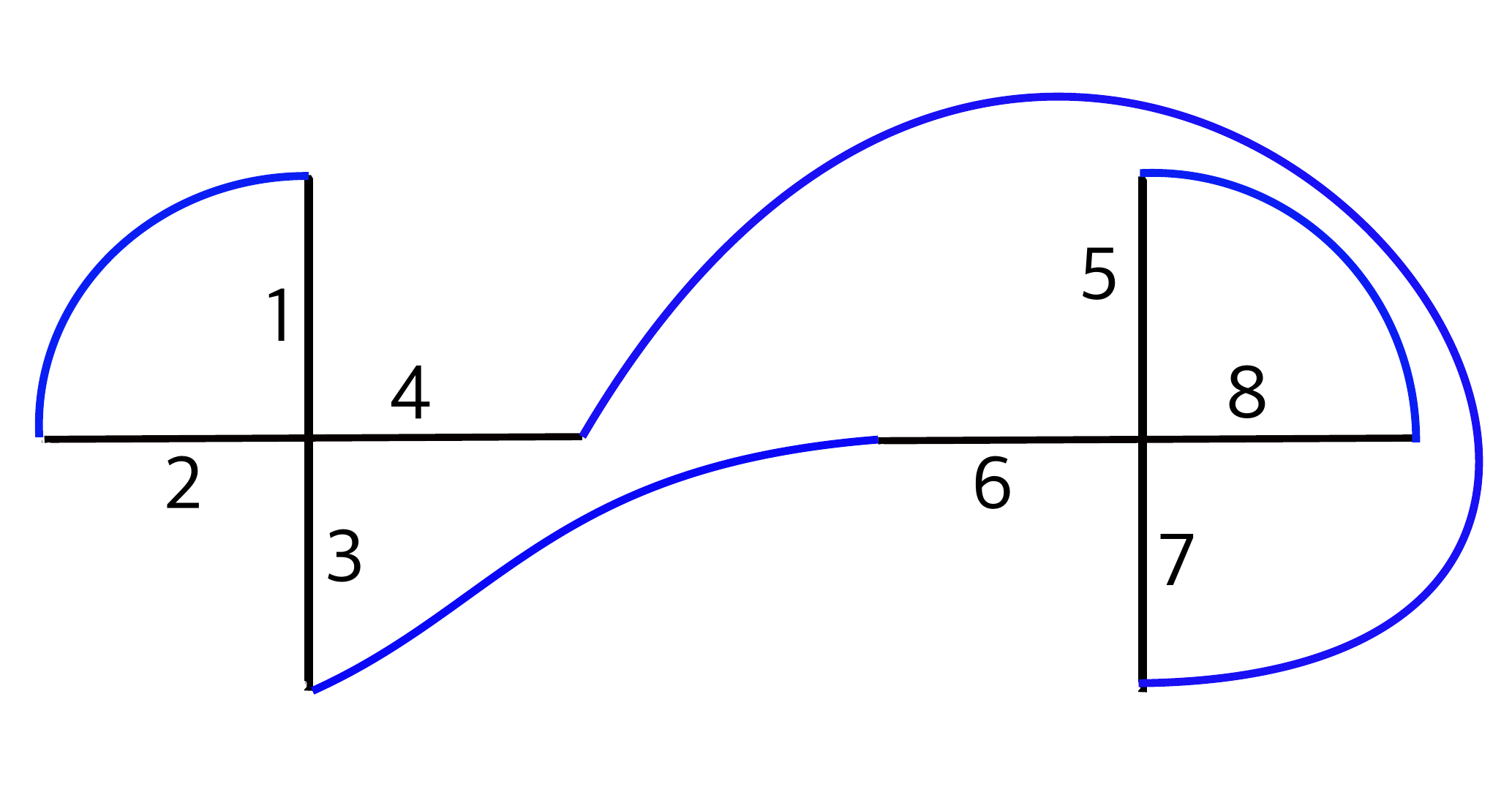}
	\end{subfigure} \hspace{0.3cm}
	\begin{subfigure}{0.2\textwidth}
		\centering
		\includegraphics[width=\textwidth]{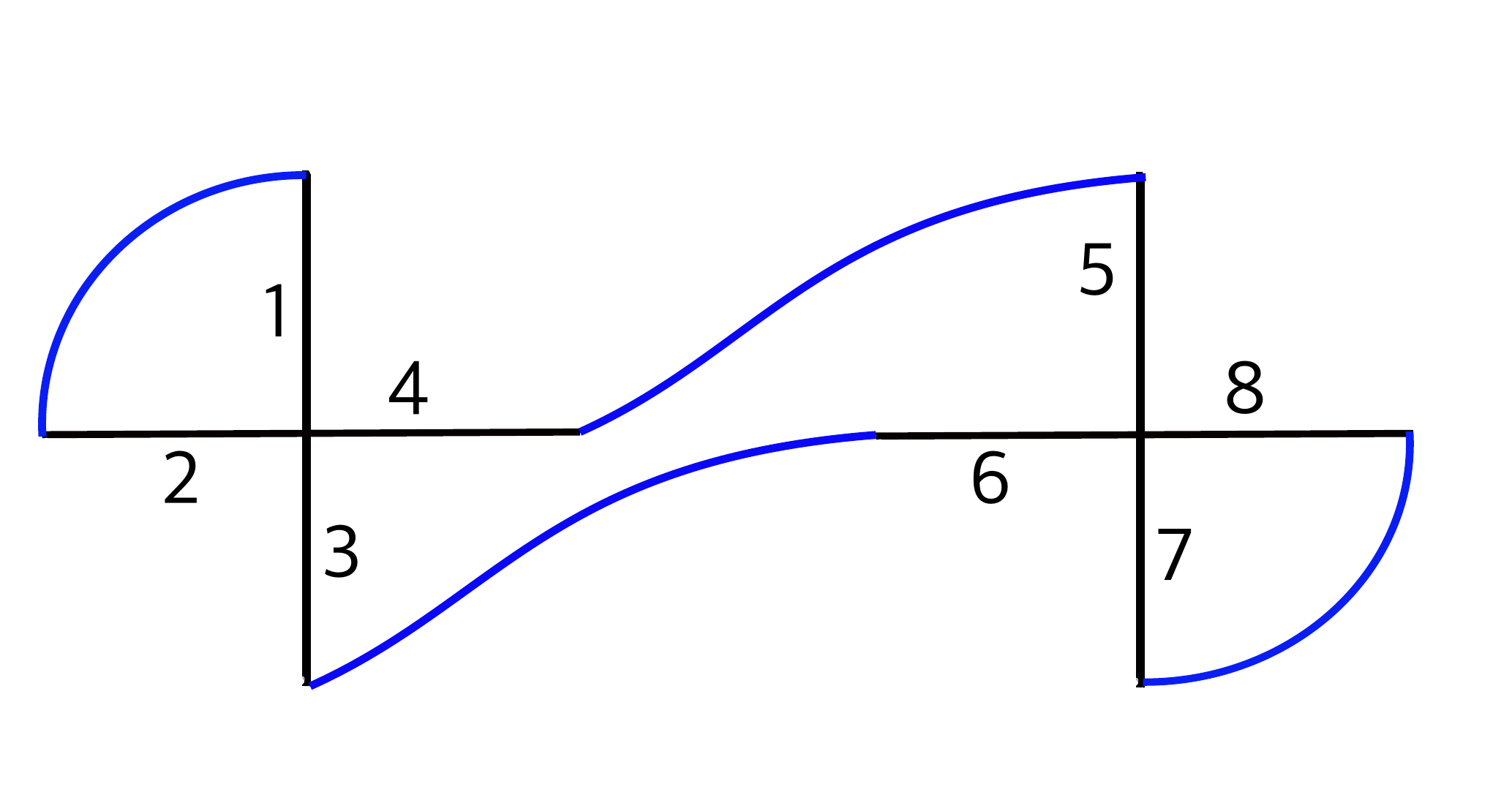}
	\end{subfigure} \hspace{0.3cm}
	\begin{subfigure}{0.2\textwidth}
		\centering
		\includegraphics[width=\textwidth]{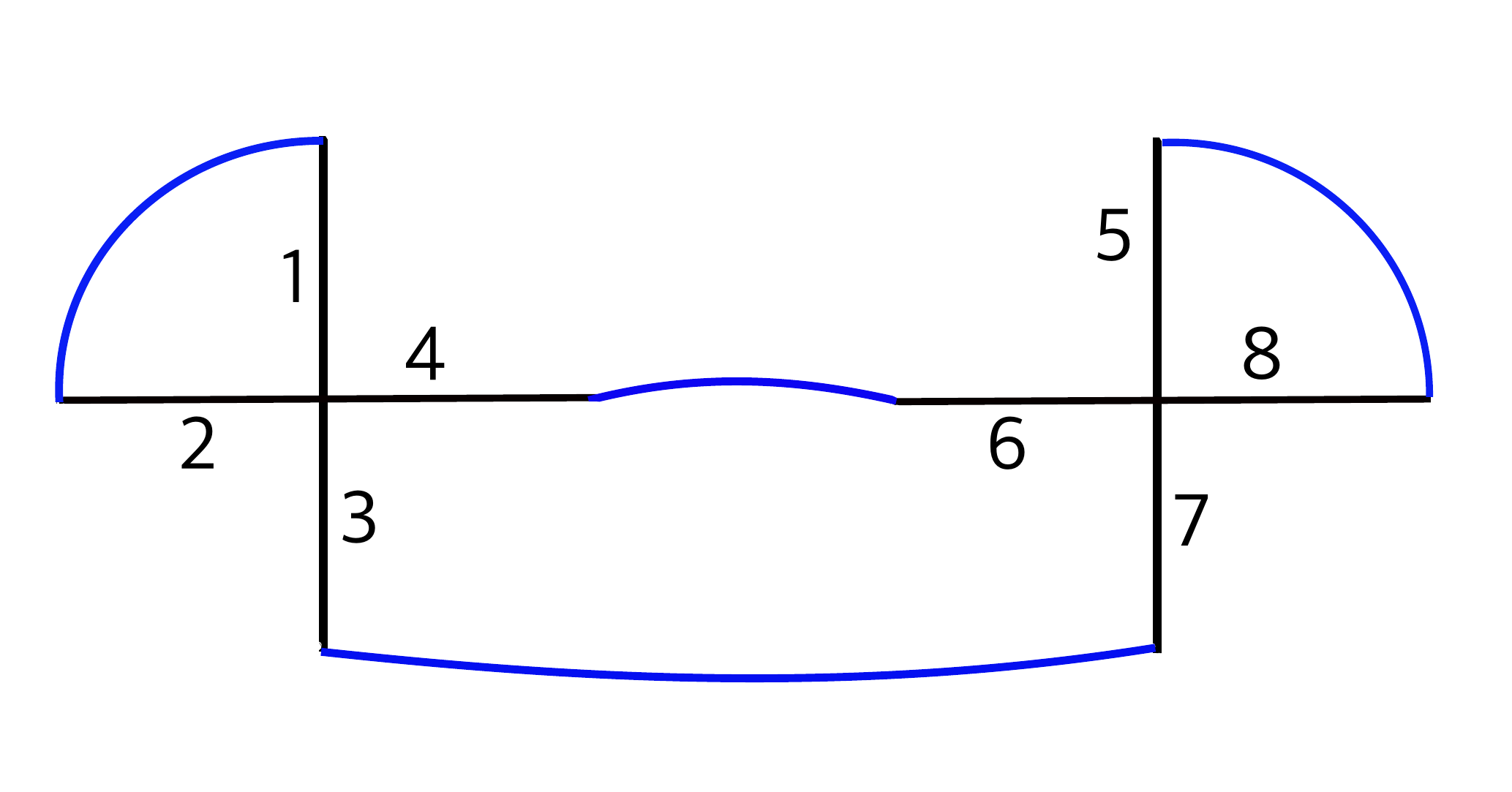}
	\end{subfigure} \hspace{0.3cm}
	\begin{subfigure}{0.2\textwidth}
		\centering
		\includegraphics[width=\textwidth]{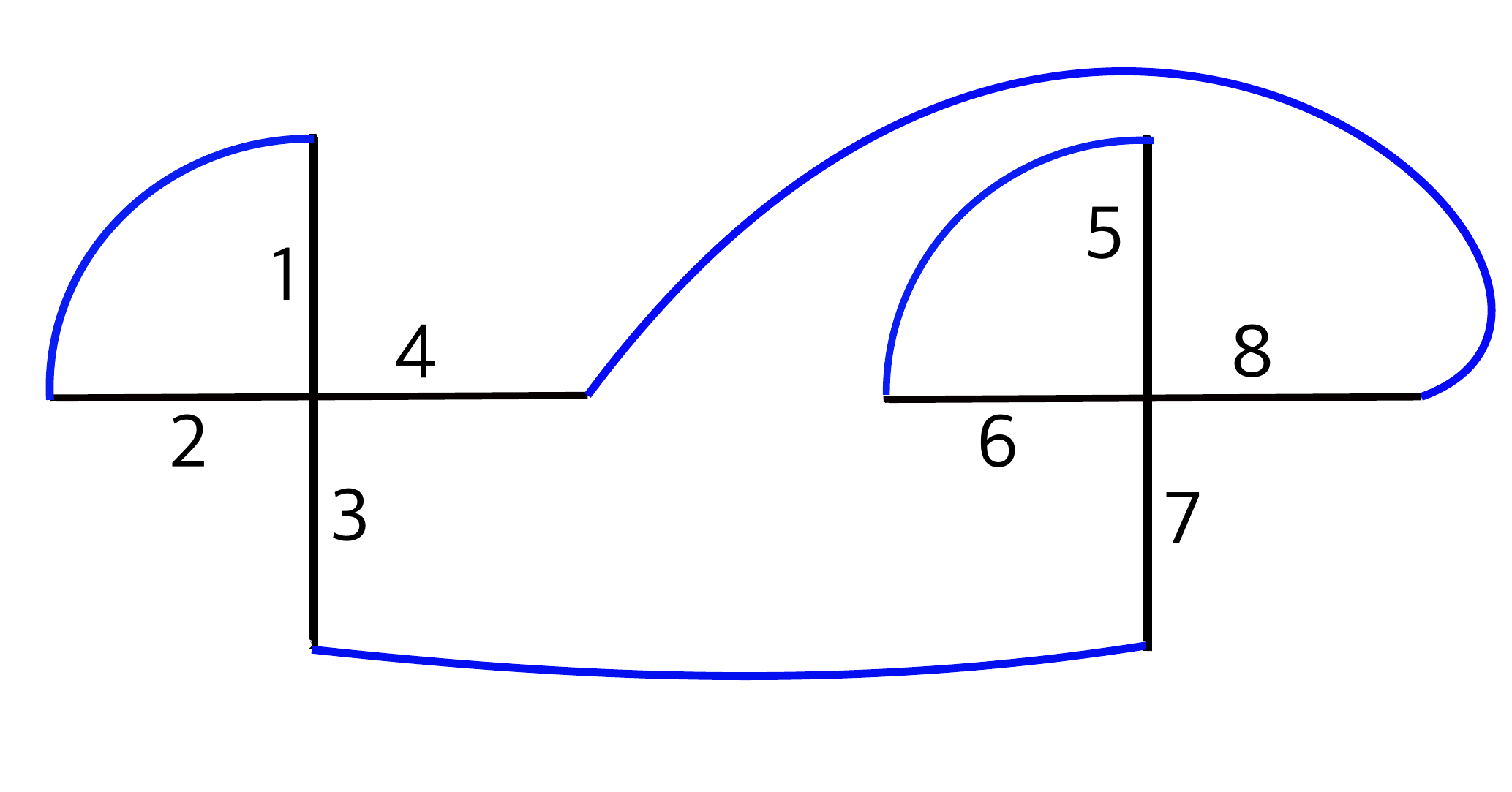}
	\end{subfigure}%
	
	\medskip
	
	\centering
	\begin{subfigure}{0.2\textwidth}
		\centering
		\includegraphics[width=\textwidth]{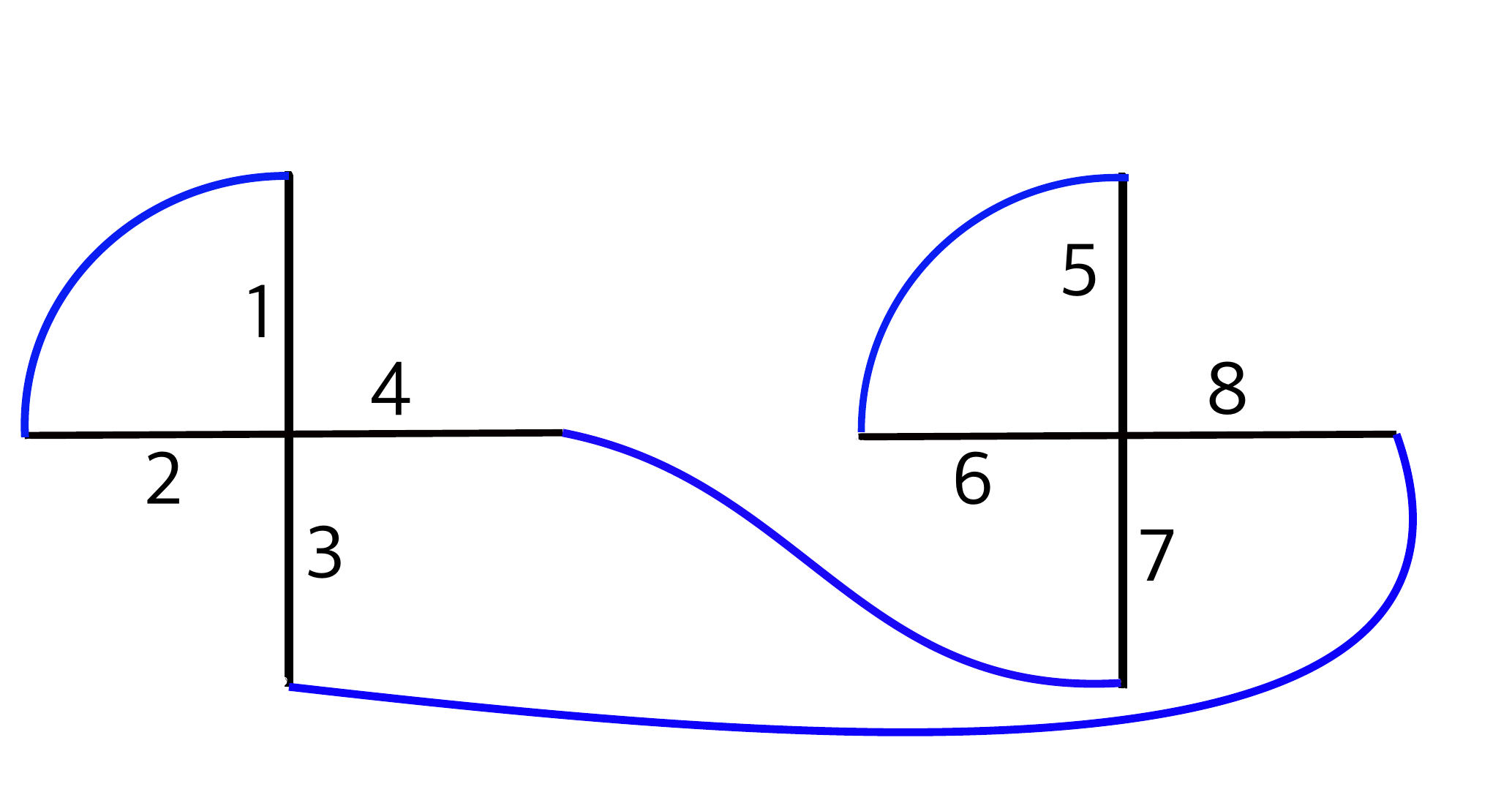}
	\end{subfigure} \hspace{0.3cm}
	\begin{subfigure}{0.2\textwidth}
		\centering
		\includegraphics[width=\textwidth]{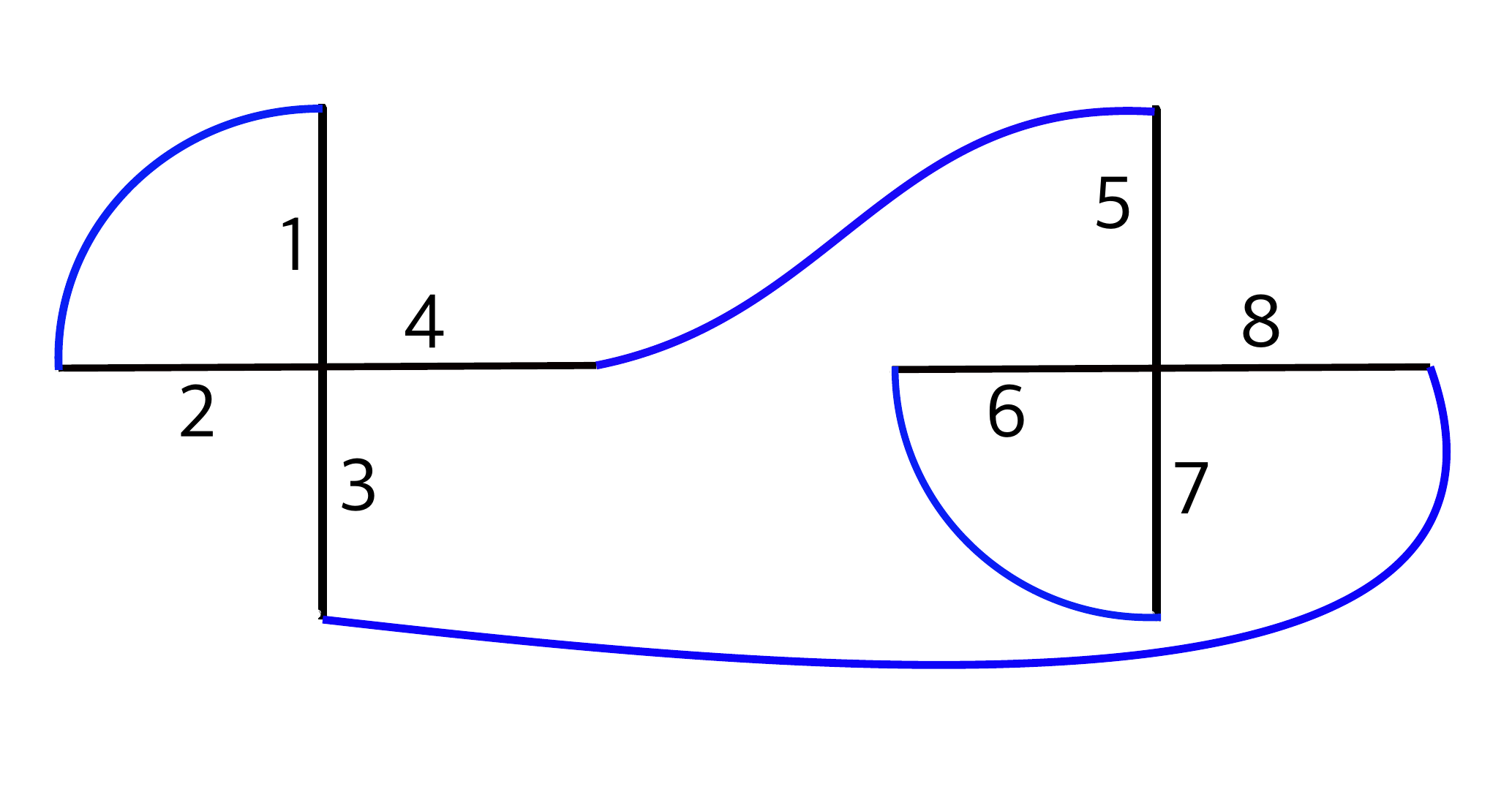}
	\end{subfigure} \hspace{0.3cm}
	\begin{subfigure}{0.2\textwidth}
		\centering
		\includegraphics[width=\textwidth]{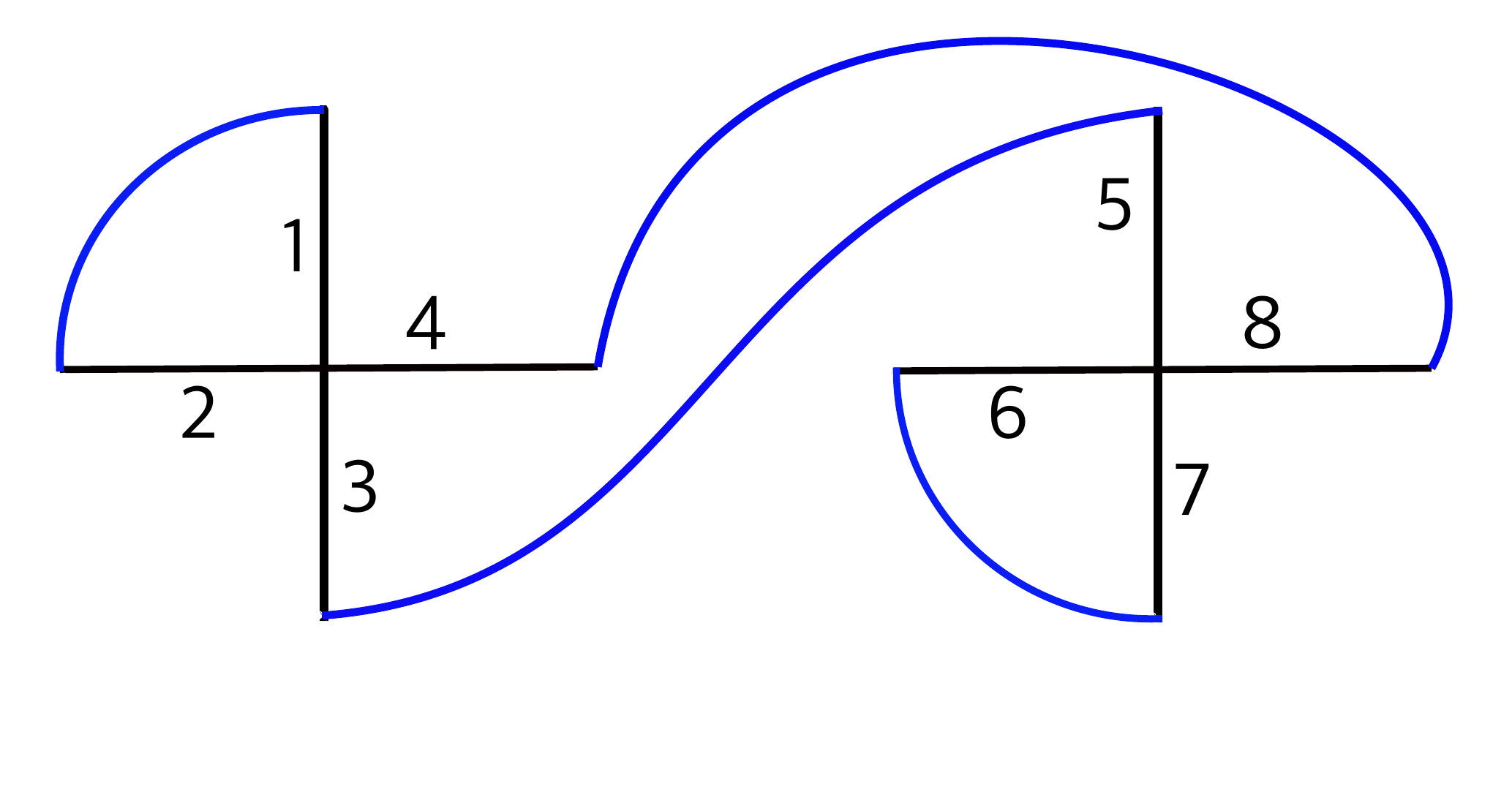}
	\end{subfigure} \hspace{0.3cm}
	\begin{subfigure}{0.2\textwidth}
		\centering
		\includegraphics[width=\textwidth]{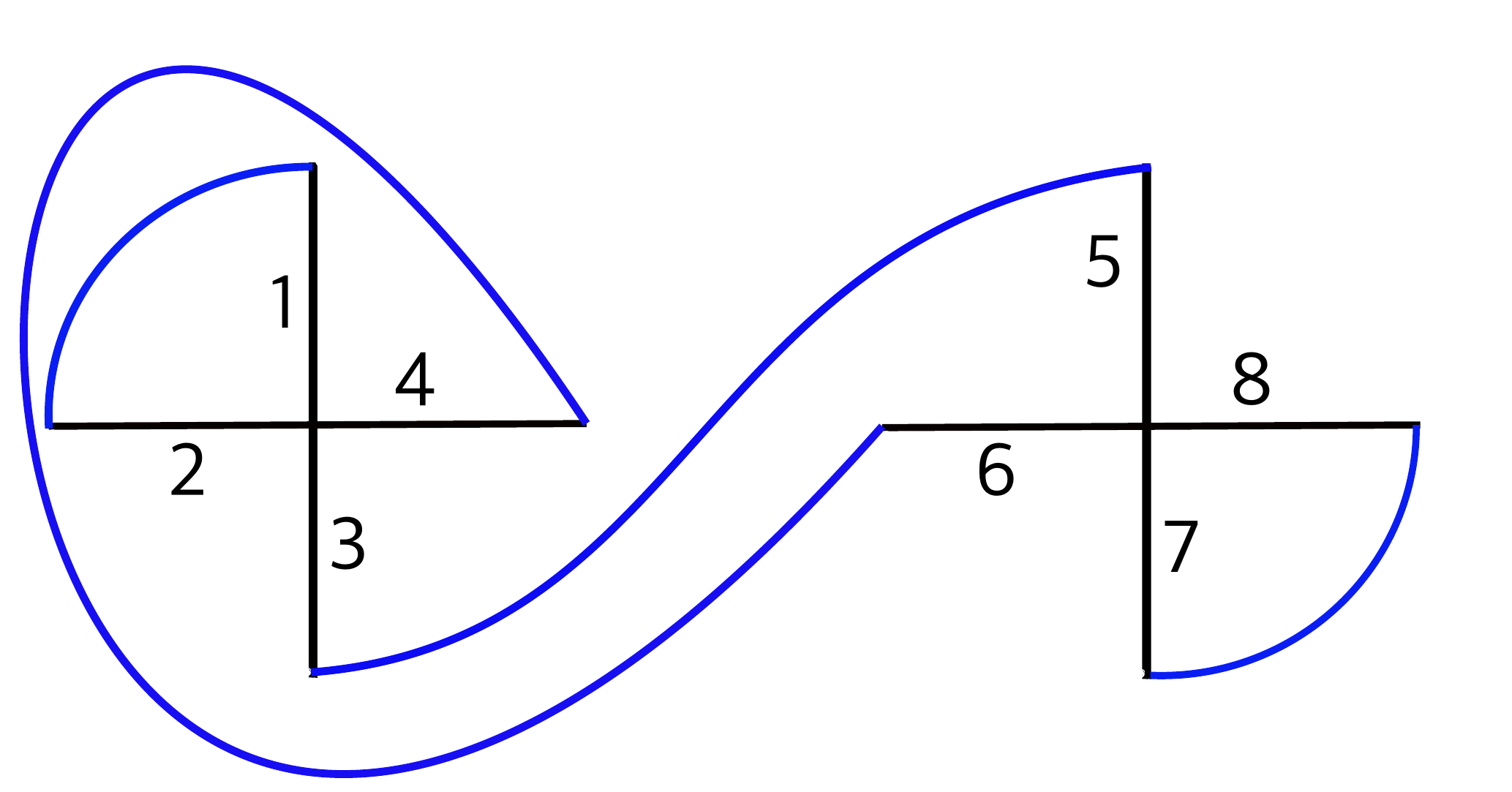}
	\end{subfigure}%
	\caption{All eight labeled connected 4-valent graphs with two vertices, where $e_1$ connects to $e_2$ and realizable on the sphere. Identically, for the case where $e_1$ connects to $e_4$, there are also eight distinct graphs. For the simplicity of the Figures \ref{fig: g=0,1 j=2, e1->e_2}, \ref{fig: g=0,1 j=2, e1->e_6}, \ref{fig: g=1 j=2, e1->e_2}, \ref{fig: g=1 j=2, e1->e_3}, and \ref{fig: g=1 j=2, e1->e_6} an edge $e_k$ will be simply denoted by $k$ on the graphs.}
	\label{fig: g=0,1 j=2, e1->e_2}
\end{figure}

\begin{figure}[h]
	\centering
	\begin{subfigure}{0.2\textwidth}
		\centering
		\includegraphics[width=\textwidth]{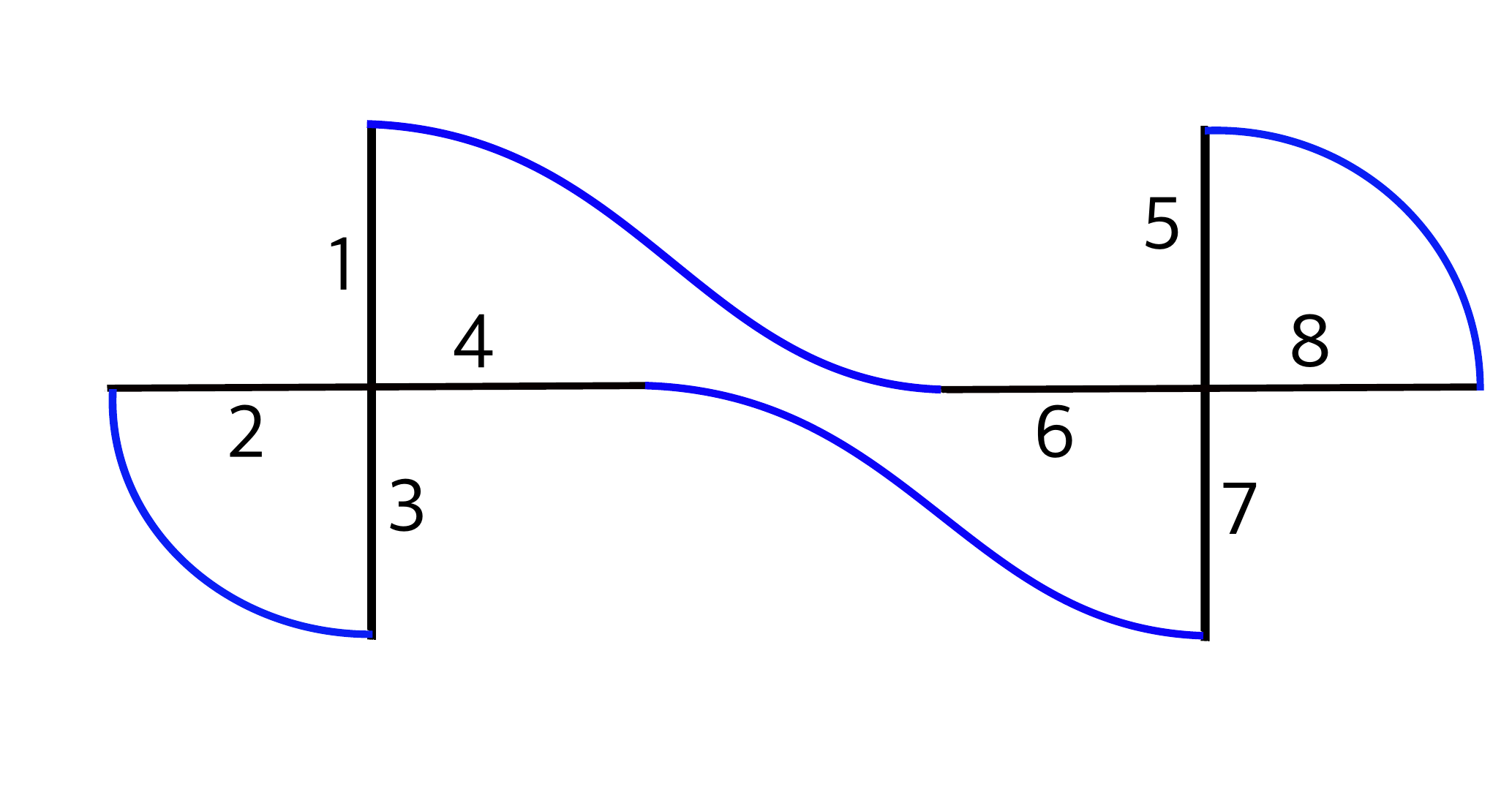}
	\end{subfigure} \hspace{0.5cm}
	\begin{subfigure}{0.2\textwidth}
		\centering
		\includegraphics[width=\textwidth]{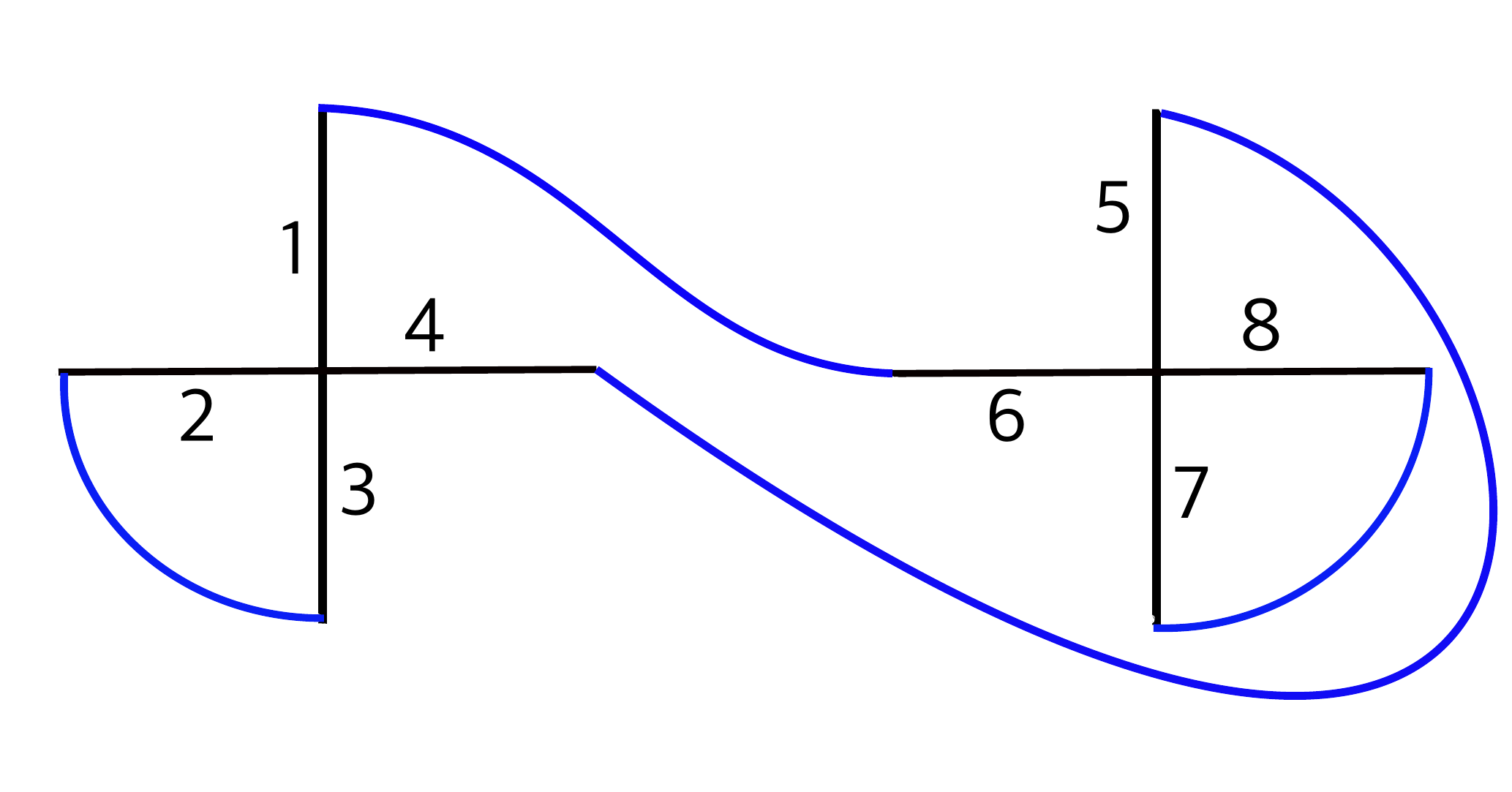}
	\end{subfigure} \hspace{0.5cm}
	\begin{subfigure}{0.2\textwidth}
		\centering
		\includegraphics[width=\textwidth]{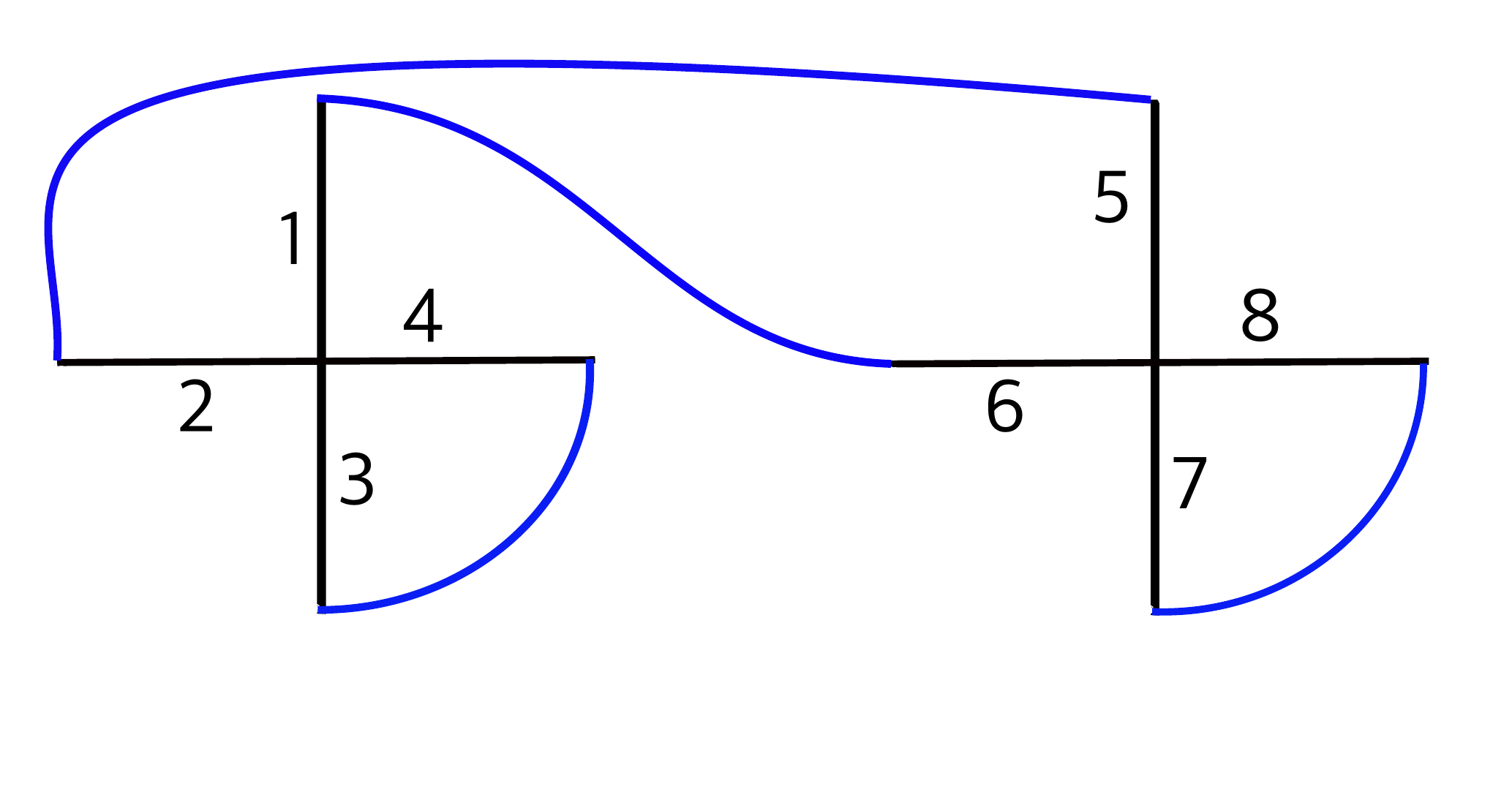}
	\end{subfigure}
	
	\medskip
	
	\centering
	\begin{subfigure}{0.2\textwidth}
		\centering
		\includegraphics[width=\textwidth]{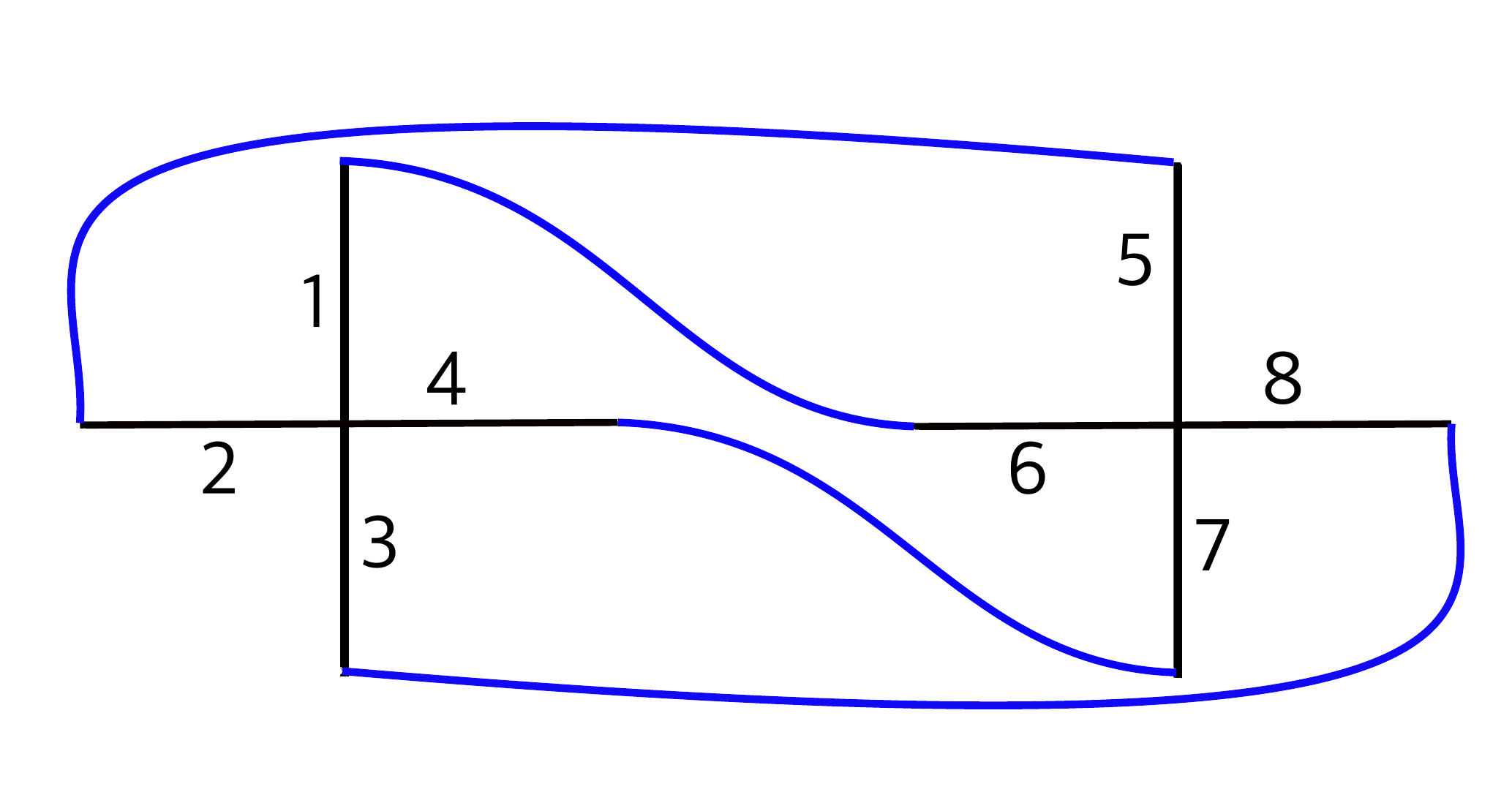}
	\end{subfigure} \hspace{0.5cm}
	\begin{subfigure}{0.2\textwidth}
		\centering
		\includegraphics[width=\textwidth]{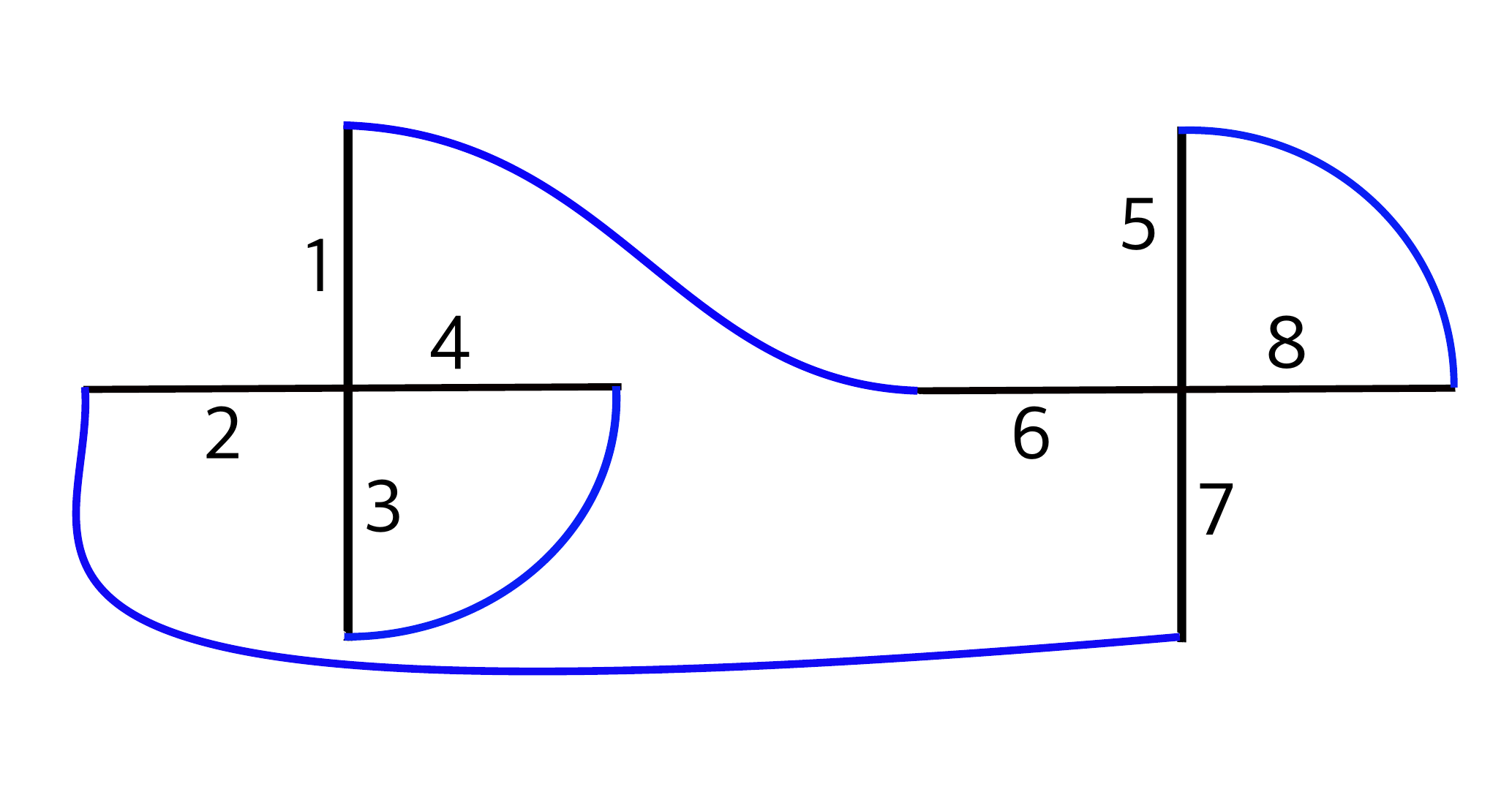}
	\end{subfigure} 
	\caption{All five labeled connected 4-valent graphs with two vertices, where $e_1$ connects to $e_6$ and realizable on the sphere. Identically, for each of the cases where $e_1$ connects to $e_5$, $e_7$, or $e_8$, there are also five distinct graphs. this Figure together with Figure \ref{fig: g=0,1 j=2, e1->e_2} confirm that $\mathscr{N}_2(0)=2\cdot8+4\cdot5=36$.}
	\label{fig: g=0,1 j=2, e1->e_6}
\end{figure}

Now we try to justify that $\mathscr{N}_2(1)=60$. Notice that there are three distinct graphs if one enforces two pairwise connections. Since we have already realized two graphs with $e_1 \leftrightarrow e_2$ and $e_3 \leftrightarrow e_6$ on the sphere (see the first two graphs in Figure \ref{fig: g=0,1 j=2, e1->e_2}) there is only one remaining graph with  $e_1 \leftrightarrow e_2$ and $e_3 \leftrightarrow e_6$ to be realized on the torus. Similarly, there is one graph left to be realized on the torus with the specifications: 1) $e_1 \leftrightarrow e_2$ and $e_3 \leftrightarrow e_7$ , 2) $e_1 \leftrightarrow e_2$ and $e_3 \leftrightarrow e_8$ , and 3) $e_1 \leftrightarrow e_2$ and $e_3 \leftrightarrow e_5$. Figure \ref{fig: g=1 j=2, e1->e_2} shows all graphs with  $e_1 \leftrightarrow e_2$ which can be realized on the torus but not on the sphere.

\begin{figure}[!htp]
	\centering
	\begin{subfigure}{0.24\textwidth}
		\centering
		\includegraphics[width=\textwidth]{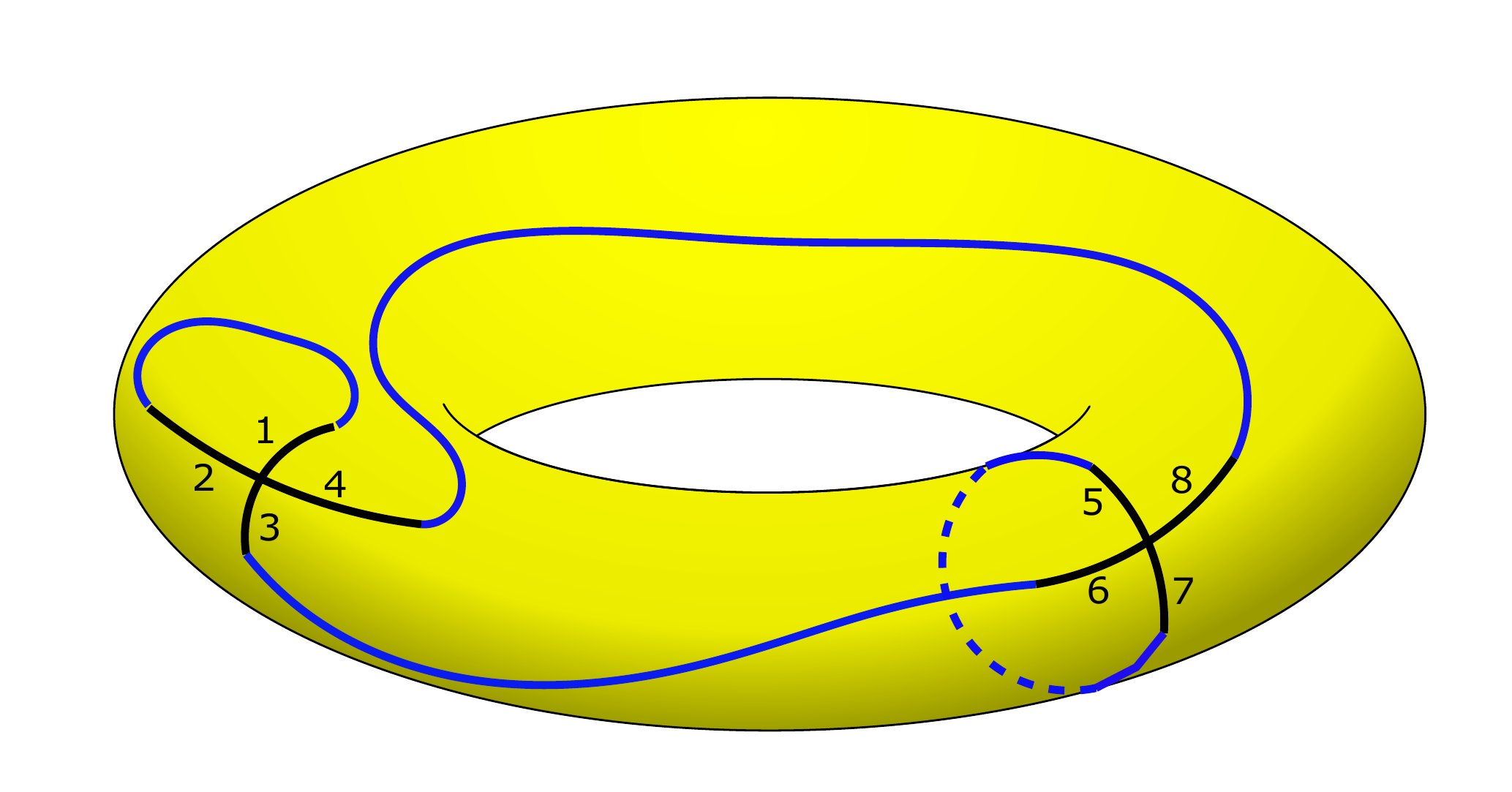}
	\end{subfigure} \hfil
	\begin{subfigure}{0.24\textwidth}
		\centering
		\includegraphics[width=\textwidth]{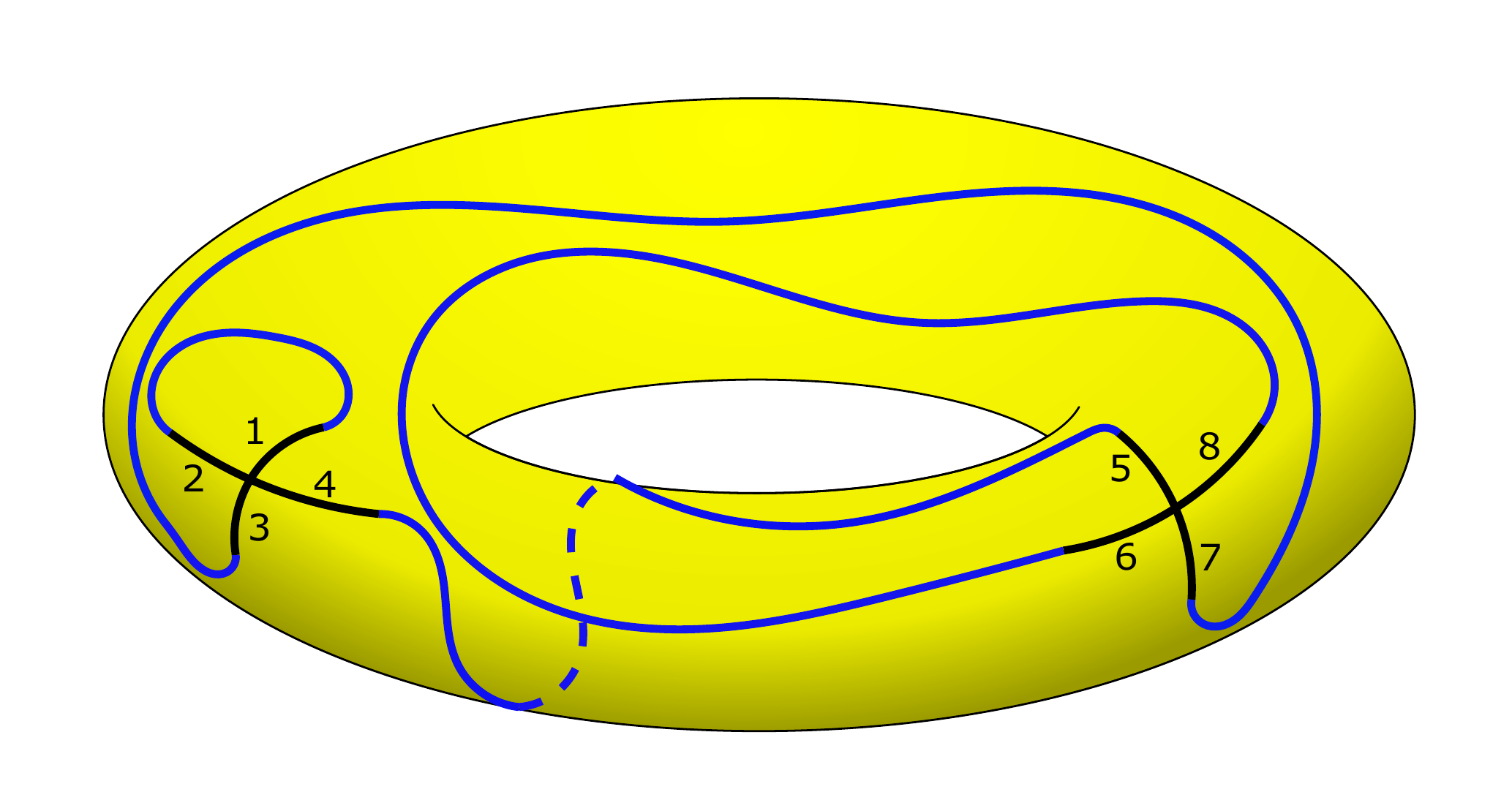}
	\end{subfigure} \hfil
	\begin{subfigure}{0.24\textwidth}
		\centering
		\includegraphics[width=\textwidth]{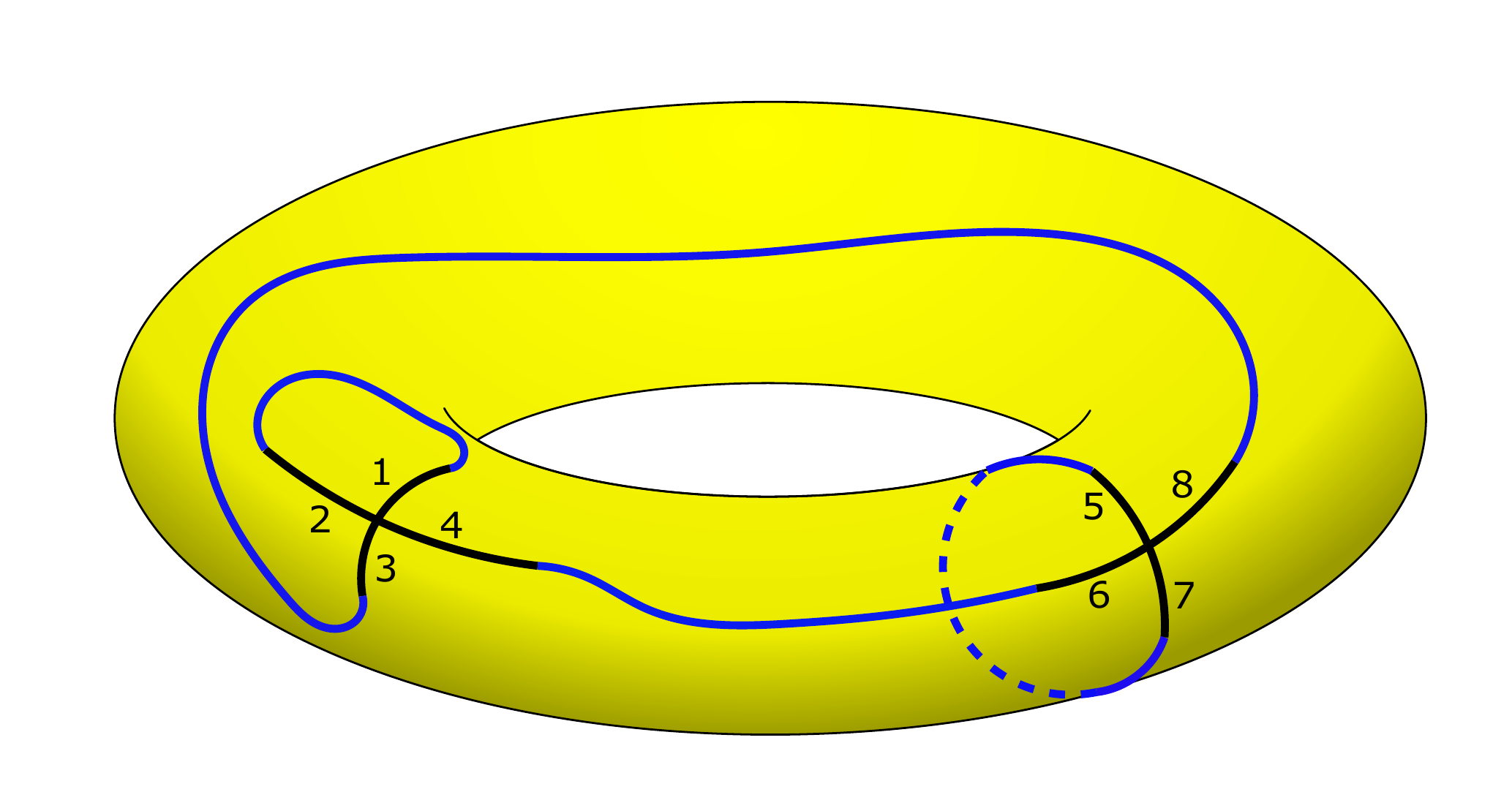}
	\end{subfigure}
	\hfil
	\begin{subfigure}{0.24\textwidth}
		\centering
		\includegraphics[width=\textwidth]{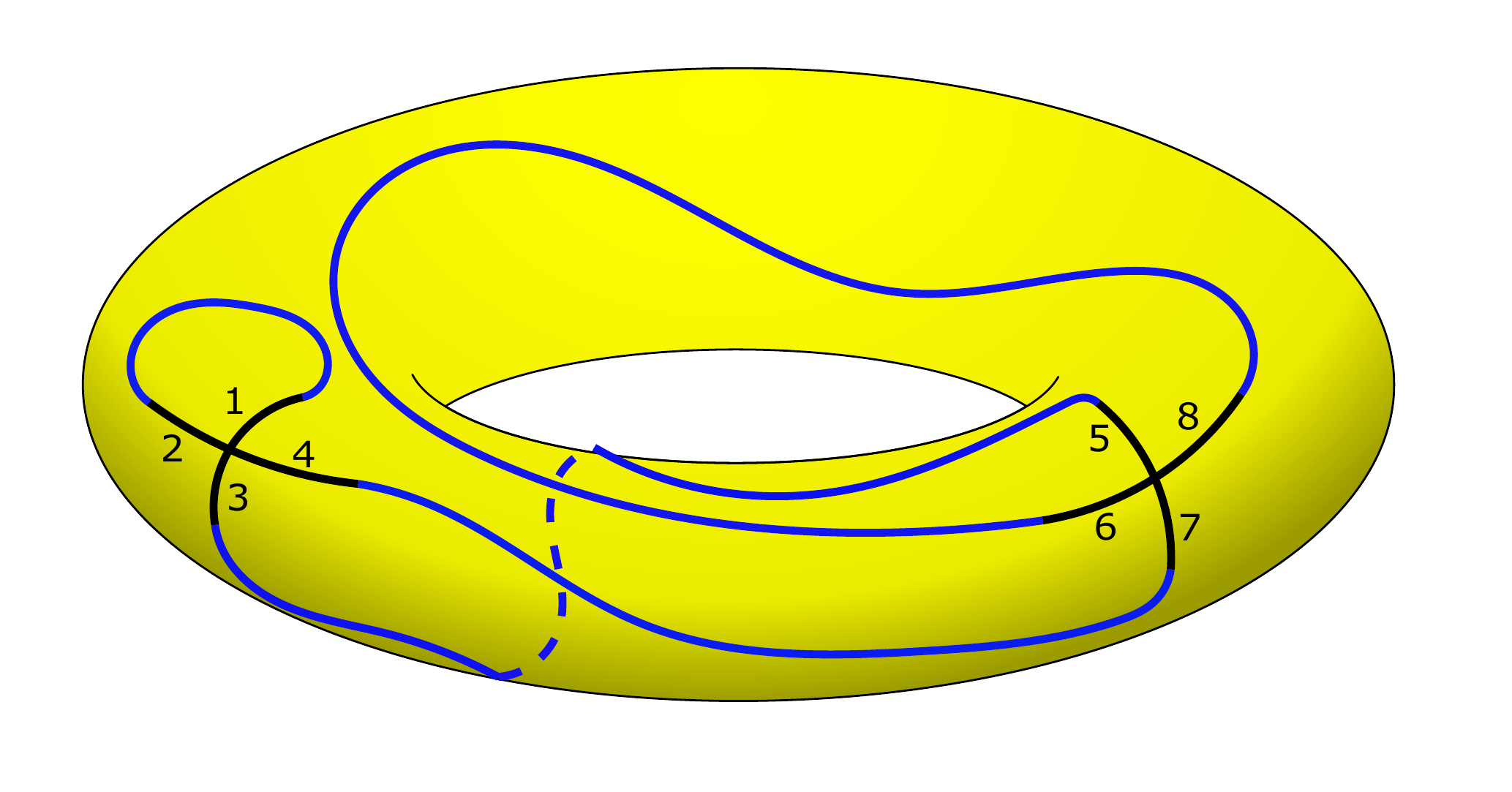}
	\end{subfigure}
	\caption{All four labeled connected 4-valent graphs with two vertices, where $e_1$ connects to $e_2$ which are not realizable on the sphere(compare with Figure \ref{fig: g=0,1 j=2, e1->e_2}). Identically, for the case where $e_1$ connects to $e_4$, there are also four distinct graphs.}
	\label{fig: g=1 j=2, e1->e_2}
\end{figure}

In Figure \ref{fig: g=0,1 j=2, e1->e_6} we have already realized two graphs with $e_1 \leftrightarrow e_6$ \& $e_2 \leftrightarrow e_3$, two graphs with $e_1 \leftrightarrow e_6$  \& $e_2 \leftrightarrow e_5$, and only one graph with $e_1 \leftrightarrow e_6$  \& $e_2 \leftrightarrow e_7$. So there remains one graph with $e_1 \leftrightarrow e_6$ \& $e_2 \leftrightarrow e_3$, one graph with $e_1 \leftrightarrow e_6$  \& $e_2 \leftrightarrow e_5$, and two graphs with $e_1 \leftrightarrow e_6$  \& $e_2 \leftrightarrow e_7$ to be realized on the torus (See the first four graphs in Figure \ref{fig: g=1 j=2, e1->e_6}). Although no graphs with $e_1 \leftrightarrow e_6$  \& $e_2 \leftrightarrow e_8$, and no graphs with $e_1 \leftrightarrow e_6$  \& $e_2 \leftrightarrow e_4$ could be realized on the sphere, one can check that all six such graphs could indeed be realized on the torus as shown in the last six graphs in Figure \ref{fig: g=1 j=2, e1->e_6}. In total, this gives 10 distinct graphs with $e_1 \leftrightarrow e_6$ which can not be realized on the sphere but can be realized on the torus. Identically, for each one of the cases $e_1 \leftrightarrow e_5$, $e_1 \leftrightarrow e_7$, and $e_1 \leftrightarrow e_8$ there also exist 10 distinct graphs that can not be realized on the sphere but can be realized on the torus. Thus far we have obtained $2\cdot4 + 4 \cdot 10 =48$ distinct graphs with two vertices that can not be realized on the sphere but can be realized on the torus based on Figures \ref{fig: g=1 j=2, e1->e_2} and \ref{fig: g=1 j=2, e1->e_6}. The only remaining case to focus on is the number of graphs with two vertices and $e_1 \leftrightarrow e_3$ realizable on the torus (notice that $e_1 \leftrightarrow e_3$ is not possible on the sphere). But this is now obvious as we describe now: Fix $e_1 \leftrightarrow e_3$ and any of the four possible destinations $e_5, e_6, e_7,$ or $e_8$ (Notice that $e_4$ can not be a destination for $e_2$ as it renders the graph disconnected). As described earlier, there are three distinct graphs with two enforced pairwise edge connections. Thus there exists $4 \cdot 3 =12$ distinct graphs with two vertices and $e_1 \leftrightarrow e_3$ which can not be realized on the sphere but can be realized on the torus. This finishes our justification for $\mathscr{N}_2(1)=48+12=60$.

\begin{figure}[!htbp]
	\centering
	\begin{subfigure}{0.24\textwidth}
		\centering
		\includegraphics[width=\textwidth]{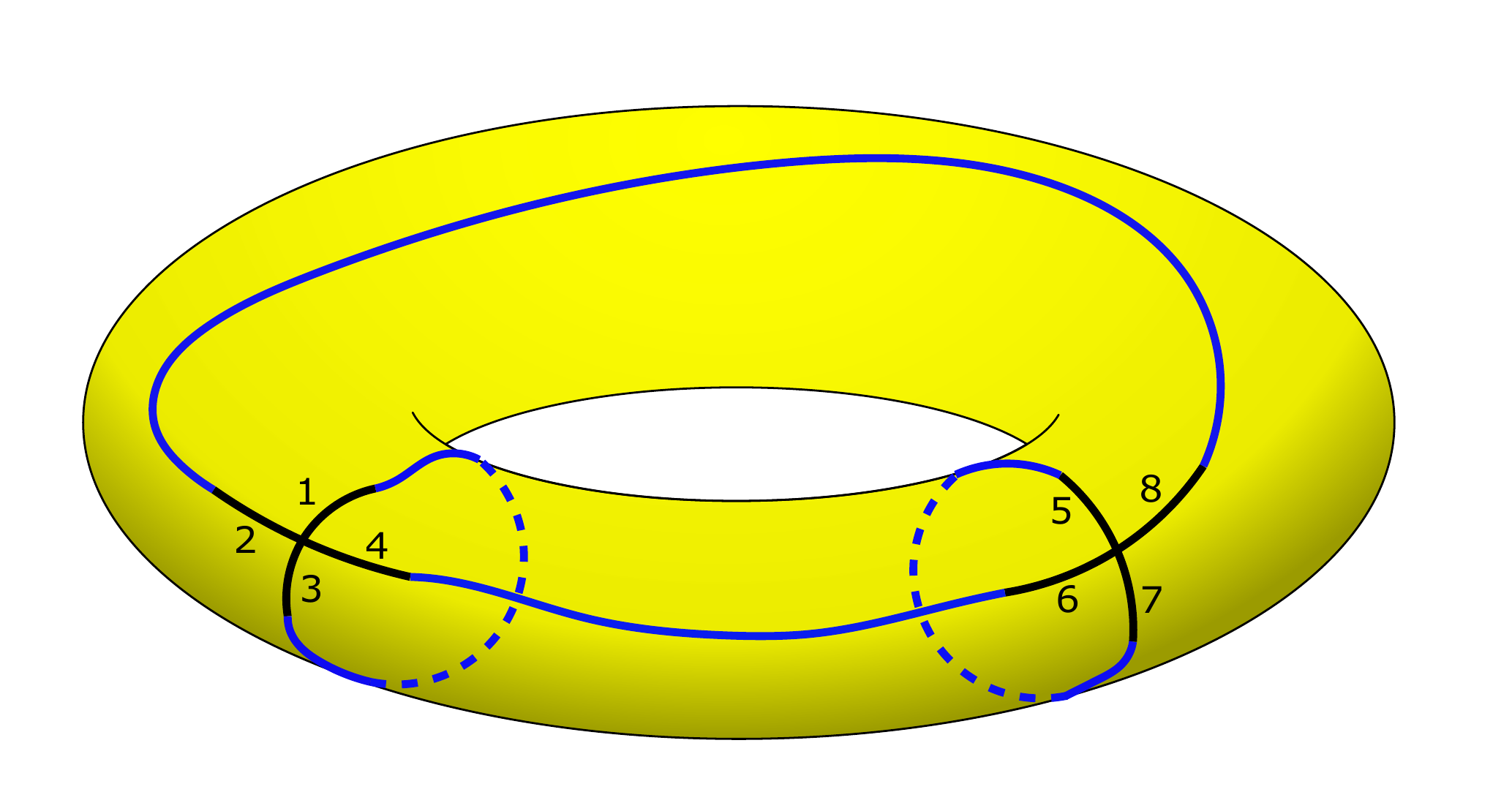}
	\end{subfigure} \hfil
	\begin{subfigure}{0.24\textwidth}
		\centering
		\includegraphics[width=\textwidth]{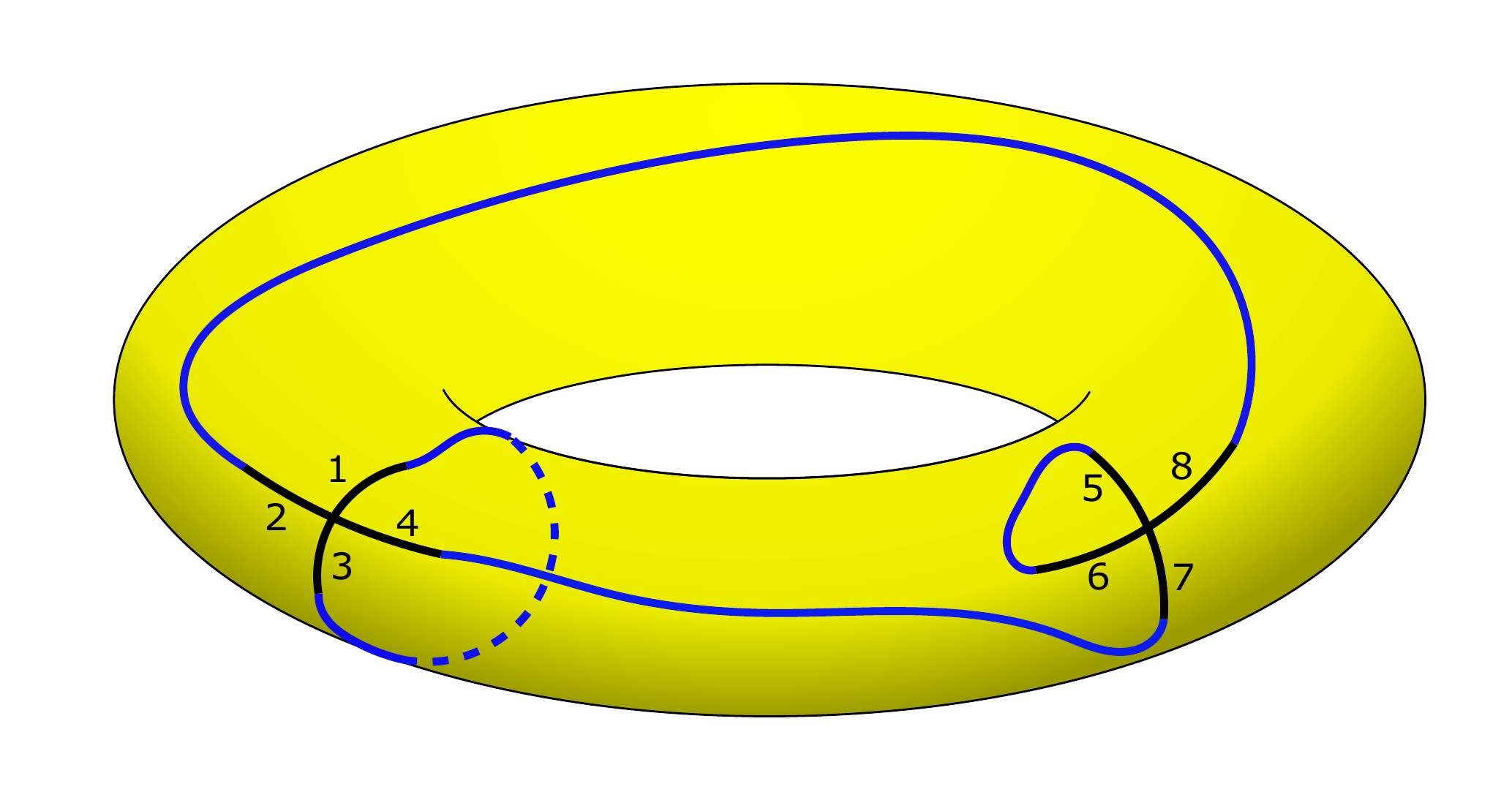}
	\end{subfigure} \hfil
	\begin{subfigure}{0.24\textwidth}
		\centering
		\includegraphics[width=\textwidth]{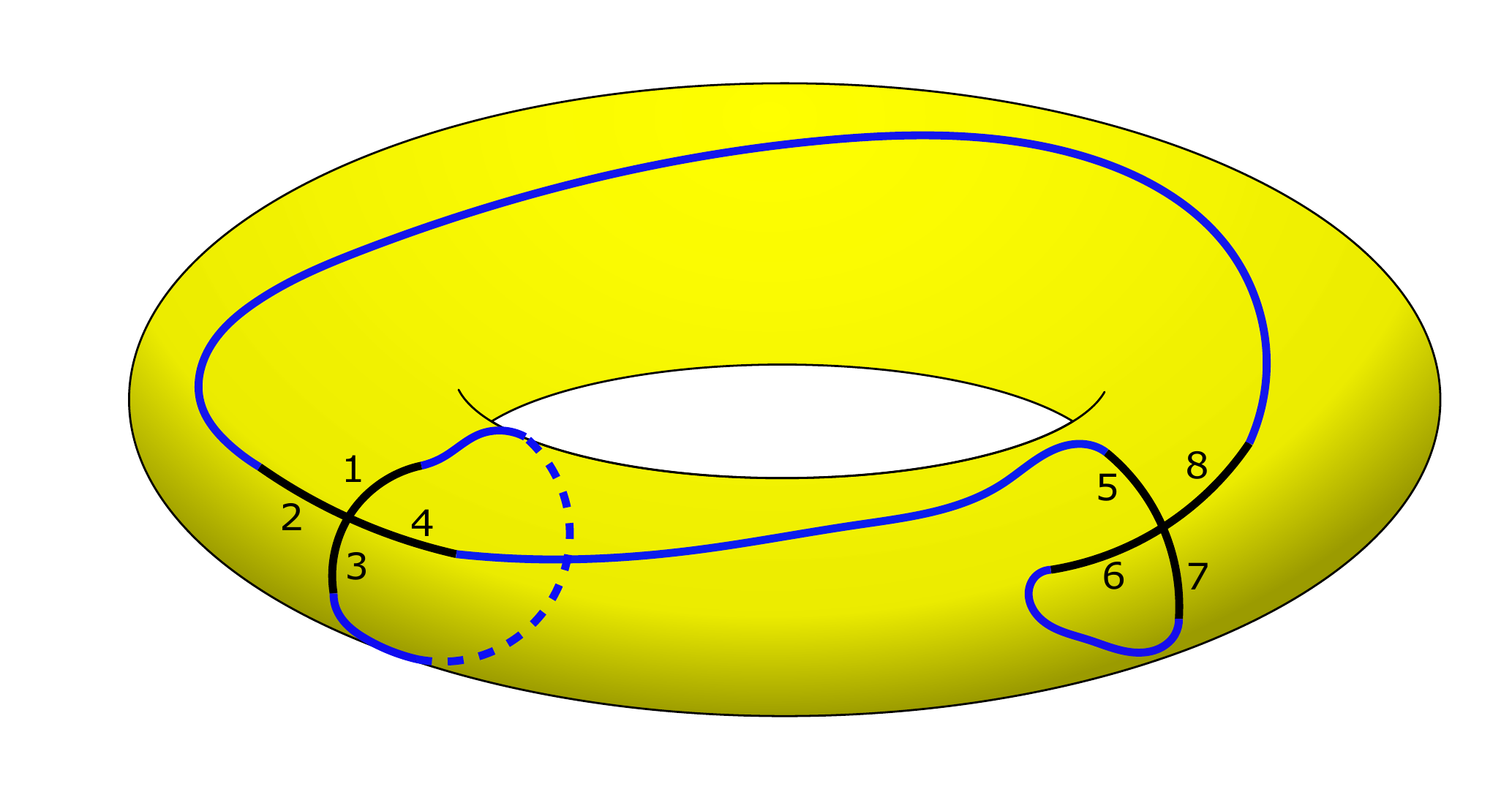}
	\end{subfigure}
	\caption{All three labeled connected 4-valent graphs with two vertices, where $e_1 \leftrightarrow e_3$ \& $e_2 \leftrightarrow e_8$. Identically, for each of the cases $e_1 \leftrightarrow e_3$ \& $e_2 \leftrightarrow e_5$, $e_1 \leftrightarrow e_3$ \& $e_2 \leftrightarrow e_7$, and $e_1 \leftrightarrow e_3$ \& $e_2 \leftrightarrow e_6$ there are three distinct graphs. Thus there exists $4 \cdot 3 =12$ distinct graphs with two vertices and $e_1 \leftrightarrow e_3$ realizable on the torus.}
	\label{fig: g=1 j=2, e1->e_3}
\end{figure}

\begin{figure}[!htbp]
	\centering
	\begin{subfigure}{0.24\textwidth}
		\centering
		\includegraphics[width=\textwidth]{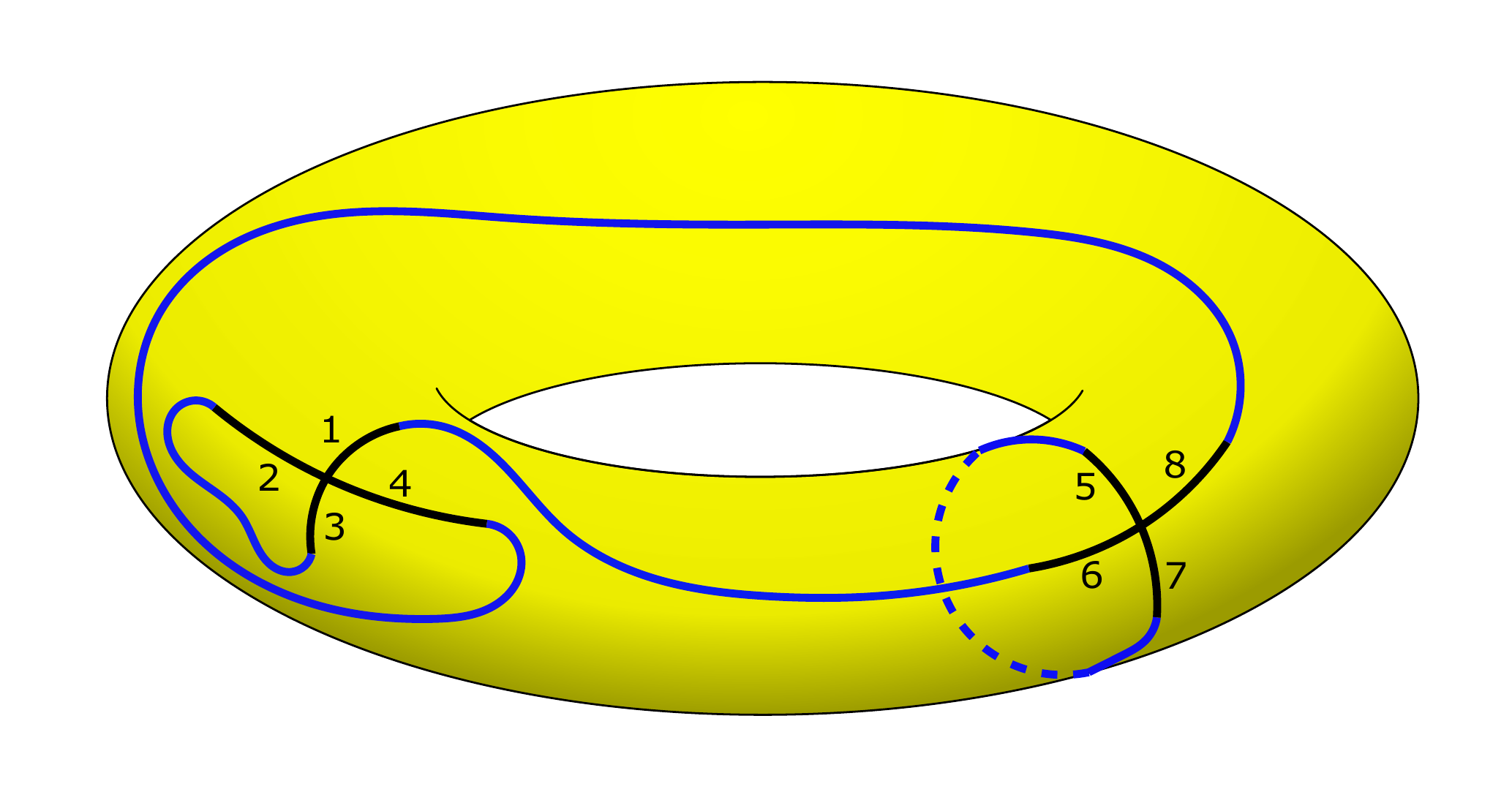}
	\end{subfigure} \hfil
	\begin{subfigure}{0.24\textwidth}
		\centering
		\includegraphics[width=\textwidth]{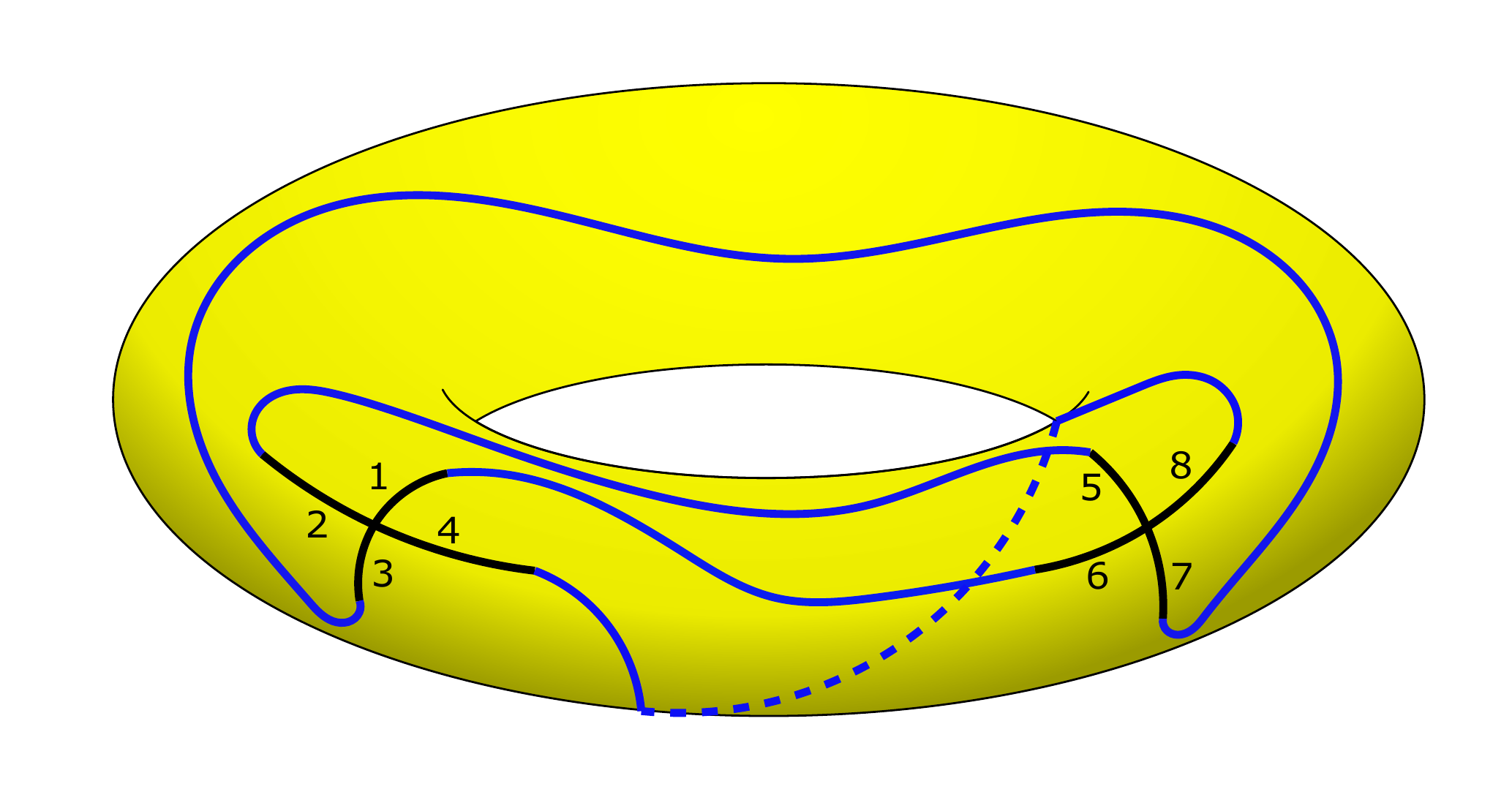}
	\end{subfigure} \hfil
	\begin{subfigure}{0.24\textwidth}
		\centering
		\includegraphics[width=\textwidth]{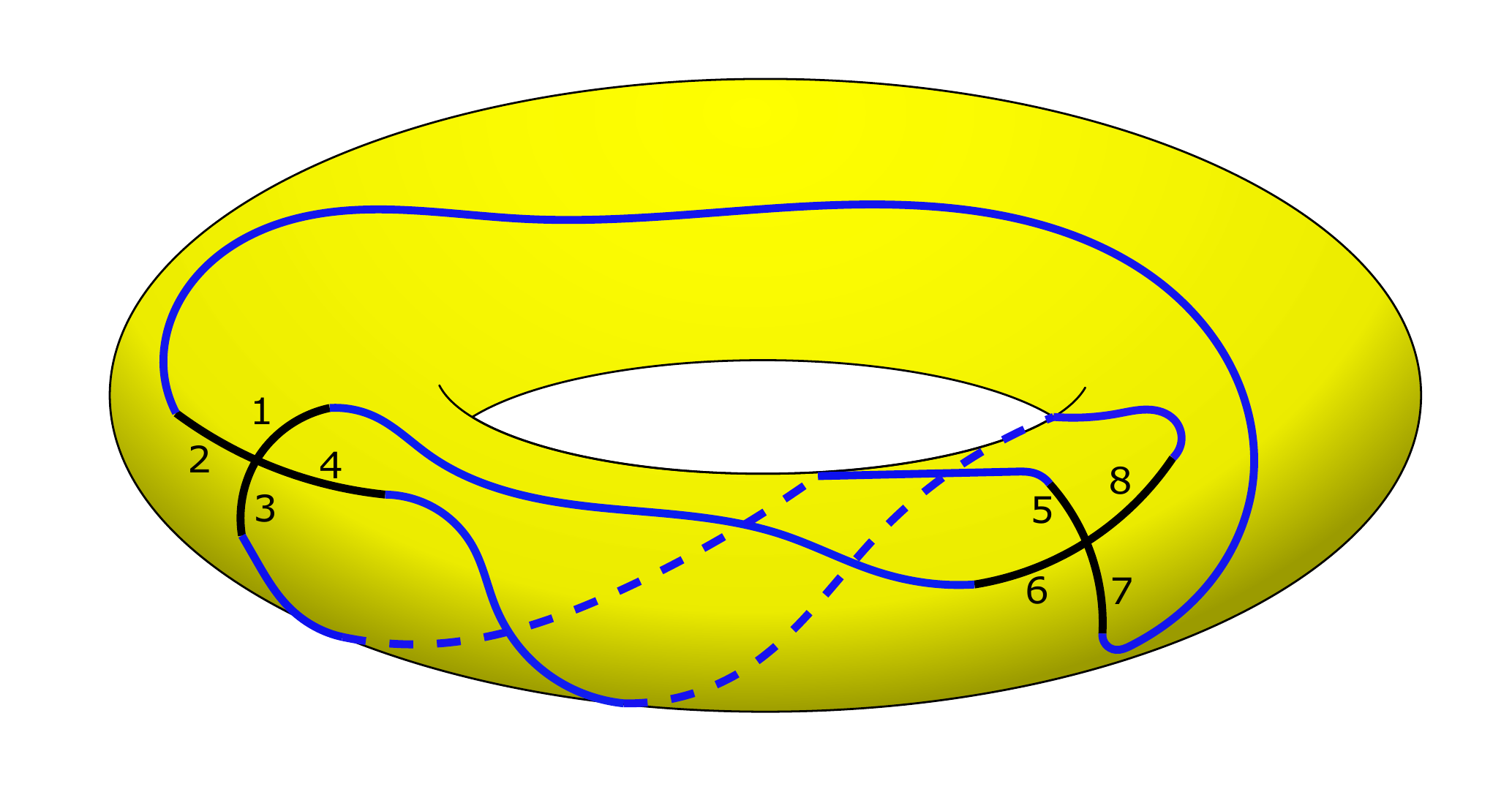}
	\end{subfigure}
	\hfil
	\begin{subfigure}{0.24\textwidth}
		\centering
		\includegraphics[width=\textwidth]{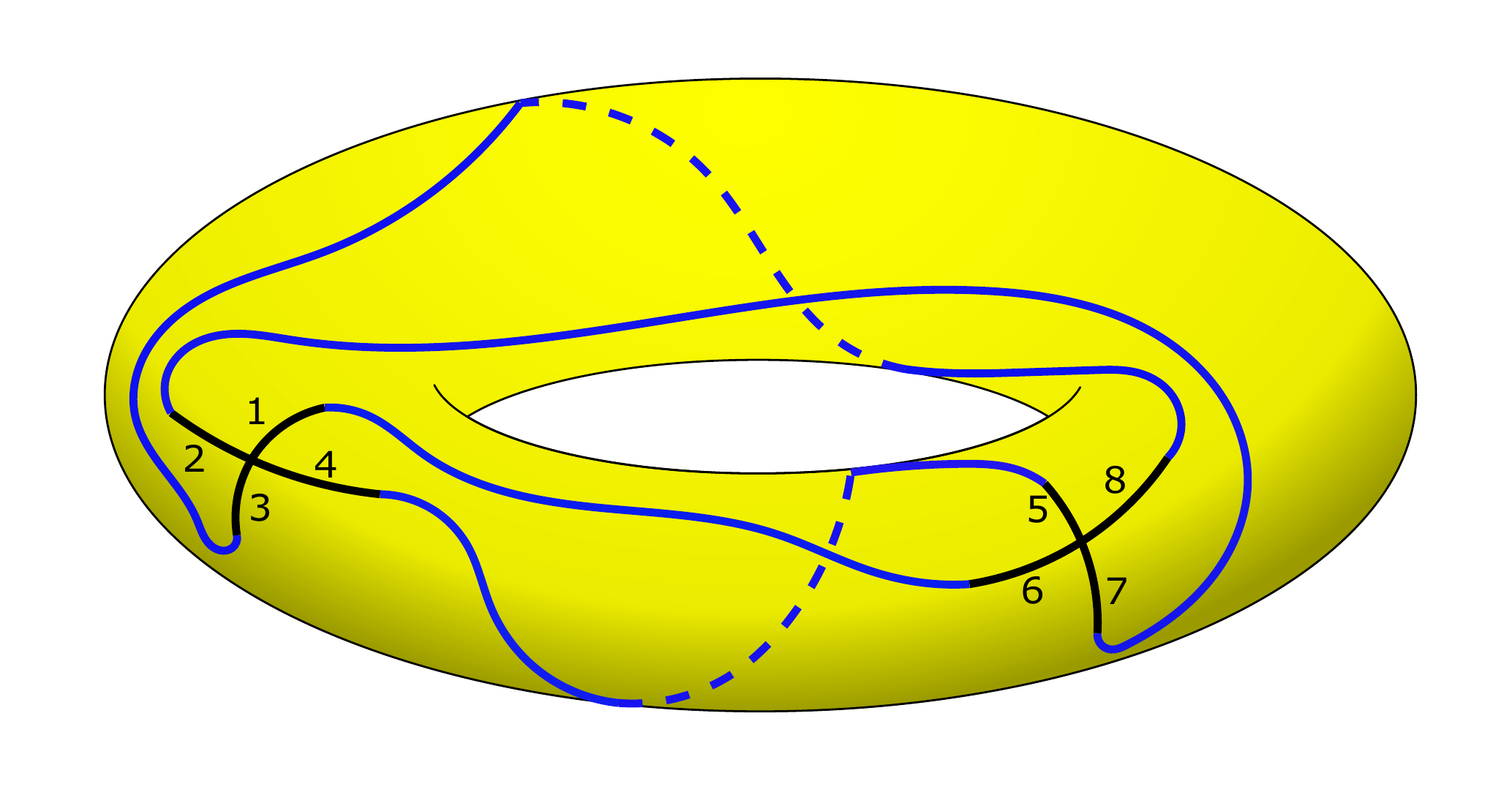}
	\end{subfigure}
	
	
	\centering
	\begin{subfigure}{0.24\textwidth}
		\centering
		\includegraphics[width=\textwidth]{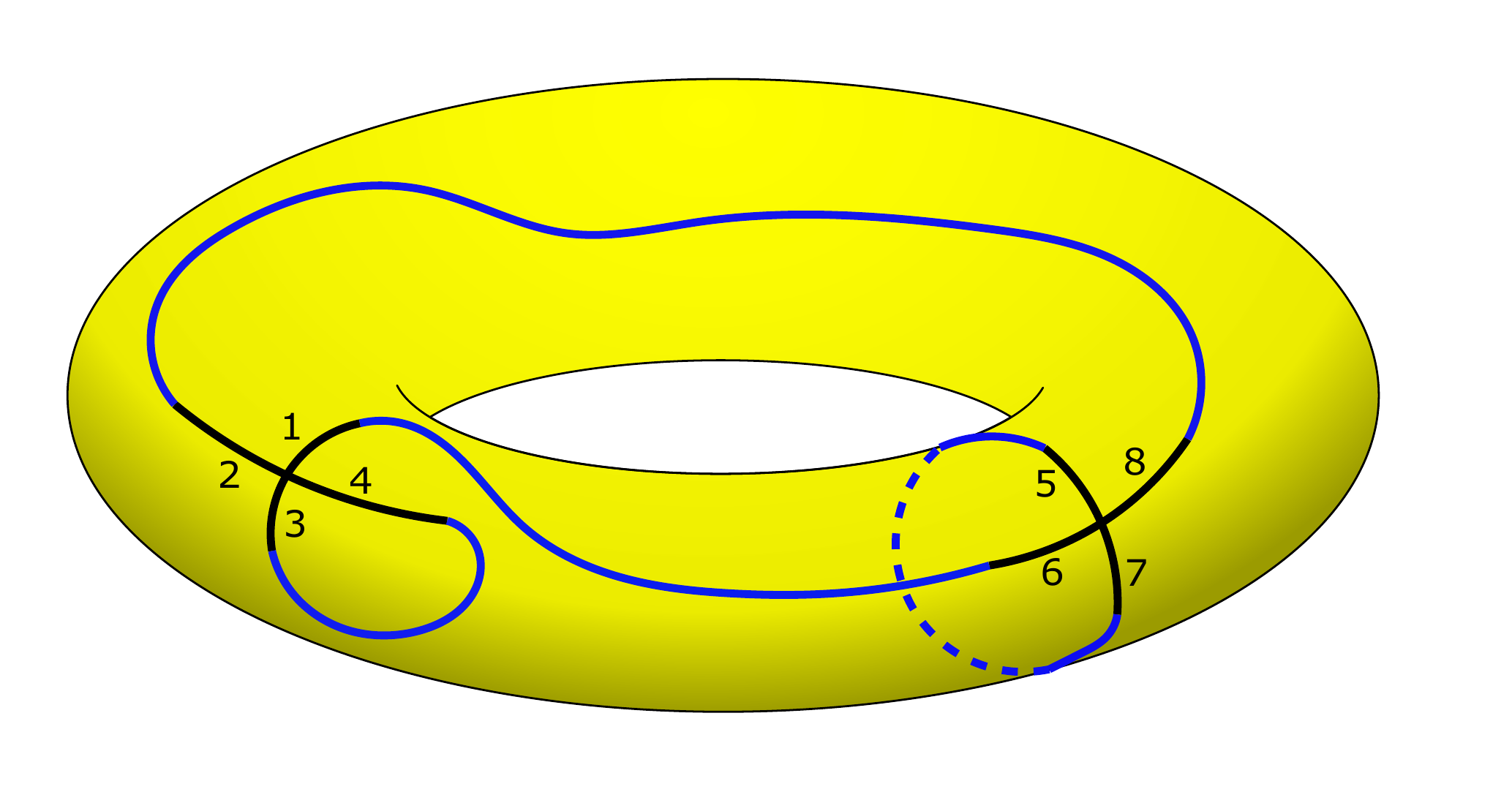}
	\end{subfigure} \hfil
	\begin{subfigure}{0.24\textwidth}
		\centering
		\includegraphics[width=\textwidth]{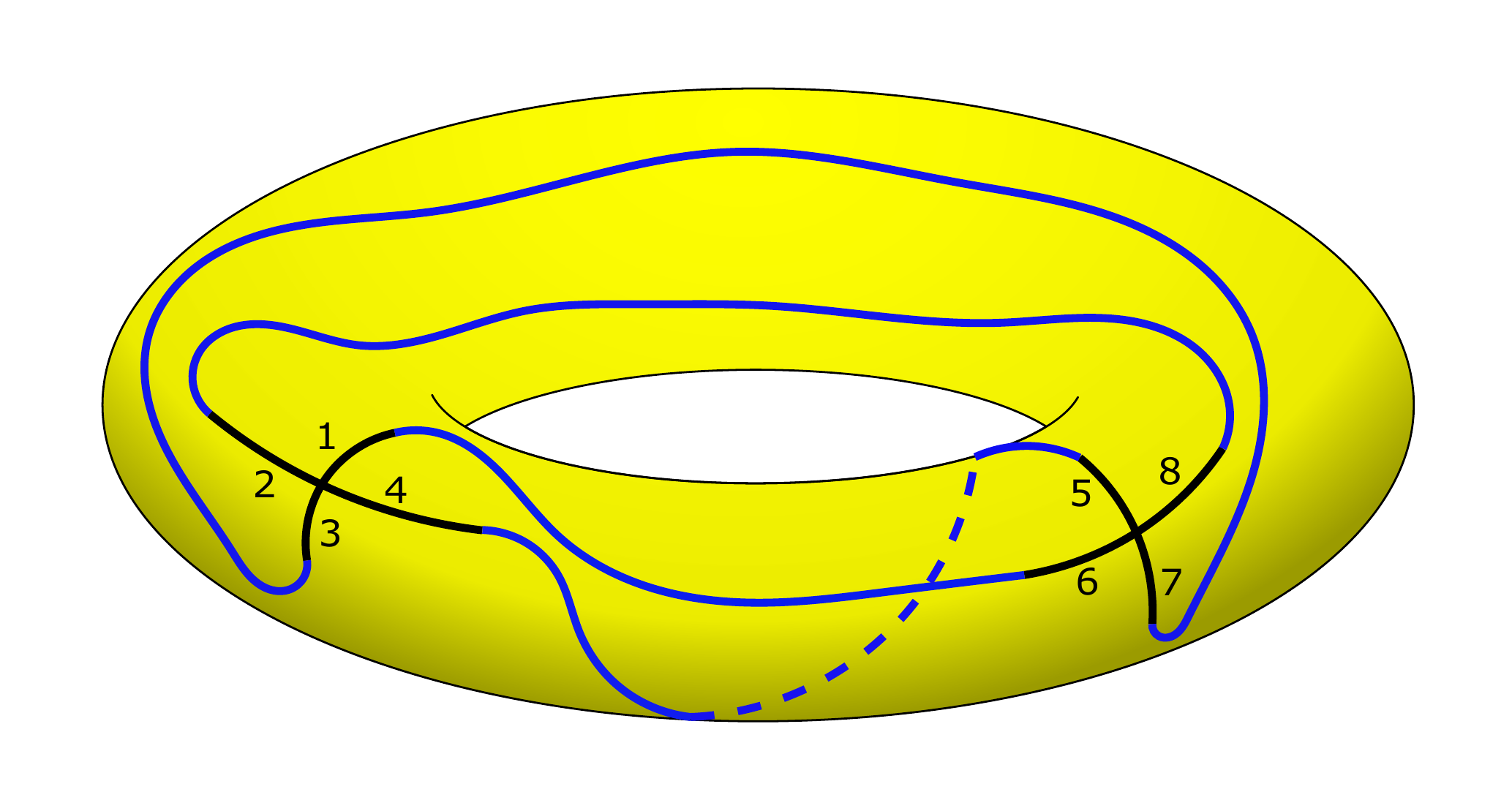}
	\end{subfigure} \hfil
	\begin{subfigure}{0.24\textwidth}
		\centering
		\includegraphics[width=\textwidth]{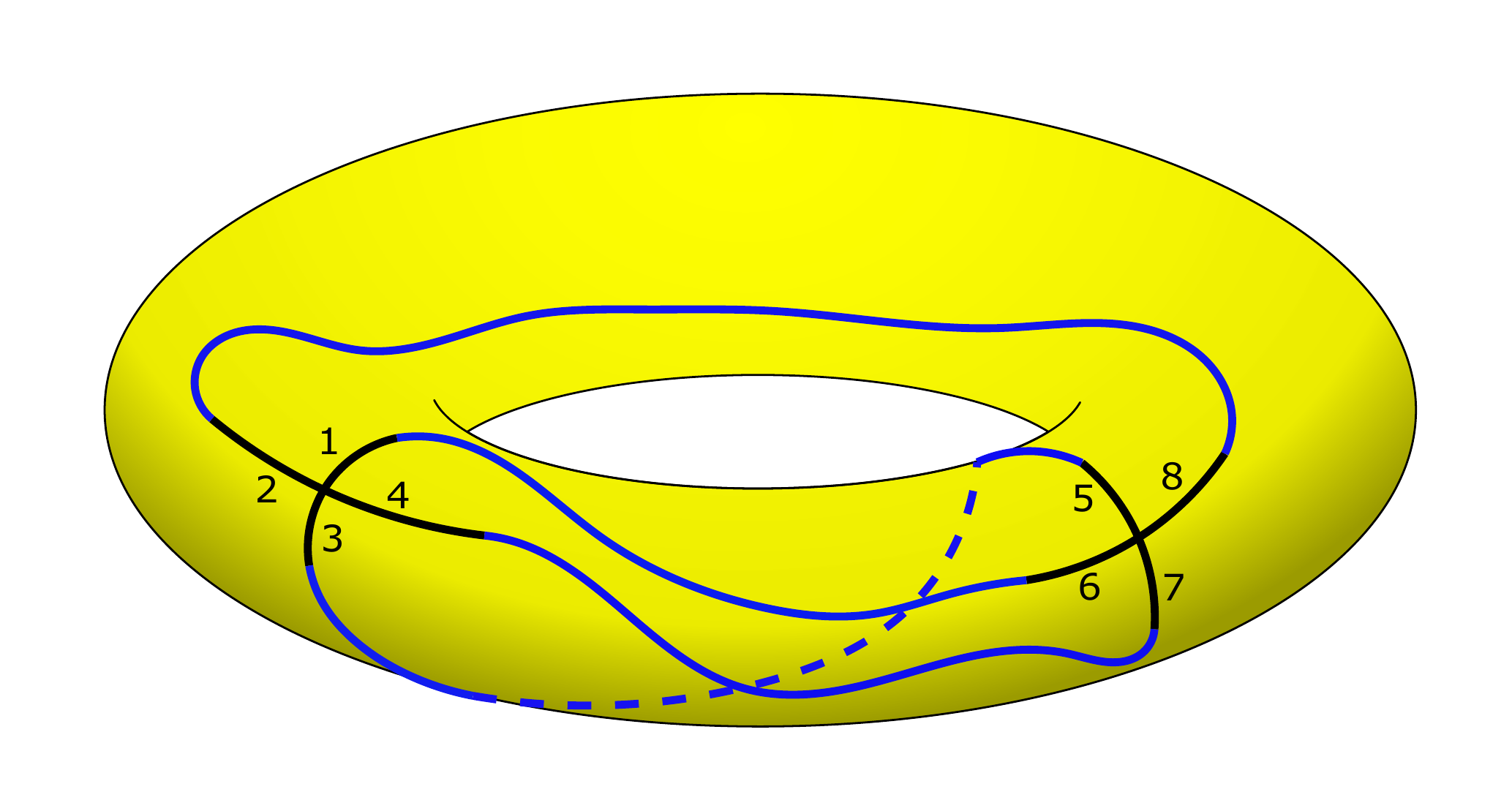}
	\end{subfigure}
	\hfil
	\begin{subfigure}{0.24\textwidth}
		\centering
		\includegraphics[width=\textwidth]{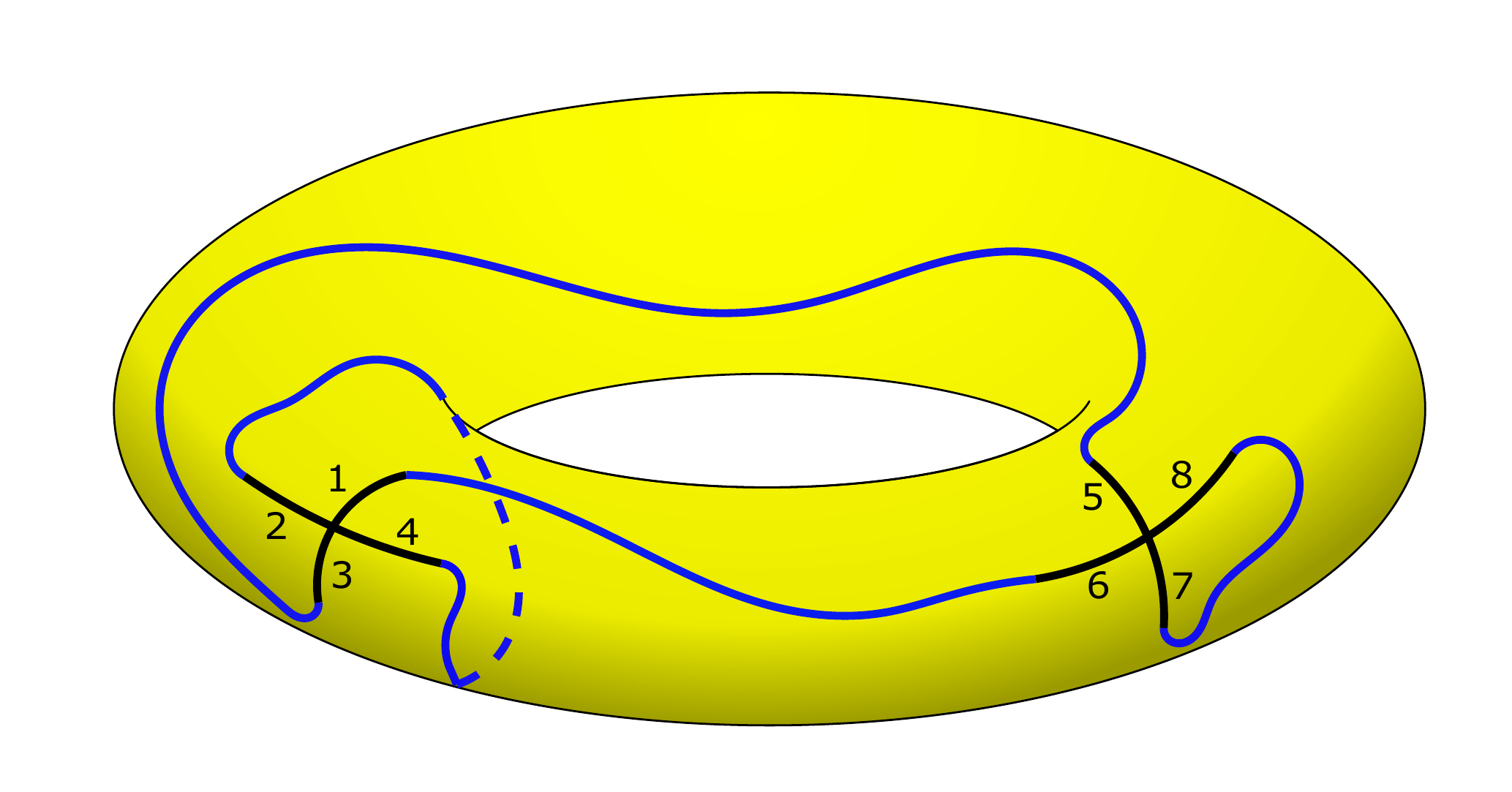}
	\end{subfigure}
	
	
	\centering
	\begin{subfigure}{0.24\textwidth}
		\centering
		\includegraphics[width=\textwidth]{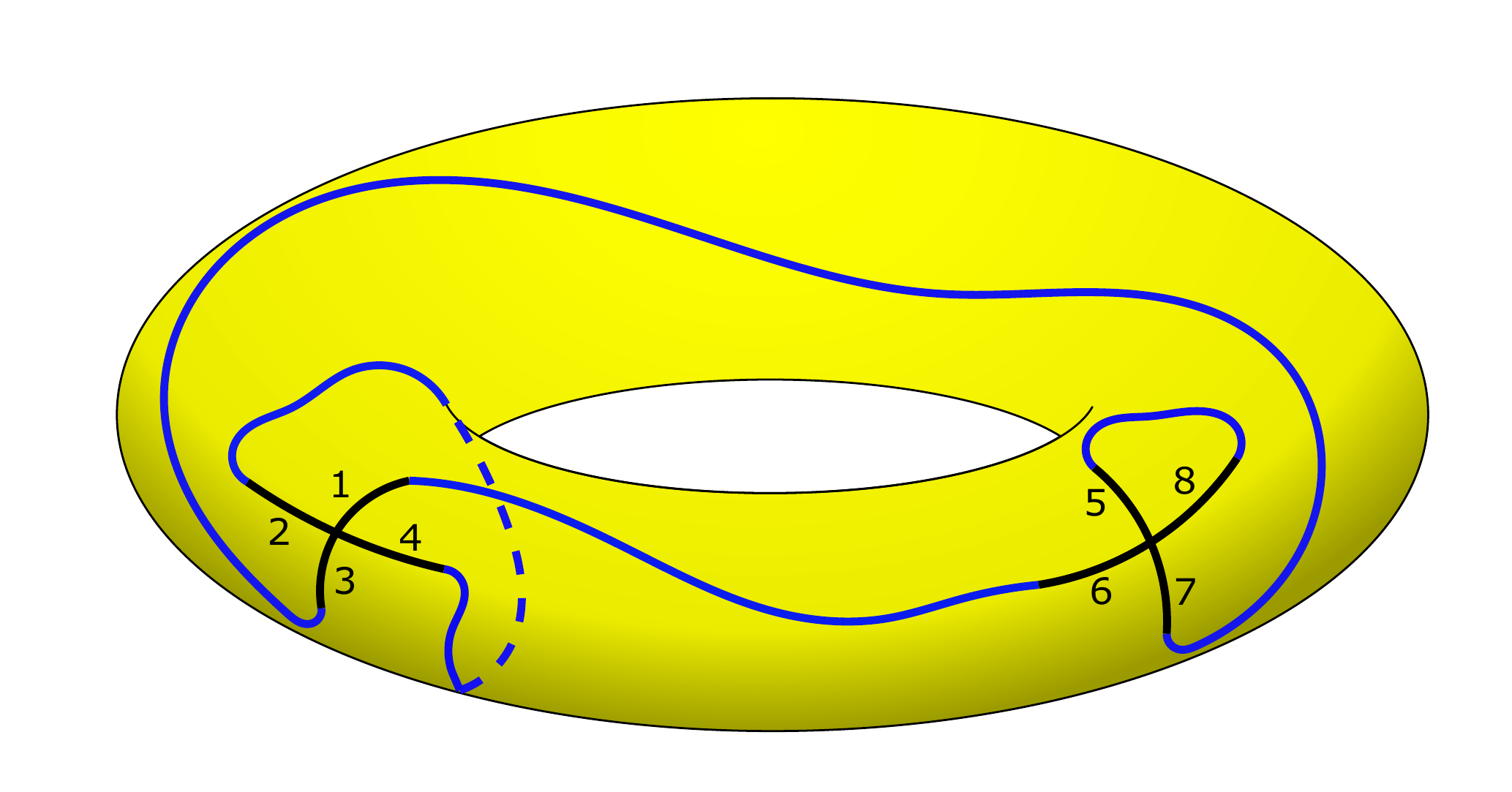}
	\end{subfigure} \hfil
	\begin{subfigure}{0.24\textwidth}
		\centering
		\includegraphics[width=\textwidth]{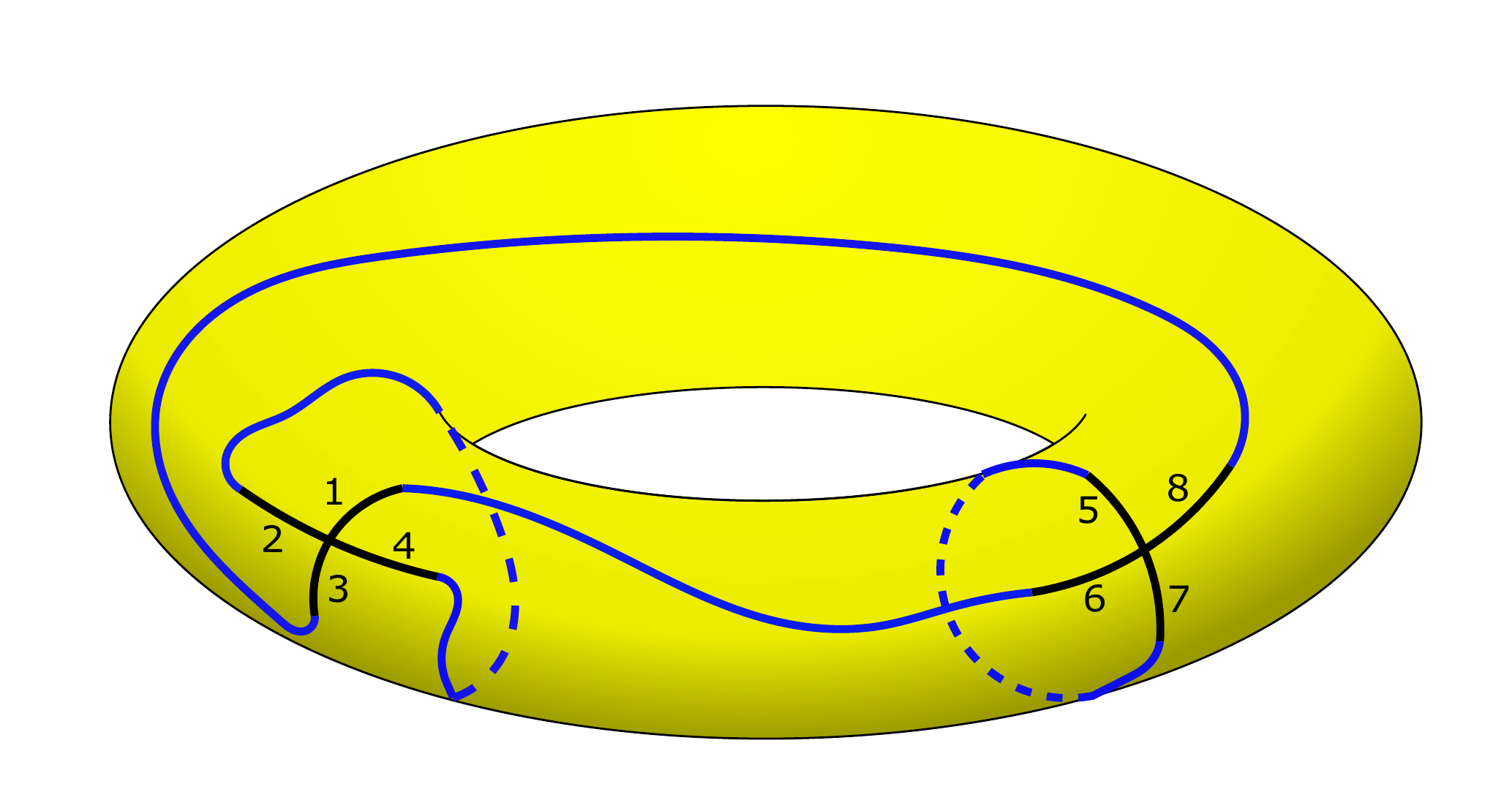}
	\end{subfigure}
	
	\caption{All ten labeled connected 4-valent graphs with two vertices, where $e_1$ connects to $e_6$ which are not realizable on the sphere(compare with Figure \ref{fig: g=0,1 j=2, e1->e_6}). Identically, for each one of the cases $e_1 \leftrightarrow e_5$, $e_1 \leftrightarrow e_7$, and $e_1 \leftrightarrow e_8$ there also exist 10 distinct graphs. This means that there are totally 40 distinct graphs, not realizable on the sphere, where $e_1$ connects to one of the edges emanating from $v_2$.}
	\label{fig: g=1 j=2, e1->e_6}
\end{figure}

\section*{Acknowledgements}
The authors would like to thank the anonymous referee for several suggestions that helped improve the manuscript. This material is based upon work supported by the National Science Foundation under Grant No. DMS-1928930. Kenneth T-R McLaughlin additionally acknowledges support from the NSF grant DMS-1733967. We gratefully acknowledge the Mathematical Sciences Research Institute, Berkeley, California and the organizers of the semester-long program \textit{Universality and Integrability in Random Matrix Theory and Interacting Particle Systems} for their support in the Fall of 2021, during which this project was completed. P.B. would like to acknowledge hospitality and support from  the Galileo Galilei Institute, Florence, Italy, the scientific program at GGI on  \textit{Randomness, Integrability, and Universality}, and the Simons Foundation program for GGI Visiting Scientists. We also thank Marco Bertola, Christophe Charlier, Philippe Di Francesco, Nicholas Ercolani,  Alexander Its, and Alexander Tovbis for their very helpful remarks and suggestions. 

\newcommand{\etalchar}[1]{$^{#1}$}


\end{document}